\documentclass[a4paper,11pt]{article} 
\usepackage[english]{babel}
\usepackage[a4paper]{geometry}
\usepackage{amsmath}
\usepackage{amsthm}
\usepackage{amssymb}
\usepackage{color}
\usepackage{hyperref} 
\usepackage{enumerate}
\usepackage{bbm}
\usepackage{mathtools}


\def\bR{\mathbb{R}}

\def\bN{\mathbb{N}}

\def\bZ{\mathbb{Z}}
\def\cN{\mathcal{N}}
\def\cE{\mathcal{E}}
\def\cK{\mathcal{K}}
\def\cF{\mathcal{F}}
\def\cH{\mathcal{H}}
\def\eps{\varepsilon}

\def\cV{\mathcal{V}}
\def\cC{\mathcal{C}}
\def\cL{\mathcal{L}}
\def\cG{\mathcal{G}}

\def\cQ{\mathcal{Q}}
\def\cT{\mathcal{T}}
\def\cW{\mathcal{W}}

\def\wt{\widetilde}

\def\be{\begin{equation*}}
\def\ee{\end{equation*}}

\def\aa{\mathfrak{a}}
\newcommand{\norm}[1]{\lVert #1 \rVert}
\def\la{\langle}
\def\ra{\rangle}

\def\hc{\text{h.c.}\ }

\newcommand{\abs}[1]{\lvert #1 \rvert}

\DeclareFontFamily{U}{mathx}{}
\DeclareFontShape{U}{mathx}{m}{n}{<-> mathx10}{}
\DeclareSymbolFont{mathx}{U}{mathx}{m}{n}
\DeclareMathAccent{\widehat}{0}{mathx}{"70}
\DeclareMathAccent{\widecheck}{0}{mathx}{"71}

\newtheorem{theorem}{Theorem}   
\newtheorem{cor}[theorem]{Corollary}

\newtheorem{lemma}[theorem]{Lemma}



\title{Third Order Upper Bound for the Ground State Energy of the Dilute Bose Gas}

\author{
Morris Brooks\thanks{Institute for Mathematics, University of Zurich, Winterthurerstrasse 190, 8057 Zurich, Switzerland}
\and Jakob Oldenburg\footnotemark[1]
\and Diane Saint Aubin\footnotemark[1]
\and Benjamin Schlein\footnotemark[1]
}

\begin{document}

\maketitle

\begin{abstract}
    We consider a dilute Bose gas in the thermodynamic limit. We prove an upper bound for the ground state energy per unit volume, capturing the expected third order term, as predicted by Wu, Hugenholtz-Pines and Sawada.
\end{abstract}

\section{Introduction}

We consider a dilute Bose gas, consisting of $N$ particles moving in a large, thermodynamic box $\Lambda_t$ with Dirichlet boundary conditions and interacting through a two-body potential $V$. The Hamilton operator of the system is given by 
\begin{equation}\label{eq:HL} 
    H_L = \sum_{i=1}^N -\Delta_i + \sum_{1\leq i < j \leq N} V (x_i-x_j)
\end{equation}
and acts on $L^2_s (\Lambda_t^N)$, the subspace of $L^2 (\Lambda_t^N)$ consisting of all functions that are symmetric with respect to permutations of the $N$ particles. 

We are going to assume that $V \in L^2 (\bR^3)$ is radial, stable (see \cite[Section 3.5]{Rue}), with compact support, with no two-body bound states, and with positive scattering length $\aa > 0$ as defined in \cite[Appendix C]{LSSY}. In particular, we can consider any non-negative, radial and compactly supported $V\in L^2(\mathbb{R}^3)$.
Recall that the scattering length of $V$ can be recovered from the solution of the zero-energy scattering equation 
\begin{equation}\label{eq:0en} \left[ -\Delta + \frac{1}{2} V \right] f = 0 \end{equation} 
with the boundary condition $f (x) \to 1$ as $|x| \to \infty$, since $f (x) = 1 - \frak{a} / |x|$,  outside the support of $V$. Equivalently, 
\[ 8\pi \frak{a} = \int V(x) f(x) dx \, . \]
Notice that $f \in L^\infty (\bR^3)$, as shown in \cite[Theorem 11.7]{LL}. 

We denote by $E (N,\Lambda_t)$ the ground state energy of the gas, ie. the lowest eigenvalue of the operator (\ref{eq:HL}). We consider the thermodynamic limit $N, |\Lambda_t| \to \infty$, at fixed density $\rho = N/ |\Lambda_t|$. We are interested in the energy per unit volume  
\[ e(\rho) = \lim_{N, |\Lambda_t| \to \infty , \rho = N / |\Lambda_t|} \frac{E (N, \Lambda_t)}{| \Lambda_t |}  \]
in the dilute limit $\rho \frak{a}^3 \to 0$. The following theorem is the main result of our paper.
\begin{theorem}\label{theorem:main}
Let $V\in L^2(\mathbb{R}^3)$ be radial, stable and with compact support. Assume that $V$ does not admit two-body bound states and let $\frak{a} > 0$ be its scattering length. Then, there is a constant $C > 0$ such that 
\begin{equation} \label{eq:main} 
        e(\rho) \leq 4\pi \mathfrak{a}\rho^2
        \left(
        1+\frac{128}{15\sqrt{\pi}}\sqrt{\rho\mathfrak{a}^3} 
        + 8\left(\frac{4\pi}{3}-\sqrt{3}\right) \rho\mathfrak{a}^3 \log(\rho\mathfrak{a}^3) 
        + C \rho \frak{a}^3 
        \right)
    \end{equation}
for all $\rho \frak{a}^3 > 0$ small enough.  
\end{theorem}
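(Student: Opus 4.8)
The plan is to construct an explicit $n$-particle trial state on an intermediate box and to evaluate its energy with enough precision to reach the order $\rho^2\mathfrak{a}\cdot\rho\mathfrak{a}^3\log(\rho\mathfrak{a}^3)$, in the spirit of the existing proofs of the Lee--Huang--Yang upper bound in the thermodynamic limit. First I would fix a cube $\Lambda_\ell=[0,\ell]^3$ with $\rho\ell^3\in\bN$, $\rho\ell^3\to\infty$, and $\ell$ growing like a small power of $(\rho\mathfrak{a}^3)^{-1}$, and work with periodic boundary conditions. Since the limit $e(\rho)$ is known to exist, a standard tiling argument (fill the large Dirichlet box by copies of $\Lambda_\ell$ and localize the trial state away from the tile boundaries) shows that it suffices to produce a permutation-symmetric state $\Psi_n$ of $n=\rho\ell^3$ particles with $\langle\Psi_n,H_{\Lambda_\ell}\Psi_n\rangle$ bounded above by $|\Lambda_\ell|$ times the right-hand side of \eqref{eq:main}, the cost of the boundary localization being $o(\rho^2\mathfrak{a}\cdot\rho\mathfrak{a}^3)\,|\Lambda_\ell|$ once $\ell$ is large enough. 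From here on I would work in the bosonic Fock space over $L^2(\Lambda_\ell)$, write $H_{\Lambda_\ell}$ in second quantization in the plane-wave basis, and remove the zero mode $a_0$ by a coherent-state substitution, reducing the problem to an excitation Hamiltonian on the modes $p\neq0$.

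\textbf{Trial state.} The state $\Psi_n$ would be obtained from the condensate by applying a chain of unitaries to the excitation vacuum: a quadratic Bogoliubov transformation $e^{B}$ with $B=\tfrac12\sum_{p\neq0}\eta_p\,(a_p^*a_{-p}^*-a_pa_{-p})$, then a cubic transformation $e^{A}$, with $A$ antisymmetric under taking adjoints and containing schematically a single creation operator turned into a pair of excitations created out of the condensate (a $1\to2$ process), and finally a second, purely low-momentum Bogoliubov transformation to diagonalize the leftover quadratic form. The pair kernel $\eta_p$ would be built from the solution $f$ of the zero-energy scattering equation \eqref{eq:0en}: for $|p|\gtrsim\mathfrak{a}^{-1}$ it encodes the short-range structure of $f$ and turns the singular part of $V$ into the effective coupling $\approx8\pi\mathfrak{a}$, while for $|p|\lesssim\mathfrak{a}^{-1}$ it follows the Bogoliubov prescription, tuned so that conjugation by $e^{B}$ generates exactly the Lee--Huang--Yang energy $\tfrac{128}{15\sqrt\pi}\cdot4\pi\mathfrak{a}\rho^2\sqrt{\rho\mathfrak{a}^3}\,|\Lambda_\ell|$; the cubic kernel is fixed so as to renormalize the residual interaction between excitations and the condensate. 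Conjugating $H_{\Lambda_\ell}$ by these unitaries and expanding all commutators, the energy splits into the leading $4\pi\mathfrak{a}\rho^2\,|\Lambda_\ell|$, the LHY correction, the logarithmically enhanced term, and a remainder that I would bound by $C\rho^2\mathfrak{a}\cdot\rho\mathfrak{a}^3\,|\Lambda_\ell|$.

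\textbf{Origin of the logarithm.} The factor $\log(\rho\mathfrak{a}^3)$ comes from an infrared-sensitive momentum integral. After the quadratic renormalization, the residual quartic interaction between excitations with momenta in the window $\sqrt{\rho\mathfrak{a}}\lesssim|p|\lesssim\mathfrak{a}^{-1}$, treated through $e^{A}$ (equivalently by second-order perturbation theory), contributes an energy equal to $\rho^2\mathfrak{a}\cdot\rho\mathfrak{a}^3$ times an integral over that window which diverges like $\log((\rho\mathfrak{a}^3)^{-1})$ as the cutoff ratio $\mathfrak{a}^{-1}/\sqrt{\rho\mathfrak{a}}=(\rho\mathfrak{a}^3)^{-1/2}$ tends to infinity — the lower cutoff $|p|\sim\sqrt{\rho\mathfrak{a}}$ being supplied by the curvature of the Bogoliubov dispersion. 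To pin down the precise constant $8(\tfrac{4\pi}{3}-\sqrt3)$ I would replace \eqref{eq:0en} by a \emph{modified} scattering equation, namely \eqref{eq:0en} with an added term proportional to the chemical potential $\mu\sim\rho\mathfrak{a}$ — which is what genuinely regularizes the infrared — and then combine the low- and high-momentum contributions so that the cutoff-dependent divergences cancel, leaving exactly the stated finite coefficient times $\log(\rho\mathfrak{a}^3)$ up to a true $O(\rho^2\mathfrak{a}\cdot\rho\mathfrak{a}^3)$ error; the numbers $\tfrac{4\pi}{3}$ and $\sqrt3$ come out of the resulting three-dimensional momentum integrals.

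\textbf{Main obstacle.} The hard part is the error analysis, not the identification of the leading terms. One must control \emph{every} contribution of size $\rho^2\mathfrak{a}\cdot\rho\mathfrak{a}^3\,|\Lambda_\ell|$: higher commutators of $A$ and $B$ with the kinetic and potential energies, terms in which the depleted fraction enters the interaction, and errors from replacing $V$ by its effective coupling at the various scales, and show that each of them is either absorbed into $C$ or cancels against another. This needs sharp a priori bounds, uniform in $\ell$, on the number and the energy of excitations of the transformed state (Bogoliubov-type estimates), good control of the cubic transformation $e^{A}$ and of its action on $\sum_p p^2 a_p^*a_p$ and on the interaction, and a delicate simultaneous choice of the several cutoff parameters — the box side $\ell$, the momentum thresholds separating the regimes, the width of the transition region for $\eta$, and the chemical-potential shift $\mu$ — so that all competing error terms are small at once. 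Finally, the localization from the periodic cube back to the Dirichlet box of \eqref{eq:HL} must itself be done with accuracy below $\rho^2\mathfrak{a}\cdot\rho\mathfrak{a}^3$, which is precisely what forces $\ell$ to grow and couples back to all the other parameter choices.
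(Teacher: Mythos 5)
Your skeleton coincides with the paper's: a grand canonical trial state on a periodic box of mesoscopic side (the paper takes $L=\rho^{-\gamma}$ with $\gamma\geq 3/2$), built as Weyl operator $\times$ Bogoliubov transformation $\times$ cubic transformation implementing Wu's second--order perturbation theory, followed by a localization argument to pass to the thermodynamic limit; and you correctly locate the logarithm in the momentum window $\sqrt{\rho\mathfrak a}\lesssim|p|\lesssim\mathfrak a^{-1}$. Some of your choices differ harmlessly from the paper's: the paper merges your two Bogoliubov transformations into a single $B$ with $\mu_p=-\tfrac14\log(1+2N^\kappa\widehat{Vf}(p/N^{1-\kappa})/p^2)$, which renormalizes $V$ at high momenta and diagonalizes at low momenta simultaneously; and it does not modify the scattering equation by a chemical potential, obtaining the infrared regularization instead from the softer behaviour $\sigma_q\sim N^{\kappa/4}|q|^{-1/2}$ of the Bogoliubov coefficients together with explicit smooth infrared cutoffs at $N^{\kappa/2}$ and $N^{\kappa/2-\beta/4}$, so that the constant $8(\tfrac{4\pi}{3}-\sqrt3)$ emerges from the explicit integral of $(p\cdot q-q^2)/(p^2q^4(p^2+q^2+p\cdot q))$. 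Also, the cubic generator that actually produces Wu's term is a pure creation operator $A^*\propto N^{-1/2}\sum \widehat{Vf}(r)\sigma_v\,a^*_{r+v}a^*_{-r}a^*_{-v}/((r+v)^2+r^2+v^2)$ (a $0\to3$ process with the kinetic energy denominator), not the $1\to2$ cubic term of the excitation Hamiltonian itself.

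The genuine gap is in what you yourself call the main obstacle: you offer no mechanism to control the cubic transformation $e^{A^*-A}$ at this scaling, and the naive route fails. First, the perturbative state $\Omega+A^*\Omega$ cannot be used directly because $\|A^*\Omega\|^2\sim N^{3\kappa-1}\gg1$, so one is forced to the unitary $e^{A^*-A}$ and must control its full Duhamel expansion. Second, and decisively, bounding the commutators arising in that expansion by the \emph{global} number operator produces errors that are too large to reach the order $N^{4\kappa-1}$ (the paper states this explicitly after its display (\ref{eq:groNwtA})). The paper's solution — and the main technical novelty you would need to reinvent — is to insert into the generator a cutoff $\Theta(\mathcal L_n/N^\alpha)$ on the \emph{local} number of particles in balls of radius $\ell_B=N^{-\kappa/2+\beta/12}$, and to exploit the position-space decay $|\check\sigma(x)|,|\check A(\cdot,x)|\lesssim |x|^{-5/2}$ (a consequence of the $|p|^{-1/2}$ behaviour of $\sigma$) to decompose the kernels into boxes of side $\ell_B$ with $\sum_u\|\check\sigma_u\|_2\lesssim N^{3\kappa/4+\epsilon}$. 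Since the expected number of cubic excitations per ball is $\ell_B^3N^{3\kappa-1}\sim N^{-\beta/4}\ll1$, the cutoff is harmless for the energy but converts every commutator estimate into one involving only local number operators, which is what closes the error analysis. Without this (or an equivalent device), the simultaneous tuning of cutoffs you describe in your last paragraph cannot be carried out, so the proposal as written does not yield a proof.
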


The bound (\ref{eq:main}) is expected to capture the correct ground state energy per unit volume, up to corrections of order $\aa \rho^2 (\rho \aa^3)$, in the dilute limit $\rho \aa^3 \to 0$, that are expected to no longer depend universally on $\aa$. The expression on the r.h.s. of (\ref{eq:main}) has a long history in the physics literature. The first two terms in the asymptotic expansion have been predicted in 1957, by Lee-Huang-Yang \cite{LHY}, based on  previous work of Bogoliubov \cite{Bo}. Two years later, in 1959, Wu \cite{Wu} and then also Hugenholtz-Pines \cite{HP} and Sawada \cite{Sa} predicted the third logarithmic correction. At the level of rigorous mathematical derivation, the leading order term has been established by Dyson \cite{Dy} as an upper bound and, 40 years later, by Lieb-Yngvason \cite{LY} as a lower bound. For potentials with negative part, the corresponding result was obtained in \cite{Yin}. An upper bounds including the Lee-Huang-Yang term has been proven in \cite{YY}, for sufficiently regular interactions (previously, an upper bound reaching the Lee-Huang-Yang order, but only recovering the correct constant in the limit of weak coupling, was shown in \cite{ESY}). More recently, a simpler upper bound has been derived in \cite{BCS}, for a larger class of interactions. The case of hard-sphere potential, however, remains open (for hard-spheres, the currently best available upper bound on the ground state energy per unit volume has been shown in \cite{BCGOPS}, with an error of the Lee-Huang-Yang order). A lower bound resolving the Lee-Huang-Yang contribution has been first shown in \cite{FS1} and, for more general interactions including hard spheres, in \cite{FS2}. Recently, a new proof of the lower bound, applying also to the free energy at low positive temperature, has been obtained in \cite{HHNST} (a matching low temperature upper bound has been derived in \cite{HHST}). The result of \cite{HHNST} has been extended to potentials including hard spheres in \cite{FJGMOT}.

Theorem \ref{theorem:main} establishes an upper bound on the ground state energy per unit volume of a dilute Bose gas, interacting through $V \in L^2 (\bR^3)$, consistent with Wu's prediction. A complete proof of the validity of Wu's formula in the thermodynamic limit requires a matching lower bound, which remains a challenging open problem. Remark, however, that a rigorous derivation of Wu's correction is available in the Gross-Pitaevskii regime, where $N$ particles are confined in the unit torus and interact through a potential with scattering length of the order $1/N$. In this limit, the ground state energy and the low-energy excitation spectrum have been resolved to the Lee-Huang-Yang order in \cite{BBCS1,BBCS2}, through a rigorous version of Bogoliubov theory (leading order estimates and a proof of Bose-Einstein condensation have been previously obtained in \cite{LSY,LS,LS2}) . Recently, extending the approach of  \cite{BBCS1,BBCS2}, Wu's logarithmic correction to the ground state energy has been resolved in \cite{COSS}; we will use ideas from \cite{COSS} also in the current paper, in particular in Section \ref{subsec:kerA} to define the kernel of the cubic transformation. Related results in the Gross-Pitaevskii regime (and beyond) have been derived in \cite{ABS,BCOPS,BDS,BBCO,BCaS,BSS1,BSS2,B,CD,HST,NNRT,NT}. 
   
The proof of Theorem \ref{theorem:main} is based on a well-known localization technique; given a grand canonical trial state $\psi_{L}$ on a periodic box of size $L = \rho^{-\gamma}$, with approximately $\rho L^3$ particles and with approximately the correct energy per unit volume, we can construct a grand canonical trial state on the thermodynamic box $\Lambda_t$ with approximately the correct energy, first extending $\psi_{L}$ to a trial state $\psi_{L}^\text{dir}$ on a Dirichlet box of size slightly larger than $L$ and then combining several copies of $\psi_{L}^\text{dir}$ to cover the whole volume of $\Lambda_t$ (to avoid interactions among particles in different boxes, they must be separated by small corridors, broader than the support of $V$). At the end, (\ref{eq:main}) follows by the equivalence of ensembles. 

The main part of the paper is devoted to the construction of the grand canonical trial state on the periodic box $\Lambda_L$, of size $L = \rho^{-\gamma}$ for appropriate $\gamma > 0$ (to control the localization error produced by imposing Dirichlet boundary conditions, we will need $\gamma \geq 3/2$). In fact, we find it more convenient to rescale lengths, and work on the unit torus $\Lambda \equiv \Lambda_1 \simeq [-1/2 ;1/2]^3$. Denoting by $N = \rho L^3 = \rho^{1-3\gamma}$ the expected number of particles in $\Lambda_L$, we find $\rho = N^{-1/(3\gamma-1)}$ and $L = N^{\gamma / (3\gamma-1)} = N^{1-\kappa}$, where we introduced the parameter $\kappa = (2\gamma-1)/(3\gamma-1)$. For $\kappa \in [0;2/3)$, we consider the Hamilton operator 
\begin{equation}\label{eq:HN-fock}
\cH_N  = \sum_{p\in\Lambda^*} p^2 a_p^* a_p
    + \frac{N^\kappa}{2N} \sum_{p,q,r\in\Lambda^*} \hat{V}(r/N^{1-\kappa}) a_{p+r}^* a_{q-r}^* a_q a_p 
\end{equation}
acting on the bosonic Fock space $\cF (\Lambda) = \oplus_{n\geq 0}L^2_s(\Lambda^n)$. Here, $\Lambda^*= 2\pi\mathbb{Z}^3$ is the dual lattice and, for $p \in \Lambda^*$, $a_p^*, a_p$ are the usual creation and annihilation operators, satisfying canonical commutation relations. Moreover, we slightly abuse notation and denote $V$ and its periodic extension in the same way.
\begin{theorem}\label{theorem:Fockspaceresult}
    Under the same assumptions on $V$ as in Theorem \ref{theorem:main}, we have that for any $\kappa\in [1/2,2/3)$ and $N$ large enough there exists $\Psi_N\in\cF (\Lambda)$, $\norm{\Psi_N}=1$ such that
    \begin{equation}\label{eq:psiN-N}
        \la \Psi_N, \cN \Psi_N \ra \geq N, \quad \la \Psi_N, \cN^2 \Psi_N \ra \leq CN^2
    \end{equation}
    and
    \begin{equation}\label{eq:mainN} 
    \begin{split}
        \la \Psi_N, \cH_N \Psi_N \ra
        \leq \; &4\pi \mathfrak{a} N^{1+\kappa}
        +\frac{512\sqrt{\pi}}{15}\mathfrak{a}^{5/2}N^{5\kappa/2}\\
        &+   32\pi \left(\frac{4\pi}{3}-\sqrt{3}\right) \mathfrak{a}^4 N^{4\kappa-1} \log(N^{3\kappa-2}) 
        + C N^{4\kappa-1}.
    \end{split}
    \end{equation}
\end{theorem}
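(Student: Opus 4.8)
The plan is to produce $\Psi_N$ explicitly as the image of the Fock-space vacuum $\Omega$ under a product of unitaries, and to derive (\ref{eq:mainN}) by conjugating $\cH_N$ through them. Concretely, I would take
\[
\Psi_N \;=\; W\big(\sqrt{N_0}\,\ph_0\big)\, e^{B}\, e^{A}\,\Omega ,
\]
where $W\big(\sqrt{N_0}\,\ph_0\big)=\exp\big(\sqrt{N_0}\,(a_0^{*}-a_0)\big)$ is a Weyl operator placing a condensate of $N_0\simeq N$ particles in the zero mode $\ph_0\equiv 1$; $B=\frac12\sum_{p\neq 0}\eta_p\big(a_p^{*}a_{-p}^{*}-a_p a_{-p}\big)$ is a quadratic Bogoliubov generator (possibly in a particle-number conserving variant, and possibly supplemented by an auxiliary high-momentum rotation), with kernel $\eta$ built from the zero-energy scattering solution $f$ of (\ref{eq:0en}), truncated at a length scale intermediate between the range $N^{-(1-\kappa)}$ of the rescaled potential and the size of the torus; and $A=\frac{1}{\sqrt N}\sum \gamma_{p,q}\,a_{p+q}^{*}a_{-p}^{*}a_{-q}^{*}-\text{h.c.}$ is a cubic generator, with kernel $\gamma$ as constructed in Section \ref{subsec:kerA} following \cite{COSS}. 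Since $W$ commutes with $e^{B}$ and $e^{A}$ --- they act on disjoint momentum modes --- we may extract the condensate first and study $e^{-A}e^{-B}\big(W^{*}\cH_N W\big)e^{B}e^{A}$. The transformations have distinct roles: $e^{B}$ resums the short-range two-body correlations, replacing the bare coupling $\hat V(0)$ by $8\pi\aa$, so that the condensate self-energy $\frac{N^{\kappa}}{2N}\hat V(0)N_0^{2}$ becomes $4\pi\aa N^{\kappa-1}N_0^{2}\simeq 4\pi\aa N^{1+\kappa}$; the residual Gaussian fluctuations of the condensate, together with the $O(\sqrt{\rho\aa^{3}})$ depletion of $N_0$, produce the Lee--Huang--Yang correction; and $e^{A}$ removes a leftover, not-quite-negligible cubic term, its second-order effect producing precisely Wu's logarithmic term.

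For the energy I would proceed in stages. Conjugating $W^{*}\cH_N W$ by $e^{B}$ --- using the explicit action of the Bogoliubov transformation on $a_p^{\#}$ and Duhamel expansions --- yields a constant $4\pi\aa N^{1+\kappa}$ plus further terms of orders $N^{5\kappa/2}$ and $N^{4\kappa-1}$, a quadratic form $\cQ$, a leftover cubic term $\cC$, a small quartic remainder $\cV$, and errors. Completing the diagonalization of $\cQ$ by a further Bogoliubov rotation and collecting all $N^{5\kappa/2}$ contributions --- from this rotation, from the condensate--excitation interactions, and the $-8\pi\aa N^{\kappa}\la\cN_+\ra$ coming from the choice $N_0=N-\la\Psi_N,\cN_+\Psi_N\ra$ (legitimate, since $\la\cN_+\ra$ does not depend on $N_0$) --- yields the net Lee--Huang--Yang shift $\frac{512\sqrt{\pi}}{15}\aa^{5/2}N^{5\kappa/2}$, whose leading piece is the continuum limit of the classical Bogoliubov sum $\sum_{p\in\Lambda^*}\frac12\big(\sqrt{p^{4}+2\mu p^{2}}-p^{2}-\mu+\frac{\mu^{2}}{2p^{2}}\big)$ with $\mu=8\pi\aa N^{\kappa}$. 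Conjugating finally by $e^{A}$ and choosing $\gamma$ so that $[\cQ,A]$ cancels $\cC$ leaves a new scalar, essentially the second-order perturbative shift $\frac12\la\Omega,[\cC,A]\Omega\ra$; inserting the explicit kernels turns this into a momentum integral that is only logarithmically divergent --- between the Bogoliubov healing scale $\sqrt{\aa N^{\kappa}}$ and the momentum scale $N^{1-\kappa}$ set by the range of $V$ --- and whose divergent part is exactly $32\pi\big(\frac{4\pi}{3}-\sqrt{3}\big)\aa^{4}N^{4\kappa-1}\log(N^{3\kappa-2})$.

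The bulk of the work is then to show that every non-explicit contribution --- the error terms, the quartic remainder $\cV$, the sub-leading part of the cubic renormalization, and all cross terms generated by commuting $e^{A}$ through $\cQ$ and $\cV$ --- is bounded by $CN^{4\kappa-1}$, with no logarithmic enhancement. This needs a careful splitting of the cubic kernel $\gamma$ into low- and high-momentum regimes, propagation estimates controlling $\cN$ and $\cH_N$ along the three transformations, and Cauchy--Schwarz bounds on the resulting commutators; the restriction $\kappa<2/3$ guarantees $N^{4\kappa-1}\ll N^{5\kappa/2}$ (so the error lies below Lee--Huang--Yang order, and the logarithmic range $\sqrt{\aa N^{\kappa}}\ll N^{1-\kappa}$ is nonempty), while $\kappa\geq 1/2$ is where the cubic-kernel construction and these estimates close. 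The moment bounds (\ref{eq:psiN-N}) come out of the same analysis: the choice of $N_0$ gives $\la\Psi_N,\cN\Psi_N\ra=N$, and since $e^{B}$ and $e^{A}$ change $\cN_+$ only by a controlled amount with finite second moment (indeed $\la\cN_+^{2}\ra=o(N^{2})$, as $\la\cN_+\ra\sim N^{3\kappa/2}$), one obtains $\la\Psi_N,\cN^{2}\Psi_N\ra\leq CN^{2}$.

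I expect the main obstacle to be this last step for the cubic transformation: one must simultaneously choose the momentum cutoffs in $\gamma$ so that the divergent part of the relevant integral reproduces the exact constant $\frac{4\pi}{3}-\sqrt{3}$, and show that the remainder --- including the interaction of $e^{A}$ with the quadratic and quartic parts of the Hamiltonian, and the error incurred by using a truncated scattering solution in $\eta$ --- stays $O(N^{4\kappa-1})$ without losing a logarithm or an extra power of $N^{\kappa}$. A secondary difficulty, absent in the Gross--Pitaevskii analyses of \cite{BBCS1,BBCS2,COSS}, is that $\cH_N$ in the thermodynamic scaling is a genuinely extensive many-body operator, so one cannot rely on an a priori bound on the number of excitations and must close all estimates self-consistently.
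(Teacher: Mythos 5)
Your outline follows essentially the same route as the paper: a trial state $W_{N_0}\,e^{B-B^*}e^{A}\,\Omega$ with a quadratic generator built from the scattering solution and a cubic generator chosen so that its commutator with the kinetic energy cancels the residual cubic term, with Wu's logarithm emerging from the second-order scalar $\la\Omega,[\cC_N,A^*]\Omega\ra$ together with $\la\Omega, A(\cK+\cV_N)A^*\Omega\ra$. The identification of the leading, Lee--Huang--Yang and logarithmic contributions, and of $\kappa\ge 1/2$, $\kappa<2/3$ as the window in which finite-size and error terms are subleading, all match the paper.

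There is, however, one genuine gap, precisely at the point you flag as ``the bulk of the work'': controlling the Duhamel expansion of $e^{A}$ in the thermodynamic scaling. With the plain unitary $e^{A^*-A}$, the natural Gronwall estimate for $\la \xi_s,\cN\xi_s\ra$ along $\xi_s=e^{s(A^*-A)}\Omega$ produces, after Cauchy--Schwarz in position space, a bound of the form $N^{-1/2}\|\check\eta\|_2\,\|\check\sigma\|_2\,\|\cN^{1/2}\xi_s\|\,\|\cN\,\xi_s\|^{1/2}$, i.e.\ a growth rate $\sim N^{3\kappa/2-1/2}$ per unit time rather than the needed $N^{3\kappa-1}$; since $3\kappa/2-1/2>3\kappa-1$ for $\kappa<1/3$ is false but the loss reappears (worse) in the quartic and quintic commutators arising from $[\cV_N,A^*]$ and from $[[\cN,A^*],A]$, the estimates do not close self-consistently for $\kappa\ge 1/2$. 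The paper's resolution is to insert a cutoff $\Theta=\Theta(\cL_n/N^\alpha)$ on the \emph{local} number of particles inside the cubic exponent, $e^{A^*\Theta-\Theta A}$, where $\cL_n=\int dw\,(\cN_{\chi_{\ell_B}}(w)+1)^n$, and to decompose the kernels $\check\sigma$, $\check A$ into pieces supported on boxes of side $\ell_B=N^{-\kappa/2+\beta/12}$ (with $\beta=2-3\kappa$), exploiting the $|x|^{-5/2}$ decay of these kernels so that $\sum_u\|\check\sigma_u\|_2\le CN^{3\kappa/4+\epsilon}$. This converts every global $\cN^{1/2}$ into a local $\cN_{\ell_B}^{1/2}(x)\le CN^{\epsilon/2}$ on the range of $\Theta$, which is what makes all commutator estimates close at order $N^{4\kappa-1}$; one then checks separately (Lemma \ref{lemma:propertiesL}) that inserting $\Theta$ changes the energy only by $O(N^{-\nu})$, because the expected local occupation $\ell_B^3N^{3\kappa-1}\ll1$. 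Without this device (or an equivalent substitute), the step from your outline to actual bounds on the quartic remainder, on $\tilde\cC_N$, $\tilde\cV_N$, and on the nested commutators in $\la\xi,\cV_N\xi\ra$ does not go through, and this is exactly the new difficulty, relative to \cite{BBCS1,BBCS2,COSS}, that you correctly anticipate but do not resolve.
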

\textit{Remark.} The restriction to $\kappa \geq 1/2$ makes sure that finite size errors of order $N^{2\kappa}$ are negligible. 

With Theorem \ref{theorem:Fockspaceresult}, we can prove our main result. 
\begin{proof}[Proof of Theorem \ref{theorem:main}]
To translate the upper bound (\ref{eq:mainN}) into a bound for the ground state energy per unit volume in the thermodynamic limit we use \cite[Prop. 1.2]{BCS}, which states that, if $R < \ell < L$, with $R$ the radius of the support of the potential $V$, and if $\psi_L \in \cF (\Lambda_L)$ is normalized and such that 
\begin{equation}\label{eq:NNL} 
\langle \psi_L, \cN \psi_L \rangle \geq \rho (1+c \rho) (L+2\ell+R)^3, \qquad \langle \psi_L, \cN^2 \psi_L \rangle \leq C \rho^2 (L+2\ell +R)^6 \end{equation} 
then we have 
\begin{equation}\label{eq:prop12} e(\rho) \leq \frac{\langle \psi_L, \cH \psi_L \rangle}{L^3} + \frac{C}{L^4 \ell} \langle \psi_L, \cN \psi_L \rangle \end{equation} 
with the Hamilton operator 
\[ \cH = \sum_{p \in \Lambda_L^*} p^2 a_p^* a_p + \frac{1}{2} \sum_{p,q,r \in \Lambda^*_L} \hat{V} (r) a_{p+r}^* a_{q-r}^* a_q a_p \]
acting on $\cF (\Lambda_L)$ (where $\Lambda^*_L = 2\pi \bZ^3 / L$). 

Notice that the state $\psi_L \in \cF (\Lambda_L)$, whose energy appears on the r.h.s. of (\ref{eq:prop12}) and provides an upper bound to $e (\rho)$, must have, effectively, a slightly larger density than $\rho$ (to compensate the fact that, to construct the thermodynamic trial state, we first need to increase a bit the size of the box $\Lambda_L$ to impose Dirichlet boundary conditions and to make sure that particles in different boxes do not interact). 

Given a density $\rho > 0$, we thus choose $\wt{\rho} > \rho$ (to be specified later on) and we apply Theorem \ref{theorem:Fockspaceresult}. After rescaling lengths we find, for any $\gamma > 1$ (reflecting the condition $\kappa = (2\gamma -1)/(3\gamma-1) \in (1/2 ; 2/3)$), a normalized $\psi_L \in \cF (\Lambda_L)$ on the periodic box of size $L = \wt{\rho}^{-\gamma}$ (so that $N = \wt{\rho}^{1-3\gamma}$), with 
\[ \langle \psi_L, \cN \psi_L \rangle \geq \wt{\rho} L^3, \qquad  \langle \psi_L, \cN^2 \psi_L \rangle \leq C \wt{\rho}^2 L^6 \]
and (since $\cH_N$ is unitarily equivalent to $L^2 \cH$) with 
\[ \frac{\langle \psi_L , \cH \psi_L \rangle}{L^3} \leq 4 \pi \frak{a} \wt{\rho}^2 \left( 1 + \frac{128}{15\sqrt{\pi}} \sqrt{\wt{\rho} \frak{a}^3} + 8 \left( \frac{4\pi}{3} - \sqrt{3} \right) \wt{\rho} \frak{a}^3 \log (\wt{\rho} \frak{a}^3) + C \wt{\rho} \frak{a}^3 \right). \]

To make sure that (\ref{eq:NNL}) holds true, we set $\ell = L^\alpha$, for $0 < \alpha < 1$, and we apply the implicit function theorem to find $\wt{\rho} > \rho$ such that 
\[ \wt{\rho} = \rho (1 + c \rho) (1 + 2 \wt{\rho}^{\gamma (1-\alpha)} + R \wt{\rho}^{\gamma})^3. \]

From (\ref{eq:prop12}), we conclude therefore that 
\[ e (\rho) \leq 4 \pi \frak{a} \wt{\rho}^2 \left( 1 + \frac{128}{15\sqrt{\pi}} \sqrt{\wt{\rho} \frak{a}^3} + 8 \left( \frac{4\pi}{3} - \sqrt{3} \right) \wt{\rho} \frak{a}^3 \log (\wt{\rho} \frak{a}^3) + C \wt{\rho} \frak{a}^3 \right) + C \wt{\rho}^{1+\gamma (1+\alpha)}. \]
Since $\wt{\rho}^2 \leq \rho^2 (1 + C \rho + C \rho^{\gamma (1-\alpha)})$, we obtain  
\[ \begin{split}  e (\rho) \leq \; &4 \pi \frak{a} \rho^2 \left( 1 + \frac{128}{15\sqrt{\pi}} \sqrt{\rho \frak{a}^3} + 8 \left( \frac{4\pi}{3} - \sqrt{3} \right) \rho \frak{a}^3 \log (\rho \frak{a}^3) \right) \\ &+ C (\rho^3 + \rho^{2+ \gamma (1-\alpha)} + \rho^{1+ \gamma (1+\alpha)}). \end{split} \]
Choosing $\alpha = 1/(2\gamma)$, we find the error $\rho^{3/2+\gamma}$. Choosing $\gamma \geq  3/2$ (which corresponds to $\kappa \geq 4/7$, using the language of Theorem \ref{theorem:Fockspaceresult}), we arrive at the desired bound.
\end{proof} 

The rest of the paper is devoted to the proof of Theorem \ref{theorem:Fockspaceresult}. Let us briefly explain, on a heuristic level, the main ideas involved in the proof of (\ref{eq:main}). We start from Bogoliubov theory \cite{Bo}. First, we factor out the Bose-Einstein condensate, conjugating the Hamiltonian (\ref{eq:HN-fock}) with a Weyl operator generating a coherent state with, in average, $N_0$ particles with momentum $p=0$. 
%
In the second step, we conjugate the resulting excitation Hamiltonian 
with a Bogoliubov transformation $e^{B^* - B}$, with 
\begin{equation}\label{eq:B0} B = \frac{1}{2} \sum_{|p| \geq N^{\kappa/2-\beta/4}} \mu_p a_p a_{-p} \end{equation}
and with coefficients 
\[ \mu_p =  -\frac{1}{4} \log \Big( 1 + \frac{2N^\kappa \widehat{Vf} (p/N^{1-\kappa})}{p^2} \Big) \]
satisfying 
\begin{equation}\label{eq:mu2} \tanh (2 \mu_p) = - \widehat{Vf} (p/N^{1-\kappa}) / (p^2 + \widehat{Vf} (p/N^{1-\kappa})). \end{equation} 
This leads us to a new, unitarily equivalent, Hamiltonian 
\begin{equation}\label{eq:wtHN} \wt{\cH}_N \simeq C_N + \cK + \cC_N + \cC_N^* + \cV_N \end{equation} 
where the constant $C_N$ gives the correct ground state energy to leading order (and already contains some terms contributing to the Lee-Huang-Yang correction), $\cK$ is the kinetic energy operator, 
\[ \cC^*_N = \frac{1}{\sqrt{N}} \sum_{p,r \in \Lambda_+^*} \hat{V} (r/N^{1-\kappa}) \sigma_p a_{p+r}^* a_{-r}^* a_p^* \]
with $\sigma_p = \sinh \mu_p$ and   
\[ \cV_N = \frac{1}{2N} \sum_{r,p,q \in \Lambda^*} \hat{V} (r/N^{1-\kappa}) a_{p+r}^* a_q^* a_{q+r} a_p \, . \]
The two Bogoliubov transformations that have been used, for example, in \cite{BBCS1,BBCS2}, are combined here into (\ref{eq:B0}). For large momenta $|p| \gg N^{\kappa/2}$, $\mu_p \simeq - N^\kappa \widehat{Vf} (p/N^{1-\kappa})/ 2p^2$ approximates the solution of the zero-energy scattering equation (\ref{eq:0en}) and the action of (\ref{eq:B0}) helps renormalizing the interaction potential. For $|p| \lesssim N^{\kappa/2}$, on the other hand, the Bogoliubov transformation diagonalizes the quadratic terms in the excitation Hamiltonian (the identity (\ref{eq:mu2}) makes sure that $e^{B^* - B}$ diagonalizes exactly a quadratic Hamiltonian with potential $Vf$). In (\ref{eq:B0}), we restrict the sum to momenta $|p| \geq N^{\kappa/2-\beta/4}$, with the exponent $\beta = 2-3\kappa$; contributions arising from $|p| < N^{\kappa/2-\beta/4}$ do not change the energy, up to order $N^{4\kappa-1}$ (to resolve the Lee-Huang-Yang energy, it is enough to consider $|p| \gtrsim N^{\kappa/2}$).   
 

Wu \cite{Wu} proposed to use second order perturbation theory to obtain a better approximation for the ground state energy, essentially treating $\cK$ as the unperturbed Hamiltonian and $\cC_N + \cC_N^* +\cV_N$ as the perturbations (first order perturbation theory gives a vanishing contribution, because the ground state of $\cK$ is the vacuum and the vacuum expectation of $\cC_N, \cC_N^*, \cV_N$ is zero). This remark suggests that a good trial state for the renormalized Hamiltonian $\widetilde{\cH}_N$ could have the form 
\begin{equation}\label{eq:xi0} \begin{split} \xi 
&= \Omega - \frac{1}{\sqrt{N} \cK}  \sum_{|r|\geq N^{\kappa/2}, |v| \geq N^{\kappa/2-\beta/4}} \widehat{Vf} (r/N^{1-\kappa}) \sigma_v a_{r+v}^* a_{-r}^* a_{-v}^* \Omega 
\\ &= \Omega - \frac{1}{\sqrt{N}} \sum_{|r|\geq N^{\kappa/2}, |v|\geq N^{\kappa/2-\beta/4}} \frac{\widehat{Vf} (r/N^{1-\kappa}) \sigma_v}{(r+v)^2 + r^2 +v^2} a_{r+v}^* a_{-r}^* a_{-v}^* \Omega =: \Omega + A^* \Omega \end{split} \end{equation}  
where we used the renormalized potential to make sure that the energy of $\xi$ only depends on the scattering length and where we restricted the sum to $|r| \geq N^{\kappa/2}$ and $|v| \geq N^{\kappa/2-\beta/4}$, since contributions from smaller momenta do not affect the energy, to order $N^{4\kappa-1}$. It turns out that, with this choice of $\xi$,  
\begin{equation}\label{eq:en-pert}  \la \xi, (\cK + \cC_N + \cC_N^* + \cV_N) \xi \ra = \la \Omega, (A \cC_N^* + \cC_N A^* ) \Omega \ra + \la \Omega, A  (\cK + \cV_N )  A^* \Omega \ra  \end{equation} 
leads, together with the constant term $C_N$ in (\ref{eq:wtHN}), exactly to the correct expression for the ground state energy, including the third order term in (\ref{eq:main}). The problem with this approach is the fact that $\xi$ is not normalized. Since 
\[ \| \xi \|^2 = 1 + \| A^* \Omega \|^2  \simeq C N^{3\kappa-1} \]
the trial state $\xi / \| \xi \|$ only provides a good upper bound for the ground state energy, if $\kappa =0$. To avoid problems with the normalization, we will consider a trial state defined through a unitary map. We define, formally, $\xi = e^{A^* - A} \Omega$, with $A$ as in (\ref{eq:xi0}). Since  
\[ \la \Omega, [  \cC_N + \cC_N^* , A^* - A ] \Omega \ra = \la \Omega, ( \cC_N  A^* + A \cC_N^* ) \Omega \ra \]
and 
\[ \la \Omega, [ [ \cK + \cV_N , A^* - A] , A^* - A ] \Omega \ra = 2 \la \Omega, A (\cK + \cV) A^* \Omega \ra \]
the trial state $\xi$ does indeed generate the contributions (\ref{eq:en-pert}), leading to Wu's correction. To obtain a rigorous upper bound, though, we would have to show that $e^{A^* - A}$ is well-defined as a unitary operator and we would have to show convergence of its Duhamel expansion. 

On the one hand, we observe that the vacuum expectation of multiple commutators like $[ A , [ A^* , [ A, \dots [ A^*, O] \dots ]$ gets smaller, with increasing number of commutators. In fact, for every two additional commutators, we have four more momenta to sum up. When considering the vacuum expectation, however, we also have three more delta functions. Effectively, every new pair of commutators with $A,A^*$ produces therefore only one additional momentum sum, typically of the order $\| \sigma \|_2^2 \simeq N^{3\kappa/2}$. Since each factor $A,A^*$ carries a small constant $N^{-1/2}$, every two new commutators generate an additional small factor $N^{3\kappa/2 - 1} \ll 1$. 

On the other hand, to 
control error terms arising in the Duhamel expansion, it is very useful to introduce a cutoff on the local number of particles. For $x \in \Lambda$ and $\ell > 0$, we consider the operator  
\[ \cN_\ell (x) = \int_{|x-y| \leq \ell}  dy \, a_y^* a_y \]
measuring the number of particles in a ball of radius $\ell$ around $x$. Since we used an infrared cutoff in momentum space (in fact, we will work with smooth cutoffs), the kernel $\check{\sigma} (x)$ exhibits fast decay, for $|x| \gg \ell_\sigma := N^{-\kappa/2+\beta/4}$. Similarly, the kernel $\check{A} (y,x)$ associated with the cubic operator $A$ defined in (\ref{eq:xi0}), is very small for $|x| \gg  \ell_\sigma$ or for $|y| \gg \ell_\eta := N^{-\kappa/2}$. In fact, $\check{\sigma} (x)$ and $\check{A} (y,x)$ also decay for $|x| < \ell_\sigma$, with rate proportional to $|x|^{-5/2}$. This implies that their $L^2$-norm is essentially supported in the region $|x| \leq \ell_B = N^{-\kappa/2 + \beta/12} \ll \ell_\sigma$, in the sense that, if we introduce a lattice $\Lambda_B = \ell_B \bZ^3$ and, for every $u \in \Lambda_B \cap \Lambda$, we define $\sigma_u = \sigma \mathbbm{1}_{B_u}$, with $B_u$ the box with side length $\ell_B$ centered at $u$, then 
\begin{equation}\label{eq:sigmau-0}  \sum_{u \in \Lambda_B} \| \check{\sigma}_u \|_2 \leq C N^{3\kappa/4 + \epsilon} \end{equation} 
for any $\epsilon > 0$. Up to the additional factor $N^\epsilon$, this is comparable with $\| \check{\sigma} \|_2$. To take advantage from the decay of the kernels $\check{\sigma}$, $\check{A}$, we will therefore modify the trial state, defining $\xi = e^{A^* \Theta - \Theta A} \Omega$, where we introduced an operator $\Theta$, making sure that 
\begin{equation}\label{eq:Theta-cut} \Theta \cN_{\ell_B} (x) \Theta \leq N^\epsilon \end{equation} 
for any $\epsilon > 0$ and every $x \in \Lambda$, cutting off the local number of particles (more precisely, the local number of cubic excitations) in balls of radius $\ell_B = N^{-\kappa/2 + \beta/12}$. Since the expected (by Duhamel expansion) average number of particles generated by the cubic transformation in a ball of radius $\ell_B$ is of the order \[ \ell^3_B \| A^* \Omega \|^2 \simeq \ell^3_B N^{3\kappa-1} \simeq N^{-\beta/4} \ll 1, \] the cutoff does not substantially change the energy of the trial state (but it is nevertheless important to control error terms). 

To explain how we use the cutoff $\Theta$, let us try to estimate the number of particles generated by the unitary operator $e^{\Theta \wt{A}^* - \wt{A} \Theta}$, where we replaced $A$ by its simplified version 
\[ \wt{A} = \frac{1}{\sqrt{N}} \sum_{|p| \geq N^{\kappa/2}, |q| \geq N^{\kappa/2 - \beta/4}} \eta_p \sigma_q a_{p+q} a_{-p} a_{-q} \]
with $\eta_p = - N^\kappa \widehat{Vf} (p/N^{1-\kappa}) / 2p^2$. We have 
\begin{equation}\label{eq:duh0}  \la e^{\Theta \wt{A}^* - \wt{A}\Theta} \Omega, \cN e^{\Theta \wt{A}^*  -  \wt{A}\Theta} \Omega \ra = 2 \text{Re } \int_0^1 ds \la e^{s(\Theta  \wt{A}^*  - \wt{A} \Theta)} \Omega,  [\cN , \wt{A} ] \Theta e^{s(\Theta \wt{A}^* - \wt{A} \Theta)} \Omega \ra. \end{equation} 
Setting $\xi_s = e^{s(\Theta \wt{A}^* -\wt{A} \Theta)} \Omega$ and switching to position space, we find 
\[ \begin{split} 
\big| \la \xi_s, [ \cN , \wt{A} ] \Theta \xi_s \ra \big| &= 3 \int dx dy \, | \check{\eta} (x-y)| |\check{\sigma} (x-z)|  \| a_x a_y a_z \Theta \xi_s \| \| \xi_s \| \\ &\leq 3 \sum_{u \in \Lambda_B} \int dx dy \,  |\check{\eta} (x-y)| |\check{\sigma}_u (x-z)|  \| a_x a_y a_z \Theta \xi_s \| \| \xi_s \|. \end{split} \]
By Cauchy-Schwarz and using that $\check{\sigma}_u (x-z)$ is supported on $|x-z-u| \leq \ell_B$ and $\check{\eta} (x-y)$ is essentially supported on $|x-y| \leq \ell_\eta$, we conclude that 
\begin{equation} \label{eq:groNwtA} \begin{split} 
\big| \la \xi_s, [ \cN , \wt{A} ] \Theta \xi_s \ra \big| &\leq 3\| \check{\eta} \|_2 \sum_{u \in \Lambda_B} \| \check{\eta} \|_2  \| \check{\sigma}_u \|_2 \Big( \int dx \, \| \cN^{1/2}_{\ell_B} (x-u) \cN^{1/2}_{\ell_\eta} (x) a_x \Theta \xi_s \|^2 \Big)^{1/2}.  \end{split} \end{equation} 
With $\| \eta \|_2 \leq C N^{3\kappa/4}$ and using (\ref{eq:sigmau-0}) and (\ref{eq:Theta-cut}) (after passing $\Theta$ through $a_x$), we obtain 
\[ \big| \la \xi_s, [ \cN , \wt{A} ] \Theta \xi_s \ra \big|  \leq C N^{3\kappa/2-1/2+ \epsilon} \| \cN^{1/2} \xi_s \| \| \xi_s \| \leq C N^{3\kappa-1+\epsilon} \| \xi_s \|^2 + C \| \cN^{1/2} \xi_s \|^2. \]
Inserting in (\ref{eq:duh0}) and applying Gronwall's Lemma, we arrive at 
\[ \la e^{\wt{A}^* \Theta - \Theta \wt{A}} \Omega, \cN e^{\wt{A}^* \Theta - \Theta \wt{A}} \Omega \ra \leq C N^{3\kappa-1+\epsilon} \]
which is almost the optimal result (one can remove the additional $N^\epsilon$, expanding (\ref{eq:duh0}) to the next order; this will be shown for the full cubic operator $A$ in Section \ref{subsec:cubic1}). Deriving such an estimate bounding the r.h.s. of (\ref{eq:groNwtA}) with the global, rather than the local, number of particles operator would have been impossible. Also, localizing $\sigma$ on boxes of size $\ell_\sigma$, rather than $\ell_B \ll \ell_\sigma$, would have led to a worse, non-optimal, dependence on $N$.  

The plan of the paper is as follows. In the next section, we define our trial state $\psi_N = W_{N_0} e^{B^* - B} e^{A^* \Theta- \Theta A} \Omega$, where $W_{N_0}$ is the Weyl operator, generating the condensate in the zero momentum state, $e^{B^* - B}$ is the Bogoliubov transformation and $e^{A^* \Theta - \Theta A}$ is the cubic transformation. To this end, in Subsections \ref{subsec:bogo}, \ref{subsec:kerA} we define the coefficients of $B$ and $A$ and we study their properties, in momentum and in position space. In Subsection \ref{sec:cutoff}, we introduce the cutoff $\Theta$ for the local number of particles operator. In Section \ref{sec:energy}, we will then compute the energy of our trial state; in subsection \ref{subsec:quadratic} we first compute the action of the Bogoliubov transformation on the Hamilton operator. This reduces the problem to the computation of certain expectations, in the state $\xi = e^{A^* \Theta - \Theta A} \Omega$, which will be discussed in Subsections \ref{subsec:cubic1}, \ref{subsec:cubic2}. This will allow us to complete the proof of Theorem \ref{theorem:Fockspaceresult} in Subsection \ref{subsec:proof}. 

\medskip

\textit{Acknowledgements.} We gratefully acknowledge financial support from the Swiss National Science Foundation through the Grant “Dynamical and energetic properties of Bose-Einstein condensates” and from the European Research Council through the ERC-AdG CLaQS.

\section{The trial state}

This section is devoted to the construction of the trial state $\psi_N \in \cF (\Lambda )$, which will be later shown to satisfy (\ref{eq:psiN-N}) and (\ref{eq:mainN}). 

We will consider a trial state of the form 
\begin{equation}\label{eq:psi0} \psi_N = W_{N_0} e^{B-B^*} e^{A^* \Theta - \Theta A} \Omega. \end{equation} 
Here, $\Omega \in \cF (\Lambda )$ is the vacuum vector. Moreover,   
\begin{equation}\label{eq:weyl0} W_{N_0} = \exp (\sqrt{N_0} a_0^* - \text{h.c.} ) \end{equation} 
is a Weyl operator, satisfying
\begin{equation}\label{eq:weyl1} W_{N_0}^* a_p W_{N_0} = a_p + \delta_{p,0} \sqrt{N_0} \end{equation}
for all $p \in \Lambda^* = 2\pi \bZ^3$ and generating a coherent state with (in average) $N_0$ particles with momentum $p=0$ (the Bose-Einstein condensate).

On the other hand, $e^{B-B^*}$ and $e^{A^* \Theta - \Theta A}$ are unitary maps, with 
\begin{equation}\label{eq:BT0} B = \frac{1}{2} \sum_{p \in \Lambda^* \backslash \{ 0 \}} \mu_p a_p a_{-p} \end{equation} 
quadratic and 
\begin{equation}\label{eq:A-def} A = \sum_{p,q \in \Lambda^*} A_{p,q} \, a_{p+q} a_{-p} a_{-q} \end{equation} 
cubic in creation and annihilation operators. The operator $\Theta$, appearing in the cubic transformation, is a cutoff, controlling the local particle density. In the next subsections, we are going to define the kernels $\mu_p$, $A_{p,q}$ and the cutoff $\Theta$ precisely and we are going to study their most important properties.

\subsection{Coefficients of the Bogoliubov transformation} 
\label{subsec:bogo}

We use the solution $f$ of the zero-energy scattering equation (\ref{eq:0en}) to define the coefficients 
\begin{equation}\label{eq:defetainfty}
\eta_{\infty,p} = \frac{-N^\kappa \widehat{Vf}(p/N^{1-\kappa})}{2p^2}
\end{equation}
and
\begin{equation}\label{eq:defmuinfty}
\mu_{\infty,p} = -\frac{1}{4} \log (1-4\eta_{\infty,p})
\end{equation}
for $p\in \Lambda^*_+ = 2\pi \mathbb{Z}^3\backslash\{0\}$. We also define $\eta_{\infty,0}= \mu_{\infty,0}=0$. Notice that \[ \tanh (2\mu_{\infty,p})  = - \frac{
\widehat{Vf}(p/N^{1-\kappa})}{p^2 + \widehat{Vf}(p/N^{1-\kappa})} \, .\]
Hence, a Bogoliubov transformation with coefficients $\mu_{\infty, p}$ would exactly diagonalize a quadratic Hamiltonian with interaction $Vf$ instead of $V$. To describe the action of such a Bogoliubov transformation, we define the coefficients $\sigma_{\infty,p} = \sinh \mu_{\infty,p}$ and $\gamma_{\infty,p} = \cosh \mu_{\infty,p}$. We have
\begin{equation}
\begin{split}\label{eq:sigmainfty}
&\sigma_{\infty,p}^2 = \frac{p^2 +  N^\kappa \widehat{Vf}(p/N^{1-\kappa})-\sqrt{p^4 + 2p^2 N^\kappa \widehat{Vf}(p/N^{1-\kappa})}}{2\sqrt{p^4 + 2p^2 N^\kappa \widehat{Vf}(p/N^{1-\kappa})}}
=\frac{1-2\eta_{\infty,p}-\sqrt{1-4\eta_{\infty,p}}}{2\sqrt{1-4\eta_{\infty,p}}},\\
&\gamma_{\infty,p}^2 = \frac{p^2 +  N^\kappa \widehat{Vf}(p/N^{1-\kappa})+\sqrt{p^4 + 2p^2 N^\kappa \widehat{Vf}(p/N^{1-\kappa})}}{2\sqrt{p^4 + 2p^2 N^\kappa \widehat{Vf}(p/N^{1-\kappa})}}
=\frac{1-2\eta_{\infty,p}+\sqrt{1-4\eta_{\infty,p}}}{2\sqrt{1-4\eta_{\infty,p}}},\\
&\gamma_{\infty,p}\sigma_{\infty,p} = \frac{-N^\kappa \widehat{Vf}(p/N^{1-\kappa})}{2\sqrt{p^4 + 2p^2 N^\kappa \widehat{Vf}(p/N^{1-\kappa})}}
=\frac{\eta_{\infty,p}}{\sqrt{1-4\eta_{\infty,p}}}.
\end{split}
\end{equation} 

Some useful properties of the coefficients $\eta_{\infty,p}$, $\sigma_{\infty,p}$ and $\gamma_{\infty,p}$ are collected in the following lemma. 
 \begin{lemma}\label{lemma:scattering}
 The scattering equation \eqref{eq:0en} translates into the discrete version
\begin{equation}\label{eq:scatteringdiscrete}
p^2 \eta_{\infty,p} +\frac{N^\kappa}{2} \hat{V}(p/N^{1-\kappa}) +\frac{N^\kappa}{2N} \sum_{q\in \Lambda_+^*} \hat{V}((p-q)/N^{1-\kappa})\eta_{\infty,q}
= O(N^{2\kappa-1})
\end{equation} 
uniformly in $p \in \Lambda^*_+$. Furthermore, 
\begin{equation}\label{eq:etacorrection}
\sum_{p\in \Lambda_+^*} N^\kappa \hat{V}(p/N^{1-\kappa})\eta_{\infty,p} 
= (8 \pi \mathfrak{a}- \hat{V}(0)) N^{1+\kappa} + O(N^{2\kappa}).
\end{equation} 
 The coefficients $\gamma_{\infty,p}$ and $\sigma_{\infty,p}$ satisfy
 \begin{equation}\label{eq:sigmainftymomentum}
 \abs{\gamma_{\infty,p}-1}\leq \abs{\sigma_{\infty,p}}
 \leq C \abs{\widehat{Vf}(p/N^{1-\kappa})} \,  \min \Big( \frac{N^{\kappa/4}}{\abs{p}^{1/2}}, \frac{N^{\kappa}}{\abs{p}^{2}} \Big),
 \end{equation}
 and therefore, for $s \in (3/2, 6)$, 
 \begin{equation}\label{eq:sigmainftypowers}
 \sum_{p\in \Lambda_+^*} \abs{\gamma_{\infty,p}-1}^s
 \leq \sum_{p\in \Lambda_+^*} \abs{\sigma_{\infty,p}}^s 
 \leq C N^{3\kappa/2}
 \end{equation}
 as well as
 \begin{equation}\label{eq:sigmainftyL1}
  \sum_{p\in \Lambda_+^*} \abs{\gamma_{\infty,p}-1}
 \leq \sum_{p\in \Lambda_+^*} \abs{\sigma_{\infty,p}}
 \leq C N.
 \end{equation}
 We also have
 \begin{equation}\label{eq:gammasigmainfty-etainfty}
 \abs{\gamma_{\infty,p}\sigma_{\infty,p} - \eta_{\infty,p}} \leq C \min \Big( \frac{N^{\kappa}}{\abs{p}^2}, \frac{N^{2\kappa}}{\abs{p}^4} \Big)
 \end{equation}
 and therefore
 \begin{equation}\label{eq:gammasigmainfty-etainftysum}
  \sum_{p\in \Lambda_+^*} \abs{\gamma_{\infty,p}\sigma_{\infty,p} - \eta_{\infty,p}} \leq C N^{3\kappa/2}.
 \end{equation}
 \end{lemma}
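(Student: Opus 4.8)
The plan is to reduce everything to two inputs: the explicit formulas \eqref{eq:sigmainfty} for $\sigma_{\infty,p}, \gamma_{\infty,p}, \gamma_{\infty,p}\sigma_{\infty,p}$ in terms of $\eta_{\infty,p}$, and elementary bounds on $\eta_{\infty,p}$ itself. For the latter, I would first record that $\widehat{Vf} \in L^\infty$ (since $Vf \in L^1$, because $V \in L^2$ has compact support and $f \in L^\infty$) and that $\widehat{Vf}(0) = 8\pi\aa$; moreover $Vf$ being a nice function means $\widehat{Vf}(k)$ decays (at least $\widehat{Vf}(k) = \widehat{Vf}(0) + O(|k|^2)$ near $0$ and is bounded everywhere). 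Hence $|\eta_{\infty,p}| \le C N^\kappa \|\widehat{Vf}\|_\infty / p^2 \le C N^\kappa/p^2$, with $p^2 \ge (2\pi)^2$; in particular $|\eta_{\infty,p}|$ is small uniformly in $p$ once $N$ is large only if $\kappa$ were zero, so one must be careful — for $|p| \lesssim N^{\kappa/2}$ the quantity $\eta_{\infty,p}$ is of order one, which is exactly why the square roots in \eqref{eq:sigmainfty} are kept rather than Taylor-expanded. The key elementary estimate is: $1 - 4\eta_{\infty,p} = 1 + 2N^\kappa\widehat{Vf}(p/N^{1-\kappa})/p^2 \ge c > 0$ uniformly (this uses that $\widehat{Vf} \ge 0$, which holds since $f$ solves the zero-energy equation with $V$ having no bound states, or alternatively $p^4 + 2p^2 N^\kappa\widehat{Vf} \ge 0$ follows from the first displayed identity for $\sigma_{\infty,p}^2$ being a square), so the denominators $\sqrt{1-4\eta_{\infty,p}}$ are harmless.

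For \eqref{eq:scatteringdiscrete}: multiply \eqref{eq:defetainfty} by $p^2$ to get $p^2\eta_{\infty,p} + \frac12 N^\kappa\widehat{Vf}(p/N^{1-\kappa}) = 0$ exactly, then replace $\widehat{Vf}$ by $\widehat V + \widehat{V}\ast(\text{correction})$: precisely, $f = 1 - w$ where $(-\Delta + \tfrac12 V)f = 0$ gives $Vf = V - Vw$ and, in Fourier on the continuum, $\widehat{Vf}(k) = \widehat V(k) - \widehat{Vw}(k)$ with $-\Delta w = \tfrac12 Vf$, i.e. $\widehat{Vw}(k) = -\widehat{Vf}(k)/(2|k|^2)\cdot|k|^2\cdot(-2)$... — more cleanly, the rescaled scattering equation on $\Lambda$ reads $p^2\eta_{\infty,p} = -\tfrac12 N^\kappa\widehat{Vf}(p/N^{1-\kappa})$ and the genuine continuum scattering identity $\widehat{Vf}(k) = \widehat V(k) - (2\pi)^{-3}\int \widehat V(k-\ell)\,\widehat{Vf}(\ell)/(2\ell^2)\,d\ell$ translates, after rescaling and replacing the integral by the Riemann sum over $\Lambda_+^*$ with step $2\pi$, into \eqref{eq:scatteringdiscrete} with the $O(N^{2\kappa-1})$ error coming from (i) the difference between the sum and the integral and (ii) the discarded $p=0$ contribution; each of these is estimated using the decay $|\eta_{\infty,q}| \le C\min(N^{\kappa/2}/|q|^{1/2}\cdot\text{(stuff)}, N^\kappa/q^2)$ and boundedness of $\widehat V$. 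The identity \eqref{eq:etacorrection} is then $\sum_p N^\kappa\widehat V(p/N^{1-\kappa})\eta_{\infty,p} = -2\sum_p p^2\eta_{\infty,p}^2$ (from the exact relation just used) plus lower order, and a second-order scattering-type identity / the relation $8\pi\aa = \int Vf$ converts the continuum analogue $\int \widehat V \,\widehat{Vf}/(2|k|^2) = \widehat V(0) - 8\pi\aa$ (this is just $\int Vw = \int V - \int Vf$) into the claimed $(8\pi\aa - \widehat V(0))N^{1+\kappa}$, again with the sum-to-integral replacement costing $O(N^{2\kappa})$.

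For the coefficient bounds \eqref{eq:sigmainftymomentum}--\eqref{eq:gammasigmainftysum}: from \eqref{eq:sigmainfty}, $\sigma_{\infty,p}^2 = \frac{(1-\sqrt{1-4\eta_{\infty,p}})^2 - 2\eta_{\infty,p}(1-\sqrt{1-4\eta_{\infty,p}}) }{\cdots}$ — simpler, $2\sigma_{\infty,p}^2 = \frac{1-2\eta_{\infty,p}}{\sqrt{1-4\eta_{\infty,p}}} - 1$, and $\gamma_{\infty,p}^2 - 1 = \sigma_{\infty,p}^2$, so $|\gamma_{\infty,p}-1| \le |\gamma_{\infty,p}^2-1| = \sigma_{\infty,p}^2 \le |\sigma_{\infty,p}|$ once $|\sigma_{\infty,p}|\le 1$; the upper bound on $|\sigma_{\infty,p}|$ follows by noting $|\sigma_{\infty,p}|^2 \le C|\eta_{\infty,p}|^2/\sqrt{1-4\eta_{\infty,p}}$ for the large-$p$ regime (Taylor, using $|\eta_{\infty,p}|$ small when $|p|\gg N^{\kappa/2}$) giving the $N^\kappa/|p|^2$ branch times $\widehat{Vf}$, while for $|p|\lesssim N^{\kappa/2}$ one has $|\sigma_{\infty,p}|^2 \le C|\eta_{\infty,p}| \le C N^\kappa\|\widehat{Vf}\|_\infty/p^2$, and comparing $N^\kappa/p^2$ with $N^{\kappa/2}/|p|^{1/2}$ at the crossover $|p|\sim N^{\kappa/2}$ shows the stated $\min(N^{\kappa/4}/|p|^{1/2}, N^\kappa/|p|^2)$ envelope (keeping the extra $|\widehat{Vf}(p/N^{1-\kappa})|$ factor throughout, which also supplies decay in $|p|\gtrsim N^{1-\kappa}$, relevant since $V\in L^2$ only gives $\widehat{Vf}\in L^2$ — actually $Vf\in L^1$ so $\widehat{Vf}\in L^\infty\cap C_0$, enough). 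Then \eqref{eq:sigmainftypowers}: split the sum at $|p| = N^{\kappa/2}$; on $|p|\le N^{\kappa/2}$ use the $N^{\kappa s/4}/|p|^{s/2}$ bound, the sum of $|p|^{-s/2}$ over $|p|\le R$ behaves like $R^{3-s/2}$ for $s<6$, giving $N^{\kappa s/4}\cdot N^{\kappa(6-s)/4} = N^{3\kappa/2}$; on $|p|>N^{\kappa/2}$ use $N^{\kappa s/2}/|p|^{2s}$, summable since $2s>3$, giving $N^{\kappa s/2}\cdot N^{\kappa(3-2s)/2} = N^{3\kappa/2}$ — both branches match, and $s=1$ (in \eqref{eq:sigmainftyL1}) is the borderline case $3-s/2 = 5/2$ giving $N^{\kappa/4}\cdot N^{5\kappa/4}$... wait, that is $N^{3\kappa/2}$, not $N$; the bound $CN$ in \eqref{eq:sigmainftyL1} is then just the weaker statement $N^{3\kappa/2}\le N$ valid for $\kappa \le 2/3$, or one uses $\sum|\sigma_{\infty,p}| \le C\sum N^\kappa|\widehat{Vf}|/p^2 \le CN^\kappa \cdot N^{?}$ — I'd present it via the $\ell^1$ estimate directly. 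Finally \eqref{eq:gammasigmainfty-etainfty}: $\gamma_{\infty,p}\sigma_{\infty,p} - \eta_{\infty,p} = \eta_{\infty,p}\big(\tfrac{1}{\sqrt{1-4\eta_{\infty,p}}} - 1\big)$, and $|\tfrac{1}{\sqrt{1-4\eta}}-1| \le C|\eta|$ when $|\eta|$ is bounded (which it is, uniformly), so the difference is $\le C\eta_{\infty,p}^2 \le C\min(N^\kappa/p^2, N^{2\kappa}/p^4)$ after using the two branches of the $\eta$ bound; summing as above (now with exponent effectively $2$ in the relevant regime) gives $\le CN^{3\kappa/2}$, which is \eqref{eq:gammasigmainftysum}.

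The main obstacle I anticipate is \eqref{eq:scatteringdiscrete} and especially \eqref{eq:etacorrection}: turning the exact continuum scattering equation for $f$ into its discretized version on the torus $\Lambda$ with the claimed error orders requires carefully controlling (a) the replacement of $\int_{\bR^3}$ by $(2\pi)^3\sum_{\Lambda_+^*}$, where the integrand $\widehat V(k-\ell)\widehat{Vf}(\ell)/\ell^2$ is singular at $\ell = 0$ like $|\ell|^{-2}$ (integrable in $3$d, but the Riemann-sum error near the singularity must be tracked — this is where the $N^{2\kappa-1}$ and $N^{2\kappa}$ come from), and (b) the periodization of $V$, i.e. that $\widehat V(p/N^{1-\kappa})$ for the periodized potential agrees with the $\bR^3$-Fourier transform at those lattice points once $N^{1-\kappa} \gg \mathrm{supp}(V)$. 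Everything else is Taylor expansion and splitting sums at $|p|\sim N^{\kappa/2}$, which is routine.
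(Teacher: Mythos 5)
Your proposal follows the same route as the paper: for \eqref{eq:scatteringdiscrete} and \eqref{eq:etacorrection} one writes $p^2\eta_{\infty,p}+\tfrac{N^\kappa}{2}\hat V(p/N^{1-\kappa})=\tfrac{N^\kappa}{2}\widehat{Vw}(p/N^{1-\kappa})$ with $w=1-f$, expresses $\widehat{Vw}$ as the continuum convolution of $\hat V$ with $\widehat{Vf}(\cdot)/(2|\cdot|^2)$, and compares the integral to the lattice sum; \eqref{eq:etacorrection} is the same identity read at $p=0$ together with $\widehat{Vw}(0)=\hat V(0)-8\pi\aa$; and the coefficient bounds are elementary consequences of \eqref{eq:sigmainfty}, which is all the paper does (it declares them ``easy to check'').

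Two caveats. The substantive one: to get $O(N^{2\kappa-1})$, rather than $O(N^{2\kappa-1}\log N)$, from the sum--integral comparison you must use that the first-order Taylor term in the cell variable $\zeta\in[-\pi,\pi]^3$ integrates to zero by symmetry of the cell, so that the error is controlled by \emph{second} derivatives of $\hat V((p-\cdot)/N^{1-\kappa})\,\widehat{Vf}(\cdot/N^{1-\kappa})/|\cdot|^2$, which decay like $|q|^{-4}$ and are summable; the first derivative alone decays only like $|q|^{-3}$, whose lattice sum up to $|q|\sim N^{1-\kappa}$ is logarithmically divergent. You flag the singularity as the difficulty but do not name this cancellation, and without it the claimed error order is not reached. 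Minor slips: the parenthetical identity $\sum_p N^\kappa\hat V(p/N^{1-\kappa})\eta_{\infty,p}=-2\sum_p p^2\eta_{\infty,p}^2$ is false ($\hat V$ should be $\widehat{Vf}$) --- the correct route is the one you give immediately afterwards; $|\eta_{\infty,p}|$ is \emph{not} uniformly bounded (it is of order $N^\kappa$ at $|p|\sim 1$), which is precisely why \eqref{eq:gammasigmainfty-etainfty} is a $\min$ of two branches, so use $\bigl|1/\sqrt{1-4\eta}-1\bigr|\le C\min(1,|\eta|)$ instead; and $|\gamma_{\infty,p}-1|\le|\sigma_{\infty,p}|$ holds unconditionally from $\cosh\mu-1=2\sinh^2(\mu/2)\le 2|\sinh(\mu/2)|\cosh(\mu/2)=|\sinh\mu|$, with no need for $|\sigma_{\infty,p}|\le 1$ (which fails for small $|p|$).
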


Before proving Lemma \ref{lemma:scattering}, we establish some preliminary bounds. 
\begin{lemma}\label{lm:Fm} 
For $m \in \bN$, let 
    \begin{equation}\label{eq:defFm}
    F_{m}(\xi) =  \Big( 1+ \sum_{k \in \bN^3: |k| \leq m} \abs{\cF[ x^k Vf] (\xi) }]\Big)^m - 1.
\end{equation}
Then, for every $m \in \bN$ there is a constant $C_m > 0$ such that  
\begin{equation}\label{eq:inftynormFm}
    \norm{F_m}_\infty \leq C_m
\end{equation}
and 
\begin{equation}\label{eq:normFm}
\begin{split}
    \sum_{p\in\Lambda^*} F_{m}^2((p+\zeta)/N^{1-\kappa})
    \leq C_m N^{3-3\kappa}
\end{split}
\end{equation}
for all $N \in \bN$ and all $\zeta\in\mathbb{R}^3$. Here, $\cF$ denotes the Fourier transform on the torus but since $Vf$ has compact support, it also equals the one on full space. Moreover, for $k = (k_1, k_2, k_3) \in \bN^3$, we use the standard notation $|k| = k_1 + k_2 + k_3$ and $x^k = x_1^{k_1} x_2^{k_2} x_3^{k_3}$. 
\end{lemma}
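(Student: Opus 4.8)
The plan is to treat the two estimates \eqref{eq:inftynormFm} and \eqref{eq:normFm} separately; both come down to the single fact that $Vf$ is square-integrable with \emph{compact support} — say supported in $\{|x|\le R\}$, with $R$ the support radius of $V$ (here one uses $f\in L^\infty(\bR^3)$). In particular $Vf\in L^1(\bR^3)$, so for every $k\in\bN^3$ one has $x^kVf\in L^1\cap L^2(\bR^3)$, with $\norm{x^kVf}_1,\norm{x^kVf}_2\le C_k\,\norm{V}_2\norm{f}_\infty$, $C_k$ depending only on $R$ and $k$. For \eqref{eq:inftynormFm} this already suffices: $\abs{\cF[x^kVf](\xi)}\le\norm{x^kVf}_1$ for all $\xi$, and since only finitely many $k$ satisfy $|k|\le m$, the bracket in \eqref{eq:defFm} is bounded uniformly in $\xi$ by a constant depending only on $m$ (and $V,f$); as $F_m\ge0$, raising this bracket to the $m$-th power and subtracting $1$ yields \eqref{eq:inftynormFm}.

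For \eqref{eq:normFm} the first move is to reduce $F_m^2$ to the individual squares $\abs{\cF[x^kVf]}^2$. Setting $S(\xi)=\sum_{|k|\le m}\abs{\cF[x^kVf](\xi)}\le S_\infty<\infty$, the binomial theorem gives $F_m=(1+S)^m-1=\sum_{j=1}^m\binom{m}{j}S^j\le c_mS$, while Cauchy–Schwarz gives $S^2\le\#\{k:|k|\le m\}\cdot\sum_{|k|\le m}\abs{\cF[x^kVf]}^2$; hence $F_m^2(\xi)\le C_m\sum_{|k|\le m}\abs{\cF[x^kVf](\xi)}^2$. Since the sum over $k$ is finite, \eqref{eq:normFm} follows once one shows, for each fixed $k$ — writing $h:=x^kVf$ and $L:=N^{1-\kappa}$, so that $L\ge1$ (recall $\kappa<1$) and $L^3=N^{3-3\kappa}$ —
\begin{equation*}
\sum_{p\in\Lambda^*}\big|\cF[h]\big((p+\zeta)/L\big)\big|^2\le C_k\,L^3\norm{h}_2^2 ,
\end{equation*}
uniformly in $\zeta\in\bR^3$ and $N\in\bN$.

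To prove this last bound I would again exploit that $h$ has compact support. Pick $\chi\in C_c^\infty(\bR^3)$ with $\chi\equiv1$ on $\{|x|\le R\}$; then $h=\chi h$, hence $\cF[h]=(2\pi)^{-3}\,\cF[\chi]\ast\cF[h]$, and bounding $\abs{\cF[\chi]}$ by a fixed radially decreasing $w\in L^1\cap L^\infty(\bR^3)$ (possible since $\cF[\chi]$ is Schwartz), Cauchy–Schwarz gives $\abs{\cF[h](\xi)}^2\le(2\pi)^{-6}\norm{w}_1\,\big(w\ast\abs{\cF[h]}^2\big)(\xi)$. Summing over $p\in\Lambda^*$ and using Tonelli and Plancherel,
\begin{equation*}
\sum_{p\in\Lambda^*}\big|\cF[h]\big((p+\zeta)/L\big)\big|^2\le(2\pi)^{-3}\norm{w}_1\norm{h}_2^2\,\sup_{\eta\in\bR^3}\sum_{p\in\Lambda^*}w\big((p+\zeta)/L-\eta\big) .
\end{equation*}
Because the points $(p+\zeta)/L$, $p\in\Lambda^*=2\pi\bZ^3$, form a lattice of spacing $2\pi/L\le2\pi$ and $w$ is a fixed radially decreasing $L^1$ function, the last sum is $\le C\max\!\big(1,(L/2\pi)^3\big)\le CL^3$ for $L\ge1$, uniformly in $\eta$ and $\zeta$; this is exactly the claimed bound. (Equivalently one could argue by Poisson summation, since $\abs{\cF[h]}^2=\cF[G]$ with $G:=h\ast\overline{h(-\,\cdot\,)}$ continuous and supported in $\{|z|\le2R\}$, so that $\sum_{p\in\Lambda^*}\cF[G]\big((p+\zeta)/L\big)=L^3\sum_{m\in\bZ^3}G(-Lm)\,e^{i\zeta\cdot m}$ reduces, as soon as $L>2R$, to the single term $L^3G(0)=L^3\norm{h}_2^2$, the remaining finitely many $N$ being harmless because $L^3\ge1$.)

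The only step that is not pure bookkeeping is this final lattice sum. Since $V$ is only assumed square-integrable, $\cF[x^kVf]$ carries no pointwise decay, so a term-by-term estimate of the sum is hopeless; the essential idea is to use the \emph{compact support} of $x^kVf$, which makes $\cF[x^kVf]$ band-limited and lets one replace the lattice sum by the integral $\int\abs{\cF[x^kVf]}^2$ up to the lattice mesh, thereby producing the factor $L^3=N^{3-3\kappa}$. Everything else — the reductions above and the summation over the finitely many multi-indices $k$ with $|k|\le m$ — is routine.
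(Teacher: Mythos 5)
Your proof is correct, and for the crucial step it takes a genuinely different route from the paper. The reduction of $F_m^2$ to $\sum_{|k|\le m}\abs{\cF[x^kVf]}^2$ (via $F_m\le c_m S$ and Cauchy--Schwarz on the finite sum over $k$) is the same as the paper's. For the remaining estimate $\sum_{p\in\Lambda^*}\abs{\cF[h]((p+\zeta)/L)}^2\le CL^3\norm{h}_2^2$ with $h=x^kVf$ and $L=N^{1-\kappa}$, the paper argues more directly: it rewrites $\cF[x^kVf]((p+\zeta)/L)=L^3\,\cF[(L\cdot)^k e^{-i\zeta\cdot}Vf(L\cdot)](p)$, notes that the rescaled function is supported in the unit torus (for $N$ large), and applies Parseval on $\Lambda$, which converts the lattice sum into exactly $L^3\norm{x^kVf}_2^2$. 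Your argument instead majorizes $\abs{\cF[h]}^2\le C\,w\ast\abs{\cF[h]}^2$ using a smooth cutoff equal to $1$ on the support of $Vf$, and then compares the lattice sum of the radially decreasing majorant $w$ with its integral (or, equivalently, invokes Poisson summation). Both routes rest on the same essential fact you single out — the compact support of $Vf$ — and both are valid; the paper's is shorter because it is an exact identity, while yours is marginally more robust in that it applies uniformly for all $L\ge1$ without requiring the rescaled support to fit inside a single fundamental cell, a point the paper's one-line Parseval step leaves implicit for small $N$ and which you handle explicitly.
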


\begin{proof}
    The bound \eqref{eq:inftynormFm} follows immediately from $\abs{\cF[ x^{k} Vf](\xi)}\leq C$. Also, using this inequality we have
    \begin{equation*}
\begin{split}
    \sum_{p\in\Lambda^*} F_{m}^2((p+\zeta)/N^{1-\kappa})
    &\leq C_m  \sum_{|k| \leq m} \sum_{p\in\Lambda^*} \abs{\cF[ x^{k} Vf]((p+\zeta)/N^{1-\kappa})}^2\\
    &\leq C_m   \sum_{|k| \leq m} \sum_{p\in\Lambda^*} \abs{N^{3-3\kappa}\cF[(N^{1-\kappa} \cdot)^{k} \exp(-i\zeta\cdot) Vf(N^{1-\kappa}\cdot)](p)}^2\\
    &\leq C_m N^{3-3\kappa}  \sum_{|k| \leq m} 
    \int_\Lambda dx N^{3-3\kappa} \abs{(N^{1-\kappa} x)^{k} \exp(-i\zeta x) Vf(N^{1-\kappa}x)}^2\\
    &\leq C_m N^{3-3\kappa}
\end{split}
\end{equation*}
showing \eqref{eq:normFm}.
\end{proof}

We are now ready to show Lemma \ref{lemma:scattering}. 
\begin{proof}[Proof of Lemma \ref{lemma:scattering}]
We define $w = 1- f = (-\Delta)^{-1} Vf / 2$ so that $\eta_{\infty,p} = - N^{3\kappa-2} \widehat{w} (p/N^{1-\kappa})$. For $p \not = 0$, we obtain 
\begin{equation}\label{eq:computationscattering}
\begin{split}
&p^2 \eta_{\infty,p} +\frac{N^\kappa}{2} \hat{V}(\frac{p}{N^{1-\kappa}}) +\frac{N^\kappa}{2N} \sum_{q\in \Lambda_+^*} \hat{V}(\frac{p-q}{N^{1-\kappa}})\eta_{\infty,q}
\\ = & \frac{N^\kappa}{2} \widehat{Vw}(\frac{p}{N^{1-\kappa}}) +\frac{N^\kappa}{2N} \sum_{q\in \Lambda_+^*} \hat{V}(\frac{p-q}{N^{1-\kappa}})\eta_{\infty,q}\\
=  &\frac{N^\kappa}{2(2\pi)^3} \int dq \hat{V}(\frac{p}{N^{1-\kappa}}-q) \widehat{w}(q)
 +\frac{N^\kappa}{2N} \sum_{q\in \Lambda_+^*} \hat{V}(\frac{p-q}{N^{1-\kappa}})\eta_{\infty,q}\\
= &\frac{N^\kappa}{4N(2\pi)^3} \int dq \hat{V}(\frac{p-q}{N^{1-\kappa}}) \frac{N^\kappa \widehat{Vf}(\frac{q}{N^{1-\kappa}})}{q^2}
 -\frac{N^\kappa}{4N} \sum_{q\in \Lambda_+^*} \hat{V}(\frac{p-q}{N^{1-\kappa}})\frac{N^\kappa \widehat{Vf}(\frac{q}{N^{1-\kappa}})}{q^2}\\
 =&\frac{N^{2\kappa-1}}{4(2\pi)^3} \sum_{q\in \Lambda_+^*} 
 \int_{[-\pi,\pi]^3} d\zeta 
  \left(\hat{V}(\frac{p-q+\zeta}{N^{1-\kappa}})\frac{\widehat{Vf}(\frac{q+\zeta}{N^{1-\kappa}})}{(q+\zeta)^2}
 -\hat{V}(\frac{p-q}{N^{1-\kappa}})\frac{\widehat{Vf}(\frac{q}{N^{1-\kappa}})}{q^2}
 \right)
 \\ &+O(N^{2\kappa-1})
 \end{split} 
 \end{equation} 
where the error $O (N^{2\kappa-1})$ arises from the contribution associated with $q=0$. Hence
 \[ \begin{split}
&p^2 \eta_{\infty,p} +\frac{N^\kappa}{2} \hat{V}(p/N^{1-\kappa}) +\frac{N^\kappa}{2N} \sum_{q\in \Lambda_+^*} \hat{V}((p-q)/N^{1-\kappa}) \eta_{\infty,q}  \\
  &=\frac{N^{2\kappa-1}}{4(2\pi)^3} \sum_{q\in \Lambda_+^*} 
 \int_{[-\pi,\pi]^3} d\zeta 
  \int_0^1 ds \frac{d}{ds}\left(\hat{V}(\frac{p-q+s\zeta}{N^{1-\kappa}})\frac{\widehat{Vf}(\frac{q+s\zeta}{N^{1-\kappa}})}{(q+s\zeta)^2}\right)
 +O(N^{2\kappa-1})
 \end{split} \]
 and, since the $s$-derivative evaluated at $s=0$ is odd in $\zeta$, 
 \begin{equation} \label{eq:sca2} \begin{split} 
 &p^2 \eta_{\infty,p} +\frac{N^\kappa}{2} \hat{V}(p/N^{1-\kappa}) +\frac{N^\kappa}{2N} \sum_{q\in \Lambda_+^*} \hat{V}((p-q)/N^{1-\kappa}) \eta_{\infty,q}  \\  &=\frac{N^{2\kappa-1}}{4(2\pi)^3} \sum_{q\in \Lambda_+^*} 
 \int_{[-\pi,\pi]^3} d\zeta 
  \int_0^1 ds \int_0^s dt
  \frac{d^2}{dt^2}\left(\hat{V}(\frac{p-q+t\zeta}{N^{1-\kappa}})\frac{\widehat{Vf}(\frac{q+t\zeta}{N^{1-\kappa}})}{(q+t\zeta)^2}\right)
 +O(N^{2\kappa-1}).
\end{split}
\end{equation} 
Using the function $F_2$, defined in (\ref{eq:defFm}), we can bound 
\begin{equation*}
    \begin{split}
    \sum_{q\in \Lambda_+^*} 
 \int_{[-\pi,\pi]^3} &d\zeta 
  \int_0^1 ds \int_0^s dt \; 
  \Big| \frac{d^2}{dt^2}\left(\hat{V}(\frac{p-q+t\zeta}{N^{1-\kappa}})\frac{\widehat{Vf}(\frac{q+t\zeta}{N^{1-\kappa}})}{(q+t\zeta)^2}\right) \Big| \\
\leq \; &C \sum_{q\in \Lambda_+^*} 
 \int_{[-\pi,\pi]^3} d\zeta 
  \int_0^1 ds \int_0^s dt \, F_2(\frac{q+t\zeta}{N^{1-\kappa}})
\sum_{n=0}^2 N^{-(1-\kappa)(2-n)}\abs{q+t\zeta}^{-2-n}\\
     \leq \; &C 
 \int_{[-\pi,\pi]^3} d\zeta 
  \int_0^1 ds \int_0^s dt 
  \sum_{\substack{q\in \Lambda_+^*\\ \abs{q+t\zeta}\leq N^{1-\kappa}}}
  \norm{F_2}_\infty
\sum_{n=0}^2 N^{-(1-\kappa)(2-n)}\abs{q+t\zeta}^{-2-n}\\
 &+C 
 \int_{[-\pi,\pi]^3} d\zeta 
  \int_0^1 ds \int_0^s dt 
  \sum_{\substack{q\in \Lambda_+^*\\ \abs{q+t\zeta} > N^{1-\kappa}}}
  F_2(\frac{q+t\zeta}{N^{1-\kappa}})
 N^{-2(1-\kappa)}\abs{q+t\zeta}^{-2}\\
 \leq \; &C.
    \end{split}
\end{equation*}
To bound the sum over $|q+t \zeta| > N^{1-\kappa}$, we used Cauchy-Schwarz and (\ref{eq:normFm}). 
From (\ref{eq:sca2}), we obtain (\ref{eq:scatteringdiscrete}), uniformly in $p \in \Lambda^*_+$. 

To prove (\ref{eq:etacorrection}) we observe that, starting from the second line,  (\ref{eq:computationscattering}) also holds for $p=0$. This implies that 
\[ N^{1+\kappa} \widehat{Vw} (0) + N^\kappa \sum_{q \in \Lambda^*_+} \hat{V} ((p-q/N^{1-\kappa}) \eta_{q} = O (N^{2\kappa}). \]
With $\widehat{Vw} (0) = \hat{V} (0) - \widehat{Vf} (0) = \hat{V} (0) - 8\pi \frak{a}$, we obtain  (\ref{eq:etacorrection}).
 
The bounds \eqref{eq:sigmainftymomentum} to \eqref{eq:gammasigmainfty-etainftysum} are easy to check using the explicit form in \eqref{eq:sigmainfty}.
\end{proof}

Let $\beta = 2-3\kappa > 0$. Noticing with (\ref{eq:sigmainftymomentum}) that 
\begin{equation}\label{eq:Nbeta4} N^\kappa \sum_{|p| < N^\alpha} \sigma_{\infty,p}^2 \leq C N^{3\kappa/2+2\alpha} \leq CN^{4\kappa-1} \end{equation} 
if $\alpha \leq \kappa/2 - \beta/4$, we conclude that excitations with momentum below $N^{\kappa/2-\beta/4}$ are negligible, to resolve the energy to the precision we are aiming for (the l.h.s. of (\ref{eq:Nbeta4}) arises from the quadratic term $N^\kappa \sum_p \hat{V} (p/N^{1-\kappa}) a_p^* a_p$, after conjugation with the Bogoliubov transformation; also the contribution of the other quadratic terms in the excitation Hamiltonian is negligible, for $|p| < N^{\kappa/2 - \beta/4}$). Thus, we can impose an infrared cutoff on the coefficients $\mu_p$ of the Bogoliubov transformation defined in (\ref{eq:BT0}). In order to guarantee decay in position space, it is important to use a smooth cutoff. We choose $\chi_l \in C^{\infty}(\mathbb{R})$ with $0\leq \chi_l \leq 1$, $\chi_l(x)=0$ if $x\leq 1$, $\chi_l(x)=1$ if $x\geq 2$, and we set $\chi (p) = \chi_l (\abs{p}/N^{\kappa/2-\beta/4})$. For $p \in \Lambda^*$, we define 
\begin{equation}\label{eq:defmu}
    \mu_p \coloneqq \mu_{\infty ,p} \chi (p)
\end{equation}
with $\mu_{\infty,p}$ as in (\ref{eq:defmuinfty}). With these coefficients we set  
\begin{equation*} B = \frac{1}{2} \sum_{p \in \Lambda^*} \mu_p a_p a_{-p} \, . \end{equation*} 
The trial state (\ref{eq:psi0}) is defined through the Bogoliubov transformation $e^{B-B^*}$, whose action on creation and annihilation operators is determined by 
\begin{equation}\label{eq:BT1} e^{B-B^*} a_p e^{B^*-B} = \gamma_p a_p + \sigma_p a_{-p}^*, \qquad e^{B-B^*} a^*_p e^{B^*-B} = \gamma_p a^*_p + \sigma_p a_{-p} \end{equation} 
where, similarly as in (\ref{eq:sigmainfty}), we introduced the notation $\sigma_p = \sinh \mu_p$, $\gamma_p = \cosh \mu_p$, for all $p \in \Lambda^*$. In position space, we set 
\[ \check{\sigma} (x) = \sum_{p \in \Lambda^*} \sigma_p e^{i p \cdot x} , \qquad \text{and} \qquad \widecheck{\gamma-1} (x) = \sum_{p\in\Lambda^*} (\gamma_p-1) e^{i p \cdot x}. \]
We will also make use of continuous versions of the coefficients $\mu_p, \sigma_p, \gamma_p$. Analogously to (\ref{eq:defetainfty}), (\ref{eq:defmuinfty}), we define therefore, for every $p \in \bR^3$, 
\[ \eta_\infty (p) = -N^\kappa \frac{\widehat{Vf} (p/ N^{1-\kappa})}{2 p^2} ,  \quad \mu_\infty (p) = -\frac{1}{4}  \log (1-4 \eta_\infty (p)),  \]
as well as $\sigma_\infty (p) = \sinh \mu_\infty (p)$, $\gamma_\infty (p) = \cosh \mu_\infty (p)$ and $\mu (p) = \mu_\infty (p) \chi (p)$, $\sigma (p) = \sinh \mu (p)$ and $\gamma (p) = \cosh \mu (p)$. Notice that 
\begin{equation}\label{eq:mu(p)-def} 
\mu(p) = \left\{ \begin{array}{ll} 0 &\textrm{ if } \abs{p}\leq  N^{\kappa/2-\beta/4}, \\ 
\mu_{\infty} (p) &\textrm{ if } \abs{p} \geq 2 N^{\kappa/2-\beta/4} \end{array} \right. \end{equation} 
and also that $|\mu (p)| \leq |\mu_\infty (p)|$ for all $p \in \bR^3$. This implies that $|\sigma (p)| \leq |\sigma_\infty (p)|$, $0 \leq \gamma (p) - 1 \leq \gamma_\infty (p) - 1$, $\sigma (p) = 0$ and $\gamma (p) = 1$ for $|p| < N^{\kappa/2-\beta/4}$.  

In the next lemma, we collect various properties of the coefficients $\sigma_p, \gamma_p$ and of the corresponding Fourier series. 
\begin{lemma}\label{lemma:propertiesquadratickernels}
In momentum space we have the bounds
\begin{equation}\label{eq:boundsigmamomentum}
    \abs{\gamma (p) -1}
    \leq\abs{\sigma (p) }
    \leq C \abs{\widehat{Vf}(p/ N^{1-\kappa})} \min(N^{\kappa/4}\abs{p}^{-1/2}, N^\kappa  \abs{p}^{-2})
\end{equation}
which imply
\begin{equation}\label{eq:Linftynormsigma}
    \norm{\sigma}_\infty, \norm{\gamma-1}_\infty\leq CN^{\beta/8},
\end{equation}
\begin{equation}\label{eq:L1normsigma}
    \sum_{p\in\Lambda^*} \abs{\sigma_p},  \sum_{p\in\Lambda^*} \abs{\gamma_p-1} 
    \leq C N
\end{equation}
and, for $s \in (3/2, 6)$,
 \begin{equation}\label{eq:sigmapowers}
 \sum_{p\in \Lambda^*} \abs{\gamma_{p}-1}^s
 \leq \sum_{p\in \Lambda^*} \abs{\sigma_{p}}^s 
 \leq C N^{3\kappa/2}.
 \end{equation}

Furthermore, we have the following bounds in position space:
\begin{equation}\label{eq:boundsigmaN}
\abs{\check\sigma (x)}, \abs{\widecheck{\gamma-1} (x)} \leq CN
\end{equation}

\begin{equation}\label{eq:boundsigmax}
\abs{\check\sigma (x)}, \abs{\widecheck{\gamma-1} (x)}
\leq C (\log N)^2 N^\kappa \abs{x}^{-1}
\end{equation}

\begin{equation}\label{eq:boundsigmax5/2}
\abs{\check\sigma (x)}, \abs{\widecheck{\gamma-1} (x)}
\leq C (\log N)^3 N^{\kappa/4} \abs{x}^{-5/2}
\end{equation}
and, for any $m\geq 2$, there exists $C_m > 0$ such that 
\begin{equation}\label{eq:boundsigmaxm}
\abs{\check\sigma (x)}, \abs{\widecheck{\gamma-1} (x)}
\leq C_m  N
\left(\frac{\ell_\sigma}{\abs{x}}\right)^m
\end{equation}
where $\ell_\sigma=N^{-\kappa/2+\beta/4}$.
\end{lemma}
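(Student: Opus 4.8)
\emph{Proof strategy.} The momentum-space estimates are essentially a corollary of Lemma~\ref{lemma:scattering}. Since $\mu(p)=\mu_\infty(p)\chi(p)$ with $0\le\chi\le1$, we have $\abs{\mu(p)}\le\abs{\mu_\infty(p)}$, hence $\abs{\sigma(p)}=\abs{\sinh\mu(p)}\le\abs{\sinh\mu_\infty(p)}=\abs{\sigma_\infty(p)}$, and $0\le\gamma(p)-1=2\sinh^2(\mu(p)/2)\le\sinh\abs{\mu(p)}=\abs{\sigma(p)}$ (and likewise $\gamma_\infty(p)-1\le\abs{\sigma_\infty(p)}$). Thus \eqref{eq:boundsigmamomentum} is the continuous analogue of \eqref{eq:sigmainftymomentum}, proved by the same explicit computation, while \eqref{eq:L1normsigma} and \eqref{eq:sigmapowers} follow directly from \eqref{eq:sigmainftyL1}, \eqref{eq:sigmainftypowers} and $\abs{\sigma_p}\le\abs{\sigma_{\infty,p}}$. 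For \eqref{eq:Linftynormsigma} one uses in addition that $\chi$ forces $\sigma(p)=0$ for $\abs{p}<N^{\kappa/2-\beta/4}$, so \eqref{eq:boundsigmamomentum} gives $\abs{\sigma(p)}\le C\norm{\widehat{Vf}}_\infty N^{\kappa/4}\abs{p}^{-1/2}\le C N^{\kappa/4}(N^{\kappa/2-\beta/4})^{-1/2}=C N^{\beta/8}$.

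For the position-space bounds, \eqref{eq:boundsigmaN} is immediate from the triangle inequality and \eqref{eq:L1normsigma}, $\abs{\check\sigma(x)}\le\sum_p\abs{\sigma_p}\le C N$. For the other bounds we regard $\sigma$ and $\gamma-1$ as functions on $\bR^3$ (this is legitimate because, using $Vf\in L^2$ with compact support to control the ultraviolet, $\sigma\in L^1\cap L^2(\bR^3)$) and use Poisson summation to write $\check\sigma(x)=\sum_{n\in\bZ^3}\sigma^\vee(x+n)$, with $\sigma^\vee$ the $\bR^3$-inverse Fourier transform; once the rapid-decay bound \eqref{eq:boundsigmaxm} is established for $\sigma^\vee$, the terms $n\neq0$ contribute $O(N\ell_\sigma^m)$ for $x\in\Lambda$ and are negligible, so it suffices to estimate $\sigma^\vee$ on all of $\bR^3$. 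We split $\sigma(p)=\eta_\infty(p)\chi(p)+r(p)$: by \eqref{eq:gammasigmainfty-etainfty} together with $\abs{1-\gamma_\infty(p)}\le\abs{\sigma_\infty(p)}$ one finds $\abs{r(p)}\le C\min(N^\kappa\abs{p}^{-2},N^{2\kappa}\abs{p}^{-4})$ on the region where $\chi\equiv1$, while $\abs{r(p)}\le C\abs{\eta_\infty(p)}$ on the transition shell; hence $r$ is strictly more integrable in $p$ than $\eta_\infty\chi$, and its contribution to \eqref{eq:boundsigmax}--\eqref{eq:boundsigmaxm} is at least as good. So we may work with the leading term $\eta_\infty\chi$.

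For this term, recall from the proof of Lemma~\ref{lemma:scattering} that $\eta_\infty(p)=-N^{3\kappa-2}\widehat{w}(p/N^{1-\kappa})$, where $w=(-\Delta)^{-1}Vf/2$ satisfies $\abs{w(y)}\le C\min(1,\abs{y}^{-1})$; a change of variables gives $(\eta_\infty)^\vee(x)=-N\,w(N^{1-\kappa}x)$, so $\abs{(\eta_\infty)^\vee(x)}\le C\min(N,N^\kappa\abs{x}^{-1})$ — this is the mechanism behind \eqref{eq:boundsigmax}. Reinstating the infrared cutoff amounts to subtracting the inverse transform of $\eta_\infty(p)(1-\chi(p))$, supported on $\abs{p}\le2N^{\kappa/2-\beta/4}$ with $\abs{\eta_\infty(p)}\le CN^{\kappa/4}\abs{p}^{-1/2}$ there; splitting the momentum integral at $\abs{p}\sim1/\abs{x}$ and integrating by parts once on the high-frequency piece bounds the correction by $CN^\kappa\abs{x}^{-1}$, up to factors of $\log N$ coming from the $\sim\log N$ dyadic momentum scales between the infrared and ultraviolet cutoffs. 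For \eqref{eq:boundsigmax5/2} one starts instead from $\abs{\sigma(p)}\le CN^{\kappa/4}\abs{p}^{-1/2}\,\mathbbm{1}(\abs{p}\ge N^{\kappa/2-\beta/4})$ and uses that, for small $\abs{x}$, $\abs{\sigma^\vee(x)}$ is controlled by $\int_{\abs{p}\lesssim1/\abs{x}}\abs{\sigma(p)}\,dp\lesssim N^{\kappa/4}\abs{x}^{-5/2}$ (the crossover into the $\abs{p}^{-2}$ regime and the ultraviolet decay of $\widehat{Vf}$ only improve this), together with the oscillatory gain from $\abs{p}\gtrsim1/\abs{x}$, again up to a few $\log N$'s. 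Finally, \eqref{eq:boundsigmaxm} encodes that the only obstruction to rapid decay of $\check\sigma$ is the infrared cutoff, smooth on scale $N^{\kappa/2-\beta/4}=\ell_\sigma^{-1}$: integrating by parts $m$ times, $\abs{x}^m\abs{\sigma^\vee(x)}$ is controlled by the $L^1$-norms of the $m$-th order $p$-derivatives of $\eta_\infty\chi$, and each derivative either falls on $\chi$ (a factor $\le C\ell_\sigma$, localized to $\abs{p}\sim\ell_\sigma^{-1}$), on the factor $\abs{p}^{-2}$ in $\eta_\infty$ (comparable there, and $L^1$-gaining), or on $\widehat{Vf}(p/N^{1-\kappa})$ (the still smaller factor $N^{-(1-\kappa)}\le\ell_\sigma$); counting the worst case gives $\abs{\sigma^\vee(x)}\le C_m N(\ell_\sigma/\abs{x})^m$ and hence \eqref{eq:boundsigmaxm}. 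Throughout, $\gamma-1$ is handled identically, since $0\le\gamma(p)-1\le\abs{\sigma(p)}$ pointwise and $\gamma-1$ inherits the cutoff structure of $\sigma$.

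The bulk of the work lies in the last two paragraphs. The delicate points are: pinning down the number of $\log N$ factors in \eqref{eq:boundsigmax}, \eqref{eq:boundsigmax5/2} (these are not optimized — any fixed power would suffice, since $(\log N)^k\ll N^\epsilon$), controlling the crossover between the two regimes of \eqref{eq:boundsigmamomentum} and, for \eqref{eq:boundsigmaxm}, keeping track of all Leibniz-rule terms when differentiating $\eta_\infty\chi$ and the remainder $r=\sinh(\mu_\infty\chi)-\eta_\infty\chi$ a large number of times — in particular checking that derivatives of $\sinh$ are harmless, which relies on $\gamma(p)\le CN^{\beta/8}$ for all $p$ (cf. \eqref{eq:Linftynormsigma}).
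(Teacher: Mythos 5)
Your momentum-space argument coincides with the paper's (both reduce \eqref{eq:boundsigmamomentum}--\eqref{eq:sigmapowers} to Lemma \ref{lemma:scattering} via $|\mu(p)|\leq|\mu_\infty(p)|$), and your treatment of \eqref{eq:boundsigmaN} is identical. For the position-space decay you take a genuinely different route in one respect: the paper never leaves the torus, multiplying $\check\sigma(x)$ by $(e^{2\pi i x_j}-1)^m$ and converting to finite differences, hence to sums of $|\partial_{p_j}^m\sigma(p-se_j)|$, which it controls by a direct Leibniz/Fa\`a-di-Bruno analysis of $\sinh(\mu_\infty\chi)$ using the functions $F_m$ of Lemma \ref{lm:Fm}; you instead pass to the continuum transform via Poisson summation and, for \eqref{eq:boundsigmax}, exploit the exact identity $(\eta_\infty)^\vee(x)=-N\,w(N^{1-\kappa}x)$ with $|w(y)|\leq C\min(1,|y|^{-1})$. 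That identity is an elegant shortcut: it produces the $\min(N,N^\kappa|x|^{-1})$ bound for the leading term with no integration by parts, whereas the paper has to run the $m=1$ derivative estimate and split the momentum sum at $N^{1-\kappa}$. For \eqref{eq:boundsigmax5/2} and \eqref{eq:boundsigmaxm} your splitting at $|p|\sim|x|^{-1}$ and your derivative counting (each derivative costing $\max(|p|^{-1},N^{-(1-\kappa)},\ell_\sigma)$) reproduce the paper's argument essentially verbatim, so there the two approaches converge.

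The one inference that does not hold as written is: ``$r$ is strictly more integrable in $p$ than $\eta_\infty\chi$, and its contribution to \eqref{eq:boundsigmax}--\eqref{eq:boundsigmaxm} is at least as good.'' Integrability of $r=\sigma-\eta_\infty\chi$ only yields $\|\check r\|_\infty\leq C N^{3\kappa/2}$, which beats \eqref{eq:boundsigmax} only for $|x|\lesssim N^{-\kappa/2}$ and is useless for \eqref{eq:boundsigmaxm}, whose right-hand side at $|x|\sim 1$ is $N\ell_\sigma^m\ll 1$. Decay of $\check r$ in $x$ requires differentiating $\sinh(\mu_\infty\chi)-\eta_\infty\chi$ in $p$ up to $m$ times, i.e.\ exactly the $\sinh^{(k)}$/Leibniz bookkeeping the decomposition was meant to bypass -- you acknowledge this in your last paragraph, but it means the splitting buys you nothing beyond \eqref{eq:boundsigmax}, and the claim as stated should be replaced by the derivative estimates for $r$. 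With that repair (and the routine verification that Poisson summation is legitimate here, which itself uses the decay being proven and so should be phrased as a bootstrap or avoided by working with the Fourier series directly, as the paper does), the strategy is sound.
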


\textit{Remark.} The decay $|x|^{-5/2}$ in (\ref{eq:boundsigmax5/2}) corresponds to the bound proportional to $|p|^{-1/2}$ in (\ref{eq:boundsigmamomentum}). 

\begin{proof}
The bounds in momentum space follow similarly to Lemma \ref{lemma:scattering}, since  $|\sigma (p)| \leq |\sigma_\infty (p)|$, $|\gamma (p) - 1| \leq |\gamma_\infty (p) - 1|$. 

To establish the estimates in position space, we study the derivatives of $\mu (p), \sigma (p), \gamma (p)$. First, we observe that $|\mu (p)| \leq C \log N$, for $|p| < N^{\kappa/2}$, and $|\mu (p)| \leq C N^\kappa / |p|^2$, for $|p| > N^{\kappa/2}$. Hence, we have 
\begin{equation}\label{eq:mup-bd}
    \abs{\mu(p)}  \leq C (\log N) \frac{N^\kappa}{N^\kappa+p^2}
\end{equation}
for all $p \in \bR^3$. Also,
\begin{equation}
\partial_{p_j}\mu_\infty(p) 
= 
-\frac{1}{4} \left( 1+ \frac{2N^\kappa \widehat{Vf}(\frac{p}{N^{1-\kappa}})}{p^2} \right)^{-1} 
\partial_{p_j}\left( \frac{2N^\kappa \widehat{Vf}(\frac{p}{N^{1-\kappa}})}{p^2} \right) \, .
\end{equation}
Thus, recalling the definition \eqref{eq:defFm}, we find, for any integer $m \geq 1$, 
\begin{equation*}
\begin{split}
\abs{\partial_{p_j}^m\mu_\infty(p)}
&\leq C_m
\sum_{k=1}^m
\Big| 1+ \frac{2N^\kappa \widehat{Vf}(\frac{p}{N^{1-\kappa}})}{p^2} \Big|^{-k} 
\sum_{\substack{m_1, \dots ,m_k \geq 1 : \\  \sum_{i=1}^k m_i = m}} 
\prod_{i=1}^{k}
\Big| \partial_{p_j}^{m_i}\left( \frac{N^\kappa \widehat{Vf}(\frac{p}{N^{1-\kappa}})}{p^2} \right) \Big| \\
&\leq
C_m\sum_{k=1}^m
\left(\frac{p^2}{p^2+N^\kappa}\right)^{k} 
F_{m}(p/N^{1-\kappa}) \\  &\hspace{4cm} \times 
\sum_{\substack{m_1, \dots ,m_k \geq 1 : \\  \sum_{i=1}^k m_i = m}} 
\prod_{i=1}^{k}
\sum_{n_i = 0}^{m_i} N^\kappa \abs{p}^{-2-n_i} N^{-(1-\kappa) (m_i-n_i)}.
\end{split} \end{equation*}
Setting $n = n_1 + \dots + n_k$ and absorbing combinatorial factors from the sums over $m_1, \dots , m_k \geq 1$ with $m_1 + \dots +m_k = m$ and over $n_1, \dots , n_k \geq 0$ with $n_1 + \dots + n_k = n \leq m$, we arrive at 
\begin{equation}\label{eq:partialmuinf} \begin{split} 
\abs{\partial_{p_j}^m\mu_\infty(p)} &\leq
C_m F_{m}(p/N^{1-\kappa}) \frac{N^\kappa}{p^2+N^\kappa}\sum_{n=0}^{m} \abs{p}^{-n} N^{-(1-\kappa) (m-n)}\\
&\leq
C_m F_{m}(p/N^{1-\kappa}) \frac{N^\kappa}{p^2+N^\kappa}\sum_{n=0}^{m} \abs{p}^{-n} N^{-(1-\kappa) (m-n)}.
\end{split}
\end{equation} 
Recalling that $\chi (p) = \chi_l (|p|/N^{\kappa/2-\beta/4})$, we find, for integer $m \geq 1$, 
\begin{equation}
\begin{split} 
    \abs{\partial_{p_j}^m \chi (p)} &\leq C_m  N^{-m (\kappa/2 - \beta/4)}\mathbbm{1}_{[N^{\kappa/2-\beta/4}, 2N^{\kappa/2-\beta/4}]}(\abs{p}) \\ &\leq C_m \abs{p}^{-m} \mathbbm{1}_{[N^{\kappa/2-\beta/4}, 2N^{\kappa/2-\beta/4}]}(\abs{p}) \, .
\end{split} 
\end{equation}
Combining this bound with (\ref{eq:partialmuinf}), we obtain, for $m \geq 0$,  
\begin{equation*}
    \abs{\partial_{p_j}^m\mu(p)} 
    \leq C_m F_{m}(p/N^{1-\kappa}) (\log N) \frac{N^\kappa}{N^\kappa+p^2} \mathbbm{1}(\abs{p}\geq N^{\kappa/2-\beta/4})
 \sum_{n=0}^{m} \abs{p}^{-n} N^{-(1-\kappa) (m-n)}
    \end{equation*}
where the factor $(\log N)$ comes from (\ref{eq:mup-bd}), to handle the contribution in which all derivatives hit $\chi$. We apply now the last bound to control derivatives of $\sigma (p), \gamma (p)$. For $m \geq 0$, we obtain 
\[ \begin{split}
    \abs{\partial_{p_j}^m\sigma(p)}
    \leq \; &
    C_m
    \sum_{k=1}^m \abs{\sinh^{(k)}(\mu (p))}
     \sum_{\substack{m_1,\dots,m_k\geq 1: \\ \sum_{i=1}^k  m_i =m}} \prod_{i=1}^k
    \abs{\partial_{p_j}^{m_i} \mu(p)
    }\\
    \leq \; & 
    C_m
    \Big( 1+\min \Big( \frac{N^{\kappa/4}}{\abs{p}^{1/2}} , \frac{N^\kappa}{\abs{p}^{2}} \Big) \Big)
    (\log N)^m\frac{N^\kappa}{N^\kappa+p^2} \, 
    F_{m}(p/N^{1-\kappa}) \\ &\hspace{4cm} \times 
    \sum_{n=0}^{m} \abs{p}^{-n} N^{-(1-\kappa) (m-n)}
    \mathbbm{1}(\abs{p}\geq N^{\kappa/2-\beta/4})
\end{split} \]  
where we used (\ref{eq:boundsigmamomentum}) to estimate $|\sinh^{(k)} (\mu (p))|$ and the bound $\prod_{j=1}^k F_{m_j} (p) \leq F_m (p)$, if $m_1 + \dots + m_k = m$ (several combinatorial factors, arising from sums over $m_1, \dots , m_k \geq 1$ with fixed $m = m_1 +\dots +m_k$ and also from sums over $n_1, \dots , n_k \geq 0$ with fixed $n_1 + \dots + n_k = n$ have been absorbed in the constant $C_m$). Distinguishing $|p| > N^{\kappa/2}$ and $|p| \leq N^{\kappa/2}$ and using the factor $N^\kappa / (N^\kappa + p^2)$ as well as using Young's inequality for the sum over $n$, we obtain 
\begin{equation}\label{eq:partialmsi} \begin{split}  
\abs{\partial_{p_j}^m\sigma(p)} \leq \; & 
    C_m  (\log N)^m
    \min \Big( \frac{N^{\kappa/4}}{\abs{p}^{1/2}}, \frac{N^\kappa}{\abs{p}^{2}} \Big)
    F_{m}(p/N^{1-\kappa}) \mathbbm{1}(\abs{p}\geq N^{\kappa/2-\beta/4}) 
  \\ &\hspace{7cm} \times   \big(|p|^{-m}+N^{-m(1-\kappa)}\big).
\end{split}
\end{equation} 
Analogously, we find 
\[ 
\begin{split}
    \abs{\partial_{p_j}^m\gamma(p)}
    &\leq 
    C_m  (\log N)^m 
    \min \Big( \frac{N^{\kappa/4}}{\abs{p}^{1/2}}, \frac{N^\kappa}{\abs{p}^{2}} \Big)
    F_{m}(p/N^{1-\kappa}) \mathbbm{1}(\abs{p}\geq N^{\kappa/2-\beta/4}) \\
 &\hspace{7cm} \times  \big(|p|^{-m}+N^{-m(1-\kappa)}\big) \, . 
\end{split}
\]
Now we apply these bounds to show the pointwise bounds in position space.
The bounds \eqref{eq:boundsigmaN} follow immediately from \eqref{eq:L1normsigma}.
Next, we observe that, for $x\in\Lambda  = [-1/2 ; 1/2]^3$ and any integer $m\geq 1$, we have
\begin{equation}
\begin{split}\label{eq:trickpowersx} 
c^m \abs{x_j}^m \abs{\check\sigma (x) }
    &\leq
    \abs{(e^{2\pi i x_j}-1)^m\check\sigma(x) }
    = \abs{(e^{2\pi i x_j}-1)^m \sum_{p\in\Lambda^*} \sigma(p)e^{ip\cdot x}}\\
    &= \abs{(e^{2\pi i x_j}-1)^{m-1} \sum_{p\in\Lambda^*} (\sigma (p-2\pi e_j)-\sigma (p))e^{ip\cdot x}}\\
    &= \Big| \int_0^{2\pi} ds \sum_{p\in\Lambda^*}  (e^{2\pi i x_j}-1)^{m-1} \partial_{p_j}\sigma (p- s e_j)e^{ip\cdot x} \Big| \\
&\leq  \int_0^{2\pi} ds_1\dots \int_0^{2\pi} ds_m \sum_{p\in\Lambda^*}  \Big| \partial_{p_j}^m\sigma(p-\sum_{i=1}^m s_i e_j) \Big| \\
&\leq (2\pi)^{m-1} \int_0^{2\pi m} ds \sum_{p\in\Lambda^*}  \abs{ \partial_{p_j}^m\sigma (p-s e_j)}.
\end{split}
\end{equation}
Taking $m=1$ and using (\ref{eq:partialmsi}), we find 
\[
\begin{split}
    \abs{x_j}\abs{\check\sigma (x) }
    &\leq C (\log N)
    \int_0^{2\pi} ds 
    \sum_{\substack{p\in\Lambda^*_+ : \\ |p-se_j| \geq N^{\kappa/2-\beta/4}}} 
    \frac{N^{\kappa}(N^{-(1-\kappa)}+\abs{p-s e_j}^{-1})}{\abs{p- s e_j}^2}  \\ &\hspace{7cm} \times  F_1((p-s e_j)/ N^{1-\kappa}) .
\end{split} \]    
We divide the sum into the two regions $N^{\kappa/2 - \beta/4} \leq |p-s e_j| \leq N^{1-\kappa}$ and $|p-se_j| > N^{1-\kappa}$. In the first region, we bound $F_1$ in the $\ell^\infty$-norm. In the second region, we use instead the $\ell^2$-norm of $F_1$. With \eqref{eq:normFm}, we obtain 
\[     \abs{x_j}\abs{\check\sigma (x) } \leq  C (\log N)^2 N^\kappa. \]
%
%

Combining the bounds with different choices of $j$, we obtain 
\begin{equation}
   \abs{\check\sigma (x) }  
   \leq C (\log N)^2 N^\kappa \abs{x}^{-1}.
\end{equation}
Since the estimate for $\gamma - 1$ can be shown similarly, this concludes the proof of \eqref{eq:boundsigmax}. 

Next, we prove (\ref{eq:boundsigmax5/2}). To this end, we can restrict our attention to $|x| \geq N^{-\kappa/2}$ (for $|x| < N^{-\kappa/2}$, (\ref{eq:boundsigmax}) is stronger). We split $\check\sigma (x) = \check\sigma_{1} (x) + \check\sigma_{2} (x)$, with 
\[ \check\sigma_{1} (x) = \sum_{p\in\Lambda^*} \sigma (p) (1-\chi_l (\abs{p}\abs{x})) e^{ip \cdot x},  \qquad \check\sigma_2 (x) = \sum_{p\in\Lambda^*}  \sigma(p) \chi_l (\abs{p}\abs{x})  e^{ip\cdot x} .\]  
Recalling that $\sigma (p) = 0$ for $|p| < N^{\kappa/2 - \beta/4}$, we have 
\[ 
\abs{\sigma (p)}(1-\chi_l(\abs{p} \abs{x}))
\leq \abs{\sigma (p)} \mathbbm{1}_{[N^{\kappa/2-\beta/4}, 2\abs{x}^{-1}]}(\abs{p}). \]
With the bound \eqref{eq:boundsigmamomentum}, we immediately obtain 
\begin{equation} \label{eq:sigma1-bd} 
\abs{\check\sigma_{1} (x)}
\leq C \sum_{\substack{p\in\Lambda_+^*: \\ N^{\kappa/2-\beta/4}\leq \abs{p} \leq 2\abs{x}^{-1}}} N^{\kappa/4}\abs{p}^{-1/2}
\leq C N^{\kappa/4} \abs{x}^{-5/2}.
\end{equation}

As for the second term, note that, for integer $m \geq 0$,   
\begin{equation}
{\partial_{p_j}^m \chi_l(\abs{p}\abs{x})}
\leq C_m \abs{p}^{-m}\mathbbm{1}(\abs{p}\geq \abs{x}^{-1}),
\end{equation}
and therefore 
\begin{equation}\label{eq:sigma2-bd} 
\begin{split}
    \abs{\partial_{p_j}^m(\sigma (p)\chi_l(\abs{p}\abs{x}))}
    &\leq 
    C_m  (\log N)^m
    \min \Big( \frac{N^{\kappa/4}}{\abs{p}^{1/2}}, \frac{N^\kappa}{\abs{p}^{2}} \Big) \\
    &\hspace{1cm} \times F_{m}(p/N^{1-\kappa}) \mathbbm{1}(\abs{p}\geq \abs{x}^{-1}) \big(|p|^{-m}+N^{-m(1-\kappa)}\big).
\end{split}
\end{equation}
Proceeding as in (\ref{eq:trickpowersx}), we find 
\begin{equation*}
\begin{split}
c^3 \abs{x_j}^3 \abs{\check\sigma_{2} (x) }
\leq \; &(2\pi)^2 \int_0^{6\pi} ds
\sum_{p\in\Lambda^*}  
\abs{\partial_{p_j}^3(\sigma \chi_l(\cdot \abs{x}))(p-  s e_j)}\\
\leq \; &C_3 (\log N)^3 \int_0^{6\pi} ds 
\sum_{\substack{p\in\Lambda^*\\\abs{p-s e_j}\geq \abs{x}^{-1}}} 
\frac{N^{\kappa/4}}{ \abs{p-s e_j}^{7/2}} F_{3} ((p-se_j)/N^{1-\kappa}) \\
&+C_3(\log N)^3\int_0^{6\pi}ds 
\sum_{\substack{p\in\Lambda^*\\ \abs{p- s e_j} \geq |x|^{-1}}} 
\frac{N^{\kappa}}{|p-se_j|^2} F_3 ((p-s e_j)/N^{1-\kappa}) N^{-3(1-\kappa)}. 
\end{split}
\end{equation*}
 Estimating $F_3$ with its $\ell^\infty$-norm (in the first term) and its $\ell^2$-norm (in the second term) and applying Lemma \ref{lm:Fm}, we obtain
\[ \abs{x_j}^3 \abs{\check\sigma_{2} (x) } \leq C (\log N)^3 N^{\kappa/4} |x|^{1/2} .\]
Combined with (\ref{eq:sigma1-bd}), this implies (\ref{eq:boundsigmax5/2}) (the bound for $1-\gamma$ can be shown similarly). 

Finally, we show \eqref{eq:boundsigmaxm}. Here, it is enough to consider $|x| > \ell_\sigma = N^{-\kappa/2 + \beta/4}$ and $m\geq 2$. Then, we have, 
\[
\begin{split}
c^m \abs{x_j}^m &\abs{\check\sigma (x) }
\\ \leq  \; & (2\pi)^m\int_0^{2\pi m} ds \sum_{p\in\Lambda^*}  \abs{(\partial_{p_j}^m\sigma (p- s e_j)}\\
\leq \; &C_m  (\log N)^m 
\int_0^{2\pi m} ds \sum_{\substack{p\in\Lambda^*\\ \abs{p-s e_j} \geq N^{\kappa/2-\beta/4}}}
\frac{N^\kappa}{\abs{p- s e_j}^{2}} F_m((p-s e_j)/N^{1-\kappa})
 N^{-(1-\kappa) m}\\
&+ C_m  (\log N)^m 
\int_0^{2\pi m} ds \sum_{\substack{p\in\Lambda^*\\ \abs{p- s e_j} \geq N^{\kappa/2-\beta/4}}}
\frac{N^\kappa}{ \abs{p- s e_j}^{m+2} } F_m ((p-se_j)/N^{1-\kappa}).
\end{split} 
\]
Estimating $F_m$ with its $\ell^2$-norm in the first term and with its $\ell^\infty$-norm in the second term, we conclude that 
\[ \begin{split} c^m \abs{x_j}^m \abs{\check\sigma (x) } &\leq C_m  (\log N)^{m} 
N N^{-m(\kappa/2-\beta/4)}.
\end{split}
\]
Bounds for $\gamma-1$ can be proven analogously; this implies (\ref{eq:boundsigmaxm}).
\end{proof}

In our analysis of the action of the cubic transformation $e^{A^* \Theta - \Theta A}$, it will often be convenient to localize kernels in small boxes, to take advantage of the presence of the cutoff $\Theta$, controlling the local number of particles. Recalling that $\beta = 2-3\kappa > 0$, we introduce the length scale $\ell_B =  N^{-\kappa/2 + \beta/12}$. We consider the lattice $\Lambda_B = \Lambda \cap 
 \ell_B \mathbb{Z}^3$ and, for $u\in \Lambda_B$, we define the box $B_u = \{x\in\Lambda, \inf_{v\in \Lambda_B} \abs{x-v} =\abs{x-u} \}$. We denote by $\check\sigma_u = \check\sigma \mathbbm{1}_{B_u}$ and $(\widecheck{\gamma-1})_u = (\widecheck{\gamma-1}) \mathbbm{1}_{B_u}$ the restrictions of $\check{\sigma}$ and $\widecheck{1-\gamma}$ on $B_u$. Clearly, $\check\sigma = \sum_{u \in\Lambda_B} \check\sigma_u$ and $\widecheck{\gamma-1} = \sum_{u\in\Lambda_B} (\widecheck{\gamma-1})_u$. 
 \begin{lemma} \label{lemma:sumlocalizedsigma}
For every $\epsilon > 0$, there is $C > 0$ such that    
\begin{equation}\label{eq:sigmau2}
        \sum_{u\in\Lambda_B} \norm{\check\sigma_u}_2,
        \sum_{u\in\Lambda_B} \norm{(\widecheck{\gamma-1})_u}_2 \leq C N^{3\kappa/4 +\epsilon}
    \end{equation}
  and 
    \begin{equation}\label{eq:sigmau1}
        \sum_{u\in\Lambda_B} \norm{\check\sigma_u}_1,
        \sum_{u\in\Lambda_B} \norm{(\widecheck{\gamma-1})_u}_1
        \leq C N^{\beta/8 + \epsilon}.
    \end{equation}
\end{lemma}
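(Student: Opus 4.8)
The plan is to deduce both estimates from the pointwise bounds \eqref{eq:boundsigmaN}, \eqref{eq:boundsigmax5/2} and \eqref{eq:boundsigmaxm} of Lemma \ref{lemma:propertiesquadratickernels}, together with the global bound $\norm{\check\sigma}_2^2 = \sum_{p\in\Lambda^*}\abs{\sigma_p}^2 \leq CN^{3\kappa/2}$, which follows from \eqref{eq:sigmapowers} with $s=2$, and which expresses the fact that the $L^2$-mass of $\check\sigma$ is essentially concentrated on the ball $\{\abs{x}\lesssim\ell_B\}$. Since $\widecheck{\gamma-1}$ satisfies the same bounds, we only discuss $\check\sigma$. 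We will use repeatedly that $\ell_B = N^{-\kappa/2+\beta/12} < \ell_\sigma = N^{-\kappa/2+\beta/4}$, that a dyadic shell $\{R\leq\abs{x}<2R\}$ with $\ell_B\leq R\leq 1$ meets at most $C(R/\ell_B)^3$ of the boxes $B_u$, $u\in\Lambda_B$, and that the ball $\{\abs{x}\leq C\ell_B\}$ meets only $O(1)$ of them.

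To prove \eqref{eq:sigmau2}, we split the sum into the contribution of the $O(1)$ boxes with $\abs{u}\leq C\ell_B$, which by Cauchy--Schwarz is at most $C\norm{\check\sigma}_2\leq CN^{3\kappa/4}$, and that of the remaining, ``far'' boxes. For the latter we use $\norm{\check\sigma_u}_2\leq\ell_B^{3/2}\sup_{B_u}\abs{\check\sigma}$ and group the boxes into dyadic shells $\abs{u}\sim R$, $\ell_B\leq R\leq C$, so that each shell contributes at most $C\ell_B^{-3/2}R^3\sup_{\abs{x}\sim R}\abs{\check\sigma(x)}$. On the inner range $\ell_B\leq R\leq R_*$ we estimate the supremum by \eqref{eq:boundsigmax5/2}, obtaining a shell contribution $\leq C(\log N)^3 N^{\kappa/4}\ell_B^{-3/2}R^{1/2}$, increasing in $R$; on the outer range $R\geq R_*$ we use \eqref{eq:boundsigmaxm} with a large integer $m$, obtaining a shell contribution $\leq C_m N\ell_B^{-3/2}\ell_\sigma^m R^{3-m}$, decreasing in $R$ once $m\geq 4$. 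Here $R_*$ denotes the radius (up to logarithmic factors) at which the two supremum bounds coincide; a direct computation using $\beta=2-3\kappa$ gives $R_*^{1/2}=\ell_\sigma^{1/2}N^{c/(2m-5)}$ with $c=9/4-27\kappa/8>0$, the positivity being precisely where $\kappa<2/3$ enters. Each of the two dyadic sums is geometric and therefore dominated by its value at $R=R_*$; using $N^{\kappa/4}\ell_\sigma^{1/2}=N^{\beta/8}$ and $\ell_B^{-3/2}=N^{3\kappa/4-\beta/8}$, this value equals $C(\log N)^3 N^{3\kappa/4+c/(2m-5)}$. Choosing $m$ large enough, depending on $\epsilon$, so that $(\log N)^3 N^{c/(2m-5)}\leq N^\epsilon$ for large $N$, we obtain \eqref{eq:sigmau2}.

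For \eqref{eq:sigmau1}, the decisive observation is that the boxes $\{B_u\}_{u\in\Lambda_B}$ partition $\Lambda$, so that $\sum_{u\in\Lambda_B}\norm{\check\sigma_u}_1=\norm{\check\sigma}_1$, and it suffices to show $\norm{\check\sigma}_1\leq CN^{\beta/8+\epsilon}$. We split $\Lambda$ into $\{\abs{x}\leq C\ell_B\}$, on which Cauchy--Schwarz and $\norm{\check\sigma}_2\leq CN^{3\kappa/4}$ give $\int_{\abs{x}\leq C\ell_B}\abs{\check\sigma}\leq C\ell_B^{3/2}N^{3\kappa/4}=CN^{\beta/8}$, and the dyadic annuli $\abs{x}\sim R$, $\ell_B\leq R\leq C$, on each of which $\int_{\abs{x}\sim R}\abs{\check\sigma}\leq CR^3\sup_{\abs{x}\sim R}\abs{\check\sigma}$ — exactly the quantity estimated in the previous paragraph without the factor $\ell_B^{-3/2}$. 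Running the same dyadic argument with transition radius $R_*$, the annular contribution is $\leq C(\log N)^3 N^{\beta/8+c/(2m-5)}$, and choosing $m$ large gives \eqref{eq:sigmau1}.

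The delicate step is the far region $\abs{x}\gtrsim\ell_\sigma$: there the fast-decay bound \eqref{eq:boundsigmaxm} carries the large prefactor $N$ and is useless near $\abs{x}\sim\ell_\sigma$, while \eqref{eq:boundsigmax5/2} grows with $R$, so one must optimise the transition radius $R_*$ and let $m\to\infty$ to absorb the residual polynomial loss $N^{c/(2m-5)}$ into $N^\epsilon$; switching to \eqref{eq:boundsigmaxm} already at $R=\ell_\sigma$ would produce the exponent $9/4-21\kappa/8$, which exceeds $3\kappa/4$ for every $\kappa<2/3$. The only other point needing care is that near the origin the pointwise bounds are too weak, and one must instead use that only $O(1)$ boxes meet $\{\abs{x}\lesssim\ell_B\}$ together with the global bound $\norm{\check\sigma}_2\leq CN^{3\kappa/4}$.
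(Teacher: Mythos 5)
Your proposal is correct and follows essentially the same route as the paper's proof: both split into the $O(1)$ boxes near the origin (handled via $\norm{\check\sigma}_2\leq CN^{3\kappa/4}$), an intermediate region where the $|x|^{-5/2}$ decay \eqref{eq:boundsigmax5/2} is used, and a far region where the arbitrarily fast decay \eqref{eq:boundsigmaxm} with $m$ large absorbs the loss into $N^\epsilon$; your dyadic shells with optimized crossover $R_*=\ell_\sigma N^{2c/(2m-5)}$ are just a slightly more explicit version of the paper's cutoff at $\ell_\sigma N^\delta$ followed by choosing $\delta$ small and then $m$ large. The computations (in particular $c=9/4-27\kappa/8$ and the identities $N^{\kappa/4}\ell_\sigma^{1/2}=N^{\beta/8}$, $\ell_B^{-3/2}=N^{3\kappa/4-\beta/8}$) check out.
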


\textit{Remark.} Eq. \eqref{eq:sigmau2} implies that $\sum_{u \in\Lambda_B} \| \check{\sigma}_u \|_2$ is comparable (up to the additional factor $N^\epsilon$) with $\| \check{\sigma} \|_2$. In fact, $\ell_B$ is the smallest length scale with this property (this is the reason for fixing $\ell_B = N^{-\kappa/2 + \beta/12}$). 

\begin{proof}
    We only show the bounds for $\check\sigma$, the others follow analogously. For any $\delta > 0$, we have    \[
        \begin{split}
            &\sum_{u\in\Lambda_B} \norm{\check\sigma_u}_2
            = \norm{\check\sigma_0}_2
            + \sum_{0\neq u\in\Lambda_B, \abs{u}\leq \ell_\sigma N^\delta} \norm{\check\sigma_u}_2
            + \sum_{u\in\Lambda_B, \abs{u}> \ell_\sigma N^\delta} \norm{\check\sigma_u}_2.
      \end{split} \]      
The first term can be bounded by $\| \check{\sigma}_0 \|_2 \leq \| \check{\sigma} \|_2 \leq C N^{3\kappa/4}$, using (\ref{eq:sigmapowers}). To estimate the second and third terms, we use the pointwise bounds (\ref{eq:boundsigmax5/2}) and, respectively, (\ref{eq:boundsigmaxm}). We find, for any integer $m \geq 2$, 
\[ \begin{split} 
      \sum_{u\in\Lambda_B} \norm{\check\sigma_u}_2      
            \leq \; &C N^{3\kappa/4}
            + C (\log N)^3  \sum_{0\neq u\in\Lambda_B, \abs{u}\leq \ell_\sigma N^\delta} N^{\kappa/4} \ell_B^{3/2} \abs{u}^{-5/2}
            \\ &+ C \sum_{u\in\Lambda_B, \abs{u}> \ell_\sigma N^\delta}  N N^{-\delta m} \ell_B^{3/2}  
             \\
            \leq \; &C N^{3\kappa/4}
            + C (\log N)^3 N^{\kappa/4}  \ell_\sigma^{1/2} N^{\delta/2} \ell_B^{-3/2}
            + C N N^{-\delta m} \ell_B^{-3/2}\\
            \leq \; &C (\log N)^3 N^{3\kappa/4}  N^{\delta/2}
            + C N N^{-\delta m} \ell_B^{-3/2}.
        \end{split}
    \]
Choosing $\delta$ sufficiently small and then $m$ sufficiently large, we obtain (\ref{eq:sigmau2}). To prove (\ref{eq:sigmau1}), we observe that $\sum_{u \in \Lambda_B} \| \check{\sigma}_u \|_1 \leq \| \check{\sigma} \|_1$ and we use the pointwise bounds (\ref{eq:boundsigmax5/2}) for $|x| \leq N^{-\kappa/2 + \beta /4 + \delta}$ and (\ref{eq:boundsigmaxm}) for $|x| > N^{-\kappa/2 + \beta /4 + \delta}$.
\end{proof}

\subsection{Coefficients of the Cubic Transformation}
\label{subsec:kerA}

In this subsection, we are going to introduce the coefficients $A_{p,q}$, defining the cubic transformation (\ref{eq:A-def}). We proceed similarly as in \cite{COSS}. From (\ref{eq:defmu}), we recall the cutoff function $\chi_l \in C^{\infty}(\mathbb{R})$, satisfying $0\leq \chi_l \leq 1$, $\chi_l(x)=0$ if $x\leq 1$, $\chi_l(x)=1$ if $x\geq 2$. We define $\tilde\chi (p) = \chi_l (\abs{p}/N^{\kappa/2})$ and, for every $p, q \in \bR^3 \backslash \{ 0 \}$, 
\begin{equation}\label{eq:cubicfunctionmomentumspace}
A(p,q) = N^{-1/2}\tilde\chi(p) \frac{-N^\kappa \widehat{Vf}(\frac{p}{N^{1-\kappa}})\sigma (q)}{p^2 + q^2 + (p+q)^2}
= N^{-1/2}\frac{2p^2\eta(p)\sigma(q)}{p^2 + q^2 + (p+q)^2},
\end{equation}
with $\sigma (q)$ as defined above (\ref{eq:mu(p)-def}) and 
\begin{equation}\label{eq:defeta}
    \eta(p) = \frac{-N^\kappa \widehat{Vf}(\frac{p}{N^{1-\kappa}})}{2p^2}\tilde\chi(p) \, .
\end{equation}
Notice that the infrared cutoffs act differently in the two variables. In $p$, we restrict to momenta larger than $N^{\kappa/2}$. In $q$, on the other hand, we have to consider momenta larger than $N^{\kappa /2 - \beta/4}$. For $p,q \in \Lambda^*_+$, we also use the notation $\eta_p = \eta (p)$ and $A_{p,q} = A (p,q)$. For completeness, we set $A_{p,q} = 0$, if $p=0$ or $q=0$. 
In position space, we denote 
\begin{equation}\label{eq:defAxy}
    \check{A}(x,y) = \sum_{p,q \in \Lambda^*} A_{p,q} e^{ipx}e^{iqy}.
\end{equation} 
Several properties of the coefficients $A_{p,q}, \eta_p$ and of the function $\check{A}$ are collected in the next lemma.
\begin{lemma}\label{lemma:propertiesA}
First, we list bounds in momentum space. From (\ref{eq:boundsigmamomentum}), we recall 
\begin{equation}\label{eq:sigp} \abs{\sigma_p}\leq C \abs{\widehat{Vf}(p/ N^{1-\kappa})} \min(N^{\kappa/4}\abs{p}^{-1/2}, N^\kappa  \abs{p}^{-2}).\end{equation} 
With the trivial bound
\[ \abs{\eta_p} \leq  N^\kappa \abs{\widehat{Vf}(p/ N^{1-\kappa})}   \abs{p}^{-2} \]
and recalling $\eta_p = 0$ for $|p| < N^{\kappa/2}$, we find 
\begin{equation}\label{eq:normseta}
    \norm{\eta}_\infty\leq C, \quad \norm{\eta}_2^2\leq C N^{3\kappa/2}, \quad \norm{\eta}_1\leq CN.
\end{equation}
Moreover 
 \begin{equation}\label{eq:scatteringetacutoff}
        \frac{N^\kappa}{N} \sum_{q\in \Lambda_+^*} \hat{V}(\frac{p-q}{N^{1-\kappa}})\eta_{q}
        = - N^\kappa \widehat{Vw}(\frac{p}{N^{1-\kappa}})
        +O(N^{5\kappa/2-1})
    \end{equation}
    uniformly in $p$.    Furthermore, we have the bounds
    \begin{equation}\label{eq:eta-sigma}
        \sum_{p\in\Lambda^*}\abs{\eta_p-\sigma_p} \leq C N^{3\kappa/2}, \qquad 
        \sum_{p\in\Lambda^*}p^4\abs{\eta_p-\sigma_p}^2 \leq C N^{7\kappa/2}
    \end{equation}
    From 
\begin{equation}\label{eq:Amomentumbounds}
\abs{A_{p,q}}
\leq 2N^{-1/2}\abs{\eta_p}\abs{\sigma_q}  
\end{equation}
we conclude that 
\begin{equation}\label{eq:AL2norm}
\sum_{p,q\in\Lambda^*}\abs{A_{p,q}}^2 \leq C N^{3\kappa-1} 
\end{equation}
and that 
\begin{equation}\label{eq:Appq}   \sum_{p\in\Lambda^*} \abs{A_{p,-p-q}} \leq C N^{2\kappa-1/2} \abs{q}^{-1}. \end{equation}

In position space, there is $C > 0$ and, for all $m\geq 2$, $C_m > 0$ such that 
\begin{equation}\label{eq:positionspacedecayA}
\begin{split}
\abs{\check{A}(x,y)}
\leq 
&C N^{-1/2} \min \Big( N , (\log N) \frac{N^\kappa}{\abs{x}}, C_m N \Big(\frac{\ell_\eta}{\abs{x}}\Big)^m\Big) \\
&\times\min\left(N, (\log N)^2 \frac{N^\kappa}{\abs{y}}, (\log N)^3 \frac{N^{\kappa/4}}{\abs{y}^{5/2}}, C_m N \left(\frac{\ell_\sigma}{\abs{y}}\right)^m\right)
\end{split}
\end{equation}
where $\ell_\eta=N^{-\kappa/2}$ and $\ell_\sigma=N^{-\kappa/2+\beta/4}$.
\end{lemma}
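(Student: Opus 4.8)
The plan is to prove Lemma~\ref{lemma:propertiesA} in two stages: first the momentum space estimates, then the position space decay bound \eqref{eq:positionspacedecayA}. Most of the momentum space claims follow immediately from the already-established properties of $\sigma$ and $\eta$. For \eqref{eq:normseta}, I would use $|\eta_p| \leq N^\kappa |\widehat{Vf}(p/N^{1-\kappa})| |p|^{-2}$ together with $\eta_p = 0$ for $|p| < N^{\kappa/2}$ and the uniform bound $|\widehat{Vf}(\xi)| \leq C$: the $L^\infty$ bound is then clear, the $L^2$ bound comes from splitting into $N^{\kappa/2} \leq |p| \leq N^{1-\kappa}$ (estimating $\widehat{Vf}$ in $\ell^\infty$ and summing $|p|^{-4}$ over this dyadic range, which gives $N^{3\kappa/2}$) and $|p| > N^{1-\kappa}$ (estimating $\widehat{Vf}$ in $\ell^2$ via \eqref{eq:normFm}), and the $L^1$ bound similarly. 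The scattering identity \eqref{eq:scatteringetacutoff} is essentially a restatement of \eqref{eq:scatteringdiscrete} from Lemma~\ref{lemma:scattering}: one replaces $\eta_{\infty,q}$ by $\eta_q = \eta_{\infty,q}\tilde\chi(q)$, and the difference $\eta_{\infty,q} - \eta_q$ is supported on $|q| < 2N^{\kappa/2}$ where $|\eta_{\infty,q}| \leq C$, so the resulting error in the convolution sum is $O(N^{\kappa/N} \cdot N^{3\kappa/2}) = O(N^{5\kappa/2 - 1})$. For \eqref{eq:eta-sigma}, I would use \eqref{eq:gammasigmainfty-etainfty} together with the analogous relation $\sigma(p) = \gamma(p)\sigma(p) + (\sigma(p) - \gamma(p)\sigma(p))$ and the fact that $\sigma$ and $\eta$ differ from $\gamma_\infty \sigma_\infty$ and $\eta_\infty$ only through the cutoffs, which live on $|p| \lesssim N^{\kappa/2}$; the weighted bound with $p^4$ is obtained the same way, the extra $p^4 \lesssim N^{2\kappa}$ on the cutoff region accounting for the jump from $N^{3\kappa/2}$ to $N^{7\kappa/2}$. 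The bounds \eqref{eq:AL2norm} and \eqref{eq:Appq} follow from \eqref{eq:Amomentumbounds}, $\|\eta\|_2^2 \|\sigma\|_\infty^2$-type estimates and, for \eqref{eq:Appq}, from $\sum_p |\eta_p||\sigma_{-p-q}| \leq \|\eta\|_\infty \sum_p |\sigma_{-p-q}|\mathbbm{1}(|p| \geq N^{\kappa/2})$ combined with the $|p|^{-1/2}$ decay of $\sigma$ in \eqref{eq:boundsigmamomentum}, summing $|p+q|^{-1/2}$ over a ball — actually more carefully one bounds $\sum_p |A_{p,-p-q}| \leq CN^{-1/2}\sum_p |\eta_p| |\sigma_{p+q}|$ and uses that $|\sigma_r| \leq CN^\kappa |r|^{-2}$ together with $|\eta_p| \leq CN^\kappa|p|^{-2}$, giving a convolution of two $|\cdot|^{-2}$ kernels cut off at $N^{\kappa/2}$, which produces $N^{2\kappa}|q|^{-1}$.

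The heart of the lemma is the position space estimate \eqref{eq:positionspacedecayA}. Since $A(p,q) = N^{-1/2} \cdot 2p^2\eta(p) \cdot \sigma(q) / (p^2 + q^2 + (p+q)^2)$, the kernel $\check{A}(x,y)$ does not factorize, so the product structure of the claimed bound must come from treating the denominator carefully. The strategy I would use is the standard one for such cubic kernels: write $1/(p^2 + q^2 + (p+q)^2)$ as an integral, e.g. $\int_0^\infty e^{-t(p^2+q^2+(p+q)^2)}\, dt$, or alternatively exploit that $p^2 + q^2 + (p+q)^2 \geq \max(p^2, q^2)$ up to constants, so that $|A(p,q)| \leq CN^{-1/2}|\eta(p)||\sigma(q)|$ (as in \eqref{eq:Amomentumbounds}) and more refined pointwise bounds can be obtained by distributing the denominator. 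The cleanest path: to get decay in $x$, note $2p^2\eta(p)/(p^2+q^2+(p+q)^2) = -N^\kappa\widehat{Vf}(p/N^{1-\kappa})\tilde\chi(p)/(p^2+q^2+(p+q)^2)$, and for fixed $y$ one studies the function $p \mapsto$ (this quotient)$\,\sigma(q)$ summed against $e^{ipx}$; the key is that as a function of $p$ it has the same qualitative behavior (smoothness, decay) as $\eta(p)$ — controlled by derivatives involving $F_m$ as in the proof of Lemma~\ref{lemma:propertiesquadratickernels} — because the denominator is bounded below by $p^2$ and its $p$-derivatives are controlled. Thus the $\min(N, (\log N)N^\kappa/|x|, C_m N(\ell_\eta/|x|)^m)$ factor is obtained by exactly the argument used for \eqref{eq:boundsigmax}, \eqref{eq:boundsigmaxm} applied to $\eta$ in place of $\sigma$ (with $\eta$ having cutoff scale $\ell_\eta = N^{-\kappa/2}$ rather than $\ell_\sigma$), plus the trivial bound $|\check{A}(x,y)| \leq \sum |A_{p,q}| \leq \|\eta\|_1\|\sigma\|_1 N^{-1/2} \cdot (\text{const}) \leq CN^{3/2}$ which after pulling out one $N^{-1/2}$ and noting the $y$-factor carries its own $N$ gives the $N \cdot N = N^{3/2}$, consistent with the first entries. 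Symmetrically, the $y$-decay factor is the one from Lemma~\ref{lemma:propertiesquadratickernels} for $\check\sigma$ itself, since $\sigma(q)$ enters linearly and the denominator is also bounded below by $q^2$ with controlled $q$-derivatives. To combine the two decays into a product, I would bound $(p^2+q^2+(p+q)^2)^{-1} \leq (p^2 + q^2 + (p+q)^2)^{-1/2-1/2}$ is not quite right; instead use $(p^2+q^2+(p+q)^2)^{-1} = (p^2+q^2+(p+q)^2)^{-\theta}(p^2+q^2+(p+q)^2)^{-(1-\theta)}$ and, since both factors in the denominator dominate their respective momenta, absorb a piece of the denominator into the $p$-analysis and a piece into the $q$-analysis — or, more robustly, use the integral representation so that the kernel becomes $\int_0^\infty dt\, (\text{a function of }p\text{ alone, with Gaussian }e^{-tp^2}) \times (\text{a function of }q\text{ alone, with }e^{-tq^2}) \times e^{-t(p+q)^2}$ and then further decouple $e^{-t(p+q)^2}$ or simply drop it (it only helps).

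The main obstacle I anticipate is precisely this decoupling of the three-momentum denominator to produce a genuine \emph{product} of one-variable decay estimates, while simultaneously keeping track of the logarithmic factors. Naively bounding $(p^2+q^2+(p+q)^2)^{-1} \leq C p^{-1}q^{-1}$ (Cauchy--Schwarz on the denominator) would lose too much — it would degrade the $N^\kappa$ prefactors. The right approach is to observe that for the \emph{large-$|x|$} regime one only needs decay in $x$ and can afford the trivial bound $|\sigma(q)| \leq \|\sigma\|_\infty \leq CN^{\beta/8}$ or $\sum_q$-type estimates in $q$; symmetrically for large $|y|$; and in the regime where both $|x|$ and $|y|$ are moderately large one uses that the worst case is governed by $\max(|x|^{-1}, |y|^{-1})$-type bounds which are still consistent with the stated product (since a product of two decaying quantities is at most their minimum times the other, and the $\min$ inside each factor already encodes the relevant crossovers). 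Concretely, I would prove the bound by cases: (i) the trivial $\ell^1$ bound everywhere; (ii) for $|x| \gtrsim \ell_\eta$, integrate by parts $m$ times in $p$ (using the $F_m$-controlled derivative bounds on $p \mapsto 2p^2\eta(p)(p^2+q^2+(p+q)^2)^{-1}$, noting the denominator's $p$-derivatives only improve decay) and then sum freely in $q$ against $\sigma(q)$ using $\|\sigma\|_1 \leq CN$; (iii) symmetrically for $|y| \gtrsim \ell_\sigma$; (iv) an intermediate single-integration-by-parts in each variable for the $|x|^{-1}, |y|^{-1}$-type bounds. The bookkeeping of which $\ell^\infty$ versus $\ell^2$ norm of $F_m$ to use — exactly as in the proof of Lemma~\ref{lemma:propertiesquadratickernels}, splitting momentum sums at $N^{1-\kappa}$ — is the routine but delicate part. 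I expect no genuinely new idea is needed beyond what Lemma~\ref{lemma:propertiesquadratickernels} already establishes; the work is in verifying that inserting the extra factor $2p^2/(p^2+q^2+(p+q)^2)$ (which is bounded by $2$ and smooth with well-controlled derivatives away from the cutoff) does not spoil any of those estimates.
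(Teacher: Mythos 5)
Your treatment of the momentum-space estimates is essentially the paper's (which dismisses most of them as ``easy to check''): \eqref{eq:scatteringetacutoff} by comparing $\eta$ with $\eta_\infty$ on $|q|<2N^{\kappa/2}$, \eqref{eq:eta-sigma} via the chain $\eta\to\eta_\infty\to\gamma_\infty\sigma_\infty\to\sigma_\infty\to\sigma$ using \eqref{eq:gammasigmainfty-etainfty}, and \eqref{eq:Appq} as a convolution of two $|\cdot|^{-2}$ kernels. Two small slips: the intermediate claim $|\eta_{\infty,q}|\leq C$ on $|q|<2N^{\kappa/2}$ is false (near $q=0$ one has $|\eta_{\infty,q}|\sim N^{\kappa}$; the correct route is $\sum_{|q|<2N^{\kappa/2}}N^{\kappa}|q|^{-2}\leq CN^{3\kappa/2}$, which happens to give the same order), and for \eqref{eq:AL2norm} the relevant combination is $N^{-1}\|\eta\|_2^2\|\sigma\|_2^2$, not $\|\eta\|_2^2\|\sigma\|_\infty^2$.

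The genuine gap is in how you propose to obtain the \emph{product} structure of \eqref{eq:positionspacedecayA}. You correctly derive each one-variable decay separately, but the combination step is where the proof must actually live, and the fallback you describe --- proving decay in $x$ with the trivial bound in $y$, decay in $y$ with the trivial bound in $x$, and arguing that the minimum of these ``is consistent with the stated product'' --- does not work: if $F,G$ denote the two bracketed factors, the minimum of $N^{-1/2}F(x)\cdot N$ and $N^{-1/2}N\cdot G(y)$ is in general much larger than $N^{-1/2}F(x)G(y)$ (take $F(x)=G(y)=\sqrt{N}$). The paper's resolution is a single two-variable argument: one bounds the \emph{mixed} derivatives $\partial_{p_j}^m\partial_{q_k}^l A(p,q)$ and observes that they factor into a function of $p$ times a function of $q$. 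The point that makes this work is that each $q$-derivative of $(p^2+q^2+(p+q)^2)^{-1}$ costs only a factor $|q|^{-1}$ and each $p$-derivative only $|p|^{-1}$ (since $|\nabla(p^2+q^2+(p+q)^2)|\lesssim\max(|p|,|q|)$ and the denominator is $\gtrsim\max(p^2,q^2)$), while the undifferentiated denominator is charged entirely to $|p|^{-2}$, cancelling the $p^2$ in the numerator of \eqref{eq:cubicfunctionmomentumspace}. One then applies the two-variable analogue \eqref{eq:trickpowersxyA} of the summation-by-parts identity, and the double sum over $(p,q)$ splits into the two sums already handled in Lemma~\ref{lemma:propertiesquadratickernels}. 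Your case (iv) gestures at simultaneous integration by parts but only at first order, and your other alternatives (the heat-kernel representation, the $\theta$-interpolation of the denominator) are not carried out and would need separate uniformity arguments; as written, the proof of \eqref{eq:positionspacedecayA} is not complete.
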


\begin{proof}
The bounds (\ref{eq:normseta}), \eqref{eq:Amomentumbounds}, \eqref{eq:AL2norm} and \eqref{eq:Appq} are easy to check. Eq. (\ref{eq:scatteringetacutoff}) follows from  \eqref{eq:computationscattering} (whose second line is shown to be $O(N^{2\kappa-1})$), because 
\[ \frac{N^\kappa}{N} \sum_{q \in \Lambda^*_+} |\eta_q - \eta_{\infty, q}| \leq C\frac{N^\kappa}{N} \sum_{|q| < 2 N^{\kappa/2}} \frac{N^\kappa}{q^2} \leq C N^{5\kappa/2 -1}. \]
To prove (\ref{eq:eta-sigma}), we observe that, from Lemma \ref{lemma:scattering}, 
\[ \begin{split} |\eta_p - \sigma_p| &\leq |\eta_p - \eta_{\infty,p}| + |\sigma_p - \sigma_{\infty,p}| + |\eta_{\infty,p} - \gamma_{\infty,p} \sigma_{\infty,p}| + |\sigma_{\infty,p}| | \gamma_{\infty,p} - 1| \\ &\leq C \min \Big( \frac{N^\kappa}{p^2} , \frac{N^{2\kappa}}{|p|^4} \Big). \end{split} \]

As for the position space bound (\ref{eq:positionspacedecayA}), we proceed similarly as in the proof of  Lemma~\ref{lemma:propertiesquadratickernels}. Analogously to \eqref{eq:trickpowersx} we find
\begin{equation}\label{eq:trickpowersxyA}
c^m c^l \abs{x_j}^m  \abs{y_k}^l\abs{\check A (x,y) }
\leq C  \int_0^{2\pi m} ds \int_0^{2\pi l} dt \sum_{p,q\in\Lambda^*}  \abs{(\partial_{p_j}^m\partial_{q_k}^l A (p- s e_j, q- t e_k)}
\end{equation} 
which again allows to relate position space decay to derivatives in momentum space.
For the derivatives of the kernel we find
\[ 
\begin{split}
\abs{\partial_{p_j}^m\partial_{q_k}^l A (p, q)}
&\leq C_{m,l} N^{-1/2+\kappa}
\\ &\times \sum_{\substack{m_i, l_i\geq 0\\ m_1+m_2+m_3=m\\ l_1+l_2 =l}}
\Big| 
\partial_{p_j}^{m_1} \tilde\chi(p) \partial_{p_j}^{m_2} \widehat{Vf}(\frac{p}{N^{1-\kappa}})
\partial_{p_j}^{m_3} \partial_{q_k}^{l_1}\frac{1}{p^2 + q^2 + (p+q)^2}
\partial_{q_k}^{l_2}\sigma (q) \Big| 
\end{split} \]
Recalling the definition (\ref{eq:defFm}) and the bound (\ref{eq:partialmsi}) for the derivatives of $\sigma (q)$, we obtain 
\[ \begin{split}
\abs{\partial_{p_j}^m\partial_{q_k}^l A (p, q)}
&\leq C_{m,l} N^{-1/2+\kappa} \\ &\times 
\sum_{\substack{m_i\geq 0\\ m_1+m_2+m_3=m}}
\abs{p}^{-(m_1+ m_3 +2)} \mathbbm{1}(\abs{p}\geq N^{\kappa/2})
N^{-(1-\kappa) m_2} F_m (p/N^{1-\kappa}) \\
&\times\sum_{\substack{l_i\geq 0\\ l_1+l_2 =l}}
\abs{q}^{- l_1}
(\log N)^{l_2} 
    \min \Big( \frac{N^{\kappa/4}}{\abs{q}^{1/2}}, \frac{N^\kappa}{\abs{q}^{2}} \Big)
    F_{l_2}(q/N^{1-\kappa}) \\ &\hspace{4cm} \times 
    \sum_{l_3 =0}^{l_2} \abs{q}^{-l_3} N^{-(1-\kappa) (l_2-l_3)}
    \mathbbm{1}(\abs{q}\geq N^{\kappa/2-\beta/4}). 
 \end{split} \]
Setting $n = m_1+m_3$, $i= l_1+l_3$ (and absorbing appropriate combinatorial factors in the constant $C_{m,l}$), we arrive at
 \[    \begin{split} 
\big| \partial_{p_j}^m &\partial_{q_k}^l A (p, q) \big| \\ \leq &C_{m ,l} N^{-1/2}
\frac{N^\kappa}{\abs{p}^2} F_m(p/N^{1-\kappa})
\mathbbm{1}(\abs{p}\geq N^{\kappa/2}) \sum_{n=0}^m
\abs{p}^{-n} 
 N^{-(1-\kappa) (m-n)}
\\
&\times (\log N)^l 
\min\Big( \frac{N^{\kappa/4}}{\abs{q}^{1/2}}, \frac{N^\kappa}{\abs{q}^{2}} \Big)
F_{l}(q/N^{1-\kappa})
    \mathbbm{1}(\abs{q}\geq N^{\kappa/2-\beta/4})
 \sum_{i=0}^l
    \abs{q}^{-i} N^{-(1-\kappa) (l-i)}.
    \end{split}
\]
Since the r.h.s. is a product of a function of $p$ with a function of $q$, we can apply this bound to (\ref{eq:trickpowersxyA}) and proceed exactly as in the proof of Lemma \ref{lemma:propertiesquadratickernels}, decoupling the variables $x$ and $y$.  
\end{proof}

Similarly as discussed before Lemma \ref{lemma:sumlocalizedsigma} for $\check{\sigma}$, it will also be useful to decompose $\check A$ into a sum of functions $\check{A}_{u,v}$ supported in small boxes around the points $u,v \in \Lambda_B$ (this will allow us to take advantage of the cutoff $\Theta$ on the local number of particles). 
\begin{lemma}\label{lemma:sumlocalizedA}
Recall that $\ell_B = N^{-\kappa/2 + \beta/12}$, $\Lambda_B = \Lambda \cap \ell_B \mathbb{Z}^3$ and that, for $u\in \Lambda_B$, we defined $B_u = \{x\in\Lambda, \inf_{v\in \Lambda_B} \abs{x-v} =\abs{x-u} \}$. We denote by $\check A_{u,v} = \check A (\mathbbm{1}_{B_u}\otimes \mathbbm{1}_{B_v})$ the restriction of $\check{A}$ to $B_u \times B_v$, so that $\check A = \sum_{u, v \in\Lambda_B} \check A_{u,v}$. Then, for every $\epsilon > 0$, there exists $C > 0$ such that 
    \begin{equation}
        \sum_{u,v \in\Lambda_B} \norm{\check A_{u,v}}_2
        \leq C N^{(3\kappa-1)/2 +\epsilon}.
    \end{equation}
\end{lemma}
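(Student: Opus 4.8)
The plan is to exploit the fact that the pointwise bound \eqref{eq:positionspacedecayA} on $\check A$ factorizes into a function of $x$ times a function of $y$. Writing $|\check A(x,y)| \le g(x)\,h(y)$ with
\[
g(x) = C N^{-1/2}\min\Bigl(N,\ (\log N)\tfrac{N^\kappa}{|x|},\ C_m N\bigl(\tfrac{\ell_\eta}{|x|}\bigr)^m\Bigr),\qquad
h(y) = \min\Bigl(N,\ (\log N)^2\tfrac{N^\kappa}{|y|},\ (\log N)^3\tfrac{N^{\kappa/4}}{|y|^{5/2}},\ C_m N\bigl(\tfrac{\ell_\sigma}{|y|}\bigr)^m\Bigr),
\]
Cauchy--Schwarz on the box $B_u\times B_v$ gives $\norm{\check A_{u,v}}_2 \le \norm{g\,\mathbbm{1}_{B_u}}_2\,\norm{h\,\mathbbm{1}_{B_v}}_2$, so that
\[
\sum_{u,v\in\Lambda_B}\norm{\check A_{u,v}}_2 \;\le\; \Bigl(\sum_{u\in\Lambda_B}\norm{g\,\mathbbm{1}_{B_u}}_2\Bigr)\Bigl(\sum_{v\in\Lambda_B}\norm{h\,\mathbbm{1}_{B_v}}_2\Bigr)
\]
and the two-parameter sum decouples into two one-variable estimates (all sums are finite since $\Lambda_B$ is finite).

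First I would treat the $h$-sum. Since $h$ is exactly the collection of pointwise bounds \eqref{eq:boundsigmaN}--\eqref{eq:boundsigmaxm} for $\check\sigma$, the estimate $\sum_{v\in\Lambda_B}\norm{h\,\mathbbm{1}_{B_v}}_2 \le CN^{3\kappa/4+\epsilon}$ follows by repeating the computation in the proof of Lemma \ref{lemma:sumlocalizedsigma} almost verbatim: split the sum according to $v=0$, $0\ne|v|\le\ell_\sigma N^\delta$, and $|v|>\ell_\sigma N^\delta$; for $v=0$ use $\norm{h\,\mathbbm{1}_{B_0}}_2\le\norm{h}_2\le C(\log N)^3 N^{3\kappa/4}$ (by direct integration of $h^2$, the inequality $3\kappa-1<3\kappa/2$ being what makes the small-$|y|$ part subleading); for the intermediate range use the $|y|^{-5/2}$-decay of $h$, which is square-integrable at the origin in $\bR^3$; for the far range use the $|y|^{-m}$-decay with $m$ large. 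Next I would run the same scheme for the $g$-sum with $\ell_\sigma$ replaced by the relevant scale $\ell_\eta = N^{-\kappa/2}$: for $u=0$, $\norm{g\,\mathbbm{1}_{B_0}}_2\le\norm{g}_2\le C(\log N)N^{3\kappa/4-1/2+\epsilon}$ (integrating $g^2$, with $3\kappa-1<3\kappa/2$ ensuring the decisive contribution is the shell $N^{\kappa-1}\lesssim|x|\lesssim\ell_\eta N^\delta$, of size $\sim(\log N)^2 N^{3\kappa/2-1+\delta}$); for $0\ne|u|\le\ell_\eta N^\delta$ use $g(x)\le C(\log N)N^{\kappa-1/2}|x|^{-1}$, whose localized sum is of order $N^{3\kappa/4-1/2-\beta/8+\epsilon}$; for $|u|>\ell_\eta N^\delta$ use the $|x|^{-m}$-decay, which is negligible for $m$ large. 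This yields $\sum_{u\in\Lambda_B}\norm{g\,\mathbbm{1}_{B_u}}_2\le CN^{3\kappa/4-1/2+\epsilon}$. Multiplying the two estimates and relabeling $\epsilon$ gives $\sum_{u,v\in\Lambda_B}\norm{\check A_{u,v}}_2 \le CN^{(3\kappa-1)/2+\epsilon}$, as claimed.

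The only genuinely new ingredient is the $g$-sum, i.e.\ the analog of Lemma \ref{lemma:sumlocalizedsigma} for the $x$-variable kernel entering $\check A$ (everything in the $h$-sum is already contained in Lemma \ref{lemma:sumlocalizedsigma}). The two points that need care there are: (i) that the global $L^2$ norm of this kernel equals $N^{-1/2}\norm{\eta}_2\simeq N^{3\kappa/4-1/2}$ up to $N^\epsilon$ — this is precisely where the constraint $\kappa<2/3$, in the form $3\kappa-1<3\kappa/2$, is used; and (ii) that localizing to the boxes $B_u$ of side $\ell_B=N^{-\kappa/2+\beta/12}$ costs only a factor $N^\epsilon$ relative to the global norm — exactly as in the remark after Lemma \ref{lemma:sumlocalizedsigma}, $\ell_B$ is the smallest scale for which (ii) holds, owing to the $|x|^{-m}$-decay of the kernels. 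Once (i)--(ii) are in place the whole statement is essentially a corollary of the factorization of \eqref{eq:positionspacedecayA}; everything else is bookkeeping identical to Lemma \ref{lemma:sumlocalizedsigma}.
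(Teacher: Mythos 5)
Your proof is correct and rests on the same ingredients as the paper's: the factorized pointwise bound \eqref{eq:positionspacedecayA}, localization at scale $\ell_B$, and the split into near/intermediate/far regions with $m$ large to kill the tails. The one organizational difference is that you decouple the double sum into a product of two one-variable sums from the outset, whereas the paper runs a case analysis over the pairs $(u,v)$ and, for the $(0,0)$ box, invokes the momentum-space identity \eqref{eq:AL2norm} rather than integrating the pointwise majorant. Your route for that term works too, but it is the one place that genuinely needs the care you flag: integrating $g^2$ (resp.\ $h^2$) near the origin is only admissible because the intermediate $N^{\kappa}|x|^{-1}$-type bounds cap the majorant well below its sup there, so that the innermost region $|x|\lesssim N^{\kappa-1}$ contributes $N^{3\kappa-1}\ll N^{3\kappa/2}$ precisely when $\kappa<2/3$; with only the $\min(N,\,|y|^{-5/2}\text{-decay})$ the computation would fail. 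Since you make exactly this point, the argument is complete, and the product decoupling is arguably a cleaner packaging of the paper's proof.
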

\begin{proof}
    We proceed as in the proof of Lemma \ref{lemma:sumlocalizedsigma}, decomposing 
   \[
        \begin{split}
            \sum_{u\in\Lambda_B} \norm{\check A_{u,v}}_2
            = \; &\norm{\check A_{0,0}}_2
            + \sum_{0\neq v\in\Lambda_B, \abs{v}\leq \ell_\sigma N^\delta} \norm{\check{A}_{0,v}}_2
            + \sum_{v\in\Lambda_B, \abs{v}> \ell_\sigma N^\delta} \norm{\check{A}_{0,v}}_2\\
          &+\sum_{0\neq u\in\Lambda_B} \norm{\check A_{u,0}}_2
            + \sum_{0\neq u,v \in\Lambda_B,
             \abs{v}\leq \ell_\sigma N^\delta} \norm{\check{A}_{u,v}}_2
            + \sum_{0\neq u, v\in\Lambda_B, \abs{v}> \ell_\sigma N^\delta} \norm{\check{A}_{u,v}}_2.
            \end{split} \]
  We estimate $\| \check{A}_{0,0} \|_2 \leq \| \check{A} \|_2 \leq C N^{(3\kappa-1)/2}$ by (\ref{eq:AL2norm}). For the other terms, we use the pointwise bound (\ref{eq:positionspacedecayA}). To estimate $\| \check{A}_{0,v} \|_2$, for $|v| \leq \ell_\sigma N^\delta$, we divide the integral over $x \in B_0$ in the two regions $|x| \leq \ell_\eta N^\delta$ (where we use the bound for $\check{A} (x,y)$ proportional to $N^\kappa / |x|$) and $|x| > \ell_\eta N^\delta$ (where we use the arbitrarily fast decay $(\ell_\eta / |x|)^m$). We find 
  \[ \begin{split}  \| \check{A}_{0,v} \|_2 &\leq C (\log N)^4 N^{-1/2 + 5\kappa/4} |v|^{-5/2} \ell_B^{3/2} \Big[ \int_{|x| \leq \ell_\eta N^\delta} |x|^{-2} dx +  N^{1-\delta m} \ell_B^3 \Big]^{1/2}  \\ &\leq  C (\log N)^4 N^{-1/2 +5\kappa/4} |v|^{-5/2} \ell_B^{3/2} \big[ \ell_\eta^{1/2} N^{\delta /2} + N^{(1-\delta m)/2} \ell_B^{3/2} \big] 
 \end{split}  \]
 for all $|v| \leq \ell_\sigma N^\delta$. Hence
 \[ \sum_{0\neq v \in \Lambda_B , |v| \leq \ell_\sigma N^\delta} \| \check{A}_{0,v} \|_2 \leq  C (\log N)^4 N^{-1/2 +5\kappa/4} \ell_\sigma^{1/2} \ell_B^{-3/2}  \ell_\eta^{1/2} N^{\delta} \leq C N^{(3\kappa-1)/2+ 2\delta}. \]
 For $|v| > \ell_\sigma N^\delta$ and also for all terms with $u \not = 0$, we can use the arbitrary fast decay of $\check{A} (x,y)$ (in $y$, for $|v| > \ell_\sigma N^\delta$, and in $x$, if $u \not = 0$).
\end{proof}

To describe commutators involving the cubic operator (\ref{eq:A-def}), it is useful to introduce the kernels 
\begin{equation} \label{eq:A3A6} \begin{split} \check{A}^{(3)} (x,y) &= \check{A} (x,y) +   \check{A} (y,x) +  \check{A} (-y , x-y) \\  \check{A}^{(6)} (x,y) &=  \check{A}^{(3)} (x,y) +  \check{A}^{(3)} (-x, -x+y). \end{split} \end{equation} 
It is sometimes useful to use the notation $\check{A}^{(1)} = \check{A}$. Observe that, by definition, 
\begin{equation}\label{eq:symmAxy} \check{A}^{(6)} (x,y) = \check{A}^{(6)} (y,x) = \check{A}^{(6)} (x-y, -y) = \check{A}^{(6)} (y-x, -x).  \end{equation}  

In Fourier space, we have 
\begin{equation}\label{eq:Ajpq} \begin{split} 
A^{(3)}_{p,q} &= A_{p,q} + A_{q,p} + A_{-p-q,p} , \\ A^{(6)}_{p,q} &= A^{(3)}_{p,q} + A^{(3)}_{-p-q,q} =  A_{p,q} + A_{q,p} + A_{-p-q,p} + A_{-p-q,q} + A_{q,-p-q} + A_{p,-p-q}.
\end{split} \end{equation} 
From (\ref{eq:symmAxy}), we find 
\begin{equation}\label{eq:symmApq} A^{(6)}_{p,q} = A^{(6)}_{q,p} = A^{(6)}_{-p-q, q} = A^{(6)}_{p, -p-q}.  \end{equation} 
Additionally, from (\ref{eq:cubicfunctionmomentumspace}), we also have $A^{(j)}_{-p, -q} = A^{(j)}_{p,q}$, for $j=1,3,6$.

We collect some properties of these kernels in the next lemma.
\begin{lemma} \label{lm:A3A6}
The Fourier coefficients $A^{(j)}_{p,q}$ satisfy 
\[ \sum_{p,q \in \Lambda^*} |A^{(j)}_{p,q}|^2 \leq C N^{3\kappa-1} \]
for $j=1,3,6$. Moreover 
\begin{equation} \label{eq:sumqA}
\begin{split}  
\sum_{p \in \Lambda^*} |A_{p,q}^{(j)}|^2 &\leq C N^{3\kappa/2 - 1} (1 + \sigma_q^2), \qquad 
 \sum_{q \in \Lambda^*} |A_{p,q}^{(j)}|^2 \leq C N^{3\kappa/2 - 1} (1 + \sigma_p^2) \\ 
 \sum_{p \in \Lambda^*} |A_{p,q}^{(j)}| &\leq C N^{1/2} (1 + |\sigma_q|), \quad \qquad \;
 \sum_{q \in \Lambda^*} |A_{p,q}^{(j)}| \leq C N^{1/2} (1 + |\sigma_p|)
 \end{split} \end{equation}
for $j=1,3,6$. Hence 
\begin{equation}\label{eq:sigmaA} \sum_{p,q \in \Lambda^*} (\sigma_p^2 + \sigma_q^2) |A^{(j)}_{p,q}|^2 \leq C N^{3\kappa-1} \end{equation} 
for $j=1,3,6$. 
 
As in Lemma \ref{lemma:sumlocalizedA}, we define $\check{A}^{(j)}_{u,v} = \check{A}^{(j)} \, (\mathbbm{1}_{B_u} \otimes \mathbbm{1}_{B_v})$, for $u,v \in \Lambda_B = \Lambda \cap \ell_B \bZ^3$ and for $j=1,3,6$, so that $\check{A}^{(j)} = \sum_{u,v \in \Lambda_B} \check{A}^{(j)}_{u,v}$. Then, we have  
\begin{equation}\label{eq:locAj} \sum_{u,v \in \Lambda_B} \| \check{A}^{(j)}_{u,v} \|_2 \leq C N^{(3\kappa-1)/2+ \epsilon} \end{equation} 
for any $\epsilon > 0$ and for $j=1,3,6$. 

It is sometimes also useful to take the supremum in one of the two variables of $\check{A}^{(j)}$. 
For $j=1,3,6$, we define $D^{(j)} (x) = \sup_s |\check{A}^{(j)} (s,x)|$, $\tilde{D}^{(j)} (x) = \sup_s |\check{A}^{(j)} (x,s)|$. Then, we have 
\[ \| \tilde{D}^{(j)} \|_2, \| D^{(j)} \|_2 \leq C N^{3\kappa/4 + 1/2+ \epsilon} \]
for any $\epsilon > 0$ and for $j=1,3,6$. Setting $D_u^{(j)} = D^{(j)} \, \mathbbm{1}_{B_u}$ for $u \in \Lambda_B$, we find  
\[ \sum_{u \in \Lambda^*} \| D^{(j)}_u \|_2 \leq C N^{3\kappa/4  + 1/2 +\epsilon} \]
for any $\epsilon > 0$. 
\end{lemma}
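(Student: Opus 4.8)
The plan is to reduce everything to the single-kernel bounds already established for $A_{p,q}$ (in Lemma~\ref{lemma:propertiesA}) and for $\check{\sigma}, \check{A}$ (in Lemmas~\ref{lemma:propertiesquadratickernels}--\ref{lemma:sumlocalizedA}), using the triangle inequality together with the symmetry relations \eqref{eq:Ajpq}. Since $A^{(3)}_{p,q}$ and $A^{(6)}_{p,q}$ are each finite sums of terms of the form $A_{r,s}$ with $(r,s)$ running over linear substitutions of $(p,q)$ that permute the three momenta $p$, $q$, $-p-q$ (and flip signs), every $\ell^2$ or $\ell^1$ estimate for $A^{(j)}$ follows from the corresponding estimate for $A$ after a change of summation variables. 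For the first claim, $\sum_{p,q}|A^{(j)}_{p,q}|^2 \leq C N^{3\kappa-1}$ is immediate from \eqref{eq:AL2norm} and the fact that each summand in $A^{(j)}_{p,q}$ is a relabelling of $A_{\cdot,\cdot}$, which only costs a combinatorial constant.

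For the partial sums \eqref{eq:sumqA}, I would work term by term. Using $|A_{p,q}| \leq 2 N^{-1/2}|\eta_p||\sigma_q|$ from \eqref{eq:Amomentumbounds}, together with $\|\eta\|_2^2 \leq C N^{3\kappa/2}$ and $\|\eta\|_1 \leq CN$ from \eqref{eq:normseta}, we get, for instance, $\sum_p |A_{p,q}|^2 \leq C N^{-1}\|\eta\|_2^2\,\sigma_q^2 \leq C N^{3\kappa/2-1}\sigma_q^2$ and $\sum_p |A_{p,q}| \leq C N^{-1/2}\|\eta\|_1\,|\sigma_q| \leq C N^{1/2}|\sigma_q|$. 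The terms $A_{q,p}$ and $A_{q,-p-q}$ contribute $|\eta_q||\sigma_p|$ and $|\eta_q||\sigma_{p+q}|$, giving after summation in $p$ a bound $C N^{-1}|\eta_q|^2 \|\sigma\|_2^2 \leq C N^{3\kappa/2-1}|\eta_q|^2 \leq C N^{3\kappa/2-1}$ (using $\|\eta\|_\infty \leq C$) for the $\ell^2$ case, and $C N^{1/2}|\eta_q| \leq C N^{1/2}$ for the $\ell^1$ case; combined with the $\sigma_q^2$ (resp. $|\sigma_q|$) terms this yields the stated $(1+\sigma_q^2)$ (resp. $(1+|\sigma_q|)$) form. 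The remaining two summands of $A^{(6)}_{p,q}$, namely $A_{-p-q,p}$ and $A_{-p-q,q}$, are handled the same way, using $\|\sigma\|_2^2 \leq C N^{3\kappa/2}$ from \eqref{eq:sigmapowers} for the shifted variable. The bound \eqref{eq:sigmaA} then follows by multiplying \eqref{eq:sumqA} by $\sigma_q^2$ (or $\sigma_p^2$), summing, and invoking $\|\sigma\|_2^2 \leq C N^{3\kappa/2}$ once more, noting $\sum_q \sigma_q^2(1+\sigma_q^2) = \|\sigma\|_2^2 + \|\sigma\|_4^4 \leq C N^{3\kappa/2}$ since $4 \in (3/2,6)$.

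For the localization estimate \eqref{eq:locAj}, I would note that $\check{A}^{(j)}(x,y)$ is, by \eqref{eq:A3A6}, a sum of at most six functions each of the form $\check{A}(\ell_1(x,y),\ell_2(x,y))$ for affine unimodular changes of variables $(\ell_1,\ell_2)$; restricting to $B_u \times B_v$ and summing over $u,v$, each such term is controlled by $\sum_{u',v'}\|\check{A}_{u',v'}\|_2$ (after a corresponding relabelling of the lattice $\Lambda_B$, which is preserved by these substitutions up to boundary effects that contribute lower order), so Lemma~\ref{lemma:sumlocalizedA} gives $C N^{(3\kappa-1)/2+\epsilon}$. Finally, for the sup-bounds: $D^{(j)}(x) = \sup_s|\check{A}^{(j)}(s,x)|$, and by \eqref{eq:positionspacedecayA} the $x$-variable (the "$\sigma$-side") carries the decay $\min(N, (\log N)^3 N^{\kappa/4}|x|^{-5/2}, C_m N(\ell_\sigma/|x|)^m)$ while the supremum over $s$ of the $\eta$-side is at most $C N^{-1/2}\cdot N = C N^{1/2}$; integrating $|D^{(j)}(x)|^2$ over $\Lambda$ — splitting at $|x| \sim \ell_\sigma N^\delta$ exactly as in the proof of Lemma~\ref{lemma:sumlocalizedsigma} (using the $|x|^{-5/2}$ bound for small $|x|$ down to $|x|\gtrsim\ell_\eta$ and the arbitrarily fast decay beyond $\ell_\sigma N^\delta$), and using $N^{1/2}$ as the prefactor — gives $\|D^{(j)}\|_2 \leq C N^{1/2}\cdot N^{\kappa/4 + \epsilon} = C N^{3\kappa/4 + 1/2+\epsilon}$; the localized version $\sum_u \|D^{(j)}_u\|_2$ follows by the same $\delta$-splitting argument as in Lemma~\ref{lemma:sumlocalizedsigma}. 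The main obstacle — really the only non-bookkeeping point — is being careful in \eqref{eq:sumqA} that the "$\eta$ on the frozen variable" terms (like $A_{q,p}$, where $p$ is summed and sits in the $\sigma$-slot) genuinely produce the constant $1$ rather than a $\sigma_q$-dependent bound; this is where $\|\eta\|_\infty \leq C$ is used, and it is the reason the right-hand sides of \eqref{eq:sumqA} have the $(1+\sigma_q^2)$ shape rather than simply $\sigma_q^2$.
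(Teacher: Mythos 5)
Your treatment of the momentum-space estimates, of \eqref{eq:sigmaA} and of the localization bound \eqref{eq:locAj} is correct and follows the same route as the paper's (very condensed) proof: everything reduces to $|A_{p,q}|\le 2N^{-1/2}|\eta_p||\sigma_q|$, the $\ell^1,\ell^2,\ell^\infty$ norms of $\eta$ and $\sigma$, and the fact that the six summands of $A^{(6)}_{p,q}$ are relabellings of $A$ under measure-preserving substitutions of $(p,q)$. Your remark that the terms with $\eta$ in the frozen slot are responsible, via $\|\eta\|_\infty\le C$, for the constant $1$ in $(1+\sigma_q^2)$ is exactly the right bookkeeping, and the bound $\sum_q\sigma_q^2(1+\sigma_q^2)\le C N^{3\kappa/2}$ correctly closes \eqref{eq:sigmaA}.

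There is, however, a gap in the last step, the bounds for $D^{(j)}$ and $\tilde D^{(j)}$ with $j=3,6$. You assert that in $\sup_s|\check A^{(j)}(s,x)|$ the variable $x$ always sits in the $\sigma$-slot of some copy of $\check A$, so that the second line of \eqref{eq:positionspacedecayA} survives the supremum. Writing out $\check A^{(6)}(s,x)$ via \eqref{eq:A3A6}, this is true only for the summands $\check A(s,x)$ and $\check A(s-x,-x)$. For $\check A(x,s)$ and $\check A(-x,s-x)$ the decay in $x$ comes from the $\eta$-slot; the conclusion still holds there because the $\eta$-profile also has $L^2$ norm $O(N^{3\kappa/4+\epsilon})$, but not for the reason you give. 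More seriously, for the summands $\check A(-s,x-s)$ and $\check A(x-s,-s)$ neither argument equals $\pm x$, so bounding the supped variable crudely by $N$ destroys all decay in $x$ and your argument yields nothing better than $\|D^{(j)}\|_\infty\le CN^{3/2}$ for these contributions. One needs the additional observation, made explicitly in the paper, that $|s|\ge |x|/2$ or $|x-s|\ge |x|/2$ always holds, so that at least one of the two decay factors of $\check A$ is evaluated at a point of modulus at least $|x|/2$ before the supremum over the other is taken; this restores the (weaker, $\sigma$-type) decay in $|x|$ and yields the pointwise bound from which $\|D^{(j)}\|_2,\|\tilde D^{(j)}\|_2\le CN^{1/2+3\kappa/4+\epsilon}$ and the localized sums follow. (A minor slip: your intermediate expression $\|D^{(j)}\|_2\le CN^{1/2}\cdot N^{\kappa/4+\epsilon}$ does not match the stated conclusion; the $L^2$ norm of the decay profile, computed by splitting at $|x|\sim\ell_\eta$ and $|x|\sim\ell_\sigma$, is $N^{3\kappa/4+\epsilon}$.)
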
 
\begin{proof} 
The estimates in Fourier space follow from $|A_{p,q}| \leq C N^{-1/2} |\eta_p| |\sigma_q|$ and from the estimates for $\eta_p, \sigma_q$ in Lemma \ref{lemma:propertiesquadratickernels} and Lemma \ref{lemma:propertiesA} (recalling also that $\eta_p= 0$, if $|p| < N^{\kappa/2}$, and that $\sigma_q = 0$ if $|q| \leq N^{\kappa/2 -\beta/4}$). To show (\ref{eq:sigmaA}), we use (\ref{eq:sigmapowers}). 

The bounds (\ref{eq:locAj}) can be proven proceeding similarly as in the proof of Lemma \ref{lemma:sumlocalizedA} (it is easy to check that the proof of Lemma \ref{lemma:sumlocalizedA} also extends to terms like $\check{A} (-y, x-y) \mathbbm{1} (x \in B_u) \mathbbm{1} (y \in B_v)$). Finally, the estimates for $D^{(j)}, \tilde{D}^{(j)}$ and their localized versions follow from the pointwise bound 
\begin{equation}\label{eq:point-Dj} \begin{split} 
|D^{(j)} (x)|, |\tilde{D}^{(j)} (x)| \leq \; & C N^{1/2} \min \big(N, (\log N)^2 \frac{N^\kappa}{|x|}, (\log N)^3 \frac{N^{\kappa/4}}{|x|^{5/2}}, N (\ell_\sigma / |x|)^m \big) ,\end{split} \end{equation} 
valid for $j=1,3,6$, which can be proven with (\ref{eq:positionspacedecayA}). To show decay of terms like $\check{A} (-y,x-y)$ in the $x$-direction, we use that either $|y| > |x|/2$ or $|x-y| > |x|/2$ is always true. 
\end{proof}

The commutator of the cubic transformation (\ref{eq:A-def}) with the kinetic energy operator is again a cubic operator, with the modified coefficients \[ A_\cK (p,q) = -N^{-1/2} N^\kappa \widehat{Vf} (p/N^{1-\kappa}) \tilde{\chi} (p) \sigma (q).\] To handle this term, it will be useful to remove the cutoff $\tilde{\chi}$. To this end, we will need to control the expectation of the cubic operator, with coefficients 
\begin{equation}\label{eq:kineticlowmomenta}
A_{\cK}^\text{low} (p,q) = -N^{-1/2} (1-\tilde\chi(p)) N^\kappa \widehat{Vf}(\frac{p}{N^{1-\kappa}})\sigma (q).
\end{equation}
satisfying $A_{\cK}^\text{low} (p,q) = 0$ if $|p| > 2 N^{\kappa/2}$. In the next lemma, we collect some bounds on (\ref{eq:kineticlowmomenta}), which will be used below, in the proof of Lemma \ref{lm:Kxi}.  
\begin{lemma} \label{lm:Klow} 
    We have
    \begin{equation}
        \sum_{p,q\in \Lambda^*} \abs{A_{\cK}^\text{low}(p,q)}^2 \leq C N^{5\kappa-1}.
    \end{equation}
    Let $\check A_{\cK}^\text{low}(x,y) = \sum_{p,q\in \Lambda^*_+} A_{\cK}^\text{low}(p,q) e^{ip\cdot x}e^{iq\cdot y}$. For every $m \in \bN$, we find $C_m > 0$ such that 
    \begin{equation}\label{eq:checkAlow}
        \abs{\check A_{\cK}^\text{low}(x,y)} \leq C_m N^{5\kappa/2 - 1/2}
        \left(\frac{\ell_\eta}{\abs{x}}\right)^m
        \abs{\check{\sigma}(y)}.
    \end{equation}
    Recall $\ell_B = N^{-\kappa/2 + \beta/12}$, $\Lambda_B = \Lambda \cap \ell_B \mathbb{Z}^3$.  For $u\in \Lambda_B$, let $B_u = \{x\in\Lambda, \inf_{v\in \Lambda_B} \abs{x-v} =\abs{x-u} \}$ and denote by $\check A_{\cK,u,v}^\text{low} = \check A_{\cK}^\text{low} (\mathbbm{1}_{B_u}\otimes \mathbbm{1}_{B_v})$ the restriction of $\check{A}_{\cK}^\text{low}$ to $B_u \times B_v$.  Then, for every $\epsilon > 0$ there is $C > 0$ such that     
    \begin{equation}\label{eq:AKlowuv}
        \sum_{u,v \in\Lambda_B} \norm{\check A_{\cK,u,v}^\text{low}}_2
        \leq C N^{(5\kappa-1)/2 +\epsilon}.
    \end{equation} 
\end{lemma}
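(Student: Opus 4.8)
The plan is to treat the three claims in turn, reducing each to estimates on $\sigma$ and on the auxiliary functions $F_m$ already established in Lemmas \ref{lm:Fm}, \ref{lemma:propertiesquadratickernels} and \ref{lemma:sumlocalizedsigma}. For the $L^2$-bound in Fourier space, I would simply note that $A_{\cK}^\text{low}(p,q)$ vanishes unless $|p|\leq 2N^{\kappa/2}$, so that $\sum_p |N^\kappa\widehat{Vf}(p/N^{1-\kappa})(1-\tilde\chi(p))|^2 \leq C N^{2\kappa}\cdot \#\{p\in\Lambda^*_+: |p|\leq 2N^{\kappa/2}\} \leq C N^{2\kappa}N^{3\kappa/2}=CN^{7\kappa/2}$, using $|\widehat{Vf}|\leq C$; multiplying by $\|\sigma\|_2^2 \leq C N^{3\kappa/2}$ from \eqref{eq:sigmapowers} and by the prefactor $N^{-1}$ gives $CN^{5\kappa-1}$, since the kernel factorizes as a function of $p$ times a function of $q$.

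For the pointwise bound \eqref{eq:checkAlow}, the key observation is that $\check A_{\cK}^\text{low}$ factorizes: $\check A_{\cK}^\text{low}(x,y) = g(x)\,\check\sigma(y)$ with $g(x) = -N^{-1/2}\sum_{p\in\Lambda^*_+} (1-\tilde\chi(p)) N^\kappa\widehat{Vf}(p/N^{1-\kappa}) e^{ip\cdot x}$. So it suffices to show $|g(x)| \leq C_m N^{5\kappa/2-1/2}(\ell_\eta/|x|)^m$. I would do this by the same ``difference operator'' trick used in \eqref{eq:trickpowersx}: multiplying by $(e^{2\pi i x_j}-1)^m$ and passing to $m$-th order differences in $p$, one bounds $c^m|x_j|^m|g(x)|$ by $\int ds\,\sum_p |\partial^m_{p_j}[(1-\tilde\chi(p))N^\kappa\widehat{Vf}(p/N^{1-\kappa})]|$. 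Since $1-\tilde\chi$ is supported on $|p|\leq 2N^{\kappa/2}$ with $|\partial^k_{p_j}(1-\tilde\chi)|\leq C_k |p|^{-k}$ there, and $|\partial^k_{p_j}\widehat{Vf}(p/N^{1-\kappa})|\leq C_k N^{-(1-\kappa)k} F_k(p/N^{1-\kappa})$ (bounding each derivative brings down a factor $N^{-(1-\kappa)}$ and a polynomial moment of $Vf$), the sum is controlled by $N^\kappa \cdot N^{3\kappa/2}\cdot N^{-m\kappa/2}\cdot\|F_m\|_\infty$ (using $\ell_\eta=N^{-\kappa/2}$ to convert $|p|^{-1}$-type factors on the support into $N^{-\kappa/2}$); combining with different choices of $j$ yields the claimed decay. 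The factor $N^{5\kappa/2-1/2}$ comes out as $N^{-1/2}\cdot N^\kappa\cdot N^{3\kappa/2}$.

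For the localized bound \eqref{eq:AKlowuv}, I would follow the strategy of Lemma \ref{lemma:sumlocalizedA} essentially verbatim, exploiting the factorization $\check A_{\cK}^\text{low}(x,y)=g(x)\check\sigma(y)$. Write $\sum_{u,v}\|\check A^\text{low}_{\cK,u,v}\|_2 \leq \big(\sum_u \|g\mathbbm{1}_{B_u}\|_2\big)\big(\sum_v\|\check\sigma\mathbbm{1}_{B_v}\|_2\big)$ after a Cauchy–Schwarz/tensorization step (or, more directly, split the double sum into regions just as in Lemma \ref{lemma:sumlocalizedA}, using that $g$ has the arbitrarily fast decay $(\ell_\eta/|x|)^m$ away from the origin and the local $L^2$-mass $\|g\|_2\leq CN^{5\kappa/2-1/2}/\|\check\sigma\|_2$ near it). The factor $\sum_v\|\check\sigma_v\|_2\leq CN^{3\kappa/4+\epsilon}$ is exactly \eqref{eq:sigmau2}; handling $g$, one finds $\sum_u\|g\mathbbm{1}_{B_u}\|_2 \leq CN^{5\kappa/2-1/2}N^{-3\kappa/4+\epsilon}$ (the near-origin box contributes $\|g\|_2\sim N^{5\kappa/2-1/2}\cdot N^{-3\kappa/4}$ since $g$ is supported on $\lesssim N^{3\kappa/2}$ modes with amplitude $\lesssim N^{5\kappa/2-1}$, and the tail boxes are summable using $(\ell_\eta/|x|)^m$ with $m$ large and $\delta$ small, exactly as in Lemma \ref{lemma:sumlocalizedsigma}). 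Multiplying, the $N^{\pm 3\kappa/4}$ cancel and we are left with $CN^{(5\kappa-1)/2+\epsilon}$. The only mildly delicate point — the ``main obstacle,'' though it is not severe — is bookkeeping the $\ell_\eta$ versus $\ell_B$ length scales in the tail sums so that the geometric decay genuinely beats the number $\ell_B^{-3}$ of lattice points; this is handled, as in Lemma \ref{lemma:sumlocalizedsigma}, by first fixing $\delta>0$ small and then $m$ large.
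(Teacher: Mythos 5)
Your proposal is correct and follows essentially the same route as the paper: the same mode-counting for the $L^2$-bound, the same difference-operator argument for the pointwise decay (exploiting that $\check A_{\cK}^\text{low}(x,y)$ factorizes into a function of $x$ times $\check\sigma(y)$, with $\|F_m\|_\infty\leq C_m$ and $N^{-(1-\kappa)}\ll N^{-\kappa/2}$), and the same localization strategy for \eqref{eq:AKlowuv}. Your explicit tensorization $\sum_{u,v}\|\check A^{\text{low}}_{\cK,u,v}\|_2=\big(\sum_u\|g\,\mathbbm{1}_{B_u}\|_2\big)\big(\sum_v\|\check\sigma_v\|_2\big)$ is just a slightly cleaner packaging of the region splitting the paper performs directly on the two-variable kernel.
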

\begin{proof}
From (\ref{eq:sigp}), we find 
\[ \sum_{p,q \in \Lambda^*} |A_{\cK}^\text{low} (p,q)|^2 \leq C N^{-1+2\kappa} \sum_{|p| \leq 2 N^{\kappa/2}, q\in \Lambda^*} |\sigma_q|^2 \leq  C N^{5\kappa-1}. \]
The bound (\ref{eq:checkAlow}) follows by noticing that $\check{A}_{\cK}^\text{low} (x,y)$ is the product of a function of $x$ (the Fourier series with coefficients $N^{-1/2+\kappa} (1-\tilde{\chi} (p)) \widehat{Vf} (p/N^{1-\kappa})$) with $\check{\sigma} (y)$ and by estimating 
\[  \big| \partial^m_{p_j} \big[ (1-\tilde{\chi} (p)) \widehat{Vf} (p/N^{1-\kappa}) \big] \big|\leq C_m N^{-m\kappa/2}  F_m (p/N^{1-\kappa})     \mathbbm{1} (|p| \leq C N^{\kappa/2}) \]
(where we used the fact that $N^{-1+\kappa} \ll N^{-\kappa/2}$). Thus, proceeding as in (\ref{eq:trickpowersx}) and recalling $\| F_m \|_\infty \leq C_m$ from Lemma \ref{lm:Fm}, we find 
\[ \begin{split} c^m |x_j|^m |\check{A}_{\cK}^\text{low} (x,y)| &\leq C_m N^{-1/2 + \kappa}  \int_0^{2\pi m} ds \sum_{p \in \Lambda^*} \big|\partial_{p_j}^m (1-\tilde{\chi} (p)) \widehat{Vf} (p/N^{1-\kappa}) \big| |\check{\sigma} (y)| \\ &\leq C_m N^{5\kappa/2-1/2} N^{-m\kappa/2} |\check{\sigma} (y)| \end{split} \] 
which implies (\ref{eq:checkAlow}). With (\ref{eq:checkAlow}), the bound (\ref{eq:AKlowuv}) follows, proceeding similarly as in the proof of Lemma \ref{lemma:sumlocalizedA}. The main contributions to the sum are $\| \check{A}_{\cK,0,0}^\text{low} \|_2 \leq \| \check{A}_{\cK}^\text{low} \|_2 \leq C N^{5\kappa-1}$ and the sum over $|v| \leq \ell_\sigma N^\delta$ of $\| \check{A}_{\cK,0,v}^\text{low} \|_2$. Here, we integrate separately over $|x| \leq \ell_\eta N^\delta$ (where we estimate $| \check{A}_{\cK}^\text{low} (x,y)| \leq C N^{(5\kappa-1)/2} |\check{\sigma} (y)|$) and over $|x| > \ell_\eta N^\delta$ (where we bound $| \check{A}_{\cK}^\text{low} (x,y)| \leq C N^{(5\kappa-1)/2-\delta m} |\check{\sigma} (y)|$). Using (\ref{eq:boundsigmax5/2}) to control $|\check{\sigma} (y)|$, we obtain 
\[ \begin{split}  \sum_{v \in \Lambda_B , |v| \leq \ell_\sigma N^\delta} &\|  \check{A}_{\cK,0,v}^\text{low} \|_2 \\ & \leq C (\log N)^3 N^{(5\kappa-1)/2} N^{\kappa/4} \ell_B^{3/2}  \left[ \ell_\eta^3 N^{3\delta} + N^{-\delta m} \ell_B^3 \right]^{1/2}  \sum_{v\in \Lambda_B , |v| \leq \ell_\sigma N^\delta}  |v|^{-5/2}  \\ &\leq C (\log N)^3 N^{(5\kappa-1)/2} N^{\kappa/4} \ell_B^{-3/2}  \ell_\eta^{3/2} \ell_\sigma^{1/2} N^{2\delta} \leq C N^{(5\kappa-1)/2+\epsilon}  \end{split} \]
for an arbitrarily small $\epsilon >0$ (where we first chose $\delta > 0$ small enough and then $m \in \bN$ sufficiently large). All other contributions can be bounded using the fast decay for $|x| > \ell_\eta N^\delta$ or for $|y| > \ell_\sigma N^\delta$. 
\end{proof}

\subsection{The Local Particle Number Cutoff}
\label{sec:cutoff}

We now define the cutoff $\Theta$ to be inserted in the cubic transformation appearing in (\ref{eq:psi0}) to control the local density of the excitations. We consider the length scale $\ell_{B} = N^{-\kappa/2+\beta/12}$, with $\beta= 2-3\kappa$ (the choice of $\ell_B$ is explained in the remark after Lemma~\ref{lemma:sumlocalizedsigma}). Through $\Theta$, we will require that balls of radius $\ell_B$ only contain $N^\epsilon$ particles, for a small $\epsilon > 0$. Note that the typical number of cubic excitations in a ball with radius $\ell_B$ is of the order $\ell_B^3 N^{3\kappa-1} \simeq N^{-\beta/4} \ll 1$ (we will show that the total number of cubic excitations is of the order $\| A \|_2^2 \simeq N^{3\kappa-1}$). For this reason, $\Theta$ will not substantially change the energy of our trial state.  


Pick a radial function $\chi_{\ell_B} \in C_c^\infty(\mathbb{R}^3)$ such that $0\leq \chi_{\ell_B}\leq 1$, $\chi_{\ell_B}(x)=1$ if $\abs{x}\leq \ell_B$ and $\chi_{\ell_B}(x)=0$ if $\abs{x}\geq 2\ell_B$. Define
\begin{equation}
    \cN_{\chi_{\ell_B}} (w) \coloneqq \int dv \, \chi_{\ell_B}(w-v) a_v^* a_v.
\end{equation}
For $\ell > 0$, we also define the operator 
\begin{equation}
    \cN_{\ell} (w) \coloneqq \int dv \mathbbm{1}(\abs{w-v}\leq \ell) \, a_v^* a_v 
\end{equation}
counting the number of particles in a ball of radius $\ell$ around $w$. We think of $\cN_{\chi_{\ell_B}} (w)$ as a smoothed out version of $\cN_{\ell_B} (w)$. For $n \in \bN$, we set 
\begin{equation} \label{eq:def-Theta} 
    \cL_n \coloneqq \int dw \, (\cN_{\chi_{\ell_B}} (w) +1)^n.
\end{equation}
Choosing $\Theta\in C^\infty(\mathbb{R})$ monotonically decreasing with $0\leq \Theta\leq1$, $\Theta(x)=1$ if $x\leq 1$ and $\Theta(x)=0$ if $x\geq 2$, and for $\alpha > 0$ to be specified later on, we define the cutoff
$
    \Theta(\cL_n/N^\alpha) \, .
$

Note that
$
    [\cN_{\chi_{\ell_B}} (w), a_x^*] = \chi_{\ell_B}(w-x)a_x^*.
$
Hence, for every $k \in \bN$, there is a $C_k > 0$ such that
\begin{equation}
    C_k^{-1}\norm{\prod_{i=1}^k a_{x_i} (\cN_{\chi_{\ell_B}} (w) +1)\zeta}
    \leq \norm{ (\cN_{\chi_{\ell_B}} (w) +1)\prod_{i=1}^k a_{x_i}\zeta}
    \leq \norm{\prod_{i=1}^k a_{x_i} (\cN_{\chi_{\ell_B}} (w) +1)\zeta}.
\end{equation}

With the cutoff $\Theta(\cL_n/N^\alpha)$, we introduce the unitary cubic transformation \begin{equation}\label{eq:defcubictransform}
    \exp \Big(\int dxdydz \check{A}(x-y,x-z) a_x^* a_y^* a_z^* \Theta(\cL_n/N^\alpha)-\hc\Big)
    \eqqcolon \exp(A^*\Theta-\hc)
\end{equation}
with the kernel $\check{A}$ defined in \eqref{eq:defAxy}. 
\begin{lemma}\label{lemma:propertiesL}
There is a constant $C > 0$ such that, for every $\epsilon >0$ and $\alpha > 0$, and $T>0$, we have 
\begin{equation}\label{eq:localNbound}
    \cN_{k \ell_B} (w)\Theta(T^{-n}\cL_n/N^\alpha)
    \leq 
   C T  k^3 N^{\epsilon} \Theta(T^{-n} \cL_n/N^\alpha)
\end{equation}
and also 
\begin{equation}\label{eq:localNboundexample}
    \norm{\cN_{k\ell_B}^{1/2}(w)\prod_{j=1}^m a_{x_j}\Theta(T^{-n}\cL_n/N^\alpha)\zeta}^2
    \leq C T k^3 N^{\epsilon}
    \norm{\prod_{j=1}^m a_{x_j}\Theta(T^{-n}\cL_n/N^\alpha)\zeta}^2 
\end{equation}
for all $n$ large enough, $w \in \Lambda$, $k\geq 1$, $\zeta \in \cF (\Lambda )$ and $m \in \bN$. 
Moreover, for every $\alpha > 0$, we have
    \begin{equation}\label{eq:expectationL}
    \la \exp(A^*\Theta-\hc)\Omega, \cL_n \exp(A^*\Theta-\hc)\Omega\ra \leq C.
\end{equation}
for all $n$ large enough. As a consequence
\begin{equation}\label{eq:norm1-Theta}
    \norm{(1-\Theta(\cL_n/N^\alpha))\exp(A^*\Theta-\hc)\Omega} \leq C N^{-\alpha/2}.
\end{equation}

Also, for every $\alpha > 0$ we have the (non-optimal)  bound 
\begin{equation}\label{eq:easyboundNm}
    \la \exp(s (A^*\Theta-\Theta A)) \zeta , \cN^m \exp (s (A^*\Theta-\Theta A)) \zeta \ra
    \leq C N^{m - m\beta/2} + \langle \zeta, \cN^m \zeta \rangle 
\end{equation}
for all $n$ large enough and all normalized $\zeta \in \cF (\Lambda )$ and $s\in[0,1]$. 

Finally, we observe that, for any $\nu > 0$, we have  
  \begin{equation}\label{eq:AThetaA}
    \la \exp(A^*\Theta-\hc)\Omega, A^* (1- \Theta (\cL_n / N^\alpha)) A \exp(A^*\Theta-\hc)\Omega\ra \leq C N^{-\nu}
\end{equation}
if $\alpha > 0$ and then $n \in \bN$ are chosen large enough. 
\end{lemma}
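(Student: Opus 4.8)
The plan is to use a Chebyshev‑type estimate in the local particle number $\cL_n$, fed by the propagation bounds already established in this lemma. Write throughout $\xi = \exp(A^*\Theta-\Theta A)\Omega$ and $\Theta = \Theta(\cL_n/N^\alpha)$. Since $\Theta$ is monotone decreasing with $0\le\Theta\le1$ and $\Theta\equiv1$ on $(-\infty,1]$, one has the operator inequality $1-\Theta(\cL_n/N^\alpha)\le\mathbbm{1}(\cL_n>N^\alpha)\le N^{-\alpha M}\cL_n^{M}$ for every $M\in\bN$, and hence
\[ \la\xi,A^*(1-\Theta(\cL_n/N^\alpha))A\xi\ra\le N^{-\alpha M}\,\la A\xi,\cL_n^{M}A\xi\ra. \]
It therefore suffices to bound $\la A\xi,\cL_n^{M}A\xi\ra$ by a power of $N$ that, after multiplication by $N^{-\alpha M}$, falls below $N^{-\nu}$ once $M$ and then $n$ are large.

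To bound the moment I would keep $\cL_n=\int dw\,(\cN_{\chi_{\ell_B}}(w)+1)^{n}$ intact rather than replace it by $(\cN+1)^{n}$ — estimating by the global number operator would cost, via \eqref{eq:easyboundNm}, a power $N^{3\kappa n/2}$ and destroy the $n$‑uniformity. Writing $A=\int\check A(x-y,x-z)\,a_za_ya_x$ in position space, one commutes the $M$ copies of $\cL_n$ through the ladder operators of $A^*\cL_n^{M}A$ using $(\cN_{\chi_{\ell_B}}(w)+1)a_v=a_v(\cN_{\chi_{\ell_B}}(w)+1-\chi_{\ell_B}(w-v))$ and its conjugate. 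This produces a main term comparable to $\|\check A\|_2^{2}\,\la\xi,\cL_n^{M}\xi\ra$, controlled by \eqref{eq:AL2norm} together with \eqref{eq:expectationL} and its higher‑moment analogue — the latter proved by the same Gronwall argument as \eqref{eq:expectationL}, using $\|(1-\Theta)\xi\|\le CN^{-\alpha/2}$ from \eqref{eq:norm1-Theta} and \eqref{eq:localNbound} to close the recursion — plus collision terms in which one or more factors $\cN_{\chi_{\ell_B}}(w)$ are dropped, via the binomial expansion of $(\cN_{\chi_{\ell_B}}(w)+1-\chi_{\ell_B}(w-v))^{n}$, onto the points annihilated by $A$. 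Each collision confines the surviving local‑number factors to an $\ell_B$‑ball around a particle removed by $A$, where \eqref{eq:localNbound} and \eqref{eq:localNboundexample} bound them by $CN^{\epsilon}$ once the cutoff $\Theta$ is carried through the annihilation operators by means of $[\cN_{\chi_{\ell_B}}(w),a_x^*]=\chi_{\ell_B}(w-x)a_x^*$; the $\ell_B$‑localized $w$‑integrations contribute only volume factors $\ell_B^{3}$, absorbed against the kernels $\check A$ (whose $L^{1}$‑norm is a small power of $N$) and the localized estimates of Lemma \ref{lm:A3A6}. The Duhamel error terms produced by expanding $A\xi=A\exp(A^*\Theta-\Theta A)\Omega$, which carry extra weights $\cN^{1/2}$, are handled exactly as in the proof of \eqref{eq:easyboundNm}. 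Collecting everything one gets $\la A\xi,\cL_n^{M}A\xi\ra\le C_{n,M}\,N^{3\kappa-1+bM}$, where the extra exponent $b$ can be made arbitrarily small by choosing $\epsilon$ small and $n$ large; taking then $M$ large enough that $\alpha M>3\kappa-1+bM+\nu$, and finally $\epsilon$ small and $n$ large enough for \eqref{eq:expectationL}, \eqref{eq:localNbound}, \eqref{eq:localNboundexample} and \eqref{eq:norm1-Theta} to apply, yields $\la\xi,A^*(1-\Theta(\cL_n/N^\alpha))A\xi\ra\le CN^{-\nu}$.

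The hard part is the bookkeeping just sketched: $\cL_n^{M}$ is a polynomial of degree $nM$ in the commuting family $\{\cN_{\chi_{\ell_B}}(w)\}_{w}$, and commuting it through the $3M$ creation and annihilation operators of $A^*\cL_n^{M}A$ requires tracking precisely how the collision terms interact with the kernels $\check A$ — which, unlike $\check\sigma$, are spread over $\sim N^{\beta/2}$ of the boxes $B_u$ because $\ell_\sigma\gg\ell_B$ — and then checking that the residual exponent $b$ stays small enough for $\alpha M$ to dominate $3\kappa-1+bM+\nu$. This is where the order of the quantifiers matters: $\epsilon$ must be fixed small relative to $\alpha$, and only afterwards can $n$ be sent to infinity in the local estimates, which is precisely what the phrase ``if $\alpha>0$, and then $n$, are chosen large enough'' reflects. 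A secondary technical point is that the higher‑moment analogue of \eqref{eq:expectationL} used above is not literally stated in the excerpt and would have to be proved along the way, by the same method.
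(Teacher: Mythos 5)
Your proposal only addresses the last claim, \eqref{eq:AThetaA}; the other five statements of the lemma — in particular \eqref{eq:expectationL}, whose Gronwall argument (localizing $\check A=\sum_{u,v}\check A_{u,v}$ and using Lemma \ref{lemma:sumlocalizedA} to get a derivative bound $C\la\xi_t,\cL_n\xi_t\ra$) is the real content of the lemma — are taken as inputs rather than proved. Even granting those, your treatment of \eqref{eq:AThetaA} diverges from the paper's and has a quantitative gap. The paper's argument is much lighter: by the pull-through formula, $A^*(1-\Theta(\cL_n/N^\alpha))=\int\check A\,(1-\Theta(\tilde\cL_n/N^\alpha))\,a^*a^*a^*$ with $\cL_n\le\tilde\cL_n\le4^n\cL_n$, so the whole operator $A^*(1-\Theta)A$ is supported, on the \emph{outer} state, where $\cL_n\gtrsim N^\alpha$; one then pays a crude norm $\|A^*(1-\Theta)A(\cN+1)^{-3}\|\le CN^\tau$ together with $\|\cN^3\xi\|$ from \eqref{eq:easyboundNm}, and beats the resulting fixed power of $N$ with the single factor $\|\mathbbm{1}_{[1,\infty)}(\cL_n/N^\alpha)\xi\|\le CN^{-\alpha/2}$ from \eqref{eq:norm1-Theta}, choosing $\alpha$ large depending on $\nu$. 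No high moments of $\cL_n$ are needed.

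The concrete problem with your accounting is the claim that each collision term costs only $N^{b}$ with $b$ arbitrarily small. The exponent $\epsilon$ in \eqref{eq:localNbound} is not free of $n$: the proof gives $\cN_{\ell_B}(w)\Theta\le CT\ell_B^{-3/n}N^{\alpha/n}\Theta$, i.e.\ $\epsilon\approx(3\kappa/2-\beta/4+\alpha)/n$, so that $(\cN_{\chi_{\ell_B}}(w)+1)^{n}\Theta\le(CTN^{\epsilon})^{n}\Theta\approx C^n\,\ell_B^{-3}N^{\alpha}\,\Theta$. A collision in one factor of $\cL_n$ therefore contributes $\ell_B^{3}\cdot\sup_w(\cN_{\chi_{\ell_B}}(w)+1)^{n}\approx N^{\alpha}$, which exactly cancels one Chebyshev factor $N^{-\alpha}$; summing over $j$ collisions gives $N^{-\alpha M}(1+N^{\alpha})^{M}\approx C^{M}$, and no decay survives. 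The strategy can be rescued, but only by \emph{not} expanding binomially: use the monotone pull-through $\cL_n\,a_xa_ya_z=a_xa_ya_z\,\cL_n^{(x,y,z)}$ with $0\le\cL_n^{(x,y,z)}\le\cL_n$ to carry the full power $\cL_n^{M}$ onto $\xi$, and then invoke the $N$-uniform bounds $\la\xi,\cL_n^{M'}\xi\ra\le C_{M'}$ (which, as you note, are not stated and would have to be proved, e.g.\ via $\cL_n^{M'}\le|\Lambda|^{M'-1}\cL_{nM'}$ and the argument for \eqref{eq:expectationL} with $n$ replaced by $nM'$). As written, however, the decisive estimate $\la A\xi,\cL_n^{M}A\xi\ra\le C_{n,M}N^{3\kappa-1+bM}$ with small $b$ is not justified, and this is the step on which your whole argument rests.
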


\textit{Remark.} As it is clear from their proof, the bounds (\ref{eq:norm1-Theta}) and (\ref{eq:AThetaA}) remain true if we replace $1-\Theta$ by a sharp cutoff, ie. 
\[ \begin{split}  \| \mathbbm{1}_{[k, \infty)} (\cL_n / N^\alpha) \exp (A^* \Theta -\hc ) \Omega \| &\leq C N^{-\alpha/2} \\
\la \exp (A^* \Theta -\hc ) \Omega, A^* \mathbbm{1}_{[k, \infty)} (\cL_n / N^\alpha) A \exp (A^* \Theta -\hc ) \Omega \ra &\leq N^{-\nu} \end{split} \]
for $k>0$.
In fact, the last bounds also holds, if we include factors of $(\cN+1)$, ie.  
\[ \la \exp (A^* \Theta -\hc ) \Omega, A^* (\cN+1)^m \mathbbm{1}_{[k, \infty)} (\cL_n / N^\alpha) A \exp (A^* \Theta -\hc ) \Omega \ra \leq N^{-\nu} \]
if $\alpha$ and $n$ are chosen large enough (depending on $m$).

\begin{proof}
We begin with \eqref{eq:localNbound}. For arbitrary $w$ and $y$ with $\abs{w-y}\leq \ell_B/2$ we have
\begin{equation*}
    (\cN_{\ell_B/2} (w)+1)^n
    \leq 
    (\cN_{\chi_{\ell_B}} (y) +1)^n.
\end{equation*}
Hence
\begin{equation*}
\begin{split} \cL_n   &= \int dy \, (\cN_{\chi_{\ell_B}} (y) +1)^n \geq \int_{|w-y| \leq \ell_B/2}  dy \, (\cN_{\chi_{\ell_B}} (y) +1)^n \\ &\geq \int_{|w-y| \leq \ell_B/2}  dy \, (\cN_{\ell_B/2} (w) +1)^n  \geq C \ell_B^3   (\cN_{\ell_B/2} (w) +1)^n.
\end{split}  
\end{equation*}
Taking the $n$-th root and writing $\cN_{\ell_B} (w)$ as the sum over the number of particles in finitely many balls of radius $\ell_B/2$, we obtain that, on the support of $\Theta(T^{-n}\cL_n/N^\alpha)$, 
\begin{equation*}
    \cN_{\ell_B} (w) 
    \leq C T\ell_B^{-3/n}N^{\alpha/n} 
    \leq C T N^{\epsilon}.
\end{equation*}
if $n$ is large enough. 
Decomposing balls of radius $k \ell_B$ into balls of radius $\ell_B$, we find 
\begin{equation*}
    \cN_{k\ell_B} (w) 
    \leq C T k^3 N^{\epsilon}.
\end{equation*}
if $n$ is large enough. Similarly, we can show \eqref{eq:localNboundexample}. In fact,
\begin{equation}
\begin{split}
    &\norm{\cN_{k \ell_B}^{1/2}(w)\prod_{j=1}^m a_{x_j}\Theta(T^{-n}\cL_n/N^\alpha)\zeta}^2\\
    &\leq 
    \norm{\cN_{k\ell_B}^{1/2}(w)\Theta\Big(T^{-n} \int dy \, \big(\cN_{\chi_{\ell_B}} (y) +1 +\sum_{j=1}^m \chi_{\ell_B}(y-x_m) \big)^n / N^\alpha \Big) \, \prod_{j=1}^m a_{x_j}\zeta}^2\\
    &\leq C T k^3 N^{\epsilon}
    \norm{\prod_{j=1}^m a_{x_j}\Theta(T^{-n}\cL_n/N^\alpha)\zeta}^2.
\end{split}
\end{equation}
%


To prove (\ref{eq:expectationL}), we set $\xi=\exp(A^*\Theta-\Theta A)\Omega$ and we observe that, splitting $\check{A}$ as we did in  Lemma \ref{lemma:sumlocalizedA},
\begin{equation*}
\begin{split}
    &\abs{\la\xi, [\cL_n, \int dx_1 dx_2 dx_3 \check{A}(x_1-x_2,x_1-x_3) a_{x_1}^* a_{x_2}^* a_{x_3}^* \Theta(\cL_n/N^\alpha)]\xi\ra}\\
    &\leq\int dx_1 dx_2 dx_3 dw \, | \check{A}(x_1-x_2,x_1-x_3)| \, \abs{\la\xi,[(\cN_{\chi_{\ell_B}} (w) +1)^n, a_{x_1}^* a_{x_2}^* a_{x_3}^*] \Theta(\cL_n/N^\alpha)\xi\ra}\\
    &\leq\sum_{k=0}^{n-1}\int dx_1 dx_2 dx_3 dw \, |\check{A}(x_1-x_2,x_1-x_3)| \\ &\hspace{.4cm} \times \big| \la\xi,(\cN_{\chi_{\ell_B}} (w) +1)^k 
    \sum_{i=1}^3\chi_{\ell_B} (x_i-w)  a_{x_1}^* a_{x_2}^* a_{x_3}^*
    (\cN_{\chi_{\ell_B}} (w) +1)^{n-k-1} \Theta(\cL_n/N^\alpha)\xi\ra \big| \\
    &\leq C
    \sum_{u,v\in\Lambda_B}
    \sum_{i=1}^3
    \int dx_1 dx_2 dx_3 dw \abs{\check{A}_{u,v}(x_1-x_2,x_1-x_3)}
    \chi_{\ell_B}(x_i-w)\\
    &\hspace{.4cm} \times\norm{a_{x_1} a_{x_2} a_{x_3} \Theta(4^{-n}\cL_n/N^\alpha)  (\cN_{\chi_{\ell_B}} (w) +1)^{(n-1)/2} \xi}
    \norm{(\cN_{\chi_{\ell_B}} (w) +1)^{(n-1)/2} \xi}.\\
\end{split}
\end{equation*}
Here we used the fact that $\Theta$ is monotonically decreasing
\begin{align*}
    \Theta(\cL_n/N^\alpha)\leq \Theta\Big(4^{-n}\int dw \, (\cN_{\chi_{\ell_B}} (w)+\sum_{i=1}^3\chi_{\ell_B} (x_i-w) +1)^n/N^\alpha\Big)
\end{align*}
and the pull-through formula
\begin{align*}
\int dw \,   (\cN_{\chi_{\ell_B}} (w)+\sum_{i=1}^3\chi_{\ell_B} (x_i-w) +1)^n a_{x_1} a_{x_2} a_{x_3}=a_{x_1} a_{x_2} a_{x_3}\cL_n.
\end{align*}
Cauchy-Schwarz yields
\[ \begin{split}
    &\abs{\la\xi, [\cL_n, \int dxdydz \check{A}(x_1-x_2,x_1-x_3) a_{x_1}^* a_{x_2}^* a_{x_3}^* \Theta(\cL_n/N^\alpha)]\xi\ra}\\
    &\leq C
    \sum_{u,v\in\Lambda_B}
    \sum_{i=1}^3
    \Big(\int dx_1 dx_2 dx_3 dw \, \chi_{\ell_B}(x_i-w)\mathbbm{1}(\abs{x_1-x_2-u},\abs{x_1-x_3-v}\leq \ell_B)\\
    &\hspace{3cm}\times\norm{a_{x_1} a_{x_2} a_{x_3} \Theta(4^{-n}\cL_n/N^\alpha)  (\cN_{\chi_{\ell_B}} (w) +1)^{\frac{(n-1)}{2}} \xi}^2 \Big)^{1/2}\\
    &\hspace{.3cm} \times  \Big(\int dx_1 dx_2 dx_3 dw \abs{\check{A}_{u,v}(x_1-x_2,x_1-x_3)}^2
    \chi_{\ell_B}(x_i-w)\norm{(\cN_{\chi_{\ell_B}} (w) +1)^{\frac{(n-1)}{2}} \xi}^2\Big)^{1/2}.
\end{split} \]    
Consider, for example, the term associated with $i=1$ (the other terms can be handled similarly). For every $\alpha, \epsilon > 0$ we can bound this contribution by 
\begin{equation}\label{eq:Lnbd} \begin{split} 
C &\sum_{u,v\in\Lambda_B} \norm{\check{A}_{u,v}}_2
    \Big(\int dx_1 dw \chi_{\ell_B}(x_1 - w)\\
    &\hspace{1cm}\times\norm{\cN_{\ell_B}^{1/2}(x_1 -u) \cN_{6\ell_B}^{1/2}(x_1- v) a_{x_1} \Theta(4^{-n}\cL_n/N^\alpha)  (\cN_{\chi_{\ell_B}} (w) +1)^{\frac{(n-1)}{2}} \xi}^2 \Big)^{1/2}\\
    &\hspace{4cm} \times \Big(\int dx_1 dw \, 
    \chi_{\ell_B}(x_1-w)\norm{(\cN_{\chi_{\ell_B}} (w) +1)^{\frac{(n-1)}{2}} \xi}^2\Big)^{1/2} \\
        &\leq C N^\epsilon  \ell_B^{3/2} 
    \sum_{u,v\in\Lambda_B} \norm{\check{A}_{u,v}}_2
    \Big(\int  dw \, \norm{(\cN_{\chi_{\ell_B}} (w) +1)^{\frac{n}{2}} \xi}^2 \Big)^{1/2} \\ &\hspace{4cm} \times 
    \Big(\int  dw \, \norm{(\cN_{\chi_{\ell_B}} (w) +1)^{\frac{(n-1)}{2}}\xi}^2\Big)^{1/2}\\
    &\leq C N^{- \beta/8 + \epsilon}  
    \la\xi, \cL_n\xi\ra
    \end{split} \end{equation} 
if $n$ is large enough. Here, we used (\ref{eq:localNboundexample}) twice, to control $\cN_{\ell_B}^{1/2}(x_1 -u)$ and $\cN_{\ell_B}^{1/2}(x_1- v)$, and we applied Lemma \ref{lemma:sumlocalizedA}. 
 Choosing $0 < \epsilon < \beta /8$, we conclude that 
 \[ \Big| \frac{d}{dt} \langle e^{t (A^* \Theta - \Theta A)} \Omega, \cL_n     e^{t (A^* \Theta - \Theta A)} \Omega\rangle \Big|  \leq C  \langle e^{t (A^* \Theta - \Theta A)} \Omega, \cL_n     e^{t (A^* \Theta - \Theta A)} \Omega\rangle  \]
 for every $t \in \bR$. 
Applying Gronwall's lemma in the interval $t \in [0;1]$, we obtain (\ref{eq:expectationL}). Eq. \eqref{eq:norm1-Theta} is an immediate consequence.

To show \eqref{eq:easyboundNm}, we first observe that
\[ \cN \leq C \ell_B^{-3} \int dw \, \cN_{\chi_{\ell_B}} (w) \leq C \ell_B^{-3} \cL^{1/n}_n. \]
Therefore, on the support of the cutoff $\Theta$, we have 
\begin{equation}\label{eq:particlenumbereasy}
    \cN^m \leq C \ell_B^{-3m} N^{m \eps} \leq C N^{m (3\kappa/2 - \beta/4 +\epsilon)} \end{equation} 
for any $\eps > 0$, if $n$ is large enough (depending on $\alpha, \epsilon$). Setting $\zeta_s = \exp (s (A^* \Theta - \Theta A)) \zeta$, we obtain 
\begin{equation}
\begin{split}
    &\Big| \la \exp (A^*\Theta-\hc)\zeta, \cN^m \exp(A^*\Theta-\hc)\Omega\ra - \langle \xi, \cN^m \zeta \rangle \Big| \\
    &\leq C \int_0^1 ds \, \Big| 
    \la \zeta_s, \big[\cN^m ,  A^* \big] \Theta (\cL_n / N^\alpha) \zeta_s \ra \Big| \\
    &\leq C \int_0^1 ds \, \int dx_1 dx_2 dx_3 \, \big| \check{A} (x_1 - x_2 , x_1 - x_3)  \big| \\ &\hspace{.5cm} \times  \| a_{x_1} a_{x_2} a_{x_3} \Theta (\cL_n / N^\alpha) (\cN + 1)^{(m-1)/2} \zeta_s \| \big|  \|  (\cN + 1)^{(m-1)/2} \Theta (\cL_n / N^\alpha) \zeta_s \|  \\
    &\leq C N^{(3\kappa-1)/2 + \epsilon} \int_0^1 ds \,  \|  (\cN + 1)^{(m-1)/2} \Theta (\cL_n / N^\alpha) \zeta_s \|  \|  (\cN + 1)^{m/2} \Theta (\cL_n / N^\alpha) \zeta_s \|
    \end{split} \end{equation} 
for any $\epsilon > 0$, if $n$ is large enough. Here we proceeded similarly as in (\ref{eq:Lnbd}), decomposing $\check{A}$ into a sum of $\check{A}_{u,v}$, for $u,v \in \Lambda_B$. With (\ref{eq:particlenumbereasy}), we conclude that 
\[ \begin{split} \la \exp (A^*\Theta-\hc) &\zeta, \cN^m \exp(A^*\Theta-\hc)\zeta\ra \\ &\leq \langle \zeta, \cN^m \zeta \rangle + C N^{(3\kappa-1)/2 + \epsilon} N^{(3\kappa/2-\beta/4 + \epsilon) (m-1/2)} \\ &\leq 
\langle \zeta, \cN^m \zeta \rangle + C N^{m (1 - 3\beta/4 + 2\epsilon)} \leq \langle \zeta, \cN^m \zeta \rangle + C N^{m (1-\beta/2)} \end{split} \]
if $n$ is large enough (depending on $\alpha > 0$). 

Finally, we prove (\ref{eq:AThetaA}). Since 
\[\int dw \, \big( \cN_{\chi_{\ell_B}} (w) + 1 \big)^n  a_x^* a_y^* a_z^* =   a_x^* a_y^* a_z^*  \int dw \, \big( \cN_{\chi_{\ell_B}} (w) +\chi_{\ell_B} (x) +\chi_{\ell_B} (y) +\chi_{\ell_B} (z)  + 1 \big)^n\]
we find 
\[ \begin{split} \langle e^{A^* \Theta - \Theta A} \Omega, A^* (1- &\Theta) A e^{A^* \Theta - \Theta A} \Omega \rangle \\= \; &\langle e^{A^* \Theta - \Theta A} \Omega, \mathbbm{1}_{[1,\infty)} (\cL_n / N^\alpha) A^* (1-\Theta) A e^{A^* \Theta - \Theta A} \Omega \rangle. \end{split}  \]
With $\| A^* (1-\Theta) A (\cN+1)^{-3} \| \leq C N^\tau$, for some $\tau > 0$, we conclude 
\[  \langle e^{A^* \Theta - \Theta A} \Omega, A^* (1- \Theta) A e^{A^* \Theta - \Theta A} \Omega \rangle \leq  C N^\tau \|  \mathbbm{1}_{[1,\infty)} (\cL_n / N^\alpha)  e^{A^* \Theta - \Theta A} \Omega \| \| \cN^3 e^{A^* \Theta - \Theta A} \Omega  \|. \]
From \eqref{eq:norm1-Theta} (the fact that here we have $\mathbbm{1}_{[1,\infty)}$ rather than $1-\Theta$ does not change the argument) and \eqref{eq:easyboundNm} we obtain 
\[ \langle e^{A^* \Theta - \Theta A} \Omega, A^* (1-\Theta) A e^{A^* \Theta - \Theta A} \Omega \rangle  \leq C N^{-\nu} \]
if $\alpha > 0$ and then $n \in \bN$ are large enough. 
\end{proof}

%

\subsection{The number of particles in the trial state} 

With the coefficients $\mu_p$ from (\ref{eq:defmu}), the coefficients $A_{p,q}$ from  (\ref{eq:cubicfunctionmomentumspace}) and the local particle number cutoff $\Theta = \Theta (\cL_n / N^\alpha)$ introduced in Section \ref{sec:cutoff}, we can define the quadratic operator $B$ and the cubic operator $A$ as in (\ref{eq:BT0}) and (\ref{eq:A-def}) and the trial state 
\begin{equation}\label{eq:psi1} \psi_N = W_{N_0} e^{B-B^*} e^{A^* \Theta - \Theta A} \Omega \end{equation} 
as in (\ref{eq:psi0}), with $W_{N_0}$ the Weyl operator (\ref{eq:weyl0}) generating the Bose-Einstein condensate. Here, we choose the condensate number of particles $N_0$ so that 
\[ N = N_0 + \sum_{p \in \Lambda_+^*} \sigma_p^2 \]
with $\sigma_p = \sinh \mu_p$ as defined in (\ref{eq:BT1}) (the norm $\| \sigma \|_2^2$ is the number of excitations generated by the Bogoliubov transformation). Since, from Lemma \ref{lemma:propertiesquadratickernels}, $\| \sigma \|_2^2 \leq C N^{3\kappa/2} \ll N$ for $\kappa < 2/3$, it is always possible to find such $N_0$. 

In the next lemma, we check that, with these definitions, our trial state has the correct number of particles. In the next section, we compute its energy. 
\begin{lemma}\label{lm:NN2}
Consider the trial state (\ref{eq:psi1}). Then, we have 
    \begin{equation}
        \la \Psi_N, \cN \Psi_N \ra \geq N, \quad \la \Psi_N, \cN^2 \Psi_N \ra \leq CN^2
    \end{equation}
 if $\alpha >0$ and then $n \in \bN$ are large enough.
\end{lemma}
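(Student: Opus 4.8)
The plan is to strip the three unitaries in $\Psi_N=W_{N_0}\,e^{B-B^*}\,e^{A^*\Theta-\Theta A}\Omega$ off one at a time. Set $\mathcal{U}=e^{B-B^*}$, $\xi=e^{A^*\Theta-\Theta A}\Omega$ and $\phi_0=a_0+a_0^*$. By the Weyl relation \eqref{eq:weyl1}, $W_{N_0}^*\cN W_{N_0}=\cN+\sqrt{N_0}\,\phi_0+N_0$, so that for $j=1,2$
\begin{equation*} \la\Psi_N,\cN^j\Psi_N\ra=\la\xi,\big(\mathcal{U}^*\cN\mathcal{U}+\sqrt{N_0}\,\mathcal{U}^*\phi_0\mathcal{U}+N_0\big)^j\xi\ra. \end{equation*}
Since $\mu_0=0$ we have $\sigma_0=0$, $\gamma_0=1$, hence $\mathcal{U}^*\phi_0\mathcal{U}=\phi_0$, while \eqref{eq:BT1} gives explicitly
\begin{equation*} \mathcal{U}^*\cN\mathcal{U}=D-(G+G^*),\qquad D:=\cN+2\sum_p\sigma_p^2\,a_p^*a_p+\norm{\sigma}_2^2,\quad G:=\sum_p\gamma_p\sigma_p\,a_p^*a_{-p}^*. \end{equation*}
Everything is thus reduced to computing expectations of these explicit operators in the state $\xi$.

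The structural observation that drives the proof is that $\xi$ has nonzero components only in the sectors $L^2_s(\Lambda^{3k})$, $k\ge0$: the cutoff $\Theta$ preserves the particle number while $A^*\Theta-\Theta A$ shifts it by $\pm3$, and $\Omega$ has $0$ particles. Hence $\la\xi,O\xi\ra=0$ whenever $O$ is a monomial in creation and annihilation operators changing the particle number by an amount not divisible by $3$; in particular $\la\xi,\phi_0\xi\ra=0$, $\la\xi,(G+G^*)\xi\ra=0$, $\la\xi,D\phi_0\xi\ra=0$, $\la\xi,D(G+G^*)\xi\ra=0$ and $\la\xi,G^2\xi\ra=\la\xi,(G^*)^2\xi\ra=0$. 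For $j=1$ this immediately yields $\la\Psi_N,\cN\Psi_N\ra=\la\xi,D\xi\ra+N_0=\la\xi,\cN\xi\ra+2\sum_p\sigma_p^2\la\xi,a_p^*a_p\xi\ra+\norm{\sigma}_2^2+N_0\ge\norm{\sigma}_2^2+N_0=N$, using the defining choice $N_0+\norm{\sigma}_2^2=N$.

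For $j=2$ I expand $\big(D-(G+G^*)+\sqrt{N_0}\phi_0+N_0\big)^2$, keeping only the terms whose $\xi$-expectation does not vanish by the mod-$3$ argument. One is left with the main term $N_0^2+2N_0\norm{\sigma}_2^2=N_0(N+\norm{\sigma}_2^2)\le2N^2$; the error terms $2N_0\la\xi,\cN\xi\ra$, $4N_0\sum_p\sigma_p^2\la\xi,a_p^*a_p\xi\ra$, $N_0\la\xi,\phi_0^2\xi\ra$ (with $\la\xi,\phi_0^2\xi\ra=2\la\xi,a_0^*a_0\xi\ra+1$ by the sector structure) and $\norm{D\xi}^2$; and the two delicate contributions $\norm{G\xi}^2+\norm{G^*\xi}^2$ together with the mixed term $-4\sqrt{N_0}\,\text{Re}\sum_p\gamma_p\sigma_p\la\xi,a_{-p}a_pa_0\xi\ra$ (the only surviving part of $\sqrt{N_0}\,\la\xi,\{\phi_0,\mathcal{U}^*\cN\mathcal{U}\}\xi\ra$). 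The four error terms are all $o(N^2)$: $N_0\le N$, $\norm{\sigma}_\infty,\norm{\gamma-1}_\infty\le CN^{\beta/8}$ and $\norm{\sigma}_2^2\le CN^{3\kappa/2}$ by Lemma~\ref{lemma:propertiesquadratickernels}, while $\la\xi,\cN^m\xi\ra\le CN^{m(1-\beta/2)}$ by \eqref{eq:easyboundNm} (and even $\le CN^{m(3\kappa/2-\beta/4)+\eps}$ after combining \eqref{eq:particlenumbereasy} and \eqref{eq:norm1-Theta}), so e.g.\ $\norm{D\xi}^2\le C(1+\norm{\sigma}_\infty^4)\la\xi,\cN^2\xi\ra+C\norm{\sigma}_2^4\le CN^{2-\beta/2}$. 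Finally $\norm{G\xi}^2-\norm{G^*\xi}^2=4\sum_p\gamma_p^2\sigma_p^2\la\xi,a_p^*a_p\xi\ra+2\norm{\gamma\sigma}_2^2\le CN$, so it suffices to estimate $\norm{G^*\xi}^2=\la\xi,GG^*\xi\ra$ and the mixed term.

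The main obstacle is precisely this last step. A crude Cauchy--Schwarz bound $\norm{G^*\xi}\le\norm{\gamma\sigma}_2\,\la\xi,\cN^2\xi\ra^{1/2}$ is far too lossy — it only gives $\norm{G^*\xi}^2\lesssim N^{6\kappa-1}$, which exceeds $N^2$ for $\kappa$ near $2/3$. Instead one must use that $G^*=\sum_p\gamma_p\sigma_p\,a_{-p}a_p$ removes a pair of opposite momenta while $\xi$ is built from the zero-momentum-sum cubic excitations $a_{p+q}^*a_{-p}^*a_{-q}^*$ with kernel $A_{p,q}\simeq N^{-1/2}\eta_p\sigma_q$ (cf.\ \eqref{eq:cubicfunctionmomentumspace}, \eqref{eq:Amomentumbounds}). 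To leading order in the Duhamel expansion of $e^{A^*\Theta-\Theta A}$, $G^*\xi$ is $a_0^*\Omega$ times the scalar $\sum_p\gamma_p\sigma_p A_{-p,p}$, and $\sum_p\gamma_p\sigma_p\la\xi,a_{-p}a_pa_0\xi\ra$ equals, to leading order, that same scalar; using $\norm{\eta}_\infty\le C$ (Lemma~\ref{lemma:propertiesA}) and $\sum_p|\sigma_p|^s\le CN^{3\kappa/2}$ for $s\in(3/2,6)$ (Lemma~\ref{lemma:propertiesquadratickernels}),
\begin{equation*} \Big|\sum_p\gamma_p\sigma_p A_{-p,p}\Big|\le CN^{-1/2}\sum_p|\eta_p|\,\sigma_p^2\le CN^{3\kappa/2-1/2}, \end{equation*}
so that $\norm{G^*\xi}^2\lesssim N^{3\kappa-1}\ll N^2$ and the mixed term is $\lesssim\sqrt N\,N^{3\kappa/2-1/2}\ll N^2$. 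It then remains to control the higher-order contributions of the cubic Duhamel series to $\la\xi,GG^*\xi\ra$ and to $\sum_p\gamma_p\sigma_p\la\xi,a_{-p}a_pa_0\xi\ra$; this is carried out by the same localization-and-decay mechanism used later in Section~\ref{subsec:cubic1} — the position-space decay of $\check A$ and $\check\sigma$ (Lemmas~\ref{lemma:propertiesA}, \ref{lemma:sumlocalizedA}, \ref{lm:A3A6}) together with the local-number-of-particles estimates of Lemma~\ref{lemma:propertiesL}, each additional pair of commutators with $A,A^*$ producing a gain $\simeq N^{3\kappa/2-1}\ll1$ — and summing the resulting series gives $\la\Psi_N,\cN^2\Psi_N\ra\le CN^2$.
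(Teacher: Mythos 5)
Your treatment of $\la\Psi_N,\cN\Psi_N\ra\geq N$ is correct and coincides with the paper's: strip off the Weyl operator, use that $a_0$, $a_p^*a_{-p}^*$ have vanishing expectation in $\xi$ because $\xi$ lives in the particle-number sectors divisible by $3$, and invoke $N_0+\norm{\sigma}_2^2=N$.

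For $\la\Psi_N,\cN^2\Psi_N\ra\leq CN^2$ there is a genuine gap. You correctly isolate $\norm{G\xi}^2+\norm{G^*\xi}^2$ and the mixed term $\sqrt{N_0}\sum_p\gamma_p\sigma_p\la\xi,a_{-p}a_pa_0\xi\ra$ as the delicate contributions and correctly observe that the bound $\norm{G^*\xi}^2\leq\norm{\gamma\sigma}_2^2\la\xi,\cN^2\xi\ra$ is too lossy. But what you then offer is only the leading term of the Duhamel expansion of $e^{A^*\Theta-\Theta A}$ (the coefficient of $a_0^*\Omega$ in $G^*A^*\Omega$), together with the assertion that all higher orders are controlled ``by the same localization-and-decay mechanism'' of Section \ref{subsec:cubic1}. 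That assertion is not a proof: $GG^*$ is a quartic pairing observable, not one of the operator classes ($\cC_K$, $\cN_{h\star h}$, $\ddot\cN_{h\star h}$) for which the paper's Duhamel/localization lemmas are actually formulated, so an entire new iteration of that machinery would have to be set up and closed by Gronwall. Since this is precisely the step carrying all the difficulty of the second estimate, leaving it at this level of detail means the estimate is not established.

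Moreover, the diagnosis that one \emph{must} exploit the fine (zero-total-momentum, triple-wise) structure of $\xi$ is not correct, and it leads you onto a much harder path than necessary. The paper's route is the operator inequality $e^{B-B^*}\cN^2e^{B^*-B}\leq CN^{\beta/2}(\cN+1)^2+CN^2$, valid on all of Fock space. The point you are missing is that for $G^*=\sum_p c_p a_{-p}a_p$ with $c=\gamma\sigma$ one has, with $\norm{c}_\infty^2$ (not $\norm{c}_2^2$) in front of $\cN^2$,
\begin{equation*}
GG^*,\;G^*G\;\leq\;C\norm{c}_\infty^2(\cN+1)^2+C\norm{c}_2^2(\cN+1),
\end{equation*}
which follows by iterating, sector by sector, the identity $\norm{G}_{n-2\to n}^2=\norm{GG^*}_{\mathrm{op},\,n-2}+\norm{[G^*,G]}_{\mathrm{op},\,n-2}$ together with $[G^*,G]=2\norm{c}_2^2+4\sum_pc_p^2\,a_p^*a_p$. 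Since $\norm{\gamma\sigma}_\infty^2\leq CN^{\beta/2}$ and $\norm{\gamma\sigma}_2^2\leq CN^{3\kappa/2+\beta/4}$, this combined with the rough bound $\la\xi,\cN^2\xi\ra\leq CN^{2-\beta}$ from \eqref{eq:easyboundNm} already gives $\norm{G\xi}^2+\norm{G^*\xi}^2\leq CN^2$ (and, after a Cauchy--Schwarz on the cross terms, the full bound $\la\Psi_N,\cN^2\Psi_N\ra\leq CN^2$), with no Duhamel expansion and no information on $\xi$ beyond its second moment of $\cN$.
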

\begin{proof}
 With (\ref{eq:weyl1}), we have
    \begin{equation*}
     \begin{split} 
        \la \Psi_N, \cN \Psi_N \ra
       & =
        \la e^{B^*-B} e^{A^*\Theta-\hc}\Omega, (\cN + N_0 + \sqrt{N_0} (a_0^* + a_0)) e^{B^*-B} e^{A^*\Theta-\hc}\Omega\ra  \\ &=
        \la e^{B^*-B} e^{A^*\Theta-\hc}\Omega, (\cN + N_0) e^{B^*-B} e^{A^*\Theta-\hc}\Omega\ra \end{split} 
        \end{equation*} 
because the expectation of $a_0, a_0^*$ is zero in the state $e^{B^*-B} e^{A^*\Theta-\hc}\Omega$. From (\ref{eq:BT1}), we find 
\begin{equation}\label{eq:eBN} e^{B-B^*} \cN e^{B^* - B} = \sum_{p \in \Lambda_+^*} (\gamma_p^2 + \sigma_p^2) a_p^* a_p + \sum_{p \in \Lambda^*_+} \gamma_p \sigma_p (a_p^* a_{-p}^* + a_p a_{-p}) + \sum_{p \in \Lambda_+^*} \sigma_p^2 .\end{equation}  
Since the expectation of $a_p^* a_{-p}^*$ vanishes in the state $e^{A^* \Theta- \hc} \Omega$, we immediately conclude that 
\[  \la \Psi_N, \cN \Psi_N \ra \geq N_0 + \sum_{p \in \Lambda^*_+} \sigma_p^2 = N. \]
On the other hand, recalling from Lemma \ref{lemma:propertiesquadratickernels} that $\| \sigma \|_\infty, \| \gamma \|_\infty \leq C N^{\beta /8}$, (\ref{eq:eBN}) implies that 
\[  e^{B-B^*} \cN e^{B^* - B} \leq C N^{\beta /4} \cN + C N \, . \] 
Similarly, 
\[ e^{B-B^*} \cN^2 e^{B^* - B} \leq C N^{\beta /2} \cN^2 + C N^2 .\]
Thus
 \[ \begin{split}
        \la \Psi_N, \cN^2 \Psi_N \ra
        &\leq 
        C \la  e^{B^*-B} e^{A^*\Theta-\hc}\Omega, 
        (\cN^2 + N^2_0) e^{B^*-B} e^{A^*\Theta-\hc}\Omega\ra \\
        &\leq C \la e^{A^*\Theta-\hc}\Omega, (N^{\beta/2} \cN^2 + N^2) e^{A^*\Theta-\hc}\Omega\ra \\
        &\leq C N^{\beta/2} N^{2 - \beta} + C N^2 \leq C N^2 
        \end{split} \]
            where in the last step we used the rough bound \eqref{eq:easyboundNm}.
\end{proof}

\section{Energy of the trial state}
\label{sec:energy} 

In this section, we are going to estimate the energy $\langle \psi_N, \cH_N \psi_N \rangle$ of the trial state $\psi_N$ defined in (\ref{eq:psi1}). Together with Lemma \ref{lm:NN2}, this will complete the proof of Theorem \ref{theorem:Fockspaceresult}. Throughout the section, we will use the notation $\xi = e^{A^* \Theta - \Theta A} \Omega$ for the action of the cubic transformation on the vacuum, so that $\psi_N = W_{N_0} e^{B^* - B} \xi$. 

\subsection{Action of Weyl and Bogoliubov transformations on $\cH_N$} 
\label{subsec:quadratic} 

In the next lemma, we compute the action of $W_{N_0}$ and of $e^{B^*-B}$ on the Hamiltonian 
\begin{equation}
    \cH_N = \cK + \cV_N
    = \sum_{p\in\Lambda^*} p^2 a_p^* a_p
    + \frac{N^\kappa}{2N} \sum_{p,q,r\in\Lambda^*} \hat{V}(r/N^{1-\kappa}) a_{p+r}^* a_{q-r}^* a_q a_p.
\end{equation} 
This will allow us to express the energy $\langle \psi_N, \cH_N \psi_N \rangle$ in terms of certain expectations in the state $\xi$, which will be estimated in the next subsections. 
\begin{lemma}\label{lemma:quadraticrenormalization}
     We have 
    \begin{equation}\label{eq:lmHN1}
    \begin{split}
        \la \Psi_N, &\cH_N \Psi_N \ra \\ 
        = \; &4\pi\aa N^{1+\kappa}\\
        &+ \frac{1}{2}\sum_{p\in\Lambda^*_+}
        \Big(
        \sqrt{\abs{p}^4+2p^2 N^\kappa\widehat{Vf}(\frac{p}{N^{1-\kappa}})}
        - p^2 - N^\kappa\widehat{Vf}(\frac{p}{N^{1-\kappa}})
        + \frac{N^{2\kappa}\widehat{Vf}(\frac{p}{N^{1-\kappa}})^2}{2\abs{p}^2}
        \Big)
        \\
        &- \frac{N^\kappa \norm{\sigma}_2^2}{N} \sum_{p\in\Lambda^*}
        \hat{V}(p/N^{1-\kappa}) \eta_{\infty,p}
        - N^{\kappa-1} \sum_{p,r\in\Lambda^*}
        \hat{V}(r/N^{1-\kappa})
        \eta_{\infty,p+r}\sigma_{p}^2\\
        &+ \la \xi, (\cK + \cC_N^* + \cC_N + \cV_N)\xi \ra
        + \la \xi, (\cQ + \tilde\cC_N^* + \tilde\cC_N + \tilde\cV_N)\xi \ra + O(N^{4\kappa-1} + N^{2\kappa})
    \end{split}
    \end{equation}
    where
    \begin{equation}\label{eq:CtCtV} \begin{split} 
        \cC_N^* &=
        \frac{N^\kappa N_0^{1/2}}{N} \sum_{p, r\in\Lambda^*} \hat{V}(r/N^{1-\kappa}) 
        \sigma_p 
        a_{p+r}^* a_{-r}^* a_{-p}^*, \\
        \tilde\cC_N^* &=
        \frac{N^\kappa N_0^{1/2}}{N} \sum_{p, r\in\Lambda^*} \hat{V}(r/N^{1-\kappa}) 
        ((\gamma_{p+r}\gamma_r-1)\sigma_p + \sigma_{p+r}\sigma_r\gamma_p)
        a_{p+r}^* a_{-r}^* a_{-p}^*, \\
        \tilde\cV_N
        &=\frac{N^\kappa}{2N} \sum_{p,q,r\in\Lambda^*} \hat{V}(r/N^{1-\kappa}) 
        (\gamma_{p+r} \gamma_{q-r}\gamma_{q}\gamma_{p}-1
        +\sigma_{p+r}\sigma_{q-r}\sigma_{q}\sigma_{p}
        +2\gamma_{p+r}\sigma_{q-r}\sigma_{q}\gamma_p
        ) \\ &\hspace{5cm} \times 
        a_{p+r}^* a_{q-r}^*
         a_{q}a_{p}
        \\
        &\hspace{.4cm} +\frac{N^\kappa}{N} \sum_{p,q,r\in\Lambda^*} \hat{V}(r/N^{1-\kappa}) 
        \gamma_{p+r}\sigma_{q-r}\gamma_{q}\sigma_{p}
        a_{p+r}^*a_{-p}^*  a_{-q+r} a_{q} 
\end{split}
\end{equation}
and $\cQ$ is a quadratic operator satisfying
\begin{equation} \label{eq:Qbd}
    \pm\cQ \leq C N^\kappa (\cN + \tilde \cN)
\end{equation}
with $\tilde \cN = \sum_{p\in\Lambda^*} \sigma_p^2 a_p^* a_p$.
\end{lemma}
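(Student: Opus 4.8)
The plan is to carry out the two conjugations explicitly and bookkeep the resulting operators by type. Writing $\la \Psi_N,\cH_N\Psi_N\ra = \la \xi, e^{B^*-B}W_{N_0}^*\cH_N W_{N_0}e^{B-B^*}\xi\ra$, I would first compute $W_{N_0}^*\cH_N W_{N_0}$ using \eqref{eq:weyl1}: the kinetic part is unchanged (the zero mode carries no kinetic energy), while expanding each of the four fields in $\cV_N$ produces the scalar $\tfrac{N^\kappa N_0^2}{2N}\hat V(0)$, the quadratic terms $\tfrac{N^\kappa N_0}{N}\sum_p\hat V(p/N^{1-\kappa})a_p^*a_p$ and $\tfrac{N^\kappa N_0}{2N}\sum_p\hat V(p/N^{1-\kappa})(a_p^*a_{-p}^*+a_pa_{-p})$, the cubic term $\tfrac{N^\kappa N_0^{1/2}}{N}\sum_{p,r}\hat V(r/N^{1-\kappa})(a_{p+r}^*a_{-r}^*a_p+\hc)$, the quartic $\cV_N$ itself, and further terms linear in $a_0,a_0^*$. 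Since $\mu_0=0$ and $A_{p,q}=0$ whenever $p=0$ or $q=0$, the vector $\xi=e^{A^*\Theta-\Theta A}\Omega$ is a superposition of states whose particle number is a multiple of $3$; hence any operator whose net particle-number change is not divisible by $3$ has vanishing expectation in $\xi$, and I would discard such terms throughout (the linear terms, and later the mixed cubic terms $a^*a^*a,\,a^*aa$ and the off-diagonal quadratic $a^*a^*,\,aa$).

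Next I would conjugate every surviving operator with the Bogoliubov transformation, using $e^{B^*-B}a_pe^{B-B^*}=\gamma_p a_p-\sigma_p a_{-p}^*$ and Wick-ordering the quartic. The kinetic term together with the quadratic Weyl terms produces a scalar contribution, the explicit operator $\cK$, a residual diagonal operator $\sum_p(E_p-p^2)a_p^*a_p$ with $E_p=\sqrt{|p|^4+2p^2N^\kappa\widehat{Vf}(p/N^{1-\kappa})}$ (supported where $\sigma_p\neq0$ and satisfying $|E_p-p^2|\le CN^\kappa$), diagonal pieces proportional to $N^\kappa\hat V(p/N^{1-\kappa})\sigma_p^2\,a_p^*a_p$, and off-diagonal terms. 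By the choice of $\mu_p$ — encoded in $\tanh(2\mu_{\infty,p})=-\widehat{Vf}(p/N^{1-\kappa})/(p^2+\widehat{Vf}(p/N^{1-\kappa}))$ — the off-diagonal terms built from the reference symbol $\widehat{Vf}$ cancel, leaving only the mismatch coming from $\hat V$ versus $\widehat{Vf}$, from $N_0/N=1-\norm{\sigma}_2^2/N$, and from the infrared cutoff $\chi$. Using the discrete scattering equation \eqref{eq:scatteringdiscrete} I would replace $N^\kappa\widehat{Vw}(p/N^{1-\kappa})$ (modulo $O(N^{2\kappa-1})$) by $-\tfrac{N^\kappa}{N}\sum_q\hat V((p-q)/N^{1-\kappa})\eta_{\infty,q}$, which is precisely the structure produced by the Wick contractions of the Bogoliubov-transformed quartic; carrying out this substitution, together with \eqref{eq:etacorrection} to turn $\tfrac12N^{1+\kappa}\hat V(0)$ into $\tfrac12N^{1+\kappa}\widehat{Vf}(0)=4\pi\aa N^{1+\kappa}$ and the add-and-subtract of $N^{2\kappa}\widehat{Vf}(p/N^{1-\kappa})^2/(2|p|^2)=-N^\kappa\widehat{Vf}(p/N^{1-\kappa})\eta_{\infty,p}$ that renders the first displayed sum convergent, produces $4\pi\aa N^{1+\kappa}$ plus the three displayed scalar corrections. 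Finally, conjugating the cubic Weyl term with $e^{B-B^*}$ yields pure $a^*a^*a^*$ and $aaa$ pieces; splitting $\gamma_{p+r}\gamma_r=1+(\gamma_{p+r}\gamma_r-1)$ and separating the $\sigma_{p+r}\sigma_r\gamma_p$ contribution gives exactly $\cC_N^*+\cC_N$ (leading) and $\tilde\cC_N^*+\tilde\cC_N$ (remainder), while the Bogoliubov rotation of the quartic gives $\cV_N$ (all $\gamma$'s set to $1$) plus $\tilde\cV_N$.

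All contributions not displayed explicitly I would estimate using the $\ell^s$-bounds of Lemmas \ref{lemma:scattering} and \ref{lemma:propertiesquadratickernels} — in particular $\sum_p|\sigma_p|^s\le CN^{3\kappa/2}$ for $s\in(3/2,6)$, $\sum_p|\sigma_p|\le CN$, $\sum_p|\gamma_{\infty,p}\sigma_{\infty,p}-\eta_{\infty,p}|\le CN^{3\kappa/2}$, and the pointwise bound \eqref{eq:boundsigmamomentum} — together with $N_0=N+O(N^{3\kappa/2})$ and $\kappa<2/3$; each such term is $O(N^{4\kappa-1})$ or $O(N^{2\kappa})$ and is absorbed into the remainder. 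I would then define $\cQ$ as the sum of all residual quadratic operators produced above: the diagonal part collects $\sum_p(E_p-p^2)a_p^*a_p$, the terms $\sum_p\tfrac{N^\kappa N_0}{N}\hat V(p/N^{1-\kappa})(1+\sigma_p^2)a_p^*a_p$ and $\sum_p 2p^2\sigma_p^2a_p^*a_p$ — all with coefficients either $\le CN^\kappa$ (bounded by $CN^\kappa\cN$) or $\le CN^\kappa\sigma_p^2$ (bounded by $CN^\kappa\tilde\cN$) — while the small off-diagonal remainder, carrying a prefactor $N^\kappa\widehat{Vw}$ or $N^\kappa\norm{\sigma}_2^2/N$, is controlled by Cauchy--Schwarz against $\cN$; this yields $\pm\cQ\le CN^\kappa(\cN+\tilde\cN)$.

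The main obstacle is not conceptual — it is the classical Bogoliubov computation — but organizational: one must enumerate the many terms generated by the two conjugations and the Wick ordering, group them so that the scattering-equation substitutions cancel exactly the pieces of order $N^{1+\kappa}$ and $N^{5\kappa/2}$ that would otherwise be lost, and isolate the residual quadratic operator $\cQ$ in a form for which the operator bound \eqref{eq:Qbd} — somewhat delicate because of the $\tilde\cN$ weight and the infrared cutoffs on $\sigma$ and $\eta$ — is transparent.
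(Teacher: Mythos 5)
Your proposal is correct and follows essentially the same route as the paper: Weyl conjugation, Bogoliubov conjugation, discarding every term whose net particle-number change is not a multiple of three (exactly how the paper disposes of the linear, off-diagonal quadratic, mixed cubic and mixed quartic terms in the state $\xi$), collecting the surviving quadratic remainders into $\cQ$, the pure creation/annihilation cubics into $\cC_N+\cC_N^*+\tilde\cC_N+\tilde\cC_N^*$ and the quartics into $\cV_N+\tilde\cV_N$, and combining the constant terms via \eqref{eq:scatteringdiscrete} and \eqref{eq:etacorrection}. The only difference is organizational: you run the quadratic algebra through the exact diagonalization of the $\widehat{Vf}$-quadratic form, whereas the paper keeps $\sigma_p,\gamma_p$ explicit and inserts \eqref{eq:sigmainfty} only at the very end; in particular your closing Cauchy--Schwarz treatment of the off-diagonal quadratic remainder is unnecessary (and, taken literally as part of the operator bound \eqref{eq:Qbd}, would be delicate), since those terms have zero expectation in $\xi$ and are simply dropped.
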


\textit{Remark:} in the next subsection, we will show that the expectations of the operators $\cQ, \tilde\cC_N, \tilde\cV_N$ in the state $\xi = e^{A^* \Theta- \Theta A} \Omega$ are all negligible (they only contribute to order $N^{4\kappa-1}$ or lower). 

\begin{proof}
We proceed as in \cite[Section 3]{BCS}. With (\ref{eq:weyl1}), we find 
    \begin{equation}
        \la \Psi_N, \cH_N \Psi_N \ra
        = \sum_{n=0}^4 \la e^{B^*-B} \xi, \cW_n e^{B^*-B} \xi \ra
    \end{equation}
     where 
    \begin{equation}
        \begin{split}
            &\cW_0 = \frac{N^\kappa N_0^2}{2N} \hat{V}(0)\\
            &\cW_1 = \frac{N^\kappa N_0^{3/2}}{N} \hat{V}(0) a_0^*+\hc\\
            &\cW_2 = \cK 
                +\frac{N^\kappa N_0}{2N} \sum_{p\in\Lambda^*}
                \hat{V}(p/N^{1-\kappa}) \left( a_p^* a_{-p}^* + a_p a_{-p} + 2a_p^* a_{p} \right)
                +\frac{N^\kappa \hat{V}(0) N_0}{N} \sum_{p\in\Lambda^*}
                 a_p^* a_{p}\\
             &\cW_3 = \frac{N^\kappa N_0^{1/2}}{N} \sum_{p,r \in\Lambda^*} \hat{V}(r/N^{1-\kappa}) a_{p+r}^* a_{-r}^* a_p +\hc\\
            &\cW_4 = \cV_N.
        \end{split}
    \end{equation}
    Setting $\cG_n = e^{-B^* + B} \cW_n e^{B^*-B}$ for $n =0,\dots ,4$, we find 
      \begin{equation}
        \la \Psi_N, \cH_N \Psi_N \ra
        = \sum_{n=0}^4 \la \xi, \cG_n  \xi \ra.
    \end{equation}
Clearly, we have $\la \xi, \cG_0 \xi \ra = N^\kappa N_0^2 \hat{V}(0) / 2N$ and $ \la \xi, \cG_1 \xi \ra = 0$. Since $\langle \xi, a_p^* a_{-p}^* \xi \rangle = 0$ (because $A^*$ only creates triples), we obtain  
    \begin{equation}\label{eq:G2xi} 
    \begin{split}
        \la \xi, \cG_2 \xi \ra
        = \; & \la \xi, \cK \xi \ra
        + \sum_{p\in\Lambda^*} 2 p^2 \sigma_p^2 \la\xi, a_p^* a_p \xi \ra
        + \sum_{p\in\Lambda^*} p^2 \sigma_p^2\\
        &+ \frac{N^\kappa N_0}{N} \sum_{p\in\Lambda^*}
                \hat{V}(p/N^{1-\kappa}) (\gamma_p +\sigma_p)^2  \la\xi, a_p^* a_{p}\xi\ra
       \\ &\hspace{4cm} + \frac{N^\kappa N_0}{N} \sum_{p\in\Lambda^*}
                \hat{V}(p/N^{1-\kappa}) (\gamma_p\sigma_p + \sigma_p^2)\\
        &+ \frac{N^\kappa \hat{V}(0) N_0}{N} \sum_{p\in\Lambda^*}
                 (\gamma_p^2 + \sigma_p^2)\la\xi, a_p^* a_{p}\xi\ra
         + \frac{N^\kappa \hat{V}(0) N_0}{N} \| \sigma \|_2^2 \, .
    \end{split}
    \end{equation}
    Taking into account that the expectation of operators having the form $a^* a^* a$ and $a^* a a$ in the state $\xi$ is zero, we find 
    \begin{equation}\label{eq:G3xi} 
        \la\xi, \cG_3 \xi\ra = \frac{N^\kappa N_0^{1/2}}{N} \sum_{p, r\in\Lambda^*} \hat{V}(r/N^{1-\kappa}) 
        (\gamma_{p+r}\gamma_r\sigma_p + \sigma_{p+r}\sigma_r\gamma_p) 
        \la\xi, (a_{p+r}^* a_{-r}^* a_{-p}^* +\hc )\xi\ra.
    \end{equation}
Finally, since operators of the form $a^* a^* a^* a^*$, $a^* a^* a^* a$, $a^* a a a$, $a a a a$ have vanishing expectation in the state $\xi$, we obtain 
    \begin{equation}\label{eq:defG4}
    \begin{split}
        \la \xi, \cG_4 \xi \ra
        = \; &\frac{N^\kappa}{2N} \sum_{p,q,r\in\Lambda^*} \hat{V}(r/N^{1-\kappa}) 
        \big(\gamma_{p+r} \gamma_{q-r}\gamma_{q}\gamma_{p}
        +\sigma_{p+r}\sigma_{q-r}\sigma_{q}\sigma_{p} +2\gamma_{p+r}\sigma_{q-r}\sigma_{q}\gamma_p \big) \\ &\hspace{9cm} \times 
         \la \xi,  a_{p+r}^* a_{q-r}^* a_{q} a_{p}
        \xi\ra\\
        &+ \frac{N^\kappa}{N} \sum_{p,q,r\in\Lambda^*} \hat{V}(r/N^{1-\kappa}) 
        \gamma_{p+r}\sigma_{q-r}\gamma_{q}\sigma_{p}
         \la \xi, a_{p+r}^*a_{-p}^*  a_{-q+r} a_{q} 
        \xi \ra\\
        &+  \frac{N^\kappa}{N}\hat{V}(0) \norm{\sigma}_2^2\sum_{p\in\Lambda^*}  
        (\gamma_{p}^2 +\sigma_p^2)
         \la \xi,  a_{p}^* a_{p}
        \xi \ra\\
        &+  \frac{N^\kappa}{N} \sum_{p,r\in\Lambda^*} \hat{V}(r/N^{1-\kappa}) 
        \sigma_{p+r}^2 (\gamma_{p}^2 +\sigma_{p}^2)
         \la \xi,  a_{p}^*  a_{p}
        \xi \ra\\
        &+
        \frac{2N^\kappa}{N} \sum_{p,r\in\Lambda^*} \hat{V}(r/N^{1-\kappa})  
        \sigma_{p+r}\gamma_{p+r}
        \gamma_{p}\sigma_{p}  \la \xi,  a_{p}^*a_{p} 
        \xi \ra\\
        &+
         \frac{N^\kappa}{2N} \sum_{p,r\in\Lambda^*} \hat{V}(r/N^{1-\kappa})  
        (\sigma_{p+r}\gamma_{p+r}
        \gamma_{p}\sigma_{p}+
        \sigma_{p+r}^2\sigma_{p}^2)
        +
        \frac{N^\kappa}{2N} \hat{V}(0) 
        \norm{\sigma}_2^4.
    \end{split}
    \end{equation}
Going through the r.h.s. of (\ref{eq:G2xi}) and (\ref{eq:defG4}), we observe that all quadratic terms, with the exception of the kinetic energy operator $\cK$, can be bounded as in (\ref{eq:Qbd}) (here, we use the bounds from Lemma \ref{lemma:propertiesquadratickernels}, which imply, in particular, that $p^2 \sigma_p^2 \leq C N^\kappa$, for all $p \in \Lambda^*$). The cubic terms in (\ref{eq:G3xi}) correspond exactly to the sum of the two cubic terms defined in (\ref{eq:CtCtV}). Moreover, the quartic terms on the r.h.s. of (\ref{eq:defG4}) are clearly the sum of $\cV_N$ and $\tilde{\cV}_N$, with $\tilde{\cV}_N$ as defined in (\ref{eq:CtCtV}). To complete the proof of Lemma \ref{lemma:quadraticrenormalization}, we only have to check that the constant terms arising from $\cG_0, \cG_2, \cG_4$ match the r.h.s. of  (\ref{eq:lmHN1}). To this end we notice, first of all, that the contribution $\langle\xi,  \cG_0 \xi \rangle$ can be combined with the last term on the r.h.s. of (\ref{eq:G2xi}) and the last term on the r.h.s. of (\ref{eq:defG4}) to yield
\[ \frac{N^\kappa N_0^2}{2N} \hat{V} (0) + \frac{N^\kappa N_0}{N} \hat{V} (0) \| \sigma \|_2^2 + \frac{N^\kappa}{2N} \hat{V} (0) \| \sigma \|_2^4 = \frac{N^\kappa}{2N} \hat{V} (0) \big(N_0 + \| \sigma \|_2^2 \big)^2 = \frac{N^{1+\kappa}}{2} \hat{V} (0). \]
Next, we observe that the term 
\[ \Big| \frac{N^\kappa}{2N} \sum_{p,r} \hat{V} (r/ N^{1-\kappa}) \sigma_{p+r}^2 \sigma_p^2 \Big| \leq C N^{\kappa-1} \| \hat{V} \|_\infty \| \sigma \|_2^4 \leq C N^{4\kappa - 1} \]
appearing on the r.h.s. of (\ref{eq:defG4}) is negligible. The remaining constant term on the r.h.s. of (\ref{eq:defG4}) can be rewritten as 
\[ \begin{split} 
\frac{N^\kappa}{2N} \sum_{p,r} \hat{V} &(r/N^{1-\kappa}) \sigma_{p+r} \gamma_{p+r}\gamma_p \sigma_p \\ = \; & \frac{N^\kappa}{2N} \sum_{p,r} \hat{V} (r/N^{1-\kappa}) \sigma_{\infty, p+r} \gamma_{\infty, p+r}\gamma_{\infty,p} \sigma_{\infty,p} + O (N^{4\kappa-1}) \end{split} \]
because the contribution associated with low momenta $|p|, |p+r| \leq N^{\kappa/2 - \beta /4}$ is negligible. Using the bound for $\gamma_{\infty,p} \sigma_{\infty,p} - \eta_{\infty ,p}$ in Lemma \ref{lemma:scattering}, we obtain 
\[ \begin{split} \frac{N^\kappa}{2N} \sum_{p,r} \hat{V} (r/N^{1-\kappa}) \sigma_{p+r} &\gamma_{p+r}\gamma_p \sigma_p \\ = \; & \frac{N^\kappa}{2N} \sum_{p,r} \hat{V} (r/N^{1-\kappa}) \eta_{\infty,p+r} \eta_{\infty,p}  \\&+ \frac{N^\kappa}{N} \sum_{p,r} \hat{V} (r/N^{1-\kappa}) \eta_{\infty, p+r} \big( \gamma_{\infty,p} \sigma_{\infty,p} - \eta_{\infty,p} \big) + O (N^{4\kappa-1}) .\end{split} \]
As for the remaining constant terms in (\ref{eq:G2xi}), we write 
\[ \begin{split} &\frac{N^\kappa N_0}{N} \sum_p \hat{V} (p/N^{1-\kappa}) \gamma_p \sigma_p   \\ &=  \frac{N^\kappa N_0}{N} \sum_p \hat{V} (p/N^{1-\kappa}) \eta_{\infty,p} +  \frac{N^\kappa N_0}{N} \sum_p \hat{V} (p/N^{1-\kappa}) (\gamma_{\infty,p} \sigma_{\infty,p} - \eta_{\infty,p}) + O (N^{4\kappa-1})  \\
& = N^\kappa \sum_p \hat{V} (p/N^{1-\kappa}) \eta_{\infty,p} - \frac{N^\kappa \| \sigma \|_2^2}{N} \sum_p \hat{V} (p/N^{1-\kappa}) \eta_{\infty,p} \\ &\hspace{.4cm} +  N^\kappa \sum_p \hat{V} (p/N^{1-\kappa}) (\gamma_{\infty,p} \sigma_{\infty,p} - \eta_{\infty,p}) + O(N^{4\kappa-1}) \end{split} \]
%
and, similarly, 
\[ \begin{split} \frac{N^\kappa N_0}{N} &\sum_p \hat{V} (p/N^{1-\kappa}) \sigma_p^2  = N^\kappa \sum_p \hat{V} (p/N^{1-\kappa}) \sigma_{\infty,p}^2 + O (N^{4\kappa-1})
\end{split} \]
as well as
\[
\sum_p p^2 \sigma_p^2 = \sum_p p^2 \sigma_{\infty,p}^2 + O (N^{4\kappa-1}).
\]
Putting together all constant contributions, we find 
\begin{equation}\label{eq:const}
    \begin{split}
        &\frac{N^{1+\kappa}}{2} \hat{V} (0) + \sum_{p\in\Lambda^*} p^2 \sigma_{\infty,p}^2
        + N^\kappa\sum_{p\in\Lambda^*}
                \hat{V}(p/N^{1-\kappa}) \eta_{\infty,p} 
        - \frac{N^\kappa \norm{\sigma}_2^2}{N} \sum_{p\in\Lambda^*}
        \hat{V}(p/N^{1-\kappa}) \eta_{\infty,p}\\
        &+ N^\kappa \sum_{p\in\Lambda^*}
                \hat{V}(p/N^{1-\kappa}) (\gamma_{\infty,p}\sigma_{\infty,p}-\eta_{\infty,p})
        + N^\kappa \sum_{p\in\Lambda^*}
                \hat{V}(p/N^{1-\kappa})
                \sigma_{\infty,p}^2\\
        &+\frac{N^\kappa}{2N} \sum_{p,r\in\Lambda^*} \hat{V}(r/N^{1-\kappa})  
            \eta_{\infty,p+r}
            \eta_{\infty,p} \\              
            &+\frac{N^\kappa}{N} \sum_{p,r\in\Lambda^*} \hat{V}(r/N^{1-\kappa}) 
            \eta_{\infty,p+r}
            (\gamma_{\infty,p}\sigma_{\infty,p}-\eta_{\infty,p})
        + O(N^{4\kappa-1}).\\
    \end{split}
\end{equation}
With the scattering equation (\ref{eq:scatteringdiscrete}) and with (\ref{eq:etacorrection}), we combine 
\[ \begin{split} N^\kappa\sum_{p\in\Lambda^*}
                \hat{V}(p/N^{1-\kappa}) \eta_{\infty,p} &+\frac{N^\kappa}{2N} \sum_{p,r\in\Lambda^*} \hat{V}(r/N^{1-\kappa})  
            \eta_{\infty,p+r}
            \eta_{\infty,p} \\ = &- \sum_{p \in \Lambda^*} p^2 \eta_{\infty,p}^2 
       + \frac{N^\kappa}{2} \sum_{p \in \Lambda^*} \hat{V} (p/N^{1-\kappa}) \eta_{\infty,p} + O(N^{2\kappa})\\ 
       = &- \sum_{p \in \Lambda^*} p^2 \eta_{\infty,p}^2  + \frac{8\pi \frak{a} - \hat{V} (0)}{2} N^{1+\kappa} + O(N^{2\kappa})\end{split} \]
and, using also (\ref{eq:gammasigmainfty-etainftysum}),   
\[ \begin{split} 
 N^\kappa \sum_{p\in\Lambda^*}
                \hat{V}(p/N^{1-\kappa}) (\gamma_{\infty,p}\sigma_{\infty,p}-\eta_{\infty,p}) &+\frac{N^\kappa}{N} \sum_{p,r\in\Lambda^*} \hat{V}(r/N^{1-\kappa}) 
            \eta_{\infty,p+r}
            (\gamma_{\infty,p}\sigma_{\infty,p}-\eta_{\infty,p}) \\ &= -2 \sum_{p \in \Lambda^*} p^2 \eta_{\infty,p} ( \gamma_{\infty,p}\sigma_{\infty,p}-\eta_{\infty,p}) + O(N^{4\kappa-1}). \end{split} \]
Moreover, we write 
\[ \begin{split} N^\kappa \sum_{p\in \Lambda^*} \hat{V} (p/N^{1-\kappa}) \sigma_{\infty,p}^2 = &-2 \sum_{p\in \Lambda^*} p^2 \eta_{\infty,p} \sigma_{\infty,p}^2 \\ &- N^{\kappa-1} \sum_{p,r \in \Lambda^*} \hat{V} (r/N^{1-\kappa}) \eta_{\infty,p+r} \sigma_{p}^2 + O(N^{4\kappa-1}).\end{split} \]
Therefore, (\ref{eq:const}) can be rewritten as 
\[ \begin{split} 
&4\pi \frak{a} N^{1+\kappa} + \sum_{p\in\Lambda^*} p^2 (\sigma_{\infty,p}^2 + \eta_{\infty,p}^2 -2\eta_{\infty,p} \gamma_{\infty,p} \sigma_{\infty,p} -2 \eta_{\infty,p} \sigma_{\infty,p}^2)
       \\
        & - \frac{N^\kappa \norm{\sigma}_2^2}{N} \sum_{p\in\Lambda^*}
        \hat{V}(p/N^{1-\kappa}) \eta_{\infty,p} - N^{\kappa-1} \sum_{p,r\in \Lambda^*} \hat{V} (r/N^{1-\kappa}) \eta_{p+r} \sigma_{p}^2 + O(N^{4\kappa-1} + N^{2\kappa})
                \end{split} \]
and thus, inserting (\ref{eq:sigmainfty}), as
\[ \begin{split} 
&4\pi \frak{a} N^{1+\kappa}   + \frac{1}{2}\sum_{p\in\Lambda^*} p^2 
        \left(
        (\frac{1-2\eta_{\infty,p}}{\sqrt{1-4\eta_{\infty,p}}}-1) (1-2\eta_{\infty,p})
        + 2\eta_{\infty,p}^2
        - 4 \frac{\eta_{\infty,p}^2}{\sqrt{1-4\eta_{\infty,p}}}
        \right) \\
        &\hspace{.2cm}  - \frac{N^\kappa \norm{\sigma}_2^2}{N} \sum_{p\in\Lambda^*}
        \hat{V}(p/N^{1-\kappa}) \eta_{\infty,p} - N^{\kappa-1} \sum_{p,r\in \Lambda^*} \hat{V} (r/N^{1-\kappa}) \eta_{p+r} \sigma_{p}^2 + O(N^{4\kappa-1} + N^{2\kappa}) \\
        &= 4\pi \frak{a} N^{1+\kappa} + \frac{1}{2}\sum_{p\in\Lambda^*} p^2 
        \left(
        \sqrt{1-4\eta_{\infty,p}}
        - 1 + 2\eta_{\infty,p}
        + 2\eta_{\infty,p}^2
        \right) \\
          &\hspace{.2cm}  - \frac{N^\kappa \norm{\sigma}_2^2}{N} \sum_{p\in\Lambda^*}
        \hat{V}(p/N^{1-\kappa}) \eta_{\infty,p} - N^{\kappa-1} \sum_{p,r\in \Lambda^*} \hat{V} (r/N^{1-\kappa}) \eta_{p+r} \sigma_{p}^2 + O(N^{4\kappa-1} + N^{2\kappa}).
                \end{split} \]
With the definition (\ref{eq:defetainfty}) of $\eta_{\infty,p}$, we obtain exactly the constant term in (\ref{eq:lmHN1}).
\end{proof}

\subsection{Bounds for negligible cubic expectations} 
\label{subsec:cubic1} 

In this subsection, we estimate the expectations $\langle \xi, \cQ \xi \rangle, \langle \xi, ( \tilde{\cC}_N + \tilde{\cC}_N^* ) \xi \rangle , \langle \xi, \tilde{\cV}_N \xi \rangle$, appearing on the r.h.s. of (\ref{eq:lmHN1}), and we show that they are at most of order $N^{4\kappa-1}$ and therefore negligible for our purposes. We will make use of the following preliminary lemma to bound expectations of cubic operators.  
\begin{lemma}\label{lm:prelim1} 
Let
\begin{equation}\label{eq:cCK-def}  \cC_K  = \int dy_1 dy_2 dy_3 \, K (y_1-y_2,y_1-y_3) a_{y_1}a_{y_2}a_{y_3} + \hc \end{equation} 
with $K \in L^2 (\bR^3 \times \bR^3)$. We assume that $K (y,z) = \sum_{u,v \in \Lambda_B} K_{u,v} (y,z)$,  
where $K_{u,v} (y,z) = 0$, if $|y-u| > 10 \ell_B$ or $|z-v| > 10 \ell_B$. Then, for every $\epsilon > 0$ and $\nu > 0$, we have  
 \begin{equation}\label{eq:cCK-est} 
    \begin{split}
      \big|  \la &e^{t A^*\Theta-\hc}  \Omega, \cC_K e^{t A^*\Theta-\hc}\Omega\ra \big| 
       \\  &\leq |\langle \Omega, [\cC_K , A^* ] \Omega \rangle |  \\ &\hspace{.5cm} + C \ell_B^3  N^{(3\kappa-1)/2+\epsilon} \big( \sum_{u,v \in\Lambda_B} \norm{K_{u,v}}_2 \big)  
    \int_0^t ds \la e^{sA^*\Theta-\hc}\Omega, \cN e^{sA^*\Theta-\hc}\Omega\ra + C N^{-\nu} \end{split} 
    \end{equation}
   for all $t \in [0;1]$, if we choose $\alpha > 0$ and then $n$ large enough. Notice that, in (\ref{eq:cCK-est}), we can bound  \[ |\langle \Omega, [\cC_K , A^* ] \Omega \rangle | \leq C \| A \|_2  
\| K \|_2 \leq C N^{(3\kappa-1)/2} \| K \|_2. \] 
\end{lemma}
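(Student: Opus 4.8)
\textit{Proof strategy.} The plan is to expand $\langle e^{tA^*\Theta-\hc}\Omega,\cC_K e^{tA^*\Theta-\hc}\Omega\rangle$ by Duhamel and to isolate the vacuum contribution. Write $\cC_K=C_K+C_K^*$ with $C_K=\int dy_1dy_2dy_3\, K(y_1-y_2,y_1-y_3)\,a_{y_1}a_{y_2}a_{y_3}$, so that $C_K\Omega=0$ and $\langle\Omega,\cC_K\Omega\rangle=0$, and put $\xi_s=e^{s(A^*\Theta-\Theta A)}\Omega$. Then $\langle\xi_t,\cC_K\xi_t\rangle=\int_0^t\langle\xi_s,[\cC_K,A^*\Theta-\Theta A]\xi_s\rangle\,ds$, and one splits the commutator as
\[ [\cC_K,A^*\Theta-\Theta A]=\big([\cC_K,A^*]\Theta-\Theta[\cC_K,A]\big)+\big(A^*[\cC_K,\Theta]-[\cC_K,\Theta]A\big), \]
the first bracket carrying the main term, the second being an error coming from $\Theta\neq\mathbbm{1}$. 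For the second bracket I would argue it contributes only $O(N^{-\nu})$: since $\cC_K$ changes $\cL_n$ by a bounded amount (locally it is a finite sum of shifts of the smeared number operators, exactly as in the pull-through identities used in the proof of Lemma~\ref{lemma:propertiesL}), $[\cC_K,\Theta]$ is effectively supported on $\{\cL_n\gtrsim N^\alpha\}$, while the Gronwall argument behind \eqref{eq:expectationL} gives $\langle\xi_s,\cL_n\xi_s\rangle\le C$ uniformly in $s\in[0,1]$ and hence $\|\mathbbm{1}_{[1,\infty)}(\cL_n/N^\alpha)\xi_s\|\le CN^{-\alpha/2}$; combining this with the polynomial a priori bounds $\langle\xi_s,\cN^m\xi_s\rangle\le CN^{m(1-\beta/2)}$ of \eqref{eq:easyboundNm} and $\|\cC_K(\cN+1)^{-3/2}\|,\,\|A(\cN+1)^{-3/2}\|\le N^{C}$ yields $|\langle\xi_s,A^*[\cC_K,\Theta]\xi_s\rangle|+|\langle\xi_s,[\cC_K,\Theta]A\xi_s\rangle|\le CN^{-\nu}$ once $\alpha$ and then $n$ are chosen large enough.

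For the main bracket, note first that $[C_K^*,A^*]=0$ and $[C_K,A]=0$ (only creation, resp.\ only annihilation, operators), so $[\cC_K,A^*]=[C_K,A^*]$ and $-[\cC_K,A]=-[C_K^*,A]=[C_K,A^*]^*$; hence this contribution equals $2\,\mathrm{Re}\int_0^t\langle\xi_s,[C_K,A^*]\Theta\xi_s\rangle\,ds$. Wick ordering gives
\[ [C_K,A^*]=\langle\Omega,[\cC_K,A^*]\Omega\rangle\,\mathbbm{1}+\cQ_K+\cV_K, \]
where $\cQ_K=\int q_K(x,y)\,a_x^*a_y$ is number preserving and $\cV_K=\int v_K\,a^*a^*aa$ is quartic, the kernels $q_K,v_K$ being obtained from $K$ and $\check A$ by contracting, respectively, two and one pair of variables (the scalar in this expansion is precisely $\langle\Omega,[\cC_K,A^*]\Omega\rangle$). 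The scalar piece contributes $2\,\mathrm{Re}\langle\Omega,[\cC_K,A^*]\Omega\rangle\int_0^t\langle\xi_s,\Theta\xi_s\rangle\,ds$; since $1-\Theta\le\mathbbm{1}_{[1,\infty)}(\cL_n/N^\alpha)$ and $\langle\xi_s,\cL_n\xi_s\rangle\le C$ give $\langle\xi_s,\Theta\xi_s\rangle=1+O(N^{-\alpha})$, and since $|t|\le1$ while $|\langle\Omega,[\cC_K,A^*]\Omega\rangle|\le C\|A\|_2\|K\|_2\le CN^{(3\kappa-1)/2}\|K\|_2$ (by \eqref{eq:AL2norm}) lets one absorb the relative error into $N^{-\nu}$, this piece is bounded by $|\langle\Omega,[\cC_K,A^*]\Omega\rangle|+CN^{-\nu}$ (up to an immaterial factor $2$) --- the first term of \eqref{eq:cCK-est}.

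It then remains to bound $\int_0^t\langle\xi_s,\cQ_K\Theta\xi_s\rangle\,ds$ and $\int_0^t\langle\xi_s,\cV_K\Theta\xi_s\rangle\,ds$ by $C\ell_B^3N^{(3\kappa-1)/2+\epsilon}\big(\sum_{u,v}\|K_{u,v}\|_2\big)\int_0^t\langle\xi_s,\cN\xi_s\rangle\,ds+CN^{-\nu}$. The strategy is to decompose $\check A=\sum_{u,v\in\Lambda_B}\check A_{u,v}$ into $\ell_B$-boxes and to combine this with the hypothesis $K=\sum_{u,v}K_{u,v}$, so that $\cQ_K$ and $\cV_K$ become sums over box labels of localized kernels, and then to estimate each box piece by Cauchy--Schwarz in position space using $\int\cN_{C\ell_B}(w)\,dw=C\ell_B^3\cN$, $[\cN,\cL_n]=0$ (hence $\Theta\cN\Theta\le\cN$), and the local particle-number estimates \eqref{eq:localNbound}, \eqref{eq:localNboundexample}; for the quartic term one additionally inserts $\Theta$ on the left (at cost $O(N^{-\nu})$) and uses $\cN^2\le CN^\epsilon\cN$ on $\mathrm{supp}\,\Theta$ (\eqref{eq:particlenumbereasy}) to handle the extra annihilation operator. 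Summing the box pieces with the weight $\sum_{u,v}\|\check A_{u,v}\|_2\le CN^{(3\kappa-1)/2+\epsilon}$ from Lemma~\ref{lemma:sumlocalizedA}, and using the fast spatial decay of $\check A$ from \eqref{eq:positionspacedecayA} to truncate the box sums, then produces the stated prefactor; this is exactly the mechanism of the estimate \eqref{eq:Lnbd} in the proof of Lemma~\ref{lemma:propertiesL}, now with the extra kernel $K$ (contributing the factor $\sum_{u,v}\|K_{u,v}\|_2$) in place of one of the two copies of $\check A$ occurring there.

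The step I expect to be the main obstacle is this last one. For a general $\ell_B$-localized $K$, making the prefactor come out exactly as $\ell_B^3N^{(3\kappa-1)/2+\epsilon}\sum\|K_{u,v}\|_2$, and not worse, requires (i) exploiting the decay of $\check A$ (the last bound in \eqref{eq:positionspacedecayA}) to restrict the box sums to the $O(1)$ relevant boxes in the $x$-variable and to $O((\ell_\sigma/\ell_B)^3)$ boxes in the $y$-variable, while (ii) keeping the local particle-number cutoff constants, which grow like $(\,\cdot\,/\ell_B)^3$ for balls larger than $\ell_B$, under control; and, in parallel, (iii) checking that every replacement of $\Theta$ by $\mathbbm{1}$, or insertion of $\Theta$, genuinely costs only $O(N^{-\nu})$ uniformly in $s\in[0,1]$, which is what dictates the (large) choice of $\alpha$ and $n$ and rests on the crude a priori bounds of Lemma~\ref{lemma:propertiesL}.
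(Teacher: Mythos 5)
Your proposal follows essentially the same route as the paper's proof: Duhamel expansion, removal of all $\Theta$-dependent error terms at cost $O(N^{-\nu})$ via Lemma \ref{lemma:propertiesL}, Wick decomposition of $[\cC_K,A^*]$ into scalar, quadratic and quartic parts, and control of the latter two by localizing both $K$ and $\check A$ on $\ell_B$-boxes and applying Cauchy--Schwarz together with the local number bounds and Lemma \ref{lemma:sumlocalizedA}. The one imprecision is in the quartic piece, where the operative tool is \eqref{eq:localNboundexample} (the integral over the box-constrained annihilation variable yields only a factor $N^\epsilon$), not ``$\cN^2\le CN^\epsilon\cN$ on $\mathrm{supp}\,\Theta$'' --- which is false as stated, since \eqref{eq:particlenumbereasy} only gives $\cN\le C N^{3\kappa/2-\beta/4+\epsilon}$ there --- but since you also cite \eqref{eq:localNbound}--\eqref{eq:localNboundexample} as the primary estimates, the argument matches the paper's.
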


\begin{proof}
We take $t=1$ (the extension to $t \in [0;1]$ is straightforward). With Duhamel, we obtain 
    \begin{equation}\label{eq:CA*1}
    \begin{split}
        & \big| \la e^{A^*\Theta-\hc}\Omega, \cC_K e^{A^*\Theta-\hc}\Omega\ra \big| 
        \\ &\leq \int_0^1 ds \, \big| \la e^{sA^*\Theta-\hc}\Omega, [\cC_K , A^*\Theta]  
        e^{sA^*\Theta-\hc}\Omega\ra \big| \\
        &\leq \int_0^1 ds \, \big| \la e^{sA^*\Theta-\hc} \Omega,  ( \Theta [\cC_K, A^*] \Theta 
        +(1-\Theta )[\cC_K, A^*]\Theta 
        \\ &\hspace{4cm} + A^* \cC_K (\Theta -1)
        - A^* (\Theta-1) \cC_K ) e^{sA^*\Theta-\hc}\Omega\ra \big| \\ 
        &\leq \int_0^1 ds \, \big| \la e^{sA^*\Theta-\hc} \Omega,  \Theta [\cC_K, A^*] \Theta 
         e^{sA^*\Theta-\hc}\Omega\ra \big| + C N^{-\nu}
        \end{split} 
    \end{equation}
if $\alpha > 0$ and then $n \in \bN$ are chosen sufficiently large in \eqref{eq:defcubictransform}. Here we estimated all terms containing a factor $\Theta-1$ with Lemma \ref{lemma:propertiesL}, in particular (\ref{eq:norm1-Theta}), (\ref{eq:easyboundNm}) and (\ref{eq:AThetaA}), using also the bound $\| A^* (\cN+1)^{-3/2} \| \leq C \| A \|_2, \| \cC_K (\cN+1)^{-3/2} \| \leq  C \norm{K}_2$. 

We observe that
\begin{equation}\label{eq:commCKA} \begin{split}  [ \cC_K, A^*] = \int dx_1 dx_2 dx_3 &dy_1 dy_2 dy_3 \, K (y_1 - y_2, y_1 - y_3) \\ &\times \check{A} (x_1 - x_2 , x_1 - x_3) \big[ a_{y_1} a_{y_2} a_{y_3} , a_{x_1}^* a_{x_2}^* a_{x_3}^* \big] \end{split} \end{equation} 
where 
\begin{equation}\label{eq:comm-aaa} \begin{split} \big[ a_{y_1} a_{y_2} a_{y_3} , a_{x_1}^* a_{x_2}^* a_{x_3}^* \big]  = \; & \sum_{\pi \in S_3} \delta (y_1 - x_{\pi 1}) \delta (y_2 - x_{\pi 2}) \delta (y_3 - x_{\pi 3}) \\ &+ \sum_{i_1 < i_2}^3 \sum_{\pi \in S_3} \delta (y_{i_1} - x_{\pi i_1}) \delta (y_{i_2} - x_{\pi i_2}) \prod_{i_3 \not = i_1, i_2} a_{x_{\pi i_3}}^* a_{y_{i_3}}  \\ &+ \sum_{i,j=1}^3 \delta (y_i - x_j) \prod_{\ell \not = j} a^*_{x_\ell} \prod_{m \not = i} a_{y_m}  .\end{split} \end{equation}
We decompose the main term on the r.h.s. of (\ref{eq:CA*1}) in the sum  $T_4 + T_2 + T_0$, where $T_4$ are the contributions arising from the quartic terms on the r.h.s. of (\ref{eq:comm-aaa}), $T_2$ from the quadratic terms and $T_0 = \langle \Omega, [\cC_K , A^* ] \Omega \rangle$ from the constant terms. It is easy to check that $|T_0| \leq C \| K \|_2 \| A \|_2$. Let us focus on the term $T_2$. As an example, we consider the contribution $T_{2,0}$ associated with $i_1 = \pi i_1 = 1, i_2 = \pi i_2 = 2$ in (\ref{eq:comm-aaa}). Localizing the kernels $K, \check{A}$, we can bound
\[\begin{split}  |T_{2,0}| &\leq \sum_{u,v,\tilde{u} , \tilde{v} \in \Lambda_B}  \int_0^1 ds \int dx_1 dx_2 dx_3 dy_3  \, |K_{u,v} (x_1 - x_2 , x_1 - y_3)| |\check{A}_{\tilde{u}, \tilde{v}} (x_1 - x_2 , x_1 - x_3)| \\ &\hspace{7cm} \times  \big| \langle \Theta e^{s A^* - \text{h.c.}} \Omega, a_{x_3}^* a_{y_3} \Theta e^{s A^* \Theta- \text{h.c.}} \Omega \rangle \big|   \\
&\leq \sum_{u,v,\tilde{u} , \tilde{v} \in \Lambda_B} \int_0^1 ds  \Big( \int dx_1 dx_2 dx_3 dy_3 \, |K_{u,v} (x_1 - x_2 , x_1 - y_3)|^2 \\ &\hspace{2.5cm} \times  \mathbbm{1}(|x_1 - x_2 - \tilde{u}| , |x_1 - x_3 - \tilde{v}| \leq 10 \ell_B) \| a_{x_3} \Theta e^{sA^* \Theta - \hc} \Omega \|^2 \Big)^{1/2}  \\ &\hspace{.4cm}  \times \Big( \int dx_1 dx_2 dx_3 dy_3 \, |\check{A}_{\tilde{u},\tilde{v}} (x_1 - x_2 , x_1 - x_3)|^2 \\ &\hspace{2.5cm} \times \mathbbm{1}(|x_1 - x_2 - u| , |x_1 - y_3 - v| \leq 10 \ell_B) \| a_{y_3} \Theta e^{sA^* \Theta - \hc} \Omega \|^2 \Big)^{1/2} .
\end{split} \]
Using the characteristic functions to gain a volume factor $\ell_B^3$ (after switching to variables $x_2' = x_1 - x_2$, $y'_3= y_3 - x_1$, $x'_1 = x_1 - x_3$ and $x'_3 = x_3$ in the first integral on the r.h.s. and to variables $x'_2 = x_2 - x_1$, $x'_3 = x_3 - x_1$, $x'_1 = x_1 - y_3$, $y'_3 = y_3$ in the second) and applying Lemma \ref{lemma:sumlocalizedA}, we can estimate 
\[ |T_{2,0}| \leq C \ell_B^3 N^{(3\kappa-1)/2 + \epsilon} \sum_{u,v \in \Lambda_B} \| K_{u,v} \|_2
\int_0^1 ds \, \langle e^{sA^* \Theta - \hc} \Omega, \cN e^{sA^* \Theta - \hc} \Omega \rangle. \]
The other contributions to $T_2$ can be bounded similarly. Let us now consider $T_4$. Also here, we focus on an example, taking the term $T_{4,0}$ associated with $i = j = 1$ on the third line of (\ref{eq:comm-aaa}). We find  
\begin{equation}\label{eq:T4-bd} 
\begin{split}
|T_{4,0}|  &\leq \sum_{u,v, \tilde u, \tilde v\in\Lambda_B} 
    \int dx_1 dx_2 dx_3 dy_2 dy_3
    | K_{u,v} (x_1-y_2,x_1-y_3) | \, | \check{A}_{\tilde u, \tilde v}(x_1-x_2,x_1-x_3)| \\
    &\hspace{2cm}  \times\int_0^1 ds \, |\la \Theta e^{sA^*\Theta-\hc}\Omega, a^*_{x_2}a^*_{x_3} a_{y_{2}}a_{y_{3}}
     \Theta e^{sA^*\Theta-\hc}\Omega\ra | \\
     &\leq  \sum_{u,v, \tilde u, \tilde v\in\Lambda_B} \int_0^1 ds \Big( \int dx_1 dx_2 dx_3 dy_2 dy_3 \,    | K_{u,v} (x_1-y_2,x_1-y_3) |^2\\ &\hspace{2cm} \times \mathbbm{1}(|x_1 - x_2 - \tilde{u}| , |x_1 - x_3 - \tilde{v}| \leq 10 \ell_B)  \| a_{x_2} a_{x_3} \Theta e^{sA^*\Theta-\hc}\Omega \|^2 \Big)^{1/2} \\ &\hspace{.4cm} \times \Big( \int dx_1 dx_2 dx_3 dy_2 dy_3 \,    | \check{A}_{\tilde{u},\tilde{v}} (x_1-x_2,x_1-x_3) |^2 \\ &\hspace{2cm}  \times   \mathbbm{1}(|x_1 - y_2 - u| , |x_1 - y_3 - v| \leq 10 \ell_B) \| a_{y_2} a_{y_3} \Theta e^{sA^*\Theta-\hc}\Omega \|^2 \Big)^{1/2} .\end{split} \end{equation} 
We apply Lemma \ref{lemma:propertiesL} to estimate, with the notation $\xi_s = e^{sA^* \Theta - \hc} \Omega$,  
\[ \begin{split} \int  \mathbbm{1}(|x_1 - \tilde{v} - x_3| \leq 10 \ell_B) \| a_{x_3} a_{x_2} \Theta \xi_s \|^2  dx_3 = \| \cN_{10 \ell_B}^{1/2} (x_1 - \tilde{v}) a_{x_2} \Theta \xi_s \| \leq C N^\epsilon \| a_{x_2} \Theta \xi_s \| \\ 
 \int  \mathbbm{1}(|x_1 - v - y_3| \leq 10 \ell_B) \| a_{y_3}  a_{y_2} \Theta \xi_s \|^2  dy_3 = \| \cN_{10 \ell_B}^{1/2} (x_1 - v) a_{y_2} \Theta \xi_s \| \leq C N^\epsilon \| a_{y_2} \Theta \xi_s \| \end{split} \] 
if $n$ is large enough. With Lemma \ref{lemma:sumlocalizedA}, we conclude that 
\[ |T_{4,0}| \leq C \ell_B^3 N^{(3\kappa-1)/2+\epsilon} \sum_{u,v \in \Lambda_B} \| K_{u,v} \|_2 
\int_0^1 ds \, \langle e^{sA^* \Theta - \hc} \Omega, \cN e^{sA^* \Theta - \hc} \Omega \rangle. \]
Also the other contributions to $T_4$ can be handled similarly. 
\end{proof} 

An immediate application of Lemma \ref{lm:prelim1} is the following bound for the expectation of the number of excitations generated by the cubic transformation. 
\begin{lemma}\label{lm:boundN}
    We have
    \begin{equation}\label{eq:N-bd}
        \la e^{t A^*\Theta-\hc}\Omega, \cN e^{t A^*\Theta-\hc}\Omega\ra
        \leq C N^{3\kappa-1}
    \end{equation}
    for all $t \in [0;1]$, provided in \eqref{eq:defcubictransform} we choose $\alpha$ and then $n$ large enough.
\end{lemma}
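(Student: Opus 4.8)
The plan is to reduce to Lemma~\ref{lm:prelim1} by a single Duhamel step and then close a Gronwall inequality. I would set $\xi_t = e^{t(A^*\Theta - \Theta A)}\Omega$ and $g(t) = \langle \xi_t, \cN \xi_t\rangle$. Since $A^*$ creates exactly three particles and $[\cN,\Theta]=0$, we have $[\cN, A^*\Theta - \Theta A] = 3(A^*\Theta + \Theta A)$, so Duhamel's formula gives $g(t) = 3\int_0^t \langle \xi_s, (A^*\Theta + \Theta A)\xi_s\rangle\, ds$, with $g(0)=0$.

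Next, I would replace $A^*\Theta + \Theta A$ by the symmetric cubic operator $\cC_{\check{A}} = A + A^*$, that is $\cC_K$ from \eqref{eq:cCK-def} with $K = \check{A}$. Writing $A^*\Theta + \Theta A = \cC_{\check{A}} - A^*(1-\Theta) - (1-\Theta)A$ and using $\|A(\cN+1)^{-3/2}\| \leq C\|A\|_2 \leq CN^{(3\kappa-1)/2}$, the rough bound \eqref{eq:easyboundNm} on $\langle\xi_s,(\cN+1)^3\xi_s\rangle$, and \eqref{eq:norm1-Theta} (whose proof gives $\|(1-\Theta)\xi_s\| \leq CN^{-\alpha/2}$ uniformly for $s\in[0,1]$, through the $t$-uniform version of \eqref{eq:expectationL} coming from its Gronwall argument on $[0,1]$), the two $(1-\Theta)$-terms are seen to be $O(N^{-\nu})$ for any $\nu>0$ once $\alpha$ and then $n$ are taken large. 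This leaves $g(t) = 3\int_0^t \langle \xi_s, \cC_{\check{A}}\xi_s\rangle\,ds + O(N^{-\nu})$.

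Then I would apply Lemma~\ref{lm:prelim1} with $K = \check{A}$, decomposed as $\check{A} = \sum_{u,v\in\Lambda_B}\check{A}_{u,v}$. Here the bound in Lemma~\ref{lm:prelim1} gives $|\langle\Omega, [\cC_{\check{A}}, A^*]\Omega\rangle| \leq C\|A\|_2^2 \leq CN^{3\kappa-1}$, and by Lemma~\ref{lemma:sumlocalizedA}, $\sum_{u,v}\|\check{A}_{u,v}\|_2 \leq CN^{(3\kappa-1)/2+\epsilon}$; since $\ell_B = N^{-\kappa/2+\beta/12}$ with $\beta = 2-3\kappa$, one has $\ell_B^3 = N^{1/2 - 9\kappa/4}$, so the prefactor $\ell_B^3 N^{(3\kappa-1)/2+\epsilon}\sum_{u,v}\|\check{A}_{u,v}\|_2$ appearing in \eqref{eq:cCK-est} is at most $CN^{3\kappa/4 - 1/2 + 2\epsilon}$. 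Thus, for each $s\in[0,1]$, $|\langle\xi_s, \cC_{\check{A}}\xi_s\rangle| \leq CN^{3\kappa-1} + CN^{3\kappa/4-1/2+2\epsilon}\int_0^s g(r)\,dr + CN^{-\nu}$. Substituting into the formula for $g$ and using $\int_0^t\int_0^s g(r)\,dr\,ds \leq \int_0^t g(r)\,dr$ for $t\leq1$, I obtain $g(t) \leq CN^{3\kappa-1} + CN^{3\kappa/4-1/2+2\epsilon}\int_0^t g(r)\,dr + CN^{-\nu}$.

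Finally, Gronwall's lemma finishes the argument: since $\kappa < 2/3$, I can fix $\epsilon>0$ so small that $3\kappa/4 - 1/2 + 2\epsilon < 0$, and then the Gronwall factor $\exp(CN^{3\kappa/4-1/2+2\epsilon})$ is bounded, yielding $g(t) \leq CN^{3\kappa-1}$ for all $t\in[0,1]$, which is \eqref{eq:N-bd}. I expect the only genuinely delicate point to be keeping this Gronwall constant $O(1)$: this is exactly what forces the localization scale to be $\ell_B = N^{-\kappa/2+\beta/12}$ (so that $\ell_B^3\|A\|_2^2 = o(1)$) and where the restriction $\kappa < 2/3$ is used; the remaining steps are bookkeeping, the analytic work having already been done in Lemmas~\ref{lemma:propertiesL} and \ref{lm:prelim1}.
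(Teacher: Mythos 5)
Your proposal is correct and follows essentially the same route as the paper: a single Duhamel step using $[\cN,A^*\Theta]=3A^*\Theta$, disposal of the $(1-\Theta)$ remainders via Lemma~\ref{lemma:propertiesL}, an application of Lemma~\ref{lm:prelim1} with $K=\check{A}$ together with Lemmas~\ref{lemma:propertiesA} and \ref{lemma:sumlocalizedA}, and a Gronwall closure. Your exponent $N^{3\kappa/4-1/2+2\epsilon}$ for the Gronwall coefficient coincides with the paper's $N^{-\beta/4+\epsilon}$ since $\beta=2-3\kappa$, so the argument closes exactly as in the paper.
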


\begin{proof}
With $[\cN, A^* ] = 3 A^*$, we find  
   \begin{equation*}
        \la e^{t A^*\Theta-\hc}\Omega, \cN e^{t A^*\Theta-\hc}\Omega\ra
        = 2 \text{Re }\int_0^t ds \la e^{sA^*\Theta-\hc}\Omega, (3A^* + 3A^*(\Theta-1)) e^{sA^*\Theta-\hc}\Omega\ra.
    \end{equation*}
The term proportional to $\Theta-1$ can be handled with Lemma \ref{lemma:propertiesL}; it can be made arbitrarily small, choosing $\alpha > 0$ and then $n \in \bN$ large enough in \eqref{eq:defcubictransform}. To estimate the expectation of $A^*$, we apply Lemma \ref{lm:prelim1}, with $K = A$. From Lemma \ref{lemma:propertiesA} and Lemma \ref{lemma:sumlocalizedA}, we obtain 
\begin{equation*}
\begin{split}
        \la e^{t A^*\Theta-\hc}&\Omega, \cN e^{t A^*\Theta-\hc}\Omega\ra \\
    &\leq C N^{3\kappa-1}
    + C N^{- \beta/4 + \epsilon} \int_0^t ds \la e^{sA^*\Theta-\hc}\Omega, \cN e^{sA^*\Theta-\hc}\Omega\ra\end{split}
\end{equation*}
for any $\epsilon > 0$, if $\alpha, n$ are large enough. Choosing $\epsilon > 0$ small enough, we can apply Gronwall's lemma to show (\ref{eq:N-bd}). 
\end{proof}
Inserting the bound (\ref{eq:N-bd}) back into (\ref{eq:cCK-est}), we obtain the following corollary.
\begin{cor} \label{cor:prelim-2} 
Suppose the operator $\cC_K$ is defined as in Lemma \ref{lm:prelim1}. For every $\epsilon > 0$, we have 
\begin{equation}   \begin{split}   \big| & \la e^{t A^*\Theta-\hc}  \Omega, \cC_K e^{t A^*\Theta-\hc}\Omega\ra \big| 
   \leq  C N^{(3\kappa-1)/2}  \Big( \norm{K}_2  + N^{-\beta/4 + \epsilon}  \sum_{u,v \in\Lambda_B} \norm{K_{u,v}}_2 \Big)    
        \end{split} 
    \end{equation} 
for all $t \in [0;1]$, provided $\alpha$ and then $n$ are large enough. 
\end{cor}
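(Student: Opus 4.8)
The plan is to read off the bound from Lemma~\ref{lm:prelim1} and then substitute the a priori estimate of Lemma~\ref{lm:boundN} into the time integral appearing there, after which only a bookkeeping of powers of $N$ remains. First I would apply Lemma~\ref{lm:prelim1} to the operator $\cC_K$, which gives, for every $\epsilon > 0$ and $\nu > 0$ (and $\alpha$, then $n$, large enough),
\[ \big| \la e^{t A^*\Theta-\hc}\Omega, \cC_K e^{t A^*\Theta-\hc}\Omega\ra \big| \leq |\la \Omega, [\cC_K, A^*]\Omega\ra| + C \ell_B^3 N^{(3\kappa-1)/2+\epsilon} \Big( \sum_{u,v\in\Lambda_B} \norm{K_{u,v}}_2 \Big) \int_0^t \!ds\, \la e^{sA^*\Theta-\hc}\Omega, \cN e^{sA^*\Theta-\hc}\Omega\ra + C N^{-\nu}. \]
I would then treat the three terms on the right-hand side separately.

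For the first term I would use the estimate $|\la\Omega,[\cC_K,A^*]\Omega\ra| \leq C\norm{A}_2\norm{K}_2$ recorded at the end of Lemma~\ref{lm:prelim1}, together with $\norm{A}_2 \leq C N^{(3\kappa-1)/2}$ from \eqref{eq:AL2norm}; this produces exactly the contribution $C N^{(3\kappa-1)/2}\norm{K}_2$ in the claimed bound. For the middle term I would insert the estimate $\la e^{sA^*\Theta-\hc}\Omega, \cN e^{sA^*\Theta-\hc}\Omega\ra \leq C N^{3\kappa-1}$ of Lemma~\ref{lm:boundN}, uniform in $s\in[0;1]$, so that the $s$-integral is bounded by $CN^{3\kappa-1}$. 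It then remains to simplify the prefactor: since $\ell_B = N^{-\kappa/2+\beta/12}$ we have $\ell_B^3 = N^{-3\kappa/2+\beta/4}$, hence
\[ \ell_B^3\, N^{(3\kappa-1)/2+\epsilon}\, N^{3\kappa-1} = N^{\,\beta/4 + 3\kappa - 3/2 + \epsilon} = N^{(3\kappa-1)/2 - \beta/4 + \epsilon}, \]
where the last identity uses $\beta = 2-3\kappa$. This is precisely the factor $N^{(3\kappa-1)/2}\,N^{-\beta/4+\epsilon}$ multiplying $\sum_{u,v}\norm{K_{u,v}}_2$ in the statement. Finally, the remainder $CN^{-\nu}$ is made negligible by choosing $\nu$ (equivalently $\alpha$ and then $n$) sufficiently large, and since everything above is uniform in $t\in[0;1]$ the corollary follows.

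There is essentially no genuine obstacle here: Corollary~\ref{cor:prelim-2} is a direct consequence of Lemma~\ref{lm:prelim1} and Lemma~\ref{lm:boundN}. The only point requiring care is the ordering of the quantifiers — given $\epsilon$, one picks $\nu$, then $\alpha$, and finally $n$ large enough so that the Duhamel bound of Lemma~\ref{lm:prelim1}, the Gronwall estimate underlying Lemma~\ref{lm:boundN}, and the smallness of the $N^{-\nu}$ error all hold simultaneously (taking, if necessary, the larger of the thresholds supplied by the two lemmas).
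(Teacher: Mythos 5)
Your proposal is correct and is exactly the paper's argument: the paper obtains Corollary \ref{cor:prelim-2} by inserting the bound $\la \xi_s,\cN\xi_s\ra\leq CN^{3\kappa-1}$ of Lemma \ref{lm:boundN} into (\ref{eq:cCK-est}), and your power count $\ell_B^3 N^{(3\kappa-1)/2+\epsilon}N^{3\kappa-1}=N^{(3\kappa-1)/2-\beta/4+\epsilon}$ (using $\beta=2-3\kappa$) is the same computation.
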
 

We will also need bounds similar to the one in Lemma \ref{lm:prelim1}  for cubic operators with slightly more general kernels. 
\begin{cor} \label{cor:prelim3} 
For  $f \in L^1 (\Lambda)$ and a kernel $K$ as in Lemma \ref{lm:prelim1}, let 
\[ \cC_{K,f} = \int dy dy_1 dy_2 dy_3 \, f (y) K (y+ y_1 - y_2 , y+ y_1 - y_3) a_{y_1}^* a_{y_2}^* a_{y_3}^*  + \hc .\]
Then we have, for any fixed $\epsilon > 0$, 
\[ \begin{split} \big| \la e^{t A^*\Theta-\hc}  \Omega, &\cC_{K,f}   e^{t A^*\Theta-\hc}\Omega\ra \big| 
   \\ &\leq  |\langle \Omega, [\cC_{K,f} , A^* ] \Omega \rangle | + C \| f \|_1 N^{(3\kappa-1)/2 - \beta/4 + \epsilon}  \sum_{u,v \in\Lambda_B} \norm{K_{u,v}}_2  \end{split} \]    
for all $t \in [0;1]$, provided $\alpha$ and then $n$ are large enough. 
\end{cor}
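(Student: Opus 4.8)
I would prove Corollary~\ref{cor:prelim3} by repeating the proof of Lemma~\ref{lm:prelim1} almost verbatim, now carrying along the extra integration over $y$ against $|f(y)|$ and using the localization of $K$ rather than that of the smeared kernel $\int dy\, f(y) K(y+\cdot, y+\cdot)$, which in general is spread over the whole support of $f$ and need not be localized in small boxes. First I would split $\cC_{K,f} = \cD_{K,f}^* + \cD_{K,f}$, with $\cD_{K,f}^* = \int dy\, dy_1 dy_2 dy_3\, f(y) K(y+y_1-y_2, y+y_1-y_3)\, a_{y_1}^* a_{y_2}^* a_{y_3}^*$. Since $\cD_{K,f}^*$ is a product of creation operators, $[\cD_{K,f}^*, A^*] = 0$, hence $[\cC_{K,f}, A^*] = [\cD_{K,f}, A^*]$, and this commutator has exactly the structure of $[\cC_K, A^*]$ appearing in the proof of Lemma~\ref{lm:prelim1}: by \eqref{eq:comm-aaa} it decomposes into a constant, a quadratic and a quartic operator. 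Applying Duhamel and the same four-term splitting $[\cC_{K,f}, A^*\Theta] = \Theta[\cC_{K,f}, A^*]\Theta + (1-\Theta)[\cC_{K,f}, A^*]\Theta + A^*\cC_{K,f}(\Theta-1) - A^*(\Theta-1)\cC_{K,f}$, the three terms containing the factor $\Theta-1$ are controlled exactly as in Lemma~\ref{lm:prelim1}, via Lemma~\ref{lemma:propertiesL} (in particular \eqref{eq:norm1-Theta}, \eqref{eq:easyboundNm}, \eqref{eq:AThetaA}) and the operator bounds $\norm{A^*(\cN+1)^{-3/2}} \leq C\norm{A}_2 \leq CN^{(3\kappa-1)/2}$, $\norm{\cC_{K,f}(\cN+1)^{-3/2}} \leq C\norm{f}_1 \norm{K}_2$. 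The last estimate follows from $\bigl\lVert \int dy\, f(y) K(y+\cdot, y+\cdot) \bigr\rVert_2 \leq \norm{f}_1 \norm{K}_2$, by Minkowski's integral inequality and translation invariance of the $L^2$ norm on $\Lambda$. Choosing $\alpha$, and then $n$, large enough in \eqref{eq:defcubictransform}, these contributions are bounded by $C\norm{f}_1 \norm{K}_2 N^{-M}$ for an arbitrarily large $M > 0$; since $\norm{K}_2 \leq \sum_{u,v\in\Lambda_B} \norm{K_{u,v}}_2$ and the target exponent $(3\kappa-1)/2 - \beta/4 = 9\kappa/4 - 1$ is strictly positive for $\kappa\in[1/2,2/3)$, they can be absorbed into the error term asserted in the statement.

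For the main term $\la \Theta e^{sA^*\Theta-\hc}\Omega, [\cC_{K,f}, A^*] \Theta e^{sA^*\Theta-\hc}\Omega\ra$ I would write $\cD_{K,f} = \int dy\, \overline{f(y)}\, \cD^{(y)}$, with $\cD^{(y)} = \int dy_1 dy_2 dy_3\, \overline{K(y+y_1-y_2, y+y_1-y_3)}\, a_{y_1} a_{y_2} a_{y_3}$, and decompose $\overline{K(y+\cdot, y+\cdot)} = \sum_{u,v\in\Lambda_B} \overline{K_{u,v}(y+\cdot, y+\cdot)}$, noting that $(z_1,z_2) \mapsto \overline{K_{u,v}(y+z_1, y+z_2)}$ is supported in a set of diameter $O(\ell_B)$ around the (generally non-lattice) point $(u-y, v-y)$. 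The constant part of $[\cD^{(y)}, A^*]$, integrated against $\overline{f(y)}$, reproduces exactly $\la\Omega, [\cC_{K,f}, A^*]\Omega\ra$. For the quadratic and quartic parts $T_2^{(y)}, T_4^{(y)}$ I would repeat the Cauchy--Schwarz estimates of the proof of Lemma~\ref{lm:prelim1} without change: the indicator functions coming from the supports of $\overline{K_{u,v}(y+\cdot, y+\cdot)}$ and of $\check A_{\tilde u,\tilde v}$ still yield a volume factor $\ell_B^3$, because every support has diameter $O(\ell_B)$ regardless of its center; integrals of the form $\int \mathbbm{1}(|w-c|\leq C\ell_B)\, \norm{a_w (\cdots)\Theta \xi_s}^2\, dw = \norm{\cN_{C\ell_B}^{1/2}(c)(\cdots)\Theta\xi_s}^2 \leq CN^\epsilon \norm{(\cdots)\Theta\xi_s}^2$ are handled by \eqref{eq:localNbound} of Lemma~\ref{lemma:propertiesL}, which holds for every center $c\in\Lambda$; the sum over $\tilde u,\tilde v$ is carried out with Lemma~\ref{lemma:sumlocalizedA}; and, using $\norm{\overline{K_{u,v}(y+\cdot,y+\cdot)}}_2 = \norm{K_{u,v}}_2$, the remaining sum over $u,v$ together with the integral over $y$ produces the factor $\norm{f}_1 \sum_{u,v\in\Lambda_B}\norm{K_{u,v}}_2$. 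Writing $\xi_s = e^{sA^*\Theta-\hc}\Omega$, this gives, for every $\epsilon > 0$,
\[ \Big| \int dy\, \overline{f(y)}\, \la \xi_s, (T_2^{(y)} + T_4^{(y)})\xi_s \ra \Big| \leq C\, \ell_B^3\, N^{(3\kappa-1)/2+\epsilon}\, \norm{f}_1 \Big( \sum_{u,v\in\Lambda_B} \norm{K_{u,v}}_2 \Big)\, \la \xi_s, \cN \xi_s \ra . \]

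Finally I would integrate over $s\in[0;t]$ and invoke Lemma~\ref{lm:boundN}, i.e.\ $\la\xi_s, \cN\xi_s\ra \leq CN^{3\kappa-1}$. With $\ell_B = N^{-\kappa/2+\beta/12}$ and $\beta = 2-3\kappa$ one has $\ell_B^3\, N^{(3\kappa-1)/2}\, N^{3\kappa-1} = N^{(3\kappa-1)/2 - \beta/4}$, so this contribution is at most $C\norm{f}_1 N^{(3\kappa-1)/2 - \beta/4 + \epsilon} \sum_{u,v\in\Lambda_B}\norm{K_{u,v}}_2$, which together with the $\Theta-1$ terms absorbed above proves the claim. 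I do not expect a real obstacle here: the argument is a routine adaptation of Lemma~\ref{lm:prelim1}, and the only point that demands care is the bookkeeping for the shifted, non-lattice-centered localization of $K(y+\cdot, y+\cdot)$ together with the clean factorization of $\norm{f}_1$ — neither alters the structure of the estimates, since the volume gain depends only on the diameter of the supports and the local particle number bound \eqref{eq:localNbound} is uniform in the center.
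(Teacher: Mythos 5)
Your proposal is correct and follows essentially the same route as the paper, which simply writes $\cC_{K,f}$ as an average with weight $f$ of shifted operators of the form treated in Lemma \ref{lm:prelim1} and reruns that proof. Your elaboration — in particular the observation that the shifted localization pieces $K_{u,v}(y+\cdot,y+\cdot)$ are centered at non-lattice points but that the volume gain $\ell_B^3$ and the local particle number bound \eqref{eq:localNbound} depend only on the diameter of the supports, not on their centers — is exactly the point the paper leaves implicit, and your exponent bookkeeping $\ell_B^3 N^{(3\kappa-1)/2} N^{3\kappa-1} = N^{(3\kappa-1)/2-\beta/4}$ checks out.
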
 

\begin{proof} 
We proceed as in the proof of Lemma \ref{lm:prelim1}. To bound contributions to $[ \cC_{K,f} , A^*]$ that are quadratic or quartic in creation and annihilation operators, we write $\cC_{K,f}$ as average with weight $f$ of operators that have exactly the form considered in Lemma \ref{lm:prelim1}. This observation allows us to derive the desired bound.
\end{proof} 

The last two corollaries allow us to control the expectation $\langle \xi, (\tilde{\cC}_N + \tilde{\cC}_N^*)\xi \rangle$ appearing on the r.h.s. of (\ref{eq:lmHN1}).  
\begin{lemma}\label{lm:tildeCN}
From (\ref{eq:lmHN1}) recall the definition 
\[ \tilde{\cC}_N =  N^{\kappa-1} N_0^{1/2} \sum_{p, r\in\Lambda^*} \hat{V}(r/N^{1-\kappa}) 
    ((\gamma_{p+r}\gamma_r-1)\sigma_p + \sigma_{p+r}\sigma_r\gamma_p)     a_{p+r} a_{-r} a_{-p}. \]
 We have
    \begin{equation}
      \big| \la e^{A^*\Theta-\hc}\Omega,  \tilde\cC_N e^{A^*\Theta-\hc}\Omega\ra \big| \leq C N^{4\kappa-1} 
    \end{equation}
for $\alpha$ and $n$ chosen large enough. 
\end{lemma}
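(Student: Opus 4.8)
The statement concerns only the annihilation operator $\tilde\cC_N$, but $\langle\xi,\tilde\cC_N\xi\rangle$ can be read off from the self-adjoint combinations $\tilde\cC_N+\tilde\cC_N^*$ and $\tfrac{1}{i}(\tilde\cC_N-\tilde\cC_N^*)$, so it suffices to bound the expectation of each of these. The plan is to recognize them as finite sums of cubic operators of exactly the type $\cC_K$ treated in Lemma~\ref{lm:prelim1} and to check that the corresponding kernels are small in $L^2$ (and in the localized sense). Since momentum is conserved, the coefficient of $a_{p+r}a_{-r}a_{-p}$ in $\tilde\cC_N$ depends on $p,r$ only through the momenta of the three annihilation operators, so in position space the kernel is a function of the two differences $x_1-x_2$, $x_1-x_3$; hence $\tilde\cC_N$ has the form $\cC_K$, and (because $\Lambda$ is compact) the localized decomposition $K=\sum_{u,v\in\Lambda_B}K\,\mathbbm 1_{B_u}\otimes\mathbbm 1_{B_v}$ required in Lemma~\ref{lm:prelim1} is available for free.

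\textbf{Step 1: the $L^2$-bound on the kernels.} First I would expand $(\gamma_{p+r}\gamma_r-1)\sigma_p+\sigma_{p+r}\sigma_r\gamma_p$ by writing every $\gamma_\bullet=1+(\gamma_\bullet-1)$. This produces five terms, each equal to the weight $\hat V(r/N^{1-\kappa})$ times a product of \emph{two or three} small factors drawn from $\{\sigma_p,\sigma_r,\sigma_{p+r},\gamma_p-1,\gamma_r-1,\gamma_{p+r}-1\}$, namely $(\gamma_{p+r}-1)\sigma_p$, $(\gamma_r-1)\sigma_p$, $(\gamma_{p+r}-1)(\gamma_r-1)\sigma_p$, $\sigma_{p+r}\sigma_r$ and $\sigma_{p+r}\sigma_r(\gamma_p-1)$. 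For each corresponding kernel $K^{(j)}$ I would bound $\|K^{(j)}\|_2$ by Parseval: the prefactor contributes $(N^{\kappa-1}N_0^{1/2})^2\le N^{2\kappa-1}$, and one is left to show $\sum_{p,r}|\hat V(r/N^{1-\kappa})|^2\,|\text{product}|^2\le CN^{3\kappa}$. Since $\|\hat V(\cdot/N^{1-\kappa})\|_\infty\le C$ and each of the two free summation variables $p,r$ carries at least one small factor, this follows once two factors attached to the same variable are handled by Cauchy--Schwarz through the $\ell^4$-bounds $\sum|\sigma_\bullet|^4,\ \sum|\gamma_\bullet-1|^4\le CN^{3\kappa/2}$ of \eqref{eq:sigmapowers} (it is crucial to use $\ell^4$, and not the cruder $\|\sigma\|_\infty,\|\gamma-1\|_\infty\le CN^{\beta/8}$, which would lose a factor $N^{\beta/8}$). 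This yields $\|K^{(j)}\|_2\le CN^{(5\kappa-1)/2}$ for all five $j$.

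\textbf{Step 2: the localized bound.} Next comes $\sum_{u,v\in\Lambda_B}\|K^{(j)}_{u,v}\|_2\le CN^{(5\kappa-1)/2+\epsilon}$ for any $\epsilon>0$, needed for the Gronwall term in \eqref{eq:cCK-est}. Writing $\delta_{k_1+k_2+k_3=0}=\int e^{iy(k_1+k_2+k_3)}\,dy$, each kernel has the form $K^{(j)}(a,b)=\int \check g_1(w)\check g_2(w-a)\check g_3(w-b)\,dw$, with every $\check g_i$ equal to $\check\sigma$, $\widecheck{\gamma-1}$, a delta (when the corresponding momentum is absent), the rescaled potential (whose position kernel is a rescaling of $V$, supported on scale $N^{-(1-\kappa)}\ll\ell_\sigma$), or a convolution of the latter with $\check\sigma$ or $\widecheck{\gamma-1}$. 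Using the pointwise decay of $\check\sigma,\widecheck{\gamma-1}$ from Lemma~\ref{lemma:propertiesquadratickernels} and $\|V\|_1<\infty$, one gets for $K^{(j)}(a,b)$ decay of the same type as for $\check A$ in \eqref{eq:positionspacedecayA} (fast decay for $|a|,|b|\gg\ell_\sigma$ together with an $|a|^{-5/2}$- and $|b|^{-5/2}$-type bound near the origin). The localized sum is then controlled exactly as in the proofs of Lemmas~\ref{lemma:sumlocalizedsigma} and \ref{lemma:sumlocalizedA}; as noted in the remark after Lemma~\ref{lemma:sumlocalizedsigma}, this is precisely why $\ell_B=N^{-\kappa/2+\beta/12}$ was chosen, and a mere ``number of boxes'' bound would be off by a factor $N^{\beta/4}$.

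\textbf{Step 3: conclusion and main obstacle.} Finally I would invoke Corollary~\ref{cor:prelim-2}: since $\|A\|_2\le CN^{(3\kappa-1)/2}$ by \eqref{eq:AL2norm}, for each $j$
\[
|\langle\xi,\cC_{K^{(j)}}\xi\rangle|\le CN^{(3\kappa-1)/2}\Big(\|K^{(j)}\|_2+N^{-\beta/4+\epsilon}\sum_{u,v\in\Lambda_B}\|K^{(j)}_{u,v}\|_2\Big)\le CN^{(3\kappa-1)/2+(5\kappa-1)/2}=CN^{4\kappa-1},
\]
choosing $\epsilon<\beta/8$ (so that $N^{-\beta/4+\epsilon+\epsilon}\le 1$); summing the five pieces and recombining real and imaginary parts gives $|\langle e^{A^*\Theta-\hc}\Omega,\tilde\cC_N e^{A^*\Theta-\hc}\Omega\rangle|\le CN^{4\kappa-1}$. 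I expect Step 2 — the pointwise decay and localization of the convolution kernels $K^{(j)}$ — to be the main obstacle: it is routine given the machinery of Lemmas~\ref{lemma:propertiesquadratickernels}--\ref{lemma:sumlocalizedA}, but one must keep track of which of the five kernels are genuinely three-body and redo the derivative bounds of Lemma~\ref{lemma:propertiesA} for them, since the decay-based (rather than support-size) localization is the only one strong enough here.
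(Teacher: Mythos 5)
Your strategy is essentially the paper's: split the coefficient into a handful of kernels each built from two or three factors in $\{\sigma_\bullet,\gamma_\bullet-1\}$, verify $\|K^{(j)}\|_2\leq CN^{(5\kappa-1)/2}$, control the localized sums $\sum_{u,v}\|K^{(j)}_{u,v}\|_2$, and feed everything into Lemma~\ref{lm:prelim1}/Corollary~\ref{cor:prelim-2}. Your Step~1 is correct (the paper uses four kernels rather than five, keeping a harmless factor $\gamma_r$ via $\gamma_r^2=1+\sigma_r^2$, but this is immaterial). The one genuine divergence is Step~2 for the kernels that are not products of a function of $y$ and a function of $z$ (your terms carrying both $\gamma_{p+r}-1$ or $\sigma_{p+r}$ \emph{and} a factor indexed by $r$ or $p$): the paper never localizes these three-body kernels directly. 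Instead it writes them as $\int dw\,f(w)\tilde K(w+y,w+z)$ with $\tilde K$ a \emph{product} kernel and $f$ a convolution of $V_N$ with $\check\gamma$, and invokes Corollary~\ref{cor:prelim3}, paying only $\|f\|_1\leq CN^{\beta/8+\epsilon}$ against the $N^{-\beta/4}$ gain. Your plan to redo the pointwise-decay analysis of Lemma~\ref{lemma:propertiesA} for these kernels can be made to work (most cleanly via Minkowski's inequality, which reduces the localized sum to $\|f\|_1\sup_w\big(\sum_u\|g(w+\cdot)\mathbbm 1_{B_u}\|_2\big)\big(\sum_v\|h(w+\cdot)\mathbbm 1_{B_v}\|_2\big)$ and then to Lemma~\ref{lemma:sumlocalizedsigma} on translated lattices), but note that your claimed bound $\sum_{u,v}\|K^{(j)}_{u,v}\|_2\leq CN^{(5\kappa-1)/2+\epsilon}$ is too optimistic for these terms: the smearing function $f$ costs an extra $N^{\beta/8+\epsilon}$, exactly as in the paper's route. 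This loss is harmless — the $N^{-\beta/4+\epsilon}$ prefactor in Corollary~\ref{cor:prelim-2} still leaves you with $CN^{4\kappa-1-\beta/8+C\epsilon}\leq CN^{4\kappa-1}$ — so your conclusion stands, but the intermediate estimate as stated should be weakened, or you should adopt the paper's averaging trick, which is both shorter and avoids re-deriving any decay bounds.
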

\begin{proof}
We decompose $\tilde\cC_N = \sum_{j=1}^4 \tilde\cC^{(j)}_N$, where 
 \[ \tilde\cC^{(j)}_N = \sum_{p,r \in \Lambda^*} K^{(j)}_{r,p} a_{p+r} a_{-r} a_{-p} \]
with the coefficients 
\[ \begin{split} K^{(1)}_{r,p} &=  N^{\kappa-1} N_0^{1/2} \hat{V} (r/N^{1-\kappa}) (\gamma_r - 1) \sigma_p \\ K^{(2)}_{r,p} &= N^{\kappa-1} N_0^{1/2} \hat{V} (r/N^{1-\kappa})\gamma_r (\gamma_{p+r}-1) \sigma_p  \\
K^{(3)}_{r,p} &= N^{\kappa-1} N_0^{1/2} \hat{V} (r/N^{1-\kappa}) \sigma_{p+r} \sigma_r \\
K^{(4)}_{r,p} &= N^{\kappa-1} N_0^{1/2} \hat{V} (r/N^{1-\kappa}) \sigma_{p+r} \sigma_r (\gamma_p -1). \end{split} \]
With the momentum space bounds from Lemma \ref{lemma:propertiesquadratickernels}, we get $\| K^{(j)} \|_2 \leq C N^{(5\kappa -1)/2}$, for all $j=1, \dots ,4$. Switching to position space, we write, for $j=1,\dots ,4$,  
\[ \tilde\cC^{(j)}_N =  \int dx dy dz \, \check{K}^{(j)} (x-y , x-z) a_x a_y a_z \]
with 
\[ \begin{split} \check{K}^{(1)} (y,z) &= N^{\kappa-1}N_0^{1/2} [N^{3-3\kappa}V(N^{1-\kappa}\cdot)* (\widecheck{\gamma-1})](y) \check\sigma(z) \\
\check{K}^{(2)} (y,z) &= N^{\kappa-1}N_0^{1/2} \int dw \, [N^{3-3\kappa}V(N^{1-\kappa}\cdot) * \check\gamma](w) (\widecheck{\gamma-1})(w+y)\check\sigma(w+z)  \\ 
\check{K}^{(3)} (y,z) &= N^{\kappa-1}N_0^{1/2}  [N^{3-3\kappa}V(N^{1-\kappa}\cdot ) * \check\sigma](y) \check\sigma (z)\\ 
\check{K}^{(4)} (y,z) &= N^{\kappa-1}N_0^{1/2} \int dw \, [N^{3-3\kappa}V(N^{1-\kappa}\cdot)*  \check\sigma](w) \check\sigma (w+y)(\widecheck{\gamma-1})(w+z) .\end{split} \]
For $u \in \Lambda_B= \Lambda \cap \ell_B \bZ^3$, we recall $\check{\sigma}_u =  \check\sigma \mathbbm{1}_{B_u}$ and $\widecheck{(\gamma-1)}_u = \widecheck{(\gamma -1)} \mathbbm{1}_{B_u}$, with $B_u = \{x\in\Lambda, \inf_{v\in \Lambda_B} \abs{x-v} =\abs{x-u} \}$. For $u,v \in \Lambda_B$, we define the localized kernel 
\[ \begin{split} K^{(1)}_{u,v} (y,z) &= N^{\kappa-1} N_0^{1/2} [N^{3-3\kappa} V (N^{1-\kappa} \cdot) * \widecheck{(\gamma-1)}_u] (y) \check{\sigma}_v (z) \end{split} \] 
and we notice that $\| K^{(1)}_{u,v} \|_2 \leq C N^{\kappa-1/2} \| \widecheck{(\gamma-1)}_u \|_2 \| \check{\sigma}_v \|_2$. With Lemma \ref{lemma:sumlocalizedsigma} we find 
\[ \sum_{u,v \in \Lambda_B} \| K^{(1)}_{u,v} \|_2 \leq C N^{(5\kappa-1)/2 + \epsilon} \]
for any $\epsilon > 0$, if $\alpha$ and then $n$ are large enough. Recalling that $\| K^{(1)} \|_2 \leq C N^{(5\kappa-1)/2}$, we obtain from Corollary \ref{cor:prelim-2} that 
\[ \big| \la e^{A^*\Theta-\hc}\Omega,  \tilde\cC^{(1)}_N e^{A^*\Theta-\hc}\Omega\ra \big| \leq C N^{4\kappa-1} (1+ N^{ -\beta/4+\epsilon}) \leq C N^{4\kappa-1}\]
if $\epsilon > 0$ is chosen small enough (and $\alpha$ and then $n$ are large enough). 
The expectation of $\tilde{\cC}^{(3)}_N$ can be bounded analogously. To control the contribution of $\tilde{\cC}^{(2)}_N$, we observe that 
\[ \check{K}^{(2)} (y,z) =  \int dw \, f (w) \tilde{K}^{(2)} (w + y , w+z) \]
with $f (w) = [N^{3-3\kappa} V (N^{1-\kappa} \cdot ) * \check{\gamma}] (w)$ and $\tilde{K}^{(2)} (y,z) = N^{\kappa-1}N_0^{1/2} \widecheck{(\gamma-1)} (y) \check{\sigma} (z)$. From Lemma \ref{lemma:sumlocalizedsigma}, we have 
\[ \sum_{u,v \in\Lambda_B} \| \tilde{K}^{(2)}_{u,v} \|_2 \leq N^{5\kappa/2-1/2+\epsilon} .\]
Since moreover \[ | \langle \Omega, [ \tilde{\cC}_N^{(2)} , A^* ] \Omega \rangle |  \leq C \| K^{(2)} \|_2 \| A \|_2 \leq C N^{4\kappa-1} \] 
and $\| f \|_1 \leq C N^{\beta/8+\epsilon}$ (with the bounds in Lemma \ref{lemma:propertiesquadratickernels}, we find $\| \widecheck{\gamma-1} \|_1 \leq C N^{\beta/8+\epsilon}$, for any $\epsilon > 0$), Corollary \ref{cor:prelim3} implies that 
\[ \big| \la e^{A^*\Theta-\hc}\Omega,  \tilde\cC^{(2)}_N e^{A^*\Theta-\hc}\Omega\ra \big| \leq C N^{4\kappa-1} (1 + N^{-\beta/8 + \epsilon} ) \leq C N^{4\kappa-1} \]
if $\epsilon > 0$ is chosen sufficiently small. The expectation of $\tilde\cC^{(4)}_N$ can be handled analogously. 
\end{proof}

In the next lemma, we show a bound for the expectation of the operator $\tilde{\cN} = \sum_{p \in \Lambda^*} \sigma_p^2 a_p^* a_p$. Since it will be useful in the next sections, we actually prove a more general estimate, controlling the expectation of general non-negative quadratic operators, diagonal in momentum space. For $k \in L^1 (\Lambda )$, we introduce the notation   
\[ \cN_k = \int dz dz' \, k (z-z') a_z^* a_{z'} \, .\]
\begin{lemma}
\label{lemma:N_g}
Let $h \in L^2 (\Lambda )$, real-valued. We assume that $h (x) = \sum_{u \in \Lambda_B} h_u (x)$, where $h_u \in L^2 (\Lambda )$ for all $u \in \Lambda_B$, with $h_u (x) = 0$, if $|x-u| > 10 \ell_B$. 
%
%
Then, for every $\epsilon > 0$ there is a constant $C > 0$ such that  
\begin{equation}\label{eq:Nhh-claim} \begin{split}  \langle e^{t A^* \Theta - \hc} \Omega , &\cN_{h \star h} e^{t A^* \Theta- \hc} \Omega \rangle \\ &\leq  C  \langle \Omega, \big[ A , \big[ \cN_{h\star h} , A^* \big] \big] \Omega \rangle + C N^{-5\beta/8 + \epsilon} \Big(\sum_{u \in \Lambda_B} \| h_u \|_2 \Big)^2 \end{split} \end{equation} 
for all $t \in [0;1]$, if $\alpha$ and then $n \in \bN$ are large enough. Here, we used the notation \[ h \star h (x) = \int h (y-x) h (y) dy \] ($h \star h$ is the usual convolution, if $h (-y) = h(y)$). In particular, recalling that $\tilde{\cN} = \cN_{\check{\sigma} \star \check{\sigma}}$, we find $C > 0$ such that, for all $t \in [0;1]$, 
\[ \langle e^{t A^* \Theta - \hc} \Omega , \tilde{\cN} e^{t A^* \Theta- \hc} \Omega \rangle \leq C N^{3\kappa-1} .\]
\end{lemma}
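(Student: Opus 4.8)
The plan is to combine a Duhamel expansion with the localization machinery already set up, exploiting crucially that $\cN_{h\star h}$ is a \emph{positive} operator. Since $h$ is real-valued, writing $b_x = \int h(y-x)\,a_y\,dy$ one has $\cN_{h\star h} = \int dx\, b_x^* b_x \geq 0$, so that, with $\xi_t := e^{t(A^*\Theta - \Theta A)}\Omega$,
\[ \langle \xi_t, \cN_{h\star h}\xi_t\rangle = \int dx\, \| b_x\xi_t\|^2. \]
Because $b_x\Omega = 0$ and $[b_x,A]=0$, a Duhamel expansion gives $b_x\xi_t = \int_0^t e^{(t-s)(A^*\Theta-\Theta A)}\big([b_x,A^*]\Theta + A^*[b_x,\Theta] - [b_x,\Theta]A\big)\xi_s\,ds$. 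The two terms containing $[b_x,\Theta]$ are handled exactly as the analogous terms in the proofs of Lemma \ref{lm:prelim1} and Lemma \ref{lm:boundN}, using Lemma \ref{lemma:propertiesL} (in particular \eqref{eq:norm1-Theta}, \eqref{eq:easyboundNm}, \eqref{eq:AThetaA}); after squaring and integrating over $x$ they produce $O(N^{-\nu})$ for any $\nu>0$, once $\alpha$ and then $n$ are large. For the leading term, $[b_x,A^*]$ is quadratic and purely creation; by the symmetrizations recorded in \eqref{eq:A3A6} it equals $\cE_x^* := \int dy\, h(y-x)\int du\,dv\, \check A^{(3)}(y-u,y-v)\,a_u^* a_v^*$. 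Hence, by Cauchy--Schwarz in the $s$-integral,
\[ \langle \xi_t, \cN_{h\star h}\xi_t\rangle \leq 2t\int_0^t ds\, \langle \Theta\xi_s, \mathcal{M}\,\Theta\xi_s\rangle + C N^{-\nu}, \qquad \mathcal{M} := \int dx\, \cE_x \cE_x^* \geq 0. \]

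The next step is to analyse $\mathcal{M}$. Performing the $x$-integral turns $h(y-x)h(y'-x)$ into $(h\star h)(y-y')$, and normal-ordering the two annihilation operators of $\cE_x$ past the two creation operators of $\cE_x^*$ decomposes $\mathcal{M} = \mathcal{M}_0 + \mathcal{M}_2 + \mathcal{M}_4$ into a constant, a diagonal quadratic operator, and a normal-ordered quartic operator. The constant is the announced main term: since $b_x\Omega=0$,
\[ \mathcal{M}_0 = \langle\Omega,\mathcal{M}\,\Omega\rangle = \int dx\,\|\cE_x^*\Omega\|^2 = \int dx\,\|b_x A^*\Omega\|^2 = \langle A^*\Omega,\cN_{h\star h}A^*\Omega\rangle = \langle\Omega,[A,[\cN_{h\star h},A^*]]\Omega\rangle, \]
and, using $\|\Theta\xi_s\|\leq1$ and $t\leq1$, its contribution is bounded by $2\langle\Omega,[A,[\cN_{h\star h},A^*]]\Omega\rangle$. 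The contributions of $\mathcal{M}_2$ and $\mathcal{M}_4$ are estimated using: the decompositions $h=\sum_u h_u$ and $\check A^{(3)}=\sum_{u,v}\check A^{(3)}_{u,v}$ into $\ell_B$-localizations; the localized sum bound \eqref{eq:locAj}, $\sum_{u,v}\|\check A^{(3)}_{u,v}\|_2\leq CN^{(3\kappa-1)/2+\epsilon}$; the cutoff estimate \eqref{eq:localNbound}, which lets all but one local number operator $\cN_{\ell_B}(\cdot)$ acting on a vector in $\mathrm{Ran}\,\Theta$ be replaced by $N^\epsilon$; the bound $\langle\xi_s,\cN\xi_s\rangle\leq CN^{3\kappa-1}$ from Lemma \ref{lm:boundN} for the single remaining global number operator; and Young's/Schur's inequality for the convolution with $h\star h$, together with $\|h\star h\|_1\leq(\sum_u\|h_u\|_1)^2\leq C\ell_B^3(\sum_u\|h_u\|_2)^2$. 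The diagonal term $\mathcal{M}_2$, which is bounded by $C\|h\|_1^2 N^{3\kappa/2-1}(\cN+\tilde\cN)$, is fed back into a Gronwall inequality (one first establishes the $\tilde\cN$-estimate, i.e.\ the case $h=\check\sigma$, where the feedback closes on $\cN_{h\star h}$ itself). Careful bookkeeping of the powers of $\ell_B = N^{-\kappa/2+\beta/12}$ and Gronwall's lemma on $[0,1]$ then yield \eqref{eq:Nhh-claim}.

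For the final assertion about $\tilde\cN = \cN_{\check\sigma\star\check\sigma}$, take $h=\check\sigma$, $h_u=\check\sigma_u$. By Lemma \ref{lemma:sumlocalizedsigma}, $\sum_u\|\check\sigma_u\|_2\leq CN^{3\kappa/4+\epsilon}$, so the error term in \eqref{eq:Nhh-claim} is at most $CN^{-5\beta/8+3\kappa/2+\epsilon}=CN^{3\kappa-1-\beta/8+\epsilon}\ll N^{3\kappa-1}$. For the main term, $\langle A^*\Omega,\tilde\cN A^*\Omega\rangle = \sum_p\sigma_p^2\|a_pA^*\Omega\|^2\leq CN^{3\kappa/2-1}\sum_p\sigma_p^2(1+\sigma_p^2)\leq CN^{3\kappa-1}$, using $\sum_q|A^{(3)}_{p,q}|^2\leq CN^{3\kappa/2-1}(1+\sigma_p^2)$ from Lemma \ref{lm:A3A6} and $\sum_p\sigma_p^2,\ \sum_p\sigma_p^4\leq CN^{3\kappa/2}$ from Lemma \ref{lemma:propertiesquadratickernels}.

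The main obstacle is the last step. A naive bound on the quartic remainder $\mathcal{M}_4$ (and on the $\tilde\cN$-part of $\mathcal{M}_2$) only gives an error of order $N^{-\beta/2+\epsilon}(\sum_u\|h_u\|_2)^2$, which is of the \emph{same} order as the main term when $h=\check\sigma$, and does not even absorb the powers of $N^\epsilon$. Gaining the extra factor $N^{-\beta/8}$ requires exploiting the $\ell_B$-localization of both $\check A^{(3)}$ and $h$ together with the local particle number cutoff $\Theta$ as sharply as possible, and treating the diagonal quadratic feedback by a genuine Gronwall argument rather than by a crude operator inequality.
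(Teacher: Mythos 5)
Your overall architecture is a genuine and rather elegant variant of the paper's argument. Writing $\cN_{h\star h}=\int dx\, b_x^* b_x$ with $b_x=\int h(y-x)a_y\,dy$, expanding the vector $b_x\xi_t$ by Duhamel (using $b_x\Omega=0$ and $[b_x,A]=0$), and then applying Cauchy--Schwarz in $s$ reduces everything to the single operator $\mathcal{M}=\int dx\,\cE_x\cE_x^*$ with $\cE_x^*=[b_x,A^*]$. This exploits positivity in a way the paper does not, correctly produces the main term $\mathcal{M}_0=\la A^*\Omega,\cN_{h\star h}A^*\Omega\ra=\la\Omega,[A,[\cN_{h\star h},A^*]]\Omega\ra$ with an explicit constant, and -- since $\cE_x\cE_x^*$ contains only four creation/annihilation operators -- eliminates the analogue of the paper's term $T_{4,1}$ altogether. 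The quadratic piece $\mathcal{M}_2$ and the Gronwall closure (doing $h=\check\sigma$ first so that the $\tilde\cN$-feedback closes on itself) are handled as in the paper and are fine.

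The gap is exactly where you flag "the main obstacle", and your proposed remedy there is not the right one. Your $\mathcal{M}_4$ is, after carrying out the $x$-integration, precisely the operator
\[ \int dy\,dy'\,(h\star h)(y-y')\int \check{A}^{(3)}(y-u,y-v)\,\check{A}^{(3)}(y'-u',y'-v')\,a^*_{u'}a^*_{v'}a_u a_v , \]
i.e.\ the paper's $T_{4,2}$, alias $\ddot{\cN}_{h\star h}$ from \eqref{eq:dotN}. No amount of sharper $\ell_B$-localization or more careful Gronwall bookkeeping extracts the missing factor $N^{-\beta/8}$ from a \emph{direct} Cauchy--Schwarz estimate of $\la\Theta\xi_s,\mathcal{M}_4\Theta\xi_s\ra$: the localization of $h$ and of $\check{A}^{(3)}$ and the cutoff $\Theta$ are already used optimally in the $N^{-\beta/2+\epsilon}(\sum_u\|h_u\|_2)^2$ bound you obtain, and the two kernels in $\mathcal{M}_4$ are linked only through $h\star h$, so there is no further geometric constraint left to exploit. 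What actually closes the estimate is a different mechanism: one must expand $\la\xi_s,\mathcal{M}_4\xi_s\ra$ itself by two further Duhamel/commutator iterations (noting $\la\Omega,\mathcal{M}_4\Omega\ra=0$), producing quintic and cubic operators with three and four $\check{A}$-kernels, whose constant and quadratic parts can then be summed in momentum space using \eqref{eq:sumqA} and whose remaining quartic parts carry enough extra factors of $\ell_B^{3/2}\sum\|\check{A}^{(j)}_{u,v}\|_2\sim N^{-\beta/4+\epsilon}$ to be negligible. Without this iterated expansion your bound for $h=\check\sigma$ is stuck at $CN^{3\kappa-1+\epsilon}$ (so the stated $\epsilon$-free bound on $\tilde\cN$ fails), and for general $h$ you only reach $N^{-\beta/2+\epsilon}(\sum_u\|h_u\|_2)^2$ instead of the claimed $N^{-5\beta/8+\epsilon}(\sum_u\|h_u\|_2)^2$, which is the precision actually needed downstream (e.g.\ for $\ddot{\cN}_{A_y\star A_y}$ in the cubic-energy estimates). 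The iterated expansion is compatible with your framework, but it is the essential missing step, not a refinement of the localization.
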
 

%

{\it Remark.} For $k \in L^1 (\Lambda )$, we also define the operator 
\begin{equation}\label{eq:dotN} \ddot{\cN}_k = \int dz dz' \, k(z-z') [a_{z'}, A^* ] [ a_z^*, -A] \end{equation} 
contributing to the second derivative of $\cN_k$ (hence the notation). From the proof of Lemma \ref{lemma:N_g}, we also obtain  
\[ \big| \la e^{s A^* \Theta -\hc} \Omega, \ddot{\cN}_{h \star h} e^{s A^* \Theta- \hc} \Omega \ra \big| \leq C N^{-5\beta/8 + \epsilon} \big( \sum_{u \in \Lambda_B} \| h_u \| \big)^2 \]
for all $s \in [0;1]$ and for any $\epsilon > 0$, if $\alpha$ and then $n$ are chosen large enough in the definition of the cutoff $\Theta$. This bound will be used in the next subsection. 

\begin{proof}
Throughout the proof, we will use the notation $g = h \star h$. By Cauchy-Schwarz, we have 
\[ \| h \|_1 = \int |h (x)| dx \leq \sum_{u \in \Lambda_B} \int |h_{u} (x)| dx \leq  C \ell_B^{3/2}  \sum_{u \in \Lambda_B} \| h_{u} \|_2  \]
and thus 
\[ \begin{split} \| \hat{h} \|_\infty \leq C \ell_B^{3/2} \sum_{u \in \Lambda_B} \| h_{u} \|_2 , \qquad 
\| \hat{g} \|_\infty \leq \| g \|_1 \leq C \ell_B^{3} \big(\sum_{u \in \Lambda_B} \| h_{u} \|_2 \big)^2 .\end{split} \]

Setting $\xi_t = e^{t A^* \Theta - \hc} \Omega$ for $t \in [0;1]$, we find 
    \begin{equation*}
    \begin{split}
        &\la \xi_t , \cN_g \xi_t \ra = 2 \text{Re } 
        \int_0^t ds \la \xi_s , 
        \big( [\cN_g, A^*]
        +[\cN_g, A^*](\Theta-1)
        +A^* [\cN_g,(\Theta-1)] \big)
        \xi_s \ra.
    \end{split}
    \end{equation*}
All terms containing a factor $(\Theta-1)$ can be handled with Lemma \ref{lemma:propertiesL}, in particular with \eqref{eq:norm1-Theta}, \eqref{eq:easyboundNm}, \eqref{eq:AThetaA}. Since $\cN_g \leq \| \hat{g} \|_\infty \cN$, $\| A (\cN+1)^{-3/2}\| \leq C N^{(3\kappa-1)/2}$, it is easy to check that they can be included in the error proportional to $(\sum_u \| h_u \|_2)^2$ on the r.h.s. of (\ref{eq:Nhh-claim}), if we choose $\alpha >0$ and then $n \in \bN$ large enough in (\ref{eq:defcubictransform}). Also in the rest of the proof we will ignore these terms. With the definition (\ref{eq:A3A6}), we compute 
\begin{equation}\label{eq:NgAstar} 
    [\cN_g, A^*]
    = \int dzdz'dx_1dx_2  \, 
      g(z-z')\check A^{(3)} (x_1-x_2,x_1-z')a^*_za^*_{x_1}a^*_{x_2}.  
\end{equation}
Expanding once more, we arrive at
\[  \langle \xi_t , \cN_g \xi_t \rangle \leq 2 \text{Re } \int_0^t ds_1 \int_0^{s_1} ds_2 \, \langle \xi_{s_2} , [[ \cN_g , A^* ] , -A ] \xi_{s_2} \rangle + C N^{-5\beta/8 + \epsilon} \big(\sum_{u \in \Lambda_B} \| h_u \|_2 \big)^2 \]
where
\begin{equation}\label{eq:decoNg} 
[[\cN_g, A^*],-A]=\la \Omega, [A,[\cN_g,A^*]]\Omega \ra+T_2+T_{4,1}+T_{4,2}
\end{equation}
with 
\[ \begin{split} 
T_{2}&=-\int dzdz'dx_1dx_2dx_3 \, g(z-z')\check A^{(6)} (x_1-x_2,x_1-z')\check A^{(6)} (x_3-x_2,x_3-z)a^*_{x_1}a_{x_3}\\&\hspace{0.4cm} + \int dzdz'dx_1dx_2dx_3 \, g(z-z')\check A^{(3)} (x_1-x_2,x_1-z')\check A^{(6)} (x_3-x_2,x_3-x_1)a^*_{z}a_{x_3}
    \end{split}\]
    and 
\begin{equation*}
    \begin{split}
T_{4,1}&=\int dzdz'dx_1dx_2dx_3dx_4 \, g(z-z') \\ &\hspace{3cm} \times \check A^{(6)} (x_1-x_2,x_1-z')\check A^{(3)} (x_3-x_4,x_3-x_1)a^*_{z}a^*_{x_2}a_{x_3}a_{x_4}\\T_{4,2}&=\int dzdz'dx_1dx_2dx_3dx_4 \, g(z-z')  \\ &\hspace{3cm} \times \check A^{(3)} (x_1-x_2,x_1-z')\check A^{(3)} (x_3-x_4,x_3-z)a^*_{x_1}a^*_{x_2}a_{x_3}a_{x_4} 
   \end{split}
\end{equation*}
where we recall the kernels $\check{A}^{(3)}, \check{A}^{(6)}$ defined in (\ref{eq:A3A6}). 
Using (\ref{eq:symmApq}), we pass to momentum space, writing
\[ T_2 = \sum_{p,q \in \Lambda^*} \hat{g}_p \, A^{(6)}_{q,p} \, A^{(3)}_{q,p} \big( a_p^* a_p + a_q^* a_q + a_{-p-q}^* a_{-p-q} \big). \]
With (\ref{eq:symmApq}) and with Lemma \ref{lm:A3A6}, we estimate
\[\begin{split}  \big| \la \xi_{s_2},  T_2 \xi_{s_2} \ra \big| &\leq C N^{3\kappa/2-1} \| \hat{g} \|_\infty \la \xi_{s_2},  (\cN + \tilde{\cN}) \xi_{s_2} \ra \\ &\leq C N^{\beta/4-1} \big(\sum_{u \in \Lambda_B} \| h_u \|_2 \big)^2  \la \xi_{s_2} , (\cN + \tilde{\cN}) \xi_{s_2} \ra .\end{split} \]

 We now consider the term $T_{4,1}$. Using Lemma \ref{lemma:propertiesL} to insert the cutoff $\Theta$ (choosing $\alpha > 0$ and $n \in \bN$ large enough), it is enough to estimate 
 \begin{equation*}
    \begin{split}
        &\abs{\la \xi_{s_2}, \Theta T_{4,1} \Theta \xi_{s_2} \ra}\\
        &\leq 
        \int dy dz' dx_1 dx_2 dx_3 dx_4   
    \abs{h (y-z')}
    \abs{\check A^{(6)} (x_1-x_2,x_1-z')} \abs{\check A^{(3)} (x_3-x_4,x_3-x_1)} 
     \\
    &\hspace{2cm} \times \Big| \int dz\, h (y-z)\la \Theta \xi_{s_2} ,
    a^*_z a_{x_2}^*
     a_{x_3} a_{x_4}
    \Theta \xi_{s_2} \ra \Big|. \end{split} 
    \end{equation*}
Next, we localize $h$ and the kernels $\check{A}^{(3)}, \check{A}^{(6)}$. With Cauchy-Schwarz, we obtain 
    \[ \begin{split} 
    &\abs{\la \xi_{s_2}, \Theta T_{4,1}\Theta  \xi_{s_2} \ra}\\  
        &\leq \sum_{u,v,\tilde u, \tilde v, w \in \Lambda_B}
    \Big(\int dy dz' dx_1 dx_2 dx_3 dx_4 \, |h_{w} (y-z')|^2 \abs{\check A^{(3)}_{\tilde u,\tilde v} (x_3-x_4,x_3-x_1)}^2
   \\
    &\hspace{2cm}
\times        \mathbbm{1}(\abs{x_1-x_2-u}, \abs{x_1-z'-v}\leq 10 \ell_B)\Big\| 
    \int dz\, h (y-z) a_{x_2} a_z 
    \Theta \xi_{s_2} \Big\|^2\Big)^{1/2}
    \\ &\hspace{1cm}  
    \times\Big(\int dy dz' dx_1 dx_2 dx_3 dx_4   
    \abs{\check A^{(6)}_{u,v}(x_1-x_2,x_1-z')}^2
     \\
    &\hspace{2cm} \times \mathbbm{1}(\abs{x_3-x_4-\tilde{u}},\abs{x_3-x_1-\tilde{v}},\abs{y-z'-w} \leq 10 \ell_B)
    \norm{
     a_{x_3} a_{x_4}
    \Theta \xi_{s_2} }^2\Big)^{1/2}.\end{split}\]
Applying Lemma \ref{lemma:propertiesL} and then Lemma \ref{lm:A3A6} we find, for any $\epsilon > 0$, 
 \[ \begin{split}  \abs{\la \xi_{s_2} , &\Theta T_{4,1} \Theta \xi_{s_2} \ra} 
     \\ &\leq C \ell_B^{9/2}  N^{\epsilon/2} \big(\sum_{ w\in \Lambda_B} \norm{h_{w}}_2\big)
    \big(\sum_{u,v\in \Lambda_B} \norm{\check A^{(6)}_{u,v}}_2\big)\big(\sum_{\tilde{u},\tilde{v}\in \Lambda_B} \norm{\check A^{(3)}_{\tilde{u},\tilde{v}}}_2\big) 
    \\&\hspace{3cm}\times \| \cN^{1/2} \Theta \xi_{s_2} \| \Big( \int dy\; \Big\|  \int dz\, h (y-z)
     a_z 
    \Theta \xi_{s_2} \Big\|^2 \Big)^{1/2} \\  
    &\leq C N^{\beta/4-1+\epsilon} \big( \sum_u \| h_u \|_2 \big)^2 \langle \xi_{s_2} , \cN \xi_{s_2} \rangle + C \langle \xi_{s_2}, \cN_g \xi_{s_2} \ra + C N^{-5\beta/8+\epsilon}  \big( \sum_u \| h_u \|_2 \big)^2
    \end{split}
\]
if the parameters $\alpha > 0$ and $n$ in the definition of $\Theta$ are chosen sufficiently large. 

To bound $T_{4,2}$, we expand it once again, estimating 
\[ \begin{split} \text{Re } \int_0^t &ds_1 \int_0^{s_1} ds_2 \,  \langle \xi_{s_2} , T_{4,2} \xi_{s_2} \rangle \\ &\leq \text{Re }  \int_0^t ds_1 \int_0^{s_1} ds_2 \int_0^{s_2} ds_3 \, \la \xi_{s_3}, \Theta [T_{4,2}, A^* ] \Theta \xi_{s_3} \ra + C N^{-5\beta/8+\epsilon}  \big( \sum_u \| h_u \|_2 \big)^2 \end{split} \]
for $\alpha >0$ and $n\in \bN$ large enough (using again Lemma \ref{lemma:propertiesL} to insert $\Theta$). We compute  
\begin{equation*}
 [T_{4,2},A^*]= T_{4,2,3} + T_{4,2,5}
\end{equation*}
with 
\begin{equation*}
    \begin{split}
T_{4,2,3}&= \int dzdz'dx_1dx_2dx_3dx_4dx_5 \, g(z-z')\check A^{(3)} (x_1-x_2,x_1-z')\\&\hspace{3cm}\times \check A^{(3)} (x_3-x_4,x_3-z)\check A^{(6)} (x_5-x_3,x_5-x_4)a^*_{x_1}a^*_{x_2}a^*_{x_5}\\
T_{4,2,5}&= \int dzdz'dx_1dx_2dx_3dx_4dx_5dx_6 \, g(z-z')\check A^{(3)} (x_1-x_2,x_1-z')
\\&\hspace{3cm} \times \check A^{(6)} (x_3-x_4,x_3-z)A^{(3)} (x_5-x_6,x_5-x_4)a^*_{x_1}a^*_{x_2}a^*_{x_5}a^*_{x_6}a_{x_3}.
    \end{split}
\end{equation*}
For the quintic term, we find
\begin{equation}
\label{eq:t425}
    \begin{split}
        \abs{\la &\xi_{s_3}, \Theta T_{4,2,5} \Theta \xi_{s_3} \ra}\\ &\leq 
        \int dz dz' dx_1 dx_2 dx_3  dx_4 dx_5 dx_6 \,  
    \abs{g(z-z')} \abs{\check A^{(3)} (x_1-x_2,x_1-z')} \\ 
     &\hspace{.4cm} \times\abs{\check A^{(6)} (x_3-x_4,x_3-z)} \abs{\check A^{(3)} (x_5-x_6,x_5-x_4)}
     \norm{a_{x_{1}} a_{x_{2}} a_{x_{5}}a_{x_{6}}\Theta\xi_{s_3}}   \norm{a_{x_3}\Theta\xi_{s_3}}.
     \end{split} \end{equation} 
 Localizing the kernels $\check{A}^{(3)}, \check{A}^{(6)}$ and applying Cauchy-Schwarz, Lemma \ref{lemma:propertiesL} and Lemma~\ref{lm:A3A6}, we obtain  
  \[  \begin{split}   \abs{\la &\xi_{s_3}, \Theta T_{4,2,5} \Theta \xi_{s_3} \ra}\\ &\leq 
    \sum_{u_i ,v_i \in\Lambda_B : \, i=1,2,3}
        \Big(\int dzdz'dx_1 dx_2 dx_3  dx_4 dx_5 dx_6   \, 
    \abs{g(z-z')}\abs{\check A^{(6)}_{u_2,v_2}(x_3-x_4,x_3-z)}^2  \\ & \hspace{.5cm} \times
     \mathbbm{1}(|x_1-x_2-u_1|,|x_1-z'-v_1|,|x_5-x_6-u_3|,|x_5-x_4-v_3|\leq 10\ell_B)\\& \hspace{.5cm}\times \norm{a_{x_{1}} a_{x_{2}} a_{x_{5}}a_{x_{6}}\Theta\xi_{s_3}}^2\Big)^{1/2}\\
     &\times \Big(\int dzdz'dx_1 dx_2 dx_3  dx_4 dx_5 dx_6   \, 
    \abs{g(z-z')} \abs{\check A^{(3)}_{u_1,v_1}(x_1-x_2,x_1-z')}^2 \\ 
     &\hspace{.5cm} \times \abs{\check A^{(3)}_{u_3,v_3}(x_5-x_6,x_5-x_4)}^2    
       \mathbbm{1}(|x_3-x_4-u_2|,|x_3-z-v_2|\leq 10\ell_B) \norm{a_{x_{3}}\Theta\xi_{s_3}}^2\Big)^{1/2}\\
    &\leq C N^{\epsilon} \Big(\sum_{u,v\in\Lambda_B} \ell_B^{3/2}\norm{ \check A^{(3)}_{u,v}}\Big)^{2}\Big(\sum_{u,v\in\Lambda_B} \ell_B^{3/2}\norm{ \check A^{(6)}_{u,v}}\Big)
    \norm{g}_1 
    \la \Theta \xi_{s_3} , \cN \Theta \xi_{s_3} \ra\\
    &\leq C N^{3\beta/8 -1 + \epsilon} \big( \sum_{u \in \Lambda_B} \| h_u \| \big)^2 \, \la \xi_{s_3} , \cN \xi_{s_3} \ra + C N^{-5\beta/8+\epsilon}  \big( \sum_u \| h_u \|_2 \big)^2 
    \end{split}\]
for $n \in \bN$ large enough. As for the cubic terms $T_{4,2,3}$, we need to expand it a last time, writing  
\[\begin{split} &\text{Re }  \int_0^1 ds_1 \int_0^{s_1} ds_2 \int_0^{s_2} d{s_3} \, \la \xi_{s_3}, T_{4,2,3}  \xi_{s_3} \ra \\ &\leq \text{Re } \int_0^1 ds_1 \int_0^{s_1} ds_2 \int_0^{s_2} d{s_3}   \int_0^{s_3} ds_4  \, \la \xi_{s_4} , \Theta [ T_{4,2,3} , A^* ] \Theta  \xi_{s_4} \ra  + C N^{-5\beta/8+\epsilon}  \big( \sum_u \| h_u \|_2 \big)^2. \end{split} \]
A long but straightforward computation shows that  
\begin{equation*}
 [T_{4,2,3},A^*]= T_{4,2,3,0} + T_{4,2,3,2}+T_{4,2,3,4}
\end{equation*}
with 
\begin{equation*}
    \begin{split}
T_{4,2,3,0}&= \int dzdz'dx_1 \dots dx_5\;g(z-z')\check A^{(3)} (x_1-x_2,x_1-z')\\&\hspace{1cm}\times \check A^{(3)} (x_3-x_4,x_3-z)\check A^{(6)} (x_5-x_3,x_5-x_4)\check A^{(6)} (x_1-x_2,x_1-x_5)\\
T_{4,2,3,2}&= \int dzdz'dx_1 \dots dx_6 \;g(z-z')\check A^{(3)} (x_1-x_2,x_1-z')\check A^{(3)} (x_3-x_4,x_3-z) \\&\hspace{1cm}\times \check A^{(6)} (x_5-x_3,x_5-x_4)\check A^{(6)} (x_6-x_1,x_6-x_2)a^*_{x_5}a_{x_6}\\&\hspace{0.4cm}+\int dzdz'dx_1 \dots dx_6 \;g(z-z')\check A^{(6)} (x_1-x_2,x_1-z')\check A^{(3)} (x_3-x_4,x_3-z) \\&\hspace{1cm}\times  \check A^{(6)} (x_5-x_3,x_5-x_4)\check A^{(6)} (x_6-x_5,x_6-x_2)a^*_{x_1}a_{x_6}\\ T_{4,2,3,4}&= \int dzdz'dx_1 \dots dx_7\;g(z-z')\check A^{(6)} (x_1-x_2,x_1-z')\check A^{(3)} (x_3-x_4,x_3-z) \\&\hspace{1cm}\times \check A^{(6)} (x_5-x_3,x_5-x_4)\check A^{(3)} (x_6-x_7,x_6-x_1)a^*_{x_5}a^*_{x_2}a_{x_6}a_{x_7}\\&\hspace{0.4cm}+\int dzdz'dx_1 \dots dx_7\;g(z-z')\check A^{(3)} (x_1-x_2,x_1-z') \check A^{(3)} (x_3-x_4,x_3-z) 
\\&\hspace{1cm} \times \check A^{(6)} (x_5-x_3,x_5-x_4)\check A^{(3)} (x_6-x_7,x_6-x_5)a^*_{x_1}a^*_{x_2}a_{x_6}a_{x_7}.\\
    \end{split}
\end{equation*}
For $T_{4,2,3,0}$ and $T_{4,2,3,2}$ we switch to momentum space. With \eqref{eq:A3A6} and Lemma \ref{lm:A3A6}, we obtain   
\begin{equation*}
    \begin{split}
        \abs{T_{4,2,3,0}}
        &\leq 
        \sum_{p,q,r \in \Lambda^*} |\hat{g}_p| \abs{A^{(6)}_{r,p}}\abs{A^{(3)}_{r,p}}\abs{A^{(6)}_{q,p}}\abs{A^{(3)}_{q,p}} \\ & \leq C \| \hat{g} \|_\infty \sum_{p,q,r \in \Lambda^*} |A^{(6)}_{r,p} |^2 |A^{(3)}_{q,p}|^2 \\  & \leq C N^{-\beta/2} \|\hat{g}\|_\infty \sum_{p,r \in \Lambda^*}(1+\sigma_p^2)\abs{A^{(6)}_{r,p}}^2 \leq C N^{-3\beta/4+\epsilon} \big(\sum_u \| h_u \|_2 \big)^2  
    \end{split}
\end{equation*}
for any $\epsilon > 0$. Similarly,
\begin{equation*}
    \begin{split}
       | \langle \xi_{s_4} &, T_{4,2,3,2} \xi_{s_4} \ra | \\
        &\leq C
        \sum_{p,q,r \in \Lambda^*} |\hat{g}_p| \abs{A^{(6)}_{r,p}}\abs{A^{(3)}_{r,p}}\abs{A^{(6)}_{q,p}}\abs{A^{(3)}_{q,p}} \la \xi_{s_4} , (a_{p}^* a_{p} + a_{q}^* a_{q} + a_{p+q}^* a_{p+q}) \xi_{s_4} \ra 
    \\
    &\leq C N^{-\beta/2} \norm{\hat{g}}_\infty (1+ \| \sigma \|_\infty^2) 
        \sum_{j=3,6} \sum_{p,q \in \Lambda^*}  
    \abs{A^{(j)}_{q,p}}^2 \la \xi_{s_4} , (a_{p}^* a_{p} + a_{q}^* a_{q} + a_{p+q}^* a_{p+q}) \xi_{s_4} \ra \\
    &\leq C N^{-1+\epsilon} \big( \sum_u \| h_u \|_2 \big)^2 \la \xi_{s_4}, (\cN + \tilde\cN) \xi_{s_4} \ra .
    \end{split}
\end{equation*}
As for the quartic contribution, we find that, for any $\epsilon > 0$,  
\begin{equation*}
    \begin{split}
        \abs{ \la \Theta\xi_{s_4} , T_{4,2,3,4} \Theta\xi_{s_4} \ra}
     &\leq C N^{\epsilon} \Big(\big(\sum_{u,v \in \Lambda_B} \ell_B^{3/2}\norm{ \check A^{(3)}_{u,v}}\big)^{4} + \big(\sum_{u,v \in \Lambda_B} \ell_B^{3/2}\norm{ \check A^{(6)}_{u,v}}\big)^{4} \Big)
    \norm{g}_1 
    \la \xi_{s_4} , \cN \xi_{s_4} \ra \\ &\leq C N^{\beta/4 -1 + \epsilon}\big( \sum_u \| h_u \|_2 \big)^2  \la \xi_{s_4}, \cN \xi_{s_4}\ra,
    \end{split}
\end{equation*}
if $n \in \bN$ in the definition of $\Theta$ is large enough. This bound can be shown proceeding similarly as we did for $T_{4,2,5}$ in \eqref{eq:t425}; we omit the details.

Combining all the bounds we proved, we obtain 
\[ \begin{split} \la \xi_t , \cN_g \, \xi_t \rangle \leq \; &\text{Re } \langle \Omega, [ A, [ \cN_g, A^* ]]  \Omega \rangle  +  C N^{3\beta/8 -1 + \epsilon} \big( \sum_{u \in \Lambda_B} \| h_u \| \big)^2 \, \int_0^t ds \la \xi_s , (\cN + \tilde{\cN}) \xi_s \rangle \\ &+ C \int_0^t ds \la \xi_s , \cN_g \xi_s \ra + C N^{-5\beta/8+\epsilon} \big(\sum_{u \in \Lambda_B} \| h_u \|_2 \big)^2  \end{split}  \]
for all $t \in [0;1]$. With Lemma \ref{lm:boundN}, we arrive at
\begin{equation}\label{eq:boot} \begin{split}  \la \xi_t, \cN_g  \xi_t \rangle \leq \; &\text{Re } \langle \Omega, [ A, [ \cN_g, A^* ]]  \Omega \rangle + C N^{3\beta/8 -1 + \epsilon} \big( \sum_{u \in \Lambda_B} \| h_u \| \big)^2 \, \int_0^t ds \la \xi_s ,  \tilde{\cN} \xi_s \rangle \\ &+ C \int_0^t ds \la \xi_s , \cN_g \xi_s \ra + C N^{-5\beta/8+\epsilon} \big(\sum_{u \in \Lambda_B} \| h_u \|_2 \big)^2.  \end{split}  \end{equation}

Consider first the choice $h = \check{\sigma}$. By Lemma \ref{lemma:sumlocalizedsigma}, we have $\sum_u \| \check{\sigma}_u \|_2 \leq C N^{3\kappa/4+\epsilon}$, for any $\epsilon > 0$. Since $\tilde{\cN} = \cN_{\check{\sigma} * \check{\sigma}}$, (\ref{eq:boot}) implies that
\[ \begin{split} \la \xi_t, \tilde{\cN} \xi_t \rangle \leq \; &\text{Re } \langle \Omega, [ A, [ \tilde{\cN}, A^* ]]  \Omega \rangle + C (1 + N^{-\beta/8 + \epsilon}) \, \int_0^t ds \la \xi_s ,  \tilde{\cN} \xi_s \rangle  + C N^{3\kappa-1-\beta/8+\epsilon}  \end{split}  \]
Here, we have 
\[  \langle \Omega, [ A, [ \tilde{\cN}, A^* ]]  \Omega \rangle = \int dz dz' dx_1 dx_2 \, (\check{\sigma} * \check{\sigma} ) (z-z') A^{(3)} (x_1 - x_2, x_1 - z') A^{(6)} (z-x_1 , z-x_2) \]
and thus, passing to momentum space and using Lemma \ref{lm:A3A6}, 
\[ \big| \langle \Omega, [ A, [ \tilde{\cN}, A^* ]]  \Omega \rangle \big|  = \Big| \sum_{p,q \in \Lambda^*} \sigma_q^2 A^{(3)}_{p,q} A^{(6)}_{p,q} \Big| \leq \sum_{p,q \in \Lambda^*} \sigma_q^2 |A^{(3)}_{p,q}|  | A^{(6)}_{p,q}| \leq C N^{3\kappa-1} .\] 
Hence, choosing $\epsilon > 0$ small enough, we find 
 \[ \begin{split} \la \xi_t, \tilde{\cN} \xi_t \rangle \leq \; &C N^{3\kappa-1}+ C \int_0^t ds \la \xi_s ,  \tilde{\cN} \xi_s \rangle .\end{split}  \]
By Gronwall, we conclude that $\la \xi_t , \tilde{\cN} \xi_t \rangle \leq C N^{3\kappa-1}$ 
for all $t \in [0;1]$. 

Inserting this bound back on the r.h.s. of (\ref{eq:boot}) we find, now for a general $h \in L^2 (\Lambda )$,  
\[  \la \xi_t, \cN_{h \star h}  \xi_t \rangle \leq \; \text{Re } \langle \Omega, [ A, [ \cN_{h\star h} , A^* ]]  \Omega \rangle +  C \int_0^t ds \la \xi_s , \cN_g \xi_s \ra + C N^{-5\beta/8+\epsilon} \big(\sum_{u \in \Lambda_B} \| h_u \|_2 \big)^2.  \] 
Thus, again by Gronwall's lemma, we conclude that 
\[ \la \xi_t , \cN_{h*h} \xi_t \ra \leq C | \langle \Omega, [ A, [ \cN_{h*h} , A^* ]]  \Omega \rangle | + C N^{-5\beta/8+\epsilon} \big(\sum_{u \in \Lambda_B} \| h_u \|_2 \big)^2 \]
for all $t \in [0;1]$. 

The remark after the lemma follows by the remark that the operator $\ddot{\cN}_{h \star h}$ coincides exactly with the term $T_{4,2}$, introduced in (\ref{eq:decoNg}). 
\end{proof}

Finally, we estimate the expectation $\la \xi , \tilde{\cV}_N \xi \ra$, with $\tilde{\cV}_N$ as defined in (\ref{eq:CtCtV}). 
\begin{lemma} \label{lm:tildeVN}
We have 
\[ \la e^{A^* \Theta - \hc} \Omega, \tilde{\cV}_N e^{A^* \Theta - \hc} \Omega \ra \leq C N^{4\kappa-1} \]
if the parameters $\alpha$ and then $n$ in the definition (\ref{eq:def-Theta}) of the cutoff $\Theta$ are large enough. 
\end{lemma}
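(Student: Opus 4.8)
The plan is to mimic the proofs of Lemma \ref{lm:prelim1} and Lemma \ref{lemma:N_g}: expand $\xi = e^{A^*\Theta-\hc}\Omega$ by Duhamel and estimate the resulting terms using the local number cutoff and the momentum- and position-space bounds on $A$, $\sigma$, $\gamma$. Since $\tilde\cV_N$ is quartic with $\tilde\cV_N\Omega=0$, the first Duhamel step gives $\la\xi,\tilde\cV_N\xi\ra = 2\,\mathrm{Re}\int_0^1 ds\,\la\xi_s,[\tilde\cV_N,A^*\Theta-\Theta A]\xi_s\ra$ (the boundary term vanishes since $\la\Omega,[\tilde\cV_N,A^*]\Omega\ra=\la\Omega,\tilde\cV_N A^*\Omega\ra=0$), and a second expansion isolates the constant $\la\Omega,[[\tilde\cV_N,A^*],-A]\Omega\ra+\la\Omega,[[\tilde\cV_N,-A],A^*]\Omega\ra = 2\la A^*\Omega,\tilde\cV_N A^*\Omega\ra$ (using $A\Omega=0$, $\tilde\cV_N\Omega=0$ and $AA^*\Omega=\|A^*\Omega\|^2\Omega$), the $\Theta$'s being removable at negligible cost through \eqref{eq:norm1-Theta}, \eqref{eq:easyboundNm}, \eqref{eq:AThetaA} and $\|A^*(\cN+1)^{-3/2}\|\leq C\|A\|_2\leq CN^{(3\kappa-1)/2}$, exactly as in Lemma \ref{lm:prelim1}. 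So $\la\xi,\tilde\cV_N\xi\ra$ equals a finite multiple of $\la A^*\Omega,\tilde\cV_N A^*\Omega\ra$ plus operator-valued remainders that will be expanded once or twice more.

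First I would bound the leading term $\la A^*\Omega,\tilde\cV_N A^*\Omega\ra$ directly in momentum space. With $A^*\Omega=\sum_{p,q}A_{p,q}a^*_{p+q}a^*_{-p}a^*_{-q}\Omega$, contracting the four external legs of $\tilde\cV_N$ against the six creation operators of $A^*\Omega$ produces a finite sum of terms of the shape $N^{\kappa-1}\sum \hat V(r/N^{1-\kappa})\,\big(\text{product of }\gamma\text{'s and }\sigma\text{'s, with at least one }(\gamma_\bullet-1)\text{ or }\sigma_\bullet\big)\,A^{(j)}_{\cdot,\cdot}\,\overline{A^{(k)}_{\cdot,\cdot}}$, where I use the telescoping $\gamma_{p+r}\gamma_{q-r}\gamma_q\gamma_p-1=\sum_\bullet(\gamma_\bullet-1)\prod\gamma$ (the all-$\gamma$ piece being $\cV_N$, which is excluded). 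Using $\|\hat V\|_\infty\leq C$ (and, after summing $\hat V(r/N^{1-\kappa})$ against an $A$-leg, $\sum_p|A_{p,q}|\leq CN^{1/2}|\sigma_q|$, which stays bounded), $|\gamma_p-1|\leq|\sigma_p|$, $\|\sigma\|_\infty\leq CN^{\beta/8}$, $\sum_p\sigma_p^2\leq CN^{3\kappa/2}$ from Lemma \ref{lemma:propertiesquadratickernels}, together with $\sum_{p,q}|A^{(j)}_{p,q}|^2\leq CN^{3\kappa-1}$, the refined sums $\sum_p|A^{(j)}_{p,q}|^2\leq CN^{3\kappa/2-1}(1+\sigma_q^2)$ and $\sum_{p,q}\sigma_q^2|A^{(j)}_{p,q}|^2\leq CN^{3\kappa-1}$ of Lemma \ref{lm:A3A6}, every such contraction is seen to be $\leq CN^{4\kappa-1}$; the genuine smallness comes from the $(\gamma_\bullet-1)$ or $\sigma_\bullet$ factor, which contributes either a $\sigma_q^2$ that pairs with an $A$-leg via Lemma \ref{lm:A3A6}, or a factor $\sum_p\sigma_p^2\leq N^{3\kappa/2}$ instead of the divergent bare momentum sum.

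Then I would control the Duhamel remainders, which is where most of the work lies and which I would handle exactly as in Lemma \ref{lemma:N_g}. After enough expansions these contain only operator-valued terms; localize $\check A$ and the leg kernels $\check\sigma$, $(\widecheck{\gamma-1})$ on the boxes $B_u$, $u\in\Lambda_B$, apply Cauchy--Schwarz separating the creation from the annihilation operators, insert $\Theta$ via \eqref{eq:norm1-Theta}, \eqref{eq:AThetaA} and use \eqref{eq:localNboundexample} to trade the local number operators generated by the characteristic functions of the $B_u$ for factors $N^\epsilon$; finally bound the surviving $\|a_x\Theta\xi_s\|^2$, $\|a_xa_y\Theta\xi_s\|^2$ integrals by $\la\xi_s,\cN\xi_s\ra\leq CN^{3\kappa-1}$ (Lemma \ref{lm:boundN}), $\la\xi_s,\cN^2\xi_s\ra\leq CN^{3\kappa}$ (\eqref{eq:easyboundNm}), or, when a $\check\sigma$ stays convolved with an annihilation operator, $\la\xi_s,\tilde\cN\xi_s\ra\leq CN^{3\kappa-1}$ (Lemma \ref{lemma:N_g}), using $\sum_u\|\check\sigma_u\|_2,\ \sum_u\|(\widecheck{\gamma-1})_u\|_2\leq CN^{3\kappa/4+\epsilon}$ and $\sum_u\|\check\sigma_u\|_1,\ \sum_u\|(\widecheck{\gamma-1})_u\|_1\leq CN^{\beta/8+\epsilon}$ from Lemma \ref{lemma:sumlocalizedsigma}, together with $\|N^{3-3\kappa}V(N^{1-\kappa}\cdot)\|_1=\|V\|_1\leq C$ and the volume factors $\ell_B^3=N^{-3\kappa/2+\beta/4}$ coming from the localizations. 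The exchange term $\tfrac{N^\kappa}{N}\sum\hat V(r/N^{1-\kappa})\gamma_{p+r}\sigma_{q-r}\gamma_q\sigma_p\,a^*_{p+r}a^*_{-p}a_{-q+r}a_q$ is treated identically, its two $\sigma$-legs supplying the needed smallness. In every case the prefactor $N^{\kappa-1}$, the leg norms and the particle-number bounds combine to at most $CN^{4\kappa-1+\epsilon}$; choosing $\epsilon$ small and then $\alpha$, $n$ large yields the claim.

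The main obstacle is the tight power counting. The companion operator $\cV_N$ is \emph{not} negligible, and in fact $\la\xi,\cV_N\xi\ra$ (like $\la\xi,\tilde\cV_N\xi\ra$) cannot be controlled by any crude operator inequality of the form $\pm\tilde\cV_N\leq CN^{a}(\cN+\tilde\cN)$ composed with the a priori $\cN$-bounds available from the $\Theta$-cutoff — such estimates lose powers of $N^{3-3\kappa}=\|N^{3-3\kappa}V(N^{1-\kappa}\cdot)\|_\infty^{\,}$-type factors and fall far short of $N^{4\kappa-1}$. One is therefore forced through the Duhamel expansion, which exploits that $\xi$ is close to $\Omega$ with $\la\Omega,\tilde\cV_N\Omega\ra=0$, and the smallness of $\tilde\cV_N$ relative to $\cV_N$ must be extracted genuinely from the $(\gamma-1)$ or $\sigma$ factors: one must never estimate $\widecheck{\gamma-1}$ or $\check\sigma$ in $L^\infty$ (where $\|\sigma\|_\infty,\|\gamma-1\|_\infty\sim N^{\beta/8}$) in combination with the $N^{\beta/8}$-bounds on the remaining $\gamma$-legs, as that reintroduces a spurious $N^{\beta/2}$. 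As in Lemma \ref{lemma:N_g}, a secondary technical point is that the operator-valued remainders only become summable after being expanded to sufficiently high order, and at each stage $\Theta$'s must be inserted and commuted past the $a_x$'s using Lemma \ref{lemma:propertiesL}.
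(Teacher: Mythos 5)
Your high-level plan (Duhamel expansion, vanishing of the first-order term, identification of the second-order constant as $\la A^*\Omega,\tilde\cV_N A^*\Omega\ra$, control of remainders with the localization and cutoff machinery) is in the right family, but it is not how the paper proceeds, and as written it has two genuine gaps. The paper first telescopes $\gamma_{p+r}\gamma_{q-r}\gamma_q\gamma_p-1$ into monomials in $\theta=\gamma-1$ and rewrites $\tilde\cV_N^{(0)}$ in position space as in (\ref{tildeV_N}); the four terms in which no bare pair $a_x^*a_y^*$ multiplies $V_N(x-y)$ are then dispatched by a \emph{single} Cauchy--Schwarz using the a priori bounds (\ref{eq:interm1}) of Lemma \ref{intermediate_lemma}, with no Duhamel expansion at all. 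Only the two terms containing $V_N(x-y)a_x^*a_y^*$ are expanded, and even there only the pair $a_x^*a_y^*$ is conjugated (producing the cubic objects $\cT_1,\cT_3$ of (\ref{eq:T3T1})), while the annihilation pair $a(\check\theta_x)a(\check\theta_y)$ is deliberately kept intact so that (\ref{eq:interm1}) still applies. Your global expansion of $\la\xi,\tilde\cV_N\xi\ra$ commits you to controlling every term of $[\tilde\cV_N,A^*]$, which is precisely where the first gap lies.

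The first gap: the remainders cannot be handled ``exactly as in Lemma \ref{lemma:N_g}''. The commutator $[\tilde\cV_N,A^*]$ contains quintic terms of the form $\int V_N(x-y)(\cdots)\check A^{(3)}(z_1-z_2,z_1-x)\,a_x^*a_y^*a_{z_1}^*a_{z_2}^*a_w$ (compare (\ref{eq:comVN}) for $\cV_N$). After localizing $\check A^{(3)}$ and applying Cauchy--Schwarz, the creation side produces $\int |V_N(x-y)|\,\|a_xa_y\cdots\Theta\xi_s\|^2$, i.e.\ the quartic expectation $\la\xi_s,\cW_N\xi_s\ra$ with $\cW_N=\tfrac12\int |V_N(x-y)|a_x^*a_y^*a_xa_y$; no bound on this quantity follows from the $\cN$-estimates or from the cutoff, and the paper obtains $\la\xi_s,\cW_N\xi_s\ra\leq CN^{5\kappa/2}$ only later, in Lemma \ref{lm:VN}, via a separate Gronwall bootstrap (see (\ref{eq:T5-fin})--(\ref{eq:WN-fin})). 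Your argument needs either that bootstrap run first, or the paper's device of never detaching both $\check\theta$'s from the annihilation pair. The second gap: the bound $\la A^*\Omega,\tilde\cV_N A^*\Omega\ra=O(N^{4\kappa-1})$ is asserted rather than proved, and the mechanism you sketch does not obviously close. The transferred momentum $r$ ranges over $|r|\lesssim N^{1-\kappa}$ with $\sum_r|\hat V(r/N^{1-\kappa})|\sim N^{3-3\kappa}$, so one must sum over $r$ against the convolution structure (equivalently work in position space, where $\|V_N\|_1\leq CN^{\kappa-1}$) while keeping the $\theta$ or $\sigma$ smallness attached to the correct leg; a naive splitting $|A^{(j)}_{k,l}A^{(k)}_{k+r,l}|\leq\tfrac12(|A^{(j)}_{k,l}|^2+|A^{(k)}_{k+r,l}|^2)$ combined with $\|\hat V\|_\infty$ or $\|\theta\|_\infty$ reintroduces factors of $N^{3-3\kappa}$ or $N^{\beta/8}$ and overshoots $N^{4\kappa-1}$. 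This bookkeeping is of the same delicacy as the evaluation of the non-negligible $\la\xi,\cV_N\xi\ra$ in Lemma \ref{lm:cubic-exp} and must be carried out explicitly.
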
 

\begin{proof}
We are going to bound the expectation of 
\[ \tilde{\cV}_N^{(0)} = \frac{N^\kappa}{2N} \sum_{p,q,r\in\Lambda^*} \hat{V}(r/N^{1-\kappa}) 
        (\gamma_{p+r} \gamma_{q-r}\gamma_{q}\gamma_{p}-1) a_{p+r}^* a_{q-r}^* a_q a_p . \]
All other contributions to $\tilde\cV_N$ can be controlled analogously, replacing factors of $\gamma-1$ by factors of $\sigma$. To estimate $\la \xi, \tilde{\cV}_N^{(0)} \xi \rangle$, we observe that, setting $\theta_p = \gamma_p -1$, we have 
\[ \begin{split} \gamma_{p+r} \gamma_{q-r} \gamma_q \gamma_p - 1 = \; &\theta_{p+r} \theta_{q-r} \theta_q \theta_p + \theta_{p+r} \theta_{q-r}  \theta_q + \theta_{p+r} \theta_{q-r} \theta_p  + \theta_{p+r} \theta_q \theta_p + \theta_{q-r} \theta_q \theta_p  
\\ &+ \theta_{p+r} \theta_{q-r}+ \theta_{p+r} \theta_q + \theta_{p+r} \theta_p + \theta_{q-r} \theta_q+ \theta_{q-r} \theta_p + \theta_{q} \theta_p  
\\ &+ \theta_{p+r} +  \theta_{q-r} +  \theta_{q} +  \theta_{p}. \end{split} \] 
Next, we switch to position space. Setting $V_N (x) = N^{2-2\kappa} V (N^{1-\kappa}x)$, we find
 \begin{equation}
\label{tildeV_N}
\begin{split}
\tilde\cV^{(0)}_N= \; &\frac{1}{2}\int dxdy \, V_N(x-y) \, a^* (\check{\theta}_x) a^* (\check{\theta}_y) a (\check{\theta}_x) a (\check{\theta}_y) \\&+ \int dxdy \, V_N(x-y) \, a^*_{x} a^* (\check{\theta}_y) a (\check{\theta}_x) a (\check{\theta}_y) + \text{h.c.}\\&+ \int dxdy \, V_N(x-y) \, a^*_{x}a^* (\check{\theta}_y) a_{x} a (\check{\theta}_y) \\&+ \int dxdy \, V_N(x-y) \, a^*_{x}a^* (\check{\theta}_y) a (\check{\theta}_x) a_{y} \\&+ \frac{1}{2}\int dxdy \, V_N(x-y) a^*_{x}a^*_{y}a (\check{\theta}_x) a (\check{\theta}_y) + \text{h.c.}\\&+\int dxdy \, V_N(x-y) \, a^*_{x}a^*_{y}a_{x}a (\check{\theta}_y) + \text{h.c.}
\end{split}
\end{equation}
where we used the notation $\check{\theta}_x (s) = \check{\theta} (x-s)$, for $x \in \Lambda $. When estimating the expectation of all these terms, we can insert, on both sides of the inner products, the cutoff $\Theta$. For the first four terms (the ones without $V_N(x-y)a_x a_y$ factor), this is done as before. As an example, we now show how to insert $\Theta$ for the last term in \eqref{tildeV_N} which has to be handled differently (even though we need to use Duhamel for showing that it is negligible).
\begin{equation}\label{eq:insTheta} \begin{split} \int &dx dy \, V_N (x-y) \la \xi , a_x^* a_y^* a_x a (\check{\theta}_y) \xi \ra \\ =\; & 
 \int dx dy \, V_N (x-y) \la \Theta \xi , a_x^* a_y^* a_x a (\check{\theta}_y) \Theta \xi \ra +
  \int dx dy \, V_N (x-y) \la \xi , a_x^* a_y^* a_x a (\check{\theta}_y) (1-\Theta) \xi \ra \\ &+
   \int dx dy \, V_N (x-y) \la (1-\Theta) \xi , a_x^* a_y^* a_x a (\check{\theta}_y) \Theta \xi \ra. \end{split} \end{equation} 
The second (and analogously the third) term on the r.h.s. can be bounded by 
\[ \begin{split} \Big|   \int &dx dy \, V_N (x-y) \la \xi , a_x^* a_y^* a_x a (\check{\theta}_y) (1-\Theta) \xi \ra \Big| \\ \leq\; & \Big( \int dx dy \| a_x a_y (\cN+1) \xi \|^2 \Big)^{1/2} \Big( \int dx dy \, V_N^2 (x-y) \| a_x a (\check{\theta}_y) (\cN+1)^{-1} (1-\Theta) \xi \|^2 \Big)^{1/2} \\ \leq \; &\sup_y \| \check{\theta}_y \|_2 \| V_N \|_2 \| (\cN+1)^2 \xi \|  \| (1-\Theta) \xi \|. \end{split} \]
With Lemma \ref{lemma:propertiesL}, we conclude that, for any $\nu > 0$,  
\[ \Big|   \int dx dy \, V_N (x-y) \la \xi , a_x^* a_y^* a_x a (\check{\theta}_y) (1-\Theta) \xi \ra \Big| \leq C N^{-\nu} \]
if $\alpha > 0$ and then $n \in \bN$ are chosen large enough in \eqref{eq:defcubictransform}. 
Hence, we are left only with the first term on the r.h.s. of (\ref{eq:insTheta}), where $\xi$ has been replaced by $\Theta \xi$. 

In Lemma \ref{intermediate_lemma} below, we are going to show that  
\begin{equation}\label{eq:interm1} \begin{split} \int dx dy \, |V_N (x-y)| \| a (\check{\theta}_x) a (\check{\theta}_y) \Theta \xi \|^2 \leq C N^{4\kappa-1} N^{-\beta/4 + \epsilon} \\
\int dx dy\, | V_N (x-y)| \| a_x a (\check{\theta}_y) \Theta \xi \|^2 \leq C N^{4\kappa-1} N^{-\beta/2 + \epsilon}. \end{split} \end{equation}
With these bounds and with Cauchy-Schwarz, we can estimate the expectation of the first 4 terms on the r.h.s. of (\ref{tildeV_N}). 

Let us consider the fifth term on the r.h.s. of (\ref{tildeV_N}). Recalling the definition (\ref{eq:A3A6}) of the kernels $\check{A}^{(3)}, \check{A}^{(6)}$, we find 
\[ [a^*_xa^*_y,-A]=\cT_1(x,y)+\cT_3(x,y) \]
with 
\begin{equation}\label{eq:T3T1}
\begin{split}
\cT_3(x,y)&= \int dx_1dx_2 \, \Big( \check{A}^{(3)} (x_1-x_2,x_1-x)a^*_ya_{x_1}a_{x_2}+\check{A}^{(3)} (x_1-x_2,x_1-y)a^*_xa_{x_1}a_{x_2} \Big) ,\\
\cT_1(x,y)&= \int dz \check{A}^{(6)} (z-y,z-x)a_z \, .
\end{split}
\end{equation}

Expanding $e^{-A^* \Theta + \hc} a_x^* a_y^* e^{A^* \Theta - \hc}$ we find 
\[
\begin{split}
\int dx dy \,&V_N(x-y)\;\la \xi,a^*_xa^*_ya(\check{\theta}_x) a(\check{\theta}_y) \xi\ra \\ 
= \; &\int_0^1 ds \int dxdy \, V_N(x-y) \;\la \Theta\xi_s, \cT_3 (x,y)e^{-(1-s) A^*\Theta-\hc} a(\check{\theta}_x) a(\check{\theta}_y) \Theta\xi \ra  \\&+\int_0^1 ds \int dxdy \, V_N(x-y) \; \la  \xi_s, \cT_1 (x,y) e^{-(1-s)A^*\Theta-\hc} a(\check{\theta}_x) a(\check{\theta}_y) \xi \ra + \cE_\Theta
\end{split} \]
with an arbitrarily small error $\cE_\Theta$. In the second term, we expand the action of $e^{(1-s)A^* \Theta - \hc}$ on $\cT_1 (x,y)$. We find 
\[ \begin{split} 
\int dx &dy \, V_N(x-y)\;\la \xi,a^*_xa^*_ya(\check{\theta}_x) a(\check{\theta}_y) \xi\ra \\ 
= \; &\int_0^1 ds \int dxdy \, V_N(x-y) \;\la \Theta \xi_s, \cT_3 (x,y)e^{-(1-s) A^*\Theta-\hc} a(\check{\theta}_x) a(\check{\theta}_y) \Theta\xi \ra \\ &+  \int dxdy \, V_N(x-y) \;\la  \xi, \cT_1 (x,y) a(\check{\theta}_x) a(\check{\theta}_y) \xi \ra \\ &+ \int_0^1 ds \int_0^{1-s} dt 
\int dx dy \, V_N (x-y) \la \Theta \xi_{1-t},  [ \cT_1 (x,y) , A^* ] e^{-t A^* \Theta - \hc} a (\check{\theta}_x) a (\check{\theta}_y) \Theta \xi \rangle \\ &+ \cE_\Theta.
\end{split} \]
Finally, in the second term we expand the action of $e^{A^* \Theta - \hc}$on $\cT_1 (x,y) a (\check{\theta}_x) a (\check{\theta}_y)$. We arrive at 
\begin{equation}\label{eq:VN5-1}  \begin{split} 
\int dx &dy \, V_N(x-y)\;\la \xi,a^*_xa^*_ya(\check{\theta}_x) a(\check{\theta}_y) \xi\ra \\ 
= \; &\int_0^1 ds \int dxdy \, V_N(x-y) \;\la \Theta \xi_s, \cT_3 (x,y)e^{-(1-s) A^*\Theta-\hc} a(\check{\theta}_x) a(\check{\theta}_y) \Theta\xi \ra  \\ &+ \int_0^1 ds \int_0^{1-s} dt 
\int dx dy \, V_N (x-y) \la \Theta \xi_{1-t},  [ \cT_1 (x,y) , A^* ] e^{-t A^* \Theta - \hc} a (\check{\theta}_x) a (\check{\theta}_y) \Theta \xi \rangle \\ &+  \int_0^1 ds \int dxdy \, V_N(x-y) \;\la \Theta \xi_s , [ \cT_1 (x,y) , A^*] a(\check{\theta}_x) a(\check{\theta}_y) \Theta\xi_s \ra \\ &+  \int_0^1 ds \int dxdy \, V_N(x-y) \;\la \Theta \xi_s ,  [ a(\check{\theta}_x) a(\check{\theta}_y) , A^*]   \cT_1 (x,y)  \Theta\xi_s \ra\\ &+  \int_0^1 ds \int dxdy \, V_N(x-y) \;\la \Theta \xi_s , \big[ \cT_1 (x,y) ,  [ a(\check{\theta}_x) a(\check{\theta}_y) , A^*] \big]  \Theta\xi_s \ra  + \cE_\Theta.
\end{split} \end{equation} 
In Lemma \ref{intermediate_lemma2}, we will prove that 
\begin{equation}\label{eq:interm2} \begin{split} 
  \int dxdy \, | V_N(x-y)| \|\cT_3^*(x,y) \Theta \xi_s\|^2 &\leq CN^{\epsilon} N^{4\kappa-1}N^{-\beta} \\
    \int dxdy \, |V_N(x-y)| \|[a^*(\check{\theta}_x) a^*(\check{\theta}_y) ,A]\Theta \xi_s\|^2 &\leq CN^{\epsilon} N^{4\kappa-1}N^{-\beta/2} \\
     \int dxdy\, | V_N(x-y)| \|[\cT_1^*(x,y),A] \Theta \xi_s\|^2 &\leq CN^{\epsilon} N^{4\kappa-1}\\
     \int dxdy \, |V_N(x-y)| \|\cT_1(x,y)\Theta\xi_s\|^2 &\leq CN^{\epsilon} N^{4\kappa-1+\beta/4} .\end{split} \end{equation} 
Together with (\ref{eq:interm1}), these bounds can be used to control the first 4 terms on the r.h.s. of (\ref{eq:VN5-1}). In Lemma \ref{lm:interm3}, we will also show that 
\begin{equation}\label{eq:cT1-cc}  \Big| \int dx dy \, V_N (x-y) \, \la \Theta\xi_s,  \big[ \cT_1 (x,y) , [ a (\check{\theta}_x) a (\check{\theta}_y) , A^*] \big] \Theta \xi_s \rangle \Big| \leq C N^{4\kappa-1} \, .\end{equation} 
By (\ref{eq:VN5-1}), this implies that 
\begin{equation}\label{eq:lastVN5-1}  \Big| \int dx dy V_N (x-y) \la \xi_s a_x^* a_y^* a (\check{\theta}_x) a (\check{\theta}_y) \xi_s \rangle \Big| \leq C N^{4\kappa-1} .\end{equation}
Proceeding very similarly as in the proof of (\ref{eq:lastVN5-1}), we can also show 
\[ \Big| \int dx dy V_N (x-y) \la \xi a_x^* a_y^* a _x a (\check{\theta}_y) \xi \rangle \Big| \leq C N^{4\kappa-1}. \] 
The only term for which the details of the analysis needs some additional adjustments is
\[
\begin{split}
    \int_0^1 ds \int dxdy \, V_N(x-y) \;\la \Theta \xi_s ,  [ a_x a(\check{\theta}_y) , A^*]   \cT_1 (x,y)  \Theta\xi_s \ra,
\end{split}
\]
which is of order $N^{4\kappa-1 +\epsilon -\beta/4}$.

From \eqref{tildeV_N}, we obtain $\la \xi , \tilde{\cV}_N^{(0)} \xi \rangle \leq C N^{4\kappa-1}$, as claimed.
\end{proof}

In the next lemma, we show the bounds (\ref{eq:interm1}). 
\begin{lemma}
\label{intermediate_lemma}
As in the proof of Lemma \ref{lm:tildeVN}, we let $\xi_s = e^{A^*\Theta-\hc}\Omega$, for all $s \in [0;1]$, $\theta = \gamma-1$, $V_N (x) = N^{2-2\kappa} V (N^{1-\kappa}x)$. For every $\epsilon > 0$, there is $C > 0$ such that 
    \begin{equation}\label{eq:inter-lm}
    \begin{split}
  T_0 &:=  \int dxdy \, |V_N(x-y)| \| a(\check{\theta}_x) a(\check{\theta}_y) \Theta\xi_s \|^2 \leq C N^{4\kappa-1} N^{-\beta/4+\epsilon} \\
   T_1 &:= \int dxdy \, |V_N(x-y)| \|a_x a(\check{\theta}_y)  \Theta\xi_s \|^2 \leq C N^{4\kappa-1} N^{-\beta/2+\epsilon}
      \end{split}  
    \end{equation}
    for all $s \in [0;1]$, if $\alpha$ and then $n$ are chosen sufficiently large in the definition (\ref{eq:def-Theta}) of the cutoff $\Theta$. 
\end{lemma}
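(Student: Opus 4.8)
The plan is to deduce both inequalities from the fact that $\check\theta=\widecheck{\gamma-1}$ has exactly the same local structure as $\check\sigma$: by Lemma~\ref{lemma:propertiesquadratickernels} it obeys the same pointwise decay, so (by the argument of) Lemma~\ref{lemma:sumlocalizedsigma} it is essentially $L^2$-localised in balls of radius $\ell_B$, and consequently the smeared operators $a(\check\theta_x)$ only probe the local number of excitations in $\xi_s$, which is controlled by the cutoff $\Theta$. I would treat $T_0$ in detail, the bound for $T_1$ being analogous and in fact simpler. First I would localise $\widecheck{\gamma-1}=\sum_{u\in\Lambda_B}(\widecheck{\gamma-1})_u$ with $(\widecheck{\gamma-1})_u=(\widecheck{\gamma-1})\mathbbm 1_{B_u}$ as before Lemma~\ref{lemma:sumlocalizedsigma}, and set $\check\theta_{u,x}(s)=(\widecheck{\gamma-1})_u(x-s)$, so that $a(\check\theta_x)=\sum_{u\in\Lambda_B}a(\check\theta_{u,x})$ and $\check\theta_{u,x}$ is supported in a box of side $\ell_B$ around $x-u$; in particular $a^*(\check\theta_{u,x})a(\check\theta_{u,x})\le\|(\widecheck{\gamma-1})_u\|_2^2\,\cN_{C\ell_B}(x-u)$ as operators.

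The key step will be the pointwise estimate, for fixed $u,v\in\Lambda_B$ and any $\epsilon>0$,
\[ \|a(\check\theta_{u,x})a(\check\theta_{v,y})\Theta\xi_s\|^2\le CN^\epsilon\,\|(\widecheck{\gamma-1})_u\|_2^2\,\|(\widecheck{\gamma-1})_v\|_2^2\,\langle\Theta\xi_s,\cN_{C\ell_B}(x-u)\Theta\xi_s\rangle . \]
To obtain it, I would bound the inner pair by $a^*(\check\theta_{v,y})a(\check\theta_{v,y})\le\|(\widecheck{\gamma-1})_v\|_2^2\cN_{C\ell_B}(y-v)$, commute $\cN_{C\ell_B}(y-v)$ through the remaining smeared operators (the commutators being strictly smaller smeared operators, hence only lower-order contributions by the same reasoning), then use $a^*(\check\theta_{u,x})a(\check\theta_{u,x})\le\|(\widecheck{\gamma-1})_u\|_2^2\cN_{C\ell_B}(x-u)$, and finally absorb $\cN_{C\ell_B}(y-v)\Theta\le CN^\epsilon\Theta$ on the support of $\Theta$ via \eqref{eq:localNbound}. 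Throughout I would use that $\cN$ and all the $\cN_{k\ell_B}(w)$ are diagonal in position space and hence commute with $\Theta=\Theta(\cL_n/N^\alpha)$, and that the manipulations above are the versions of \eqref{eq:localNboundexample} with the point operators $a_{x_j}$ replaced by operators smeared over a fixed $\ell_B$-box.

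Then I would sum over $u,v$, using the triangle inequality, Cauchy--Schwarz, and $S:=\sum_{u\in\Lambda_B}\|(\widecheck{\gamma-1})_u\|_2\le CN^{3\kappa/4+\epsilon}$ from Lemma~\ref{lemma:sumlocalizedsigma}, to get $\|a(\check\theta_x)a(\check\theta_y)\Theta\xi_s\|^2\le CN^\epsilon S^3\sum_{u\in\Lambda_B}\|(\widecheck{\gamma-1})_u\|_2\,\langle\Theta\xi_s,\cN_{C\ell_B}(x-u)\Theta\xi_s\rangle$. Now $\int dy\,|V_N(x-y)|=\|V_N\|_1=\|V\|_1 N^{\kappa-1}$, $\int dx\,\cN_{C\ell_B}(x-u)=C\ell_B^3\,\cN$, and $\langle\Theta\xi_s,\cN\Theta\xi_s\rangle\le\langle\xi_s,\cN\xi_s\rangle\le CN^{3\kappa-1}$ by Lemma~\ref{lm:boundN} (since $0\le\Theta\le1$ and $[\Theta,\cN]=0$). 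Hence $T_0\le CN^\epsilon\,S^4\,\|V_N\|_1\,\ell_B^3\,N^{3\kappa-1}$; with $S^4=N^{3\kappa+\epsilon}$, $\ell_B^3=N^{-3\kappa/2+\beta/4}$ and $\beta=2-3\kappa$ the exponent is $3\kappa+(\kappa-1)+(-3\kappa/2+\beta/4)+(3\kappa-1)=11\kappa/2+\beta/4-2=4\kappa-1-\beta/4$, as claimed (after renaming $\epsilon$). For $T_1$, since $a_x$ commutes with $a(\check\theta_y)$, localising only $\check\theta_y$ gives $\|a_x a(\check\theta_y)\Theta\xi_s\|^2\le CN^\epsilon S^2\|a_x\Theta\xi_s\|^2$ with no extra local-number factor, and then $\int dy\,|V_N(x-y)|=\|V_N\|_1$ together with $\int dx\,\|a_x\Theta\xi_s\|^2=\langle\Theta\xi_s,\cN\Theta\xi_s\rangle\le CN^{3\kappa-1}$ yields $T_1\le CN^\epsilon S^2\|V_N\|_1 N^{3\kappa-1}=CN^\epsilon N^{11\kappa/2-2}=CN^\epsilon N^{4\kappa-1-\beta/2}$.

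The main obstacle will be purely technical: keeping careful track of the commutator remainders generated when the local number operators $\cN_{k\ell_B}(w)$ are moved past the smeared annihilation operators $a(\check\theta_{u,x})$, and checking that \eqref{eq:localNbound} and \eqref{eq:localNboundexample} indeed apply in their box-smeared form. Everything else is power counting with $\beta=2-3\kappa$.
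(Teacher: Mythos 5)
Your proposal is correct and follows essentially the same route as the paper: localize $\check\theta=\widecheck{\gamma-1}$ into $\ell_B$-boxes via Lemma \ref{lemma:sumlocalizedsigma}, use Cauchy--Schwarz together with the cutoff bounds \eqref{eq:localNbound}--\eqref{eq:localNboundexample} to absorb one local number operator into $N^\epsilon$ and integrate the other into $\ell_B^3\cN$, and close with $\|V_N\|_1\leq CN^{\kappa-1}$, $\sum_u\|\check\theta_u\|_2\leq CN^{3\kappa/4+\epsilon}$ and $\langle\xi_s,\cN\xi_s\rangle\leq CN^{3\kappa-1}$; the power counting is identical. The only cosmetic difference is that the paper performs the Cauchy--Schwarz directly on the pointwise multi-variable integral (which sidesteps the smeared-operator commutators you flag as the main obstacle), but your version goes through just as well.
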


\begin{proof}
We write 
\begin{equation}
    \begin{split}
    T_0& =\int dxdydrdr' dt dt' \, \check{\theta} (r) \check{\theta} (r')\check{\theta} (t) \check{\theta} (t') |V_N (x-y)| \la \Theta \xi_s , a^*_{y-t}a^*_{x-r}a_{y-t'}a_{x-r'} \Theta \xi_s\rangle.
    \end{split}
\end{equation}
Next, we localize $\check{\theta} = \sum_{u \in \Lambda_B} \check{\theta}_u$, where we recall that $\check{\theta}_u = \check{\theta} \mathbbm{1}_{B_u}$ and  $B_u = \{x\in\Lambda, \inf_{v\in \Lambda_B} \abs{x-v} =\abs{x-u} \}$ with $\Lambda_B = \Lambda \cap \ell_B \mathbb{Z}^3$. With Cauchy-Schwarz, we obtain 
\begin{equation*}
    \begin{split}
    |T_0| &\leq \sum_{u, u',v,v'}\Big(\int dxdydrdr'dtdt' \; |V_N(x-y)|   \mathbbm{1}_{B_u}(r) \mathbbm{1}_{B_v} (t) \\ &\hspace{7cm} \times \check{\theta}_{u'}(r')^2\check{\theta}_{v'}(t')^2\; \| a_{y-t}a_{x-r} \Theta \xi_s\|^2\Big)^{1/2}\\&\hspace{.1cm} \times\Big(\int dxdydrdr'dtdt' \; |V_N(x-y)| \mathbbm{1}_{B_{u'}}(r')\mathbbm{1}_{B_{v'}} (t')  \check{\theta}_{u}(r)^2\check{\theta}_{v}(t)^2  \|a_{y-{t'}}a_{x-{r'}}\Theta \xi_s\|^2\Big)^{1/2}
    \\ &\leq CN^{\epsilon} \|V_N\| _1 \ell_{B}^3  \Big(\sum_{u  \in \Lambda_B} \norm{\check{\theta}_u }_2 \Big)^4 \|\cN^{1/2}\xi_s\|^2 .
    \end{split}
\end{equation*}
With Lemma \ref{lemma:sumlocalizedsigma}, Lemma \ref{lm:boundN} and since $\| V_N \|_1 \leq C N^{-1+\kappa}$, we conclude that 
\[ |T_0| \leq C N^{4\kappa-1} N^{-\beta/4+\epsilon} . \]
The second bound in (\ref{eq:inter-lm}) can be shown similarly. In this case, we can remove two factors of $\check{\theta}$ and the integrals over $r,r'$; as a result, we obtain 
\[ |T_1| \leq C N^\epsilon \| V_N \|_1 \Big(\sum_{u  \in \Lambda_B} \norm{\check{\theta}_u }_2 \Big)^2 \la \xi_s, \cN\xi_s\ra \leq C N^{4\kappa-1} N^{-\beta/2+\epsilon}. \]
\end{proof}

Next, we show the estimates (\ref{eq:interm2}). 
\begin{lemma}
\label{intermediate_lemma2}
As in the proof of Lemma \ref{lm:tildeVN}, we let $\xi_s = e^{A^*\Theta-\hc}\Omega$, for all $s \in [0;1]$, $\theta = 1-\gamma$, $V_N (x) = N^{2-2\kappa} V (N^{1-\kappa}x)$. For any $\epsilon>0$, we have 
\begin{equation}
\label{intermediate_bounds1}
    \int dxdy \, |V_N(x-y)| \|\cT_3^*(x,y) \Theta \xi_s\|^2\leq CN^{\epsilon} N^{4\kappa-1}N^{-\beta}
    \end{equation}
    \begin{equation}
    \label{intermediate_bounds2}
    \int dxdy \, |V_N(x-y)| \|[a^*(\check{\theta}_x) a^*(\check{\theta}_y) ,A]\Theta \xi_s\|^2\leq CN^{\epsilon} N^{4\kappa-1}N^{-\beta/2}
    \end{equation}
    \begin{equation}
    \label{intermediate_bounds3}
   \int dxdy \, |V_N(x-y)| \|[\cT_1^*(x,y),A] \Theta \xi_s\|^2\leq CN^{\epsilon} N^{4\kappa-1}
   \end{equation}
        \begin{equation}
    \label{intermediate_bounds4}
   \int dxdy \, |V_N(x-y)| \|\cT_1(x,y)\Theta\xi_s\|^2\leq CN^{\epsilon} N^{4\kappa-1+\beta/4}
   \end{equation}
   if $\alpha$ and then $n$ are chosen large enough. 
   \end{lemma}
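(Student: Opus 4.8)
The four estimates are all of the same nature, and the plan is to prove them by adapting the scheme already used in the proof of Lemma~\ref{intermediate_lemma} and --- for the terms containing a commutator with $A$ --- in the proofs of Lemma~\ref{lm:prelim1} and Lemma~\ref{lemma:N_g}. In each case I would first rewrite the operator in position space (for \eqref{intermediate_bounds2}--\eqref{intermediate_bounds4} this requires computing the relevant commutator with $A$ explicitly and bringing every resulting monomial into normal order), then expand the square $\|\,\cdot\,\Theta\xi_s\|^2$ as an inner product in the state $\Theta\xi_s$, and separate the fully contracted part --- estimated directly --- from the remaining part. In the latter I would decompose every kernel $\check{A}^{(j)}$ (and, for \eqref{intermediate_bounds2}, every factor $\check\theta$, $\theta=\gamma-1$) into the localized pieces $\check{A}^{(j)}_{u,v}$, $\check\theta_u$ supported in the boxes $B_u$, $u\in\Lambda_B$, and apply Cauchy--Schwarz; crucially, the local number operators $\cN_{k\ell_B}(w)$ produced in this way must not be discarded through the crude bound $\cN_{k\ell_B}(w)\Theta\le CN^\epsilon\Theta$ of \eqref{eq:localNbound}, but kept, so that after integrating the free position variable against $|V_N(x-y)|$ --- which gives $\|V_N\|_1\le CN^{\kappa-1}$ --- one can integrate the translate of the remaining $\cN_{k\ell_B}(w)$ over $\Lambda$ and pick up a box volume factor $C\ell_B^3=CN^{-3\kappa/2+\beta/4}$ times $\cN$. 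The surviving sums would then be controlled by $\sum_{u,v}\|\check{A}^{(j)}_{u,v}\|_2\le CN^{(3\kappa-1)/2+\epsilon}$ and $\sum_u\|D^{(j)}_u\|_2\le CN^{3\kappa/4+1/2+\epsilon}$ (Lemma~\ref{lm:A3A6}), by $\sum_u\|\check\theta_u\|_2\le CN^{3\kappa/4+\epsilon}$ (Lemma~\ref{lemma:sumlocalizedsigma}) and $\|\check\theta\|_2^2\le CN^{3\kappa/2}$ (Lemma~\ref{lemma:propertiesquadratickernels}), and one would close with $\la\xi_s,\cN\xi_s\ra,\la\xi_s,\tilde{\cN}\xi_s\ra\le CN^{3\kappa-1}$ (Lemma~\ref{lm:boundN} and Lemma~\ref{lemma:N_g}). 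All contributions containing a factor $1-\Theta$ are, as usual, negligible by \eqref{eq:norm1-Theta}, \eqref{eq:easyboundNm}, \eqref{eq:AThetaA}, for $\alpha$ and then $n$ large.

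For \eqref{intermediate_bounds1}, the operator $\cT_3^*(x,y)$ of \eqref{eq:T3T1} is already normal ordered, and I expect the dominant contribution to be the fully contracted one, which reduces to $\int dxdy\,|V_N(x-y)|\big(\int|\check{A}^{(3)}(x_1-x_2,x_1-x)|^2\,dx_1dx_2\big)\la\Theta\xi_s,a_y^*a_y\Theta\xi_s\ra\le\|\check{A}^{(3)}\|_2^2\|V_N\|_1\la\xi_s,\cN\xi_s\ra\le CN^{3\kappa-1}N^{\kappa-1}N^{3\kappa-1}=CN^{4\kappa-1-\beta}$, using $\beta=2-3\kappa$; the genuinely quartic remainder is controlled by localizing the two $\check{A}^{(3)}$'s, Cauchy--Schwarz and \eqref{eq:localNboundexample}, and is of the same or lower order. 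For \eqref{intermediate_bounds2} I would split $[a^*(\check\theta_x)a^*(\check\theta_y),A]$ into the piece in which exactly one of the two $\check\theta$'s is contracted against $A$ --- a cubic operator whose kernel is $\check\theta$ glued to $\check{A}^{(3)}$ --- and the piece in which both are contracted, which is a single annihilation operator. Compared with \eqref{intermediate_bounds1}, one squared $\check{A}^{(3)}$-norm $N^{3\kappa-1}$ is replaced by $\|\check\theta\|_2^2\le CN^{3\kappa/2}$, larger by $N^{\beta/2}$, which gives exactly $CN^\epsilon N^{4\kappa-1-\beta/2}$.

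The last two bounds involve $\cT_1(x,y)=\int dz\,\check{A}^{(6)}(z-y,z-x)a_z$, and the point is that $\check{A}^{(6)}$ decays only like $|x|^{-5/2}$ in one of its variables (reflecting the $|p|^{-1/2}$ behaviour of $\sigma_p$). For \eqref{intermediate_bounds4} I would localize $\check{A}^{(6)}=\sum_{u,v}\check{A}^{(6)}_{u,v}$ --- the small support of $V_N$ forcing $|u-v|\lesssim\ell_B$ --- use Cauchy--Schwarz in the box index to extract one factor $\sum_u\|D^{(6)}_u\|_2$, bound $\|a_z\Theta\xi_s\|$ by a local number operator, integrate $x$ against $|V_N(x-y)|$ to get $\|V_N\|_1$, and integrate $y$ to obtain $C\ell_B^3\la\xi_s,\cN\xi_s\ra$; this should give
\[ \int dxdy\,|V_N(x-y)|\,\|\cT_1(x,y)\Theta\xi_s\|^2\le CN^\epsilon\Big(\sum_u\|D^{(6)}_u\|_2\Big)^2\|V_N\|_1\,\ell_B^3\,N^{3\kappa-1}=CN^\epsilon N^{4\kappa-1+\beta/4}. \]
For \eqref{intermediate_bounds3} I would first note that $[a_z^*,A]$ equals a two-annihilation operator with kernel $-\check{A}^{(3)}(z-\cdot\,,z-\cdot)$, so that $[\cT_1^*(x,y),A]$ is quadratic in annihilation operators with kernel built from one $\check{A}^{(6)}$ and one $\check{A}^{(3)}$; proceeding as before, the extra $\check{A}^{(3)}$ contributes localized $L^2$-sums of size $N^{(3\kappa-1)/2+\epsilon}$ in place of one of the supremum-norm factors $N^{3\kappa/4+1/2+\epsilon}$ of $\check{A}^{(6)}$, which improves the estimate by $N^{-\beta/4}$ and produces $CN^\epsilon N^{4\kappa-1}$.

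The hard part will be the slow $|x|^{-5/2}$ decay of $\check{A}^{(6)}$ (and of $\check\sigma$, $\check\theta$): it prevents the supremum-in-one-variable norm $\|D^{(6)}\|_2\simeq N^{3\kappa/4+1/2}$ from being comparable to $\|\check{A}^{(6)}\|_2\simeq N^{(3\kappa-1)/2}$, and it is why \eqref{intermediate_bounds4} is weaker than \eqref{intermediate_bounds3}. One has to take the supremum on the variable of $\check{A}^{(6)}$ that is tied to the support of $V_N$ (using that for every $z$ at least one of $|z-x|$, $|z-y|$ is comparable to $|x-y|\lesssim N^{-(1-\kappa)}$), and --- equally important --- one must genuinely use the box volume factor $\ell_B^3$ coming from $\int_\Lambda\cN_{k\ell_B}(w)\,dw=C\ell_B^3\cN$ rather than the crude per-ball bound $N^\epsilon$, without which the global estimate $\la\xi_s,\cN\xi_s\ra\le CN^{3\kappa-1}$ is too large by exactly $N^{\beta/4}$. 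The only other, purely organizational, difficulty is to keep every monomial in creation and annihilation operators normal ordered throughout the commutator computations, so that the local-cutoff bounds \eqref{eq:localNbound}--\eqref{eq:localNboundexample} apply to the correct factors; after that, what remains is a careful bookkeeping of powers of $N$ as in Lemmas~\ref{intermediate_lemma} and \ref{lemma:N_g}.
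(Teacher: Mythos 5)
Your proposal is correct and follows essentially the same route as the paper: normal-order the relevant products and commutators, localize every kernel ($\check A^{(j)}_{u,v}$, $\check\theta_u$, $D^{(6)}_u$) on the boxes $B_u$, apply Cauchy--Schwarz so that each indicator function contributes a volume factor $\ell_B^3$ and each local number operator is absorbed by the cutoff $\Theta$, and close with $\|V_N\|_1\leq CN^{\kappa-1}$, the localized sums from Lemmas \ref{lemma:sumlocalizedsigma} and \ref{lm:A3A6}, and $\la\xi_s,\cN\xi_s\ra\leq CN^{3\kappa-1}$. The only quibble is a bookkeeping point in \eqref{intermediate_bounds3}: the gain of $N^{-\beta/4}$ over \eqref{intermediate_bounds4} comes from the extra factor $\ell_B^3\big(\sum_{u,v}\|\check A^{(3)}_{u,v}\|_2\big)^2=N^{-\beta/4}$ appearing alongside, not in place of, $\big(\sum_u\|D^{(6)}_u\|_2\big)^2$ --- but all your final exponents are correct.
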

   \begin{proof} 
From \eqref{eq:T3T1}, we find 
\[ \cT_3 (x,y) \cT_3^* (x,y)  = \cT_{3,2} + \cT_{3,4} + \cT_{3,6} \]
with 
\[ \begin{split} 
\cT_{3,2} (x,y) &= \sum \int dx_1 dx_2 \, \check{A}^{(3)} (x_1 - x_2 , x_1 - z_1) \check{A}^{(6)} (x_1 - x_2 , x_1 - w_1) a_{z_2}^* a_{w_2} \\
\cT_{3,4} (x,y) &= \sum \int dx_1 dx_2 dx_3  \check{A}^{(6)} (x_1 - x_2 , x_1 - z_1) \check{A}^{(6)} (x_3 - x_2 , x_3 - w_1)  a_{x_3}^* a_{z_2}^* a_{w_2} a_{x_1} \\ 
\cT_{3,6} (x,y) &= \sum  \int dx_1 dx_2 dx_3 dx_4 \check{A}^{(3)} (x_1 - x_2 , x_1 - z_1) \check{A}^{(3)} (x_3 - x_4 , x_3 - w_1)  \\ &\hspace{8cm} \times  a_{z_2}^* a_{x_3}^* a_{x_4}^* a_{x_1} a_{x_2} a_{w_2} \end{split} \]
where the sums run over the four possible choices of $(z_1, z_2), (w_1, w_2) \in \{ (x,y) , (y,x) \}$ (i.e. we either have $z_1 = x, z_2 = y$ or $z_1 = y, z_2 = x$ and similarly for $(w_1 , w_2)$). Let us denote by $\cT_{3,2}^{(0)} (x,y)$ the contribution to $\cT_{3,2} (x,y)$ associated with $z_1= w_1 =x$, $z_2 = w_2 = y$. With Cauchy-Schwarz, we can estimate 
\[ \begin{split} \Big|  \int dx dy &| V_N (x-y)| \langle \xi, \cT_{3,2}^{(0)} (x,y)  \xi \rangle \Big|  \\ \leq \; & \Big( \int dx dy dx_1 dx_2 \, |V_N (x-y)| \check{A}^{(3)} (x_1 - x_2 , x_1 - x)^2 \| a_y \xi \|^2 \Big)^{1/2} \\ &\times \Big( \int dx dy dx_1 dx_2 \, |V_N (x-y)| \check{A}^{(6)} (x_1 - x_2 , x_1 - x)^2 \| a_y \xi \|^2 \Big)^{1/2}  \\ \leq \; &C \| V_N \|_1 \| \cN^{1/2} \xi \|^2 \| A^{(6)}  \|_2 \| A^{(3)} \|_2 \leq C N^{4\kappa-1 - \beta} \end{split} \]
where we used $\| V_N \|_1 \leq C N^{-1 + \kappa}$, the bounds in Lemma \ref{lm:A3A6} (in particular, $\| A^{(3)} \|_2 , A^{(6)} \|_2  \leq C N^{(3\kappa-1)/2}$) and the estimate $\| \cN^{1/2} \xi \|^2 \leq C N^{3\kappa-1}$ from Lemma \ref{lm:boundN}. The other contributions arising from $\cT_{3,2}$ can be bounded similarly. 

Next, let us denote by $\cT_{3,4}^{(0)} (x,y)$ the contribution to $\cT_{3,4} (x,y)$ associated with $z_1 = w_1 = x$ and $z_2 = w_2 = y$. We can estimate
\[ \begin{split} \Big| \int dx dy &|V_N (x-y)| \la \Theta \xi , \cT_{3,4}^{(0)} (x,y) \Theta \xi \ra \Big| \\ \leq \; &\sum_{u, u',v,v'\in \Lambda_B}
        \int dx dy dx_1 dx_2 dx_3 
|V_N(x-y)| 
       \\
        &\hspace{.3cm}\times\abs{\check{A}^{(6)}_{u,v}(x_1-x_2,x_1-x)} \abs{\check{A}^{(6)}_{u',v'}(x_3-x_2,x_3-x)}\|a_{x_1}a_y \Theta \xi\|\|a_{x_3}a_y \Theta \xi\|
     \\ \leq \; &\sum_{u, u',v,v'  \in \Lambda_B} 
     \Big( \int dxdy dx_1dx_2dx_3 \, |V_N(x-y)| \mathbbm{1}(|x_3-x_2-u' | \leq \ell_{B}) \\ &\hspace{3cm} \times |\check{A}^{(6)}_{u,v} (x_1-x_2,x_1-x)|^2 \|a_{x_3}a_{y} \Theta \xi\|^2 \Big)^{1/2}
     \\& \hspace{1.9cm}\times \Big( \int dxdy dx_1dx_2dx_3 \, |V_N(x-y)| \mathbbm{1}(|x_1-x_2-u|\leq \ell_{B}) \\ &\hspace{3cm} \times |\check{A}^{(6)}_{u',v'}(x_3-x_2,x_3-x)|^2\|a_{x_1}a_{y}\Theta \xi\|^2 \Big)^{1/2}
     \\ \leq \; &CN^{\epsilon} \|V_N\| _1 \|\cN^{1/2}\xi\|^2\Big(\sum_{u  \in \Lambda_B} \norm{\check{A}^{(6)}_{u,v}}_2 \Big)^2\leq C N^{4\kappa-1-\beta+\epsilon}. 
    \end{split}
\] 
The other contributions arising from $\cT_{3,4}$ can be handled similarly. Also the contributions from $\cT_{3,6}$ can be estimated analogously. 

Next, we consider the bound \eqref{intermediate_bounds2}. Similarly to (\ref{eq:T3T1}), we have 
\begin{equation}\label{eq:inter-comm}  [a^* (\check{\theta}_x) a^* (\check{\theta}_y),A]= - \int dr dt \;\check{\theta} (r) \check{\theta} (t) \big(\cT_1 (x-r,y-t)+ \cT_3 (x-r,y-t)\big). \end{equation} 
Therefore, we find 
\begin{equation}\label{eq:tT13} \begin{split}  \int dx dy \, |V_N &(x-y)| \| \big[ a^* (\check{\theta}_x) a^* (\check{\theta}_y) , A \big] \Theta\xi_s \|^2 \\ &\leq \sum_{i=1,3} \int dx dy \, |V_N (x-y)|  \Big\|  \int dr dt \, \check{\theta} (r) \check{\theta} (t) \cT_i (x-r, y-t) \Theta \xi_s \Big\|^2 .\end{split} \end{equation} 
To bound the term with $i=3$, we observe that 
\[ \begin{split} 
 \Big\| &\int dr dt \, \check{\theta} (x-r) \check{\theta} (y-t) \cT_3 (r,t) \Theta \xi_s \Big\|^2 \\ = \; & \int 
  dr dt dr' dt' \check{\theta} (x-r) \check{\theta} (y-t) \check{\theta} (x-r') \check{\theta} (y-t')  \sum_{\substack{(z_1,z_2)\in \{(r,t),(t,r)\},\\ (w_1,w_2)\in \{(r',t'),(t',r')\}}} 
  \int dx_1dx_2dx_3dx_4 \,  \\ & \times \check{A}^{(3)} (x_1-x_2,x_1-z_1)\check{A}^{(3)} (x_3-x_4,x_3-w_1) \la \Theta \xi_s ,a^*_{z_2}a^*_{x_3}a^*_{x_4}a_{x_1}a_{x_2}a_{w_2} \Theta \xi_s \ra 
   \\ &+ \int dr dt dr' \check{\theta} (x-r) \check{\theta} (y-t) \sum_{\substack{(z_1,z_2)\in \{(r,t),(t,r)\}}}
 \big(\check{\theta} (x-r') \check{\theta} (y-z_2) +\check{\theta} (x-z_2) \check{\theta} (y-r') \big)
 \\ &\times \int dx_1 dx_2 dx_3 dx_4 \check{A}^{(3)} (x_1-x_2,x_1-z_1)\check{A}^{(3)} (x_3-x_4,x_3-r' ) \la \Theta \xi_s ,a^*_{x_3} a^*_{x_4}a_{x_1}a_{x_2} \Theta \xi_s \ra  \\ = \; &\la \Theta \xi_s , \tilde{\cT}_{3,6}(x,y) \Theta \xi_s \ra + \la \Theta \xi_s , \tilde{\cT}_{3,4} (x,y) \Theta \xi_s \ra.
  \end{split} \]
Let us denote by $\tilde{\cT}_{3,6}^{(0)} (x,y)$ the contribution to $\tilde{\cT}_{3,6} (x,y)$ associated with the choice $z_1 = r, w_1 = r', z_2 = t, w_2 = t'$. Localizing the kernels $\check{\theta}$ and $\check{A}^{(3)}$ and applying Cauchy-Schwarz, we can estimate 
\[ \begin{split} 
\Big| \int dx dy \, &|V_N (x-y)| \la \Theta \xi_s , \tilde{\cT}^{(0)}_{3,6} (x,y) \Theta \xi_s \rangle \Big| \\ \leq \; &\sum_{u, u',v,v',w,w',m,m' \in \Lambda_B}\Big( \int dr dt dr' dt' dx_1dx_2dx_3dx_4 dxdy \, \, |V_N(x-y)|  \\ &\hspace{1.3cm}\times\check{\theta}_{u}(x-r)^2 \check{\theta}_{u'} (y-t)^2 \, \check{A}^{(3)}_{m',w'} (x_3-x_4,x_3-r')^2 \\&\hspace{1.3cm}\times  
 \mathbbm{1}_{B_v} (x-r') \mathbbm{1}_{B_{v'}}(y-t') \mathbbm{1}_{B_{m}} (x_1 - x_2)  \mathbbm{1}_{B_{w}}(x_1-r)  \, \|a_{x_1}a_{x_2}a_{t'} \Theta \xi_s\|^2\Big)^{1/2}
    \\&\hspace{.5cm}\times \Big( \int dr dt dr' dt' dx_1dx_2dx_3dx_4 dxdy\;  |V_N(x-y)| 
     \\&\hspace{1.3cm}\times  \check{\theta}_{v} (x-r')^2\check{\theta}_{v'} (y-t')^2 \check{A}^{(3)}_{m,w} (x_1-x_2,x_1-r)^2   
   \\&\hspace{1.3cm}\times \mathbbm{1}_{B_u}(x-r)\mathbbm{1}_{B_{u'}}(y-t) \mathbbm{1}_{B_{m'}}(x_3-x_4)\mathbbm{1}_{B_{w'}} (x_3-r') \|a_{x_4}a_{x_3}a_{t} \Theta \xi_s \|^2\Big)^{1/2}
    \\\leq \; &CN^{\epsilon} \ell_{B}^6 \|V_N\| _1 \|\cN^{1/2}\xi_s\|^2 \Big(\sum_{u,v  \in \Lambda_B} \norm{\check{A}^{(3)}_{u,v}}_2 \Big)^2 \Big(\sum_{u  \in \Lambda_B} \norm{ \check{\theta}_u}_2 \Big)^4 \leq C N^{4\kappa - 1 - \beta/2 + \epsilon}.
    \end{split}
\]
where we used $\| V_N \|_1 \leq C N^{-1 + \kappa}$, $\sum_u \| \check{\theta}_u \|_2 \leq N^{3\kappa/4+\epsilon}$ from Lemma \ref{lemma:sumlocalizedsigma}, $\sum_{u,v} \| A^{(3)}_{u,v} \|_2 \leq C N^{(3\kappa-1)/2}$ from Lemma \ref{lm:A3A6} and the estimate $\| \cN^{1/2} \xi \|^2 \leq C N^{3\kappa-1}$ from Lemma \ref{lm:boundN}. The other contributions arising from $\tilde{\cT}_{3,6} (x,y)$ can be controlled analogously. 

As for the term $\tilde{\cT}_{3,4} (x,y)$, we consider the contribution $\tilde{\cT}^{(0)}_{3,4} (x,y)$, associated with $z_1 =r$, $z_2 = t$, proportional to $\check{\theta} (x-r) \check{\theta} (x-r') \check{\theta} (y-t)^2$. Localizing all kernels $\check{\theta}, \check{A}^{(3)}$, we find 
\[ \begin{split}\Big| \int &dx dy |V_N (x-y)| \langle \Theta \xi_s, \tilde{\cT}^{(0)}_{3,4} (x,y) \Theta \xi_s \rangle \Big| \\ \leq \; &\sum_{u, u',v,v',w,w',m  \in \Lambda_B} 
     \Big( \int dxdy dx_1dx_2dx_3 dr dt dr' \, | V_N(x-y) | \check{\theta}_m (y-t)^2\check{\theta}_w (x-r)^2 \\ &\hspace{.5cm} \times \check{A}^{(3)}_{u,v} (x_3-x_4,x_3-r')^2 \mathbbm{1}_{B_{w'}}(x-r')     \mathbbm{1}_{B_{v'}}(x_1-r) \mathbbm{1}_{B_{u'}}(x_1-x_2) \| a_{x_1} a_{x_2} \Theta \xi\|^2 \Big)^{1/2}
     \\&\hspace{1.5cm}\times \Big( \int dxdy dx_1dx_2dx_3dr dt dr' \, |V_N (x-y)|    \check{\theta}_m (y-t)^2 \check{\theta}_{w'} (x-r')^2\\ &\hspace{.5cm} \times  \check{A}^{(3)}_{u',v'} (x_1-x_2,x_1-r)^2  \mathbbm{1}_{B_{w}}(x-r)     \mathbbm{1}_{B_{v}}(x_3-r') \mathbbm{1}_{B_{u}}(x_3-x_4) \|a_{x_3} a_{x_4} \Theta \xi\|^2 \Big)^{1/2}
          \\&\leq CN^{\epsilon} \ell_{B}^6 \|V_N\| _1 \|\cN^{1/2}\xi\|^2\Big(\sum_{u,v  \in \Lambda_B} \norm{ \check{A}^{(3)}_{u,v}}_2 \Big)^2 \Big(\sum_{u  \in \Lambda_B} \norm{ \check{\theta}_{u}}_2 \Big)^4\leq C N^{4\kappa-1-\beta/2+\epsilon} .
  \end{split} \]
The other contributions arising from $\tilde{\cT}_{3,4} (x,y)$ can be handled similarly. As for the term with $i=1$ on the r.h.s of (\ref{eq:tT13}), we compute 
\[ \begin{split} \Big\| \int dr dt \; &\check{\theta} (x-r) \check{\theta} (y-t) \cT_1 (r,t) \Theta \xi_s \Big\|^2  = \\ =&\int dr dt dr' dt' dx_1dx_2 \, \check{\theta} (x-r) \check{\theta} (y-t) \check{\theta} (x-r') \check{\theta} (y-t') \\ &\hspace{4cm} \times \check{A}^{(6)} (x_1-t,x_1-r)\check{A}^{(6)} (x_2-t',x_2-r') \la \xi ,a^*_{x_1}a_{x_2}\xi\ra.
    \end{split} \]
Hence, proceeding as we did above to control the terms $\tilde{\cT}_{3,6}^{(0)}, \tilde{\cT}_{3,4}^{(0)}$, we find  
\[ \begin{split} 
\int & dx dy |V_N (x-y)| \Big\| \int dr dt \; \check{\theta} (x-r) \check{\theta} (y-t) \cT_1 (r,t) \Theta \xi_s \Big\|^2 \\ 
\leq \; &C \ell_B^6 \| V_N \|_1 \| \cN^{1/2} \xi_s \|^2 \Big( \sum_{u,v \in \Lambda_B} \| A^{(6)}_{u,v} \|_2 \Big)^2 \Big( \sum_{w \in \Lambda_B} \| \check{\theta}_w \|_2 \Big)^4 \leq C N^{4\kappa-1-\beta/2+\epsilon} .
\end{split} \] 

Next, we show \eqref{intermediate_bounds3}. From (\ref{eq:T3T1}), we find 
\begin{equation}
\label{intermediate_expression1}
    [\cT^*_1(x,y),A] =  \int dz dx_1 dx_2 \, \check{A}^{(6)} (z-y,z-x)\check{A}^{(3)} (x_1-x_2,x_2-z) a_{x_1}a_{x_2}.
\end{equation} 
Hence 
\begin{equation}
\begin{split}
    \|[\cT_1^* &(x,y) ,A] \Theta \xi_s\|^2\\ =&\; \int dz dz' dx_1 dx_2 dx_3 dx_4 \, \check{A}^{(6)} (z-y,z-x) \check{A}^{(3)} (x_1-x_2,x_2-z) \\&\hspace{1cm}\times \check{A}^{(6)} (z'-y,z'-x) \check{A}^{(3)} (x_3-x_4,x_4-z') \la \Theta \xi, a^*_{x_3}a^*_{x_4}a_{x_1}a_{x_2} \Theta \xi\ra.
    \end{split}
\end{equation}
Therefore, setting $D^{(6)} (x) = \sup_s |A^{(6)} (x,s)|$ and $D_u^{(6)} (x) = D^{(6)} (x) \mathbbm{1}_{B_u} (x)$ for $u \in \Lambda_B$, we obtain 
\[ \begin{split} 
\int &dx dy \, |V_N (x-y)|  \|[\cT_1^* (x,y) ,A] \Theta \xi_s\|^2 \\ \leq \; &\sum_{w,u,v,w',u',v' \in \Lambda_B} \Big( \int dx dy dz dz' dx_1 dx_2 dx_3 dx_4 \, |V_N (x-y)| D_w^{(6)} (z-y)^2 \\ &\hspace{.3cm} \times  \check{A}^{(3)}_{u',v'} (x_3 - x_4 , x_4 - z')^2 \mathbbm{1}_{B_{w'}} (z'-y) \mathbbm{1}_{B_u} (x_1 - x_2) \mathbbm{1}_{B_v} (x_2 -z)  \| a_{x_1} a_{x_2} \Theta \xi_s \|^2 \Big)^{1/2} \\  
&\times \Big( \int dx dy dz dz' dx_1 dx_2 dx_3 dx_4 \, |V_N (x-y)| D_{w'}^{(6)} (z'-y)^2 \\ &\hspace{.3cm} \times \check{A}^{(3)}_{u,v} (x_1 - x_2 , x_2 - z)^2 \mathbbm{1}_{B_{w}} (z-y) \mathbbm{1}_{B_{u'}} (x_3 - x_4) \mathbbm{1}_{B_{v'}} (x_4 -z')  \| a_{x_3} a_{x_4} \Theta \xi_s \|^2 \Big)^{1/2} \\
\leq \; &CN^\epsilon \ell_B^6 \| V_N \|_1 \| \cN^{1/2} \xi_s \|^2 \, \Big( \sum_{u,v} \| A_{u,v}^{(3)} \|_2 \Big)^2 \Big(\sum_w \| D^{(6)}_w \|_2 \Big)^2 \leq C N^{4\kappa-1 + \epsilon} \end{split} \] 
where we used the bounds in Lemma \ref{lm:A3A6} and Lemma \ref{lm:boundN} to estimate $\| \cN^{1/2} \xi_s \|^2 \leq C N^{3\kappa-1}$. 

Finally, we consider \eqref{intermediate_bounds4}. From (\ref{eq:T3T1}), setting again $D^{(6)} (x) = \sup_s |A^{(6)} (s,x)|$, we find 
\[ \begin{split} \int dx dy \, & |V_N (x-y)|  \| \cT_1 (x,y) \Theta \xi_s \|^2 \\ &= \int dx dy dz dz' \, |V_N (x-y)| A^{(6)} (z-y, z-x) A^{(6)} (z'-y, z' -x) \langle a_z \Theta \xi_s , a_{z'} \Theta \xi_s \rangle \\  &\leq \int dx dy dz dz' \, |V_N (x-y)| D^{(6)} (z-x) D^{(6)} (z' -x)  \| a_z \Theta \xi_s \| \| a_{z'} \Theta \xi_s \| .  \end{split} \] 
Localizing $D^{(6)}$ and applying Cauchy-Schwarz, we arrive at
\[ \begin{split} \int dx &dy \, |V_N (x-y)|  \| \cT_1 (x,y) \Theta \xi_s \|^2  \\ \leq \; & \sum_{u,u' \in \Lambda_B} \Big( \int dx dy dz dz' \, |V_N (x-y)| |D_u^{(6)} (z-x)|^2 \mathbbm{1}_{B_{u'}} (z'-x) \| a_{z'} \Theta \xi_s \|^2 \Big)^{1/2} \\ &\hspace{.5cm} \times  \Big( \int dx dy dz dz' \, |V_N (x-y)| |D_{u'}^{(6)} (z'-x)|^2 \mathbbm{1}_{B_{u}} (z-x) \| a_{z} \Theta \xi_s \|^2 \Big)^{1/2} \\ \leq \; &C \ell_B^3  \| V_N \|_1 \Big( \sum_u \| D^{(6)}_u \|_2 \Big)^2 \| \cN^{1/2} \xi_s \|^2  \leq C N^{4\kappa-1+\beta/4+\epsilon}. \end{split} \] 
\end{proof} 

Finally, we prove (\ref{eq:cT1-cc}). 
\begin{lemma} \label{lm:interm3}
As in the proof of Lemma \ref{lm:tildeVN}, we let $\xi_s = e^{A^*\Theta-\hc}\Omega$, for all $s \in [0;1]$, $\theta = \gamma-1$, $V_N (x) = N^{2-2\kappa} V (N^{1-\kappa}x)$. Then 
\begin{equation*} \Big| \int dx dy \, V_N (x-y) \, \la \Theta\xi_s,  \big[ \cT_1 (x,y) , [ a (\check{\theta}_x) a (\check{\theta}_y) , A^*] \big] \Theta \xi_s \rangle \Big| \leq C N^{4\kappa-1} \, .\end{equation*} 
\end{lemma}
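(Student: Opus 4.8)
The plan is to compute the double commutator $[\cT_1(x,y),[a(\check\theta_x)a(\check\theta_y),A^*]]$ explicitly and show that, after integration against $V_N(x-y)$, it reduces to a sum of \emph{constant} terms (vacuum contractions) plus operator terms of the form $a^*a$, all of which are controlled by the bounds already available. Recalling from \eqref{eq:T3T1} that $\cT_1(x,y)=\int dz\,\check A^{(6)}(z-y,z-x)a_z$ and that, by an analogue of \eqref{eq:inter-comm}, $[a(\check\theta_x)a(\check\theta_y),A^*]$ is a cubic expression in creation operators with kernels built from $\check\theta$ and $\check A^{(3)}$, the commutator $[\cT_1(x,y),[a(\check\theta_x)a(\check\theta_y),A^*]]$ is quadratic in creation/annihilation operators: one $a_z$ from $\cT_1$ contracts with one of the three creation operators, leaving a term of the form $\int\,(\text{kernel})\,a^*a$ and a constant term (from the contraction that closes completely onto the vacuum on both sides).

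The key steps, in order, are: first, write out $[a(\check\theta_x)a(\check\theta_y),A^*]$ using the definition \eqref{eq:A-def} of $A$ and the canonical commutation relations, keeping track of the three delta-function contractions; second, commute this with $\cT_1(x,y)=\int dz\,\check A^{(6)}(z-y,z-x)a_z$, obtaining a single remaining annihilation operator $a_z$ that hits the three creation operators, producing (a) a constant term $\int dx dy\,V_N(x-y)\,(\text{product of }\check A^{(6)},\check A^{(3)},\check\theta,\check\theta)$ fully contracted, and (b) operator terms $\int\,(\cdots)\,a_{w}^*a_{w'}$; third, pass to momentum space — here the delta functions collapse several of the six momentum sums, and one is left with a sum over two or three momenta of products $|\hat V_N|\,|A^{(3)}|\,|A^{(6)}|\,|\theta|^2$ or similar, which I bound using \eqref{eq:sumqA}, \eqref{eq:sigmaA} from Lemma~\ref{lm:A3A6}, the bound $\|\theta\|_\infty\leq CN^{\beta/8}$ from \eqref{eq:Linftynormsigma}, and $\|\hat V_N\|_\infty\leq CN^{-1+\kappa}$; fourth, for the operator (non-constant) terms, bound them by $\langle\Theta\xi_s,\cN\Theta\xi_s\rangle$ or $\langle\Theta\xi_s,\tilde\cN\Theta\xi_s\rangle$ times the corresponding momentum-space coefficient sum, then invoke Lemma~\ref{lm:boundN} and Lemma~\ref{lemma:N_g} to get $\langle\xi_s,\cN\xi_s\rangle,\langle\xi_s,\tilde\cN\xi_s\rangle\leq CN^{3\kappa-1}$. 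Throughout, one inserts the cutoffs $\Theta$ as in the proof of Lemma~\ref{lm:tildeVN}, absorbing the $\Theta-1$ remainders into errors $N^{-\nu}$ via \eqref{eq:norm1-Theta}, \eqref{eq:easyboundNm}, \eqref{eq:AThetaA}.

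The main obstacle I anticipate is the constant (fully-contracted) term: unlike the operator terms, it cannot be bounded merely by $\cN$-expectations, and one must verify by a careful power count that the resulting momentum sum is genuinely $O(N^{4\kappa-1})$ rather than larger. Schematically this term looks like $N^{-1}N^{2\kappa}\sum_{p,q}\hat V_N(\cdots)A^{(3)}_{p,q}A^{(6)}_{p,q}\theta_{(\cdots)}\theta_{(\cdots)}$ (the prefactor $N^{-1}N^{2\kappa}$ coming from $A^{(3)}A^{(6)}\sim N^{-1}$ together with the $N^{2\kappa-2}$ normalization inside $V_N$ being already absorbed), and one must use $\sum_{p,q}|A^{(3)}_{p,q}||A^{(6)}_{p,q}|(1+\sigma_p^2+\sigma_q^2)\leq CN^{3\kappa-1}$ from \eqref{eq:sigmaA} together with the rapid decay of $\hat V_N$ — concretely, $\hat V_N$ restricts $|x-y|\lesssim N^{-(1-\kappa)}$, which is what makes $\|V_N\|_1\leq CN^{-1+\kappa}$ small enough. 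A secondary subtlety is bookkeeping: there are many terms generated by the two nested commutators (different ways the single $a_z$ can contract, and the constant versus operator parts of $[a(\check\theta_x)a(\check\theta_y),A^*]$ itself), and one must check that each is either constant-and-$O(N^{4\kappa-1})$ or operator-and-bounded-by-$\cN$-expectations; I would present one representative term in detail and state that the rest are handled analogously, as is done elsewhere in the paper.

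\begin{proof}[Proof sketch]
Using \eqref{eq:A-def} and the canonical commutation relations, $[a(\check\theta_x)a(\check\theta_y),A^*]$ is a sum of a cubic operator (three creation operators, with kernel a product of one factor of $\check\theta$ and $\check A^{(3)}$-type kernels, analogous to \eqref{eq:inter-comm}) and a term linear in $a^*$. Commuting with $\cT_1(x,y)=\int dz\,\check A^{(6)}(z-y,z-x)a_z$ from \eqref{eq:T3T1}, the single annihilation operator $a_z$ contracts with the creation operators, yielding a constant term $\cI_0$ (full contraction) plus terms $\cI_2$ of the form $\int(\text{kernel})\,a^*a$. Inserting the cutoff $\Theta$ on both sides as in Lemma~\ref{lm:tildeVN} (the remainders being $O(N^{-\nu})$ by \eqref{eq:norm1-Theta}, \eqref{eq:easyboundNm}, \eqref{eq:AThetaA}), it remains to bound $\int dxdy\,V_N(x-y)\cI_0$ and $\langle\Theta\xi_s,\int dxdy\,V_N(x-y)\,\cI_2\,\Theta\xi_s\rangle$.

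For $\cI_2$, passing to momentum space and using the delta functions, one is left with a sum of the schematic form $N^{-1}N^{2\kappa}\sum_{p,q}|\widehat{V_N}(\cdot)||A^{(3)}_{p,q}||A^{(6)}_{p,q}||\theta_{(\cdot)}|\,\langle\Theta\xi_s,a_\bullet^*a_\bullet\Theta\xi_s\rangle$; bounding $|\theta|\leq CN^{\beta/8}$ by \eqref{eq:Linftynormsigma}, $|\widehat{V_N}|\leq CN^{-1+\kappa}$ and using \eqref{eq:sumqA}, \eqref{eq:sigmaA} of Lemma~\ref{lm:A3A6} together with $\langle\xi_s,(\cN+\tilde\cN)\xi_s\rangle\leq CN^{3\kappa-1}$ from Lemma~\ref{lm:boundN} and Lemma~\ref{lemma:N_g}, one checks these are $O(N^{4\kappa-1})$. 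For the constant term $\cI_0$, one has $|\int dxdy\,V_N(x-y)\cI_0|\leq C\|\widehat{V_N}\|_\infty\sum_{p,q}|A^{(3)}_{p,q}||A^{(6)}_{p,q}|(1+\sigma_p^2+\sigma_q^2)\leq CN^{-1+\kappa}\cdot N^\kappa\cdot N^{3\kappa-1}=CN^{4\kappa-1}$ by \eqref{eq:sigmaA} (the extra $N^\kappa$ from the $N^\kappa/N$ prefactor implicit in $V_N$ normalization and the contraction combinatorics). The remaining terms are of the same type and handled analogously; this proves the claim.
\end{proof}
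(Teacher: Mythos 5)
Your overall strategy coincides with the paper's: by (the adjoint of) \eqref{eq:inter-comm}, the inner commutator equals $\int dr\,dt\,\check\theta(r)\check\theta(t)\,\big(\cT_1^*+\cT_3^*\big)(x-r,y-t)$, so the double commutator splits into $[\cT_1,\cT_1^*]$, which is a c-number, and $[\cT_1,\cT_3^*]$, which has the form $a^*a$; the paper bounds the c-number part exactly as you propose, in momentum space, writing it as $\sum_{p,q,k}\hat V_N(k)\,\theta_p\theta_q\,A^{(6)}_{q-k,p+k}A^{(6)}_{q,p}$ and invoking Lemma~\ref{lm:A3A6}. However, your derivation of this structure is internally inconsistent: you describe $[a(\check{\theta}_x)a(\check{\theta}_y),A^*]$ as consisting of ``three creation operators'' and then let the single $a_z$ from $\cT_1$ contract against it to produce $a^*a$ terms and a constant. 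One annihilation operator commuted with $a^*a^*a^*$ leaves $a^*a^*$, not $a^*a$, and can never close onto a constant. The correct statement (read off from \eqref{eq:T3T1}) is that $\cT_3^*$ has the form $a^*a^*a$; it is the extra annihilation operator in $\cT_3^*$ that produces the $a^*a$ terms, while the constant comes solely from $[\cT_1,\cT_1^*]$. You land on the right final structure, but not by the argument you give.

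For the quadratic piece the paper does \emph{not} argue in momentum space: it localizes $\check\theta$ and $\check{A}^{(3)}$ on $\ell_B$-boxes, introduces $D^{(6)}(x)=\sup_s|\check{A}^{(6)}(s,x)|$, and uses Cauchy--Schwarz together with Lemma~\ref{lemma:propertiesL} to reach $CN^{4\kappa-1-\beta/8+\epsilon}$. A momentum-space bound of the kind you sketch can be made to work — for a representative contraction the coefficient of $a_q^*a_q$ is $\sum_{k,p}\hat V_N(k)\theta_p\theta_q A^{(6)}_{q-k,k+p}A^{(3)}_{-q,-p}$, and the $k$-sum collapses via the symmetry $A^{(6)}_{k',q+p-k'}=A^{(6)}_{k',-p-q}$ and \eqref{eq:sumqA} — but this must be checked contraction by contraction and does not follow from the schematic sum $\sum_{p,q}|\hat V_N||A^{(3)}_{p,q}||A^{(6)}_{p,q}||\theta|$ you write, in which the crucial delta-function constraints are lost (a blind use of $\sum_p|\theta_p|\le CN$ would not close). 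Finally, $N^{-1+\kappa}\cdot N^{\kappa}\cdot N^{3\kappa-1}=N^{5\kappa-2}$, not $N^{4\kappa-1}$; your constant-term bound survives only because $5\kappa-2\le 4\kappa-1$ for $\kappa\le1$.
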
 
\begin{proof} 
From (\ref{eq:inter-comm}), we find 
\begin{equation}\label{eq:lastt1} \begin{split} \int dx &dy \, V_N (x-y) \, \la \Theta\xi_s,  \big[ \cT_1 (x,y) , [ a (\check{\theta}_x) a (\check{\theta}_y) , A^*] \big] \Theta \xi_s \rangle \\ &= \sum_{j=1,3} \int dx dy dr dt \, \check{\theta} (x-r) \check{\theta} (y-t)  \la \Theta \xi_s, \big[ \cT_1 (x,y) , \cT_j^* (r , t)\big] \Theta \xi_s \rangle  .\end{split} \end{equation} 
Let us first focus on $j=3$. With (\ref{eq:T3T1}), we compute 
\[ \begin{split}  \big[ \cT_1 &(x,y) , \cT_3^* (r , t)\big] \\ &= \int dx_1 dx_2 \, \sum_{\substack{(z_1, w_1) \in \{ (x_1,x_2) , (x_2 ,x_1) \} \\ (z_2 ,w_2) \in \{ (r ,t) , (t, r) \}}} A^{(6)} (z_1 - y, z_1 - x) A^{(3)} (x_1 -x_2, x_1 - z_2) a_{w_1}^* a_{w_2}. \end{split}  \]
The contribution of the term $[\cT_1 (x,y), \cT^*_3 (r,t)]^{(0)}$, associated with $z_1 =x_1, w_1 = x_2, z_2 = r, w_2 = t$, to (\ref{eq:lastt1}) can be bounded, localizing the kernels $\check{\theta}$ and $A^{(3)}$ and introducing $D^{(6)} (x) = \sup_s A^{(6)} (s, x)$, by  
\[ \begin{split} \Big| \int dx &dy \, V_N (x-y) \, \la \Theta\xi_s, [\cT_1 (x,y), \cT^*_3 (r,t)]^{(0)} \Theta \xi_s \rangle \Big| \\ = \; & \Big| \int dx dy dr dt dx_1 dx_2 V_N (x-y) \check{\theta} (x-r) \check{\theta} (y-t) \\ &\hspace{3cm} \times A^{(6)} (x_1 - y, x_1 -x) A^{(3)} (x_1 -x_2 , x_1- r) \langle \Theta \xi_s, a_{x_2}^* a_t \Theta \xi_s \rangle \Big| \\ 
 \leq \; &\sum_{u,v,w,w' \in \Lambda_B} \Big( \int dx dy dr dt dx_1 dx_2 \, |V_N (x-y)| \check{\theta}_w (x-r)^2 \check{\theta}_{w'} (y-t)^2 \\ &\hspace{7cm} \times A^{(3)}_{u,v} (x_1 - x_2 , x_1 - r)^2 \| a_t \Theta \xi_s \|^2 \Big)^{1/2} \\ &\times \Big( 
 \int dx dy dr dt dx_1 dx_2 \, |V_N (x-y)| D^{(6)} (x_1 - x)^2 \\ &\hspace{2cm} \times  \mathbbm{1}_{B_{w'}} (y-t) \mathbbm{1}_{B_u} (x_1 - x_2) \mathbbm{1}_{B_v} (x_1 - r)  \| a_{x_2} \Theta \xi_s \|^2 \Big)^{1/2} \\ \leq \; &C \ell_B^{9/2} \| V_N \|_1 \| \cN^{1/2} \xi_s \|^2 \| D^{(6)} \|_2 \Big( \sum_{u,v} \| A^{(3)}_{u,v} \|_2 \Big) \Big( \sum_w \| \check{\theta}_w \|_2 \Big)^2 \leq C N^{4\kappa-1-\beta/8+ \epsilon}.  \end{split} \]
The other contributions arising from the commutator $[\cT_1 (x,y) , \cT^*_3 (r,t)]$ can be bounded similarly. As for the term on the r.h.s. of (\ref{eq:lastt1}) with $j=1$, we compute
\[ [\cT_1 (x,y), \cT^*_1 (r,t) ] = \int dz \, \check{A}^{(6)} (z-y, z-x) \check{A}^{(6)} (z-t, z-r). \]
The contribution of this term to (\ref{eq:lastt1}) can be estimated switching to Fourier space. We find 
\[ \begin{split} \Big| \int dx dy &dr dt dz V_N (x-y) \check{\theta} (x-r) \check{\theta} (y-t) \check{A}^{(6)} (z-y, z-x) \check{A}^{(6)} (z-t, z-r) \Big| \\ & = \Big| \sum_{p,q,k \in \Lambda^*} \hat{V}_N (k)  \theta_q \theta_p A^{(6)}_{q-k, p+k} A^{(6)}_{q,p} \Big| \leq CN^{-1+\kappa} \sum_{k,p,q \in \Lambda^*}  |\theta_q | | \theta_p| |A^{(6)}_{q-k,p+k}| |A^{(6)}_{p,q}|. \end{split} \]
With the bounds in Lemma \ref{lm:A3A6}, 
we conclude that 
\[  \Big| \int dx dy dr dt dz V_N (x-y) \check{\theta} (x-r) \check{\theta} (y-t) \check{A}^{(6)} (z-y, z-x) \check{A}^{(6)} (z-t, z-r) \Big| \leq C N^{4\kappa-1}. \]
\end{proof}

\subsection{Bounds for relevant cubic expectations} 
\label{subsec:cubic2} 

In this subsection, we conclude the estimate for the energy of the trial state $\psi_N = W_{N_0} e^{B^* - B} \xi$, proving bounds for the expectations $\la \xi, (\cC_N^* + \cC_N) \xi \ra$, $\la \xi, \cV_N \xi \ra$ and $\la \xi, \cK \xi \ra$, with $\xi  = e^{A^* \Theta -\hc} \Omega$. 
\begin{lemma}
\label{lemma:C_N}
In position space, we write the operator $\cC_N$ from \eqref{eq:lmHN1} as  
\begin{equation}\label{eq:CN-pos} 
\cC_N=
         N_0^{1/2}\int dxdydz N^{2-2\kappa}V(N^{1-\kappa}(x-y))\check{\sigma}(x-z)a^*_xa^*_ya^*_z . 
       \end{equation}
Recalling the definition (\ref{eq:A3A6}) of the kernel $\check{A}^{(6)}$, we find 
\begin{equation}
\label{eq_Cn_f}
\begin{split} 
 \la e^{A^*\Theta-\hc}\Omega, &(\cC_N+\cC_N^*) e^{A^*\Theta-\hc}\Omega\ra \\ &= 2N_0^{1/2}\int dxdy N^{2-2\kappa}V(N^{1-\kappa}x)\check{\sigma} (y) \check{A}^{(6)}(x,y) + O (N^{4\kappa-1}) . 
 \end{split} 
 \end{equation}
\end{lemma}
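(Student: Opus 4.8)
The plan is to expand $\la e^{A^*\Theta-\hc}\Omega,(\cC_N+\cC_N^*)e^{A^*\Theta-\hc}\Omega\ra$ by Duhamel, exactly as in the proof of Lemma~\ref{lm:prelim1}. We regard $\cC_N+\cC_N^*$ as a self-adjoint cubic operator $\cC_K$ with kernel $K(u,v)=N_0^{1/2}V_N(u)\check\sigma(v)$, where $V_N(x)=N^{2-2\kappa}V(N^{1-\kappa}x)$; then $\|K\|_2\le CN_0^{1/2}\|V_N\|_2\|\check\sigma\|_2\le CN^{1+\kappa/4}$ and $\|A\|_2\le CN^{(3\kappa-1)/2}$ are only polynomially large, so all contributions containing a factor $(\Theta-1)$ produced by the Duhamel expansion are $O(N^{-\nu})$ once $\alpha$ and then $n$ are chosen large enough, by Lemma~\ref{lemma:propertiesL} (in particular \eqref{eq:norm1-Theta}, \eqref{eq:easyboundNm}, \eqref{eq:AThetaA}). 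Since $\la\Omega,(\cC_N+\cC_N^*)\Omega\ra=0$ and since the summand of $\cC_N+\cC_N^*$ built purely from creation operators commutes with $A^*$, denoting by $\cC^{\mathrm{ann}}$ the summand built from annihilation operators we get
\[
\la e^{A^*\Theta-\hc}\Omega,(\cC_N+\cC_N^*)e^{A^*\Theta-\hc}\Omega\ra=2\,\mathrm{Re}\int_0^1 ds\,\la\Theta\xi_s,[\cC^{\mathrm{ann}},A^*]\Theta\xi_s\ra+O(N^{-\nu}),
\]
with $\xi_s=e^{s(A^*\Theta-\Theta A)}\Omega$. As in \eqref{eq:comm-aaa}, $[\cC^{\mathrm{ann}},A^*]$ splits into a constant term $\la\Omega,[\cC^{\mathrm{ann}},A^*]\Omega\ra$, a quadratic term $T_2$, and a quartic term $T_4$.

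For the constant term, $\int_0^1\|\Theta\xi_s\|^2\,ds=1+O(N^{-\alpha})$ by \eqref{eq:norm1-Theta}, so it contributes $2\la\Omega,[\cC^{\mathrm{ann}},A^*]\Omega\ra$ up to a negligible error. To evaluate it, note $\la\Omega,[\cC^{\mathrm{ann}},A^*]\Omega\ra=\la\Omega,\cC^{\mathrm{ann}}A^*\Omega\ra=\la(\cC^{\mathrm{ann}})^*\Omega,A^*\Omega\ra$, an inner product of two three-particle vectors. Expanding it over the permutations $\pi\in S_3$ of the three arguments and using the identity $\sum_{\pi\in S_3}\check A(x_{\pi1}-x_{\pi2},x_{\pi1}-x_{\pi3})=\check A^{(6)}(x_1-x_2,x_1-x_3)$ — immediate from the definition \eqref{eq:A3A6} and the symmetries \eqref{eq:symmAxy} — one finds $\la\Omega,[\cC^{\mathrm{ann}},A^*]\Omega\ra=N_0^{1/2}\int dx\,dy\,V_N(x)\check\sigma(y)\check A^{(6)}(x,y)$, which is real; doubling it gives precisely the main term in \eqref{eq_Cn_f}.

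It remains to show $\int_0^1|\la\Theta\xi_s,T_i\Theta\xi_s\ra|\,ds=O(N^{4\kappa-1})$ for $i=2,4$; this is the heart of the proof, since the brute force estimate of Lemma~\ref{lm:prelim1} only yields $O(N^{5\kappa/2})$ here (Lee--Huang--Yang order, hence too large). We exploit the product structure of $K$ and the smallness $\|V_N\|_1\le CN^{\kappa-1}$. For $T_2$: contracting two annihilation operators of $\cC^{\mathrm{ann}}$ with two creation operators of $A^*$ gives, in momentum space, $T_2=\sum_p\hat g_p\,a_p^*a_p$ with $\hat g_p=N_0^{1/2}\sum_r\widehat{V_N}(r)\sigma_{p-r}A_{r,p-r}$; estimating $|A_{r,q}|\le 2N^{-1/2}|\eta_r||\sigma_q|$ and using $\|\eta\|_\infty\le C$, $\|\sigma\|_2^2\le CN^{3\kappa/2}$ (Lemmas~\ref{lemma:propertiesquadratickernels}, \ref{lemma:propertiesA}) gives $\|\hat g\|_\infty\le CN_0^{1/2}\|V_N\|_1N^{3\kappa/2-1/2}\le CN^{5\kappa/2-1}$, whence $|\la\Theta\xi_s,T_2\Theta\xi_s\ra|\le\|\hat g\|_\infty\la\xi_s,\cN\xi_s\ra\le CN^{11\kappa/2-2}\le CN^{4\kappa-1}$ for $\kappa<2/3$, by Lemma~\ref{lm:boundN}. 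For $T_4$: contracting one annihilation of $\cC^{\mathrm{ann}}$ with one creation of $A^*$ yields a quartic operator whose kernel is (a shift of) $V_N(u)\,\phi(w,v)$, where $\phi$ has Fourier coefficients $\sigma_qA_{p,-q}$, so $\|\phi\|_2^2=\sum_{p,q}\sigma_q^2|A_{p,q}|^2\le CN^{3\kappa-1}$ by \eqref{eq:sigmaA}, and $\phi$ inherits enough position-space decay that — exactly as for $\check A$ in Lemmas~\ref{lemma:sumlocalizedA} and \ref{lm:A3A6} — its box-localized pieces satisfy $\sum_{u,v\in\Lambda_B}\|\phi_{u,v}\|_2\le CN^{(3\kappa-1)/2+\epsilon}$. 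Since $\kappa\le 2/3$, the support of $V_N$ (scale $N^{-(1-\kappa)}$) meets only $O(1)$ boxes of side $\ell_B$, so we localize all kernels on $\Lambda_B$ and repeat the quartic estimate in the proof of Lemma~\ref{lm:prelim1}, using Lemma~\ref{lemma:propertiesL} to trade global for local particle numbers; this gives $|\la\Theta\xi_s,T_4\Theta\xi_s\ra|\le C\ell_B^3N_0^{1/2}\|V_N\|_2\big(\sum_{u,v\in\Lambda_B}\|\phi_{u,v}\|_2\big)N^{\epsilon}\la\xi_s,\cN\xi_s\ra\le CN^{7\kappa/4+\epsilon}\le CN^{4\kappa-1}$ (the last inequality using $\kappa\ge 1/2>4/9$).

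The main obstacle is the quartic term $T_4$: one must identify the contracted kernel $\phi$ as a partial convolution of $\check\sigma$ with $\check A$, establish for it the position-space decay and the box-localized $L^2$ bound (which relies on \eqref{eq:sigmaA}), and then carry out the $\Lambda_B$-localization together with the local-particle-number estimates of Lemma~\ref{lemma:propertiesL} carefully enough to land strictly below $N^{4\kappa-1}$; the smallness $\|V_N\|_1\le CN^{\kappa-1}$ is essential throughout. The quadratic term $T_2$, by contrast, is elementary once its momentum-space form is written down and $\|\eta\|_\infty\le C$ is invoked, and the bookkeeping of the $(\Theta-1)$-errors is routine given Lemma~\ref{lemma:propertiesL}.
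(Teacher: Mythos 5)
Your top-level architecture is sound and your identification of the main term is correct: the vacuum expectation $\la\Omega,[\cC_N^*,A^*]\Omega\ra$ does symmetrize to $N_0^{1/2}\int V_N(x)\check\sigma(y)\check A^{(6)}(x,y)\,dxdy$, and the $(\Theta-1)$ bookkeeping is handled exactly as in Lemma \ref{lm:prelim1}. The problems are in the remainder estimates, and the second one is serious.

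First, the quadratic term. Your bound $\|\hat g\|_\infty\leq CN^{5\kappa/2-1}$ only covers the contraction pattern you wrote down. In the pattern where the surviving annihilation operator is the $\sigma$-leg of $\cC_N^*$ and the surviving creation operator is the $\sigma$-leg of $A^*$, momentum conservation forces both $\sigma$'s onto the external momentum, and the internal sum is $\frac{N^\kappa}{N}\sum_r\hat V(r/N^{1-\kappa})\eta_{p+r}\sim N^\kappa$ by the scattering relation \eqref{eq:scatteringetacutoff}; this pattern gives $\hat g_p\sim N^{\kappa}\sigma_p^2$, so $\|\hat g\|_\infty\sim N^{\kappa+\beta/4}\gg N^{5\kappa/2-1}$, and its contribution is $N^{\kappa}\la\xi_s,\tilde\cN\xi_s\ra$ --- of size exactly $N^{4\kappa-1}$, not smaller. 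This is still admissible, but only via the weighted number operator $\tilde\cN$ and the sharp bound $\la\tilde\cN\ra\leq CN^{3\kappa-1}$ of Lemma \ref{lemma:N_g}; the estimate $\|\hat g\|_\infty\la\cN\ra$ is false as stated, and the term is not ``elementary.''

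Second, and decisively, the quartic term. Your claimed bound $C\ell_B^3N_0^{1/2}\|V_N\|_2\bigl(\sum_{u,v}\|\phi_{u,v}\|_2\bigr)N^{\epsilon}\la\xi_s,\cN\xi_s\ra$ does not follow from ``repeating the quartic estimate in the proof of Lemma \ref{lm:prelim1}.'' That estimate, applied to the uncontracted five-variable form (kernels $N_0^{1/2}V_N\otimes\check\sigma$ and $\check A$ sharing a vertex), yields $C\ell_B^3\bigl(\sum\|K_{u,v}\|_2\bigr)\bigl(\sum\|\check A_{u,v}\|_2\bigr)N^\epsilon\la\cN\ra\sim N^{5\kappa/2}$, which exceeds $N^{4\kappa-1}$ by $N^{\beta/2}$. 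Once you perform the contraction to form $\phi$, there are only four integration variables left and all four carry field operators, so there is no free shared vertex whose indicator integration produces the factor $\ell_B^3$; a Cauchy--Schwarz then gives at most (kernel $L^2$ norms) $\times N^\epsilon\times\la\cN\ra^{1/2}$ per pairing, i.e. again $N_0^{1/2}\|V_N\|_2\bigl(\sum\|\phi_{u,v}\|_2\bigr)\la\cN\ra^{1/2}\sim N^{5\kappa/2}$. To go below $N^{4\kappa-1}$ one must exploit the contraction $\sum_q\sigma_q^2|A_{p,q}|^2$ \emph{inside} the operator estimate, which is exactly what forces the paper into the machinery of Lemmas \ref{intermediate_lemma}--\ref{lm:interm3} (normal-ordering $\cT_3\cT_3^*$ to prove $\int|V_N|\,\|\cT_3^*(x,y)\Theta\xi_s\|^2\leq CN^{4\kappa-1-\beta+\epsilon}$), the bounds on $\cN_{A_y\star A_y}$ and $\ddot\cN_{A_y\star A_y}$ from Lemma \ref{lemma:N_g}, and several further Duhamel expansions with Fourier-space evaluation of the fully contracted terms. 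The quantity you need to bound is genuinely of size $\ll N^{4\kappa-1}$, but your one-step argument does not prove it; this missing step is the bulk of the paper's proof of the lemma.
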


\begin{proof} 
Expanding $e^{-A^* \Theta+\hc} a_x^* a_y^* e^{A^* \Theta - \hc}$, recalling the notation $\xi_s = e^{s A^* \Theta- \hc} \Omega$, $V_N (x) = N^{2-2-\kappa} V (N^{1-\kappa} x)$ and the definition (\ref{eq:T3T1}) of the operators $\cT_3 (x,y), \cT_1 (x,y)$, we find 
\begin{equation} \label{eq:CN1} \begin{split} 
&\la \xi , \cC_N \xi \ra \\ &= N_0^{1/2}  \int_0^1 ds \, \int dx dy dz \, V_N (x-y) \check{\sigma} (x-z)  \la \xi_s, [A, a_x^* a_y^*] e^{-(1-s) A^* \Theta + \hc} a_z^* \xi \ra + \cE_\Theta \\
&= N_0^{1/2}  \int_0^1 ds \, \int dx dy dz \, V_N (x-y) \check{\sigma} (x-z)  \la \xi_s , \cT_1 (x,y) e^{-(1-s) A^* \Theta + \hc} a_z^* \xi \ra  \\ &\hspace{.2cm} + N_0^{1/2}  \int_0^1 ds \, \int dx dy dz \, V_N (x-y) \check{\sigma} (x-z)  \la \Theta \xi_s ,\cT_3 (x,y) e^{-(1-s) A^* \Theta + \hc} a_z^* \xi \ra + \cE_\Theta \\ &=: S_1 + S_3 + \cE_\Theta \end{split} \end{equation} 
where the error term $\cE_\Theta$ includes the contribution of all terms proportional to $1-\Theta$ (to handle these terms and show that $\pm \cE_\Theta \leq C N^{-\nu}$ for arbitrary $\nu > 0$, we can proceed as in (\ref{eq:insTheta}), choosing $\alpha > 0$ and $n \in \bN$ large enough). We can bound
\begin{equation} \label{eq:S31} \begin{split} |S_3| \leq \; & C N^{1/2}  \int_0^1 ds \Big( \int dx dy |V_N (x-y)| \| \cT_3^* (x,y) \Theta \xi_s \|^2 \Big)^{1/2}   \\ &\hspace{4cm} \times \Big( \int dx dy \, |V_N (y)| \Big\| \int dz \, \check{\sigma} (x-z) a_z^* \xi \Big\|^2 \Big)^{1/2} .\end{split} \end{equation}  
From Lemma \ref{intermediate_lemma2}, we find 
\[  \int dx dy \, |V_N (x-y)| \| \cT_3^* (x,y) \Theta \xi_s \|^2 \leq C N^{4\kappa-1-\beta+\epsilon} \]
for any $\epsilon > 0$. On the other hand, we have 
\begin{equation}\label{eq:Kastar} \int dx \, \Big\| \int dz \, \check{\sigma} (x-z) a_z^* \xi_s  \Big\|^2 = \| \check{\sigma} \|_2^2 + \| \tilde{\cN}^{1/2} \xi _s\|^2 \leq C N^{3\kappa/2}  \end{equation}
for all $s \in [0;1]$, by Lemma \ref{lemma:propertiesquadratickernels} and Lemma \ref{lemma:N_g}. With $\| V_N \|_1 \leq C N^{-1+\kappa}$, we conclude that 
\[ |S_3| \leq C N^{4\kappa-1} N^{-\beta/4 + \epsilon} \] 
for any $\epsilon > 0$. 

As for the term $S_1$, we expand 
 \[ \begin{split} 
 S_1 = \; &N_0^{1/2} \int_0^1 ds \int dx dy dz dz' \, V_N (x-y) \check{\sigma} (x-z) \check{A}^{(6)} (z' - y , z' - x)  \\ &\hspace{5cm} \times \la \xi, e^{(1-s) A^* \Theta -\hc} a_{z'} e^{-(1-s) A^* \Theta + \hc} a_z^* \xi \ra \\ 
 = \; &N_0^{1/2}  \int dx dy dz dz' \, V_N (x-y) \check{\sigma} (x-z) \check{A}^{(6)} (z' - y , z' - x)  \la \xi, a_{z'}  a_z^* \xi \ra \\
 &+N_0^{1/2} \int_0^1 ds \int_0^{1-s} dt \int dx dy dz dz' \, V_N (x-y) \check{\sigma} (x-z) \check{A}^{(6)} (z' - y , z' - x)  \\ &\hspace{5cm} \times \la \xi, e^{t A^* \Theta -\hc} [ A^* , a_{z'} ] e^{-t A^* \Theta + \hc} a_z^* \xi \ra
 \end{split} \]
 and thus
  \[ \begin{split} 
 S_1 = \; &N_0^{1/2} \int dx dy dz \, V_N (x-y) \check{\sigma}  (x-z) \check{A}^{(6)} (z - y , z - x) \\ &+ N_0^{1/2}  
\int dx dy dz dz' \, V_N (x-y) \check{\sigma} (x-z) \check{A}^{(6)} (z' - y , z' - x)  \la \xi,a_z^*  a_{z'}  \xi \ra \\
 &+ N_0^{1/2}  \int_0^1 ds \int_0^{1-s} dt \int dx dy dz dz' \, V_N (x-y) \check{\sigma} (x-z) \check{A}^{(6)} (z' - y , z' - x)  \\ &\hspace{5cm} \times \la \xi_{1-t}, [ A^* , a_{z'} ] e^{-t A^* \Theta + \hc} a_z^* e^{t A^* \Theta-\hc} \xi_{1-t} \ra.
 \end{split} \]
Expanding $e^{-t A^* \Theta + \hc} a_z^* e^{tA^* \Theta-\hc}$ and using the symmetry (\ref{eq:symmAxy}), we conclude that
\begin{equation}\label{eq:S1-4} \begin{split} 
 S_1 = \; &N_0^{1/2} \int dx dy \, V_N (y) \check{\sigma} (x) \check{A}^{(6)} (y , x) \\ &+ N_0^{1/2} 
\int dx dy dz dz' \, V_N  (y) \check{\sigma} (x-z) \check{A}^{(6)} (y , x-z')  \la \xi,a_z^*  a_{z'}  \xi \ra \\
 &+ N_0^{1/2} \int_0^1 ds \int_0^{1-s} dt \int dx dy dz dz' \, V_N (y) \check{\sigma} (x-z) \check{A}^{(6)} (y , x-z')  \\ &\hspace{9cm} \times \la \xi_{1-t}, [ A^* , a_{z'} ] a_z^* \xi_{1-t} \ra \\ 
 &+ N_0^{1/2}   \int_0^1 ds \int_0^{1-s} dt \int_0^{t} d\tau \int dx dy dz dz' \, V_N (y) \check{\sigma} (x-z) \check{A}^{(6)} (y , x-z')  \\ &\hspace{6cm} \times \la \xi_{1-t}, [ A^* , a_{z'} ]  e^{-\tau A^* \Theta + \hc} [ A, a_z^* ] \xi_{1-t +\tau} \ra \\ =: \; &S_{1,1} + S_{1,2} + S_{1,3} + S_{1,4}.
 \end{split} \end{equation} 
The term $S_{1,1}$ produces the main term on the r.h.s. of (\ref{eq_Cn_f}). To control $S_{1,2} , S_{1,3} , S_{1,4}$, we need bounds for 
 \[ \begin{split}  \int dx \Big\| \int dz' \, \check{A}^{(6)} (y, x-z')  a_{z'}  \xi_s \Big\|^2 &= \| \cN^{1/2}_{A_y \star A_y} \xi_s \|^2 \\  \int dx \Big\| \int dz \, \check{A}^{(6)} (y,x-z)  \big[ a_z^* , A \big] \xi_s \Big\|^2 &= \| \ddot{\cN}^{1/2}_{A_y \star A_y} \xi_s \|^2 \end{split} \]
where we set $A_y (x) = \check{A}^{(6)} (y,x)$. To obtain such bounds, we use Lemma \ref{lemma:N_g} (and the subsequent remark). We have 
\[ \begin{split}  \langle \Omega, &[ A, [ \cN_{A_y \star A_y} , A^* ] ] \Omega \rangle = \int dx_1 dx_2 dx_3 dz dz' \, \check{A}^{(6)} (y, z' - z) \check{A}^{(6)} (y, z' - x_3) \\ &\hspace{6cm} \times  \check{A}^{(3)} (x_1 - x_2, x_1 - x_3) \check{A}^{(6)} (z-x_1 , z-x_2) .\end{split} \]
Switching to Fourier, we find 
\[  \langle \Omega, [ A, [ \cN_{A_y \star A_y} , A^* ] ] \Omega \rangle = \sum_{p,q,r,s \in \Lambda^*} A^{(6)}_{r,q} A^{(6)}_{s,q} A^{(3)}_{p,q} A^{(6)}_{p,q} e^{i (r+s) \cdot y} .\]
With Lemma \ref{lemma:propertiesquadratickernels} and Lemma \ref{lm:A3A6}, we conclude that  
\[ \begin{split} | \langle \Omega, [ A, [ \cN_{A_y \star A_y} , A^* ] ] \Omega \rangle | &\leq \sum_{p,q,r,s} 
|A^{(6)}_{r,q}| |A^{(6)}_{s,q}| | A^{(3)}_{p,q}| | A^{(6)}_{p,q} | \\ &\leq N \sum_{j=3,6} \sum_{p,q \in \Lambda^*} (1+ \sigma^2_q) |A_{p,q}^{(j)}|^2  \leq C N^{3\kappa} \end{split} \]
uniformly in $y$. 

On the other hand, setting $D^{(6)} (x) = \sup_y \abs{\check{A}^{(6)} (y,x)}$ and $D_u^{(6)} (x)= D^{(6)} (x) \mathbbm{1}_{B_u} (x)$ for every $u \in \Lambda_B$, we observe that, with Lemma \ref{lm:A3A6},  
\[ \sup_y  \sum_{u \in \Lambda_B} \| (A_y)_u \|_2  \leq \sum_{u \in \Lambda_B} \| D^{(6)}_u \|_2 \leq C N^{3\kappa/4 + 1/2 + \epsilon} \]
for any $\epsilon > 0$. Hence, Lemma \ref{lemma:N_g} (and the remark after it) implies that 
\[ \begin{split}  \sup_y \| \cN_{A_y \star A_y}^{1/2} \xi_s \|^2 &\leq C N^{3\kappa} (1 + N^{-\beta/8+\epsilon}) \\  \sup_y \| \ddot{\cN}_{A_y \star A_y}^{1/2} \xi_s \|^2 &\leq C N^{3\kappa - \beta/8 + \epsilon} \end{split} \]
for every $\epsilon > 0$. Choosing $\epsilon > 0$ small enough, we conclude that
\begin{equation}\label{eq:NaNa}   \sup_y \| \cN_{A_y \star A_y}^{1/2} \xi_s \| ,  \; \sup_y \| \ddot{\cN}_{A_y \star A_y}^{1/2} \xi_s \| \leq C N^{3\kappa/2} \end{equation}  
for all $s \in [0;1]$. With (\ref{eq:NaNa}) and with the bounds
 \[ \begin{split}  \int dx \Big\| \int dz \, \check{\sigma} (x-z) a_z \xi_s \|^2 = \| \cN_{\check{\sigma} \star \check{\sigma}}^{1/2} \xi_s \| = \| \tilde{\cN}^{1/2} \xi_s \| \leq C N^{(3\kappa-1)/2} \\  \int dx \Big\| \int dz \, \check{\sigma} (x-z) [ a^*_z , A]  \xi_s \|^2 = \| \ddot{\cN}_{\check{\sigma} \star \check{\sigma}}^{1/2} \xi_s \| \leq C N^{(3\kappa-1)/2} \end{split} \] 
which also follow from Lemma \ref{lemma:N_g} and the subsequent remark, we can now estimate the r.h.s. of (\ref{eq:S1-4}). We have    
\begin{equation} \label{eq:S12} \begin{split} |S_{1,2}| \leq \; &C N^{1/2}  \int dx dy |V_N (y)| \,\Big\| \int dz \, \check{\sigma} (x-z) a_z \xi \Big\| \Big\| \int dz' \, \check{A}^{(6)}_y (x-z') a_{z'} \xi \Big\| 
 \\ \leq \; &C N^{1/2} \| V_N \|_1 \,  \| \tilde{\cN}^{1/2} \xi \|  \sup_y \| \cN^{1/2}_{A_y \star A_y} \xi \| \leq C N^{4\kappa -1}  \end{split} \end{equation} 
and, similarly,
\begin{equation}\label{eq:S14}  \begin{split} |S_{1,4}| \leq \; &C N^{1/2} \| V_N \|_1 \int_0^1 ds \int_0^{1-s} dt  \int_0^{t} d\tau \,  \| \ddot{\cN}^{1/2}_{\check{\sigma} \star \check{\sigma}} \xi_{1-t+\tau} \| \, \sup_y \| \ddot{\cN}^{1/2}_{A_y \star A_y} \xi_{1-t} \|  \\
 \leq \; &C N^{4\kappa-1} . \end{split} \end{equation} 
 
To estimate $S_{1,3}$, we expand it once again, writing  
\begin{equation}\label{eq:S130} \begin{split}  S_{1,3} = \; & N_0^{1/2}   \int_0^1 ds \int_0^{1-s} dt \int_0^{1-t} d\tau \int dx dy dz dz' \, V_N (y) \check{\sigma} (x-z) \check{A}^{(6)} (y, x-z') \\ &\hspace{3.5cm} \times \big( \la \xi_{\tau} , [A^*, a_{z'} ] [ A, a_z^* ]  \xi_\tau \ra +   \la \xi_{\tau} , [A, [A^*, a_{z'} ]]  a_z^*  \xi_\tau \ra \big) \\ =:\; & S_{1,3,1} + S_{1.3.2} \end{split} \end{equation} 
We find  
\[ |S_{1,3,1}| \leq C N^{1/2} \| V_N \|_1 \int_0^1 d\tau \, \| \ddot{\cN}^{1/2}_{\check{\sigma} \star \check{\sigma}} \xi_\tau \| \, \sup_y \| \ddot{\cN}^{1/2}_{A_y \star A_y} \xi_\tau \|  \leq C N^{4\kappa-1} . \]
As for $S_{1,3,2}$, we write
\begin{equation}\label{eq:T04}
S_{1,3,2} = T_0 + T_2 + T_4 + \cE_\Theta \end{equation} 
with 
\begin{equation*} \begin{split} T_0 = \; &N_0^{1/2}  \int_0^1 ds \int_0^{1-s} dt \int_0^{1-t} d\tau \, \int dx dy dz dz' dx_1 dx_2  \, V_N (y) \check{\sigma} (x-z) \check{A}^{(6)} (y, x-z') \\ &\hspace{5.5cm} \times \check{A}^{(3)} (x_1 - x_2 , x_1 -z')  \check{A}^{(6)} (x_1 - x_2 , x_1 - z)  \\
T_2 = \; &N_0^{1/2}   \int_0^1 ds \int_0^{1-s} dt \int_0^{1-t} d\tau  \int dx dy dz dz' dx_1 dx_2 dx_3 \, V_N (y) \check{\sigma} (x-z)  \\ &\hspace{.3cm} \times \check{A}^{(6)} (y, x-z') \check{A}^{(6)} (x_1 - x_2, x_1 - z') \check{A}^{(6)} (x_3 - x_2, x_3 - z) \la \xi_\tau , a_{x_1}^* a_{x_3} \xi_\tau \ra \\ &+N_0^{1/2}   \int_0^1 ds \int_0^{1-s} dt \int_0^{1-t} d\tau  \int dx dy dz dz' dx_1 dx_2 dx_3 \, V_N (y) \check{\sigma} (x-z) \\ &\hspace{.3cm} \times \check{A}^{(6)} (y, x-z') \check{A}^{(3)} (x_1 - x_2, x_1 - z') \check{A}^{(6)} (x_3 - x_1, x_3 - x_2) \la \xi_\tau , a_{z}^* a_{x_3} \xi_\tau \ra  \\
T_4 = \; &N_0^{1/2}  \int_0^1 ds \int_0^{1-s} dt \int_0^{1-t} d\tau  \int dx dy dz dz' dx_1 dx_2 dx_3 dx_4 \, V_N (y) \check{\sigma} (x-z) \\ & \times  \check{A}^{(6)} (y, x-z') \check{A}^{(6)} (x_1 -x_2 , x_1 - z')  \check{A}^{(3)} (x_3 - x_4 , x_3 - x_1) \la \Theta\xi_\tau, a_z^* a_{x_2}^* a_{x_3} a_{x_4} \Theta \xi_\tau \ra 
 \end{split} \end{equation*} 
 and where $\pm \cE_\Theta \leq C N^{-\nu}$ for arbitrary $\nu > 0$, if $\alpha > 0$ and $n \in \bN$ are large enough. The identity (\ref{eq:T04}) follows similarly as (\ref{eq:decoNg}), because the operator 
 \[ N_0^{1/2}  \int dx dy dz dz' \,  V_N (y) \check{\sigma} (x-z), \check{A}^{(6)} (y, x-z') [ A^*, a_{z'} ] a_z^*  \]
 has exactly the form of $[\cN_g , A^* ]$ in (\ref{eq:NgAstar}), if we choose 
 \[ g (z) = N_0^{1/2} \int dx dy \, V_N (y) \check{\sigma}  (x)  \check{A}^{(6)} (y, x-z). \]
In contrast to (\ref{eq:decoNg}), however, the contribution $T_{4,2}$ does not appear in (\ref{eq:T04})  (it corresponds to the term $S_{1,3,1}$). To bound $T_0$, we switch to Fourier space. We find
\begin{equation}\label{eq:T0F} T_0 = \frac{N_0^{1/2}}{3} \sum_{r,k,p \in \Lambda^*} \hat{V}_N (r) \sigma_k \, A^{(6)}_{r,k} A^{(3)}_{p,k} A^{(6)}_{p, k}. \end{equation} 
With Lemma \ref{lm:A3A6} (and Lemma \ref{lemma:propertiesquadratickernels}, for the bound $\| \sigma \|_4^4 \leq CN^{3\kappa/2}$), we find 
\[ |T_0| \leq C N \| \hat{V}_N \|_\infty \sum_{j=3,6} \sum_{p,k \in \Lambda^*} (1 + \sigma_k^2) |A_{p,k}^{(j)}|^2  \leq C N^{4\kappa-1}. \]
Also $T_2$ is conveniently expressed in Fourier space as 
\begin{equation}\label{eq:T2F} \begin{split} T_2 = \; &\frac{N_0^{1/2}}{2}  \int_0^1 d\tau \, (1- \tau^2) \\ &\hspace{.3cm} \times \sum_{r,k,p \in \Lambda^*} \hat{V}_N (r) \, \sigma_k  \, A_{r,k}^{(6)} \, A_{p,k}^{(3)} A_{p,k}^{(6)} \, \big\la \xi_\tau , \big( a_k^* a_k + a_p^* a_p + a_{p+k}^* a_{p+k} \big) \xi_\tau \big\ra .\end{split}  \end{equation}  
With $\| \hat{V}_N \|_\infty \leq C N^{-1+\kappa}$, $\| \sigma \|_\infty \leq C N^{\beta/8}$, Lemma \ref{lm:A3A6} (and the identity $A^{(6)}_{p,k} = A^{(6)}_{p+k, -k}$ to handle the contribution proportional to $a_{p+k}^* a_{p+k}$), we obtain 
\[ |T_2| \leq C N^{\kappa-\beta/4} \int_0^1 d\tau \, \big\la \xi_\tau , (\cN + \tilde{\cN}) \xi_\tau \rangle. \] 
With Lemma \ref{lm:boundN} and Lemma \ref{lemma:N_g}, we conclude that $|T_2| \leq C N^{4\kappa-1-\beta/4}$.  Finally, we consider $T_4$. Setting $D^{(6)} (x) = \sup_y |\check{A}^{(6)} (y,x)|$, we can bound  
\[ \begin{split} |T_4| \leq \; &C N^{1/2} \int_0^1 d\tau \int dx dy dz' dx_1 dx_2 dx_3 dx_4 \, |V_N (y)| |D^{(6)} (x-z')| |\check{A}^{(6)} (x_1 - x_2 , x_1 -z')| \\ &\hspace{3cm} \times  |\check{A}^{(3)} (x_3 - x_4, x_3 -x_1)|  \|  a_{x_3} a_{x_4} \Theta \xi_\tau \| \Big\| \int dz \check{\sigma} (x-z) a_z a_{x_2} \Theta \xi_\tau \Big\| .\end{split} \] 
 Next, we localize the kernels $D^{(6)}, \check{A}^{(6)}$ and we apply Cauchy-Schwarz. We find 
 \[  \begin{split} |T_4| \leq \; &CN^{1/2} \int_0^1 d\tau  \sum_{u_1, u_2 ,v_1, v_2,w \in \Lambda_B} \\ &\times \Big(\int dx dy dz' dx_1 dx_2 dx_3 dx_4 \, |V_N (y)| |D_w^{(6)} (x-z')|^2 |A_{u_1, v_1}^{(3)} (x_3 - x_4, x_3 -x_1)|^2 
 \\ &\hspace{1.3cm}  \times  \mathbbm{1} (|x_1 - x_2 - u_2 | , |x_1 - z' - v_2| \leq 10 \ell_B)  \Big\| \int dz \, \check{\sigma} (x-z) a_{x_2} a_z \Theta \xi_\tau \Big\|^2  \Big)^{1/2} \\ 
&\times  \Big(\int dx dy dz' dx_1 dx_2 dx_3 dx_4 \, |V_N (y)| |A_{u_2, v_2}^{(6)} (x_1 - x_2 , x_1 -z')|^2 \\ &\hspace{1.3cm} \times 
\mathbbm{1} (|x-z'-w|, |x_3 - x_4-u_1| , |x_3 - x_1-v_1| \leq 10\ell_B) \| a_{x_3} a_{x_4}\Theta  \xi_\tau \|^2   \Big)^{1/2} .
\end{split} \]
Applying Lemma \ref{lemma:propertiesL} and appropriately shifting the integration variables, we obtain 
\begin{equation}\label{eq:T4last} \begin{split} 
|T_4| \leq \; &C N^{1/2+\epsilon} \int_0^1 d\tau \sum_{u_1, u_2 ,v_1, v_2,w \in \Lambda_B} 
\\ &\times \Big(\int dx dy dz' dx_1 dx_2 dx_3 dx_4 \, |V_N (y)| |D_w^{(6)} (z')|^2 |A_{u_1, v_1}^{(3)} (x_4, x_3)|^2 
 \\ &\hspace{5cm}  \times  \mathbbm{1} ( |x_1| \leq 10 \ell_B)  \Big\| \int dz \, \check{\sigma} (x-z) a_z \xi_\tau \Big\|^2  \Big)^{1/2} \\ 
&\times  \Big(\int dx dy dz' dx_1 dx_2 dx_3 dx_4 \, |V_N (y)| |A_{u_2, v_2}^{(6)} (x_2 , z')|^2 \\ &\hspace{5cm} \times 
\mathbbm{1} (|x|, |x_1| \leq 10\ell_B) \| a_{x_3}  \xi_\tau \|^2   \Big)^{1/2} \\
 \leq \; &C N^{1/2+\epsilon} \| V_N \|_1 \ell_B^{9/2} \big(\sum_u \| D^{(6)}_u \|_2 \big) \big( \sum_{u,v} \| \check{A}^{(6)}_{u,v} \|_2 \big)  \big( \sum_{u,v} \| \check{A}^{(3)}_{u,v} \|_2 \big) \\ &\hspace{7cm} \times \int_0^1 d\tau \, \| \cN^{1/2} \xi_\tau \| \| \tilde{\cN}^{1/2} \xi_\tau \| \\ \leq \; &C N^{4\kappa-1 -\beta/8 + \epsilon} 
 \end{split} \end{equation} 
for any $\epsilon > 0$. Choosing $\epsilon > 0$ small enough, we conclude that $|T_4| \leq C N^{4\kappa-1}$ and therefore that $|S_{1,3}| \leq C N^{4\kappa-1}$. This concludes the proof of (\ref{eq_Cn_f}). 
\end{proof}

In the next lemma, we estimate the expectation of the potential energy operator $\cV_N$. 
\begin{lemma} \label{lm:VN} 
From \eqref{eq:lmHN1}, recall the operator 
\begin{equation}
\cV_N= \frac{1}{2}\int dxdy N^{2-2\kappa}V(N^{1-\kappa}(x-y))a^*_xa^*_ya_x a_y  .
\end{equation}
We have 
\begin{equation}
 \label{eq_Vn_f}
\begin{split} 
 \la e^{A^*\Theta-\hc}\Omega, &\cV_N e^{A^*\Theta-\hc}\Omega\ra \\ &= \frac{1}{2}\int dxdy  N^{2-2\kappa}V(N^{1-\kappa}x)\check{A}^{(6)}(x,y)^2 + O (N^{4\kappa-1})  .  
\end{split} \end{equation}
\end{lemma}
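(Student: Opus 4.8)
The plan is to mirror the argument used for $\cC_N$ in Lemma~\ref{lemma:C_N}. Writing $V_N(x)=N^{2-2\kappa}V(N^{1-\kappa}x)$ and $\xi=e^{A^*\Theta-\hc}\Omega$, we have $\la\xi,\cV_N\xi\ra=\frac12\int dxdy\,V_N(x-y)\,\|a_xa_y\xi\|^2$. First I would commute the annihilation pair $a_xa_y$ through the cubic transformation by Duhamel, as in~\eqref{eq:CN1}: the contribution in which $a_xa_y$ acts as pure annihilation on $\Omega$ vanishes, and the lowest-order surviving term is the one in which $a_x$ and $a_y$ each contract one of the three creation operators of a single factor $A^*$, leaving one creation operator behind. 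Collecting the six pairings and using the definition~\eqref{eq:A3A6} of $\check{A}^{(6)}$ together with the symmetries~\eqref{eq:symmAxy} and the relation $\check{A}^{(6)}(-x,-y)=\check{A}^{(6)}(x,y)$, this lowest-order piece equals $\int dz\,\check{A}^{(6)}(x-z,y-z)\,a_z^*\Omega$ modulo the cutoff $\Theta$ and genuinely higher-order Duhamel terms. As in~\eqref{eq:insTheta} (using $\|V_N\|_1\le CN^{\kappa-1}$, $\|V_N\|_2\le CN^{(1-\kappa)/2}$ and Lemma~\ref{lemma:propertiesL}), the cutoffs $\Theta$ can be inserted and removed at the cost of an error $O(N^{-\nu})$ for arbitrary $\nu>0$.

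Squaring and writing $\la a_z^*\xi,a_{z'}^*\xi\ra=\delta(z-z')+\la\xi,a_z^*a_{z'}\xi\ra$ isolates the main term $\frac12\int dxdydz\,V_N(x-y)\,\check{A}^{(6)}(x-z,y-z)^2$. Absorbing $\int_\Lambda dz=1$ via $(x,y)\mapsto(x-z,y-z)$ and then applying the substitution that isolates $V_N$ together with the symmetry $\check{A}^{(6)}(a,b)=\check{A}^{(6)}(a-b,-b)$ from~\eqref{eq:symmAxy} and a reflection of the second variable turns this into $\frac12\int dxdy\,V_N(x)\,\check{A}^{(6)}(x,y)^2$, as claimed; note that $\check{A}^{(6)}$ is real, so the square equals the modulus squared.

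It remains to bound every error term by $O(N^{4\kappa-1})$. These belong to the same families already handled in the proofs of Lemmas~\ref{lm:tildeVN}, \ref{intermediate_lemma}, \ref{intermediate_lemma2} and \ref{lm:interm3}. The normal-ordered piece $\int V_N(x-y)\,\check{A}^{(6)}(x-z,y-z)\check{A}^{(6)}(x-z',y-z')\la\xi,a_z^*a_{z'}\xi\ra$ I would control by introducing $D^{(6)}(x)=\sup_s|\check{A}^{(6)}(s,x)|$, localizing on the boxes $B_u$, $u\in\Lambda_B$, and using $\sum_u\|D^{(6)}_u\|_2\le CN^{3\kappa/4+1/2+\epsilon}$ and $\sum_{u,v}\|\check{A}^{(6)}_{u,v}\|_2\le CN^{(3\kappa-1)/2+\epsilon}$ from Lemma~\ref{lm:A3A6}, $\|V_N\|_1\le CN^{\kappa-1}$, and $\|\cN^{1/2}\xi\|^2\le CN^{3\kappa-1}$ from Lemma~\ref{lm:boundN}. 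The pieces in which $a_xa_y$ acting on an $A^*$ leaves three creation operators are exactly the $\cT_3$-type terms of~\eqref{eq:T3T1}, estimated by Cauchy--Schwarz after box-localization as in~\eqref{intermediate_bounds1}. For the higher Duhamel remainders one proceeds as in Lemma~\ref{lemma:C_N}: keep expanding and reinserting $\Theta$ via Lemma~\ref{lemma:propertiesL} until the power counting closes (each further pair of commutators with $A^*,A$ produces a factor $N^{3\kappa/2-1}$), passing to Fourier space for the fully contracted contributions — where~\eqref{eq:sumqA}, \eqref{eq:sigmaA} and $\|\hat{V}_N\|_\infty\le CN^{\kappa-1}$ give the estimate — and using the localized $L^2$-bounds~\eqref{eq:locAj} together with the local particle-number estimates~\eqref{eq:localNbound}--\eqref{eq:localNboundexample} for the terms with surviving operators. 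Throughout, the bounds on $\|\cN_{h\star h}^{1/2}\xi_s\|$ and $\|\ddot{\cN}_{h\star h}^{1/2}\xi_s\|$ from Lemma~\ref{lemma:N_g} and the remark following it, applied with $h=\check{A}^{(6)}(y,\cdot)$ and with $h=\check{\sigma}$, are what keep the quadratic error pieces small uniformly in the external variables $x,y$.

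The main obstacle I anticipate is the same bookkeeping difficulty as in Lemma~\ref{lm:tildeVN}: because $\cV_N$ carries the bare annihilation pair $a_xa_y$ rather than a smeared pair, several intermediate error terms — in particular those containing a factor $\cT_1$, for which only the weak bound $\int|V_N|\,\|\cT_1\Theta\xi_s\|^2\le CN^{4\kappa-1+\beta/4+\epsilon}$ from~\eqref{intermediate_bounds4} is available — are not yet small at the first level of the Duhamel expansion and must be expanded one or two further times before the power of $N$ drops below $N^{4\kappa-1}$. Keeping track of which family requires how many expansions, and of the reinsertion of $\Theta$ at every step, is the delicate part; everything else is a routine, if lengthy, combination of Cauchy--Schwarz, box-localization of the kernels $\check{A}^{(j)}$, and the a priori bounds of Lemmas~\ref{lm:boundN}, \ref{lemma:N_g} and \ref{lemma:propertiesL}.
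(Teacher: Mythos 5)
Your overall strategy is the one the paper uses: a Duhamel expansion of $\la\xi_t,\cV_N\xi_t\ra$, identification of the main term as the square of the $\cT_1$-type contribution $\int dz\,\check{A}^{(6)}(z-y,z-x)a_z^*$ (equivalently, the paper computes $[\cV_N,A^*]+\hc=T_3+T_5$ and observes that the cubic part $T_3$ has exactly the form of $\cC_N$ with $N_0^{1/2}\check{\sigma}(x-z)$ replaced by $\check{A}^{(6)}(x-y,x-z)/2$, so the whole machinery of Lemma \ref{lemma:C_N} applies verbatim with $\| \check{\sigma}\|_2$-type bounds replaced by (\ref{eq:NaNa})). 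Your identification of the main term, the change of variables leading to $\tfrac12\int V_N(x)\check{A}^{(6)}(x,y)^2$, and the treatment of the normal-ordered quadratic remainder via $D^{(6)}$, box-localization and Lemma \ref{lemma:N_g} all match the paper.

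There is, however, one missing idea, and it is precisely the step where your plan of "keep expanding until the power counting closes" would fail. The quintic commutator term
\[ T_5=\int dx\,dy\,dz_1\,dz_2\,V_N(x-y)\,\check{A}^{(3)}(z_1-z_2,z_1-x)\,a_x^*a_y^*a_{z_1}^*a_{z_2}^*a_y+\hc \]
cannot be bounded by $\cN$, by the local number operators, or by the $\cN_{h\star h}$-type quantities of Lemma \ref{lemma:N_g}: after Cauchy--Schwarz it inevitably produces the factor $\|\cW_N^{1/2}\xi_s\|$ with $\cW_N=\tfrac12\int|V_N(x-y)|a_x^*a_y^*a_xa_y$, because $V_N$ is far too singular in $L^\infty$ ($\|V_N\|_\infty\sim N^{2-2\kappa}$) to be traded for a number operator. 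Expanding $T_5$ further does not help: its commutator with $A^*$ regenerates terms carrying the same structure $V_N(x-y)a_x^*a_y^*\cdots a_y$, so the iteration never terminates on its own. The paper closes this loop with a self-consistent Gronwall argument: it first proves the differential inequality with $V_N$ replaced by $|V_N|$, obtaining
\[ \la\xi_t,\cW_N\xi_t\ra\leq \tfrac{t^2}{2}\int dx\,dy\,|V_N(y)|\,|\check{A}^{(6)}(y,x)|^2+CN^{-\beta+\epsilon}\int_0^t ds\,\la\xi_s,\cW_N\xi_s\ra+CN^{4\kappa-1}, \]
notes that the inhomogeneous term is $O(N^{5\kappa/2})$, and concludes $\la\xi_t,\cW_N\xi_t\ra\leq CN^{5\kappa/2}$ by Gronwall; this a priori bound, together with the prefactor $N^{-\beta+\epsilon}$ and the identity $5\kappa/2-\beta=4\kappa-1-\beta/2$, is what makes the $T_5$ contribution negligible. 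Without this bootstrap on the potential energy itself your error analysis does not close, so you should add it explicitly.
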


\begin{proof}
With $\xi_s = e^{sA^* \Theta -\hc} \Omega$, we find, for any $t \in [0;1]$, 
    \begin{equation}\label{eq:VN00} 
    \begin{split}
        &\la e^{t A^*\Theta-\hc}\Omega, \cV_N e^{t A^*\Theta-\hc}\Omega\ra 
        = 
        \int_0^t ds \la \xi_s , \big( [\cV_N, A^*] + \hc \big) \xi_s \ra + \cE_\Theta 
    \end{split}
    \end{equation}
where, as usual, $\cE_\Theta$ collects all contributions proportional to $1-\Theta$. Note that we used $[\cV_N, \Theta]=0$ as both are position space multiplication operators for each fixed number of particles. Using Lemma \ref{lemma:propertiesL} and proceeding similarly as in (\ref{eq:insTheta}), we can include this term in the error $O (N^{4\kappa-1})$, if we choose $\alpha > 0$ and $n \in \bN$ large enough. We write 
 \begin{equation}
\label{eq:comVN}
\begin{split}
    [\cV_N,A^*] +\hc =&\; \frac{1}{2}\int dxdydz \, V_N (x-y) \check{A}^{(6)}(x-y,x-z) a_x^*a_y^*a_z^* + \hc \\ &+\int dx dy dz_1 dz_2 \, V_N (x-y) \check{A}^{(3)}(z_1-z_2,z_1-x)a_x^*a_y^* a_{z_1}^* a_{z_2}^* a_y + \hc \\=& \; T_3+T_5, 
\end{split}
\end{equation}
where we used again the notation $V_N (x) = N^{2-2\kappa} V (N^{1-\kappa} x)$. To bound the expectation $\la \xi_s, T_5 \xi_s \ra$, we introduce the operator
\[ \cW_N =\frac{1}{2}\int dxdy \, |V_N(x-y)|\;a^*_{x}a^*_{y}a_{x}a_{y}. \]
and we localize the kernel $\check{A}^{(3)}$. Applying Cauchy-Schwarz and Lemma  \ref{lemma:propertiesL}, we find 
\begin{equation*}
    \begin{split}
&\abs{\la \xi_s,T_{5} \xi_s \ra} \\
&\leq \; C\sum_{u,v\in \Lambda_{\sigma}}
\Big( \int dxdydz_1dz_2 |V_N(x-y)| |\check{A}^{(3)}_{u,v}(z_1-z_2,z_1-x)|^2\|a_y \Theta\xi_s\|^2 \Big)^{\frac{1}{2}} \\
&\times \Big( \int dxdydz_1dz_2 |V_N(x-y)| \mathbbm{1}( |z_1-z_2-u|,|z_1-x+v|\leq \ell_B) \|a_{z_1}a_{z_2} a_xa_y\Theta\xi_s\|^2 \Big)^{\frac{1}{2}}\\ 
&+\cE_\Theta \\ 
\leq \; &CN^{\epsilon} \|V_N\|_1^{1/2}  
(\sum_{u,v\in \Lambda_{\sigma}}\|\check{A}^{(3)}_{u,v}\|) \| \cN^{1/2} \xi_s \| \| \cW_N^{1/2} \xi_s \| + \cE_\Theta  \end{split} 
\end{equation*} 
and thus 
\begin{equation}\label{eq:T5-fin} |\la \xi_s, T_5 \xi_s \ra | \leq C N^{4\kappa-1} + C N^{-\beta + \epsilon} \la \xi_s , \cW_N \xi_s \ra \end{equation} 
for all $s \in [0;t]$. We consider now the expectation of the cubic term $T_3$ in (\ref{eq:comVN}). We observe that this term has exactly the same form as the cubic operator $\cC_N$ considered in Lemma \ref{lemma:C_N}, with $N_0^{1/2} \check{\sigma} (x-z)$ replaced by $\check{A}^{(6)} (x-y, x-z)/2$. Thus, we can bound the expectation of $T_3$ exactly as we estimated $\cC_N$. Similarly as in (\ref{eq:CN1}), we write 
\[  \la \xi_s , T_3 \xi_s \ra =\text{Re }  \big( \wt{S}_1 + \wt{S}_3 \big) + \cE_\Theta \]
where
\[ \begin{split} \wt{S}_1 &= \int_0^s d\tau \int dx dy dz \, V_N (x-y) \check{A}^{(6)} (x-y,x-z) \la \xi_\tau , \cT_1 (x,y) e^{-(1-\tau) A^* \Theta + \hc} a_z^* \xi_s \ra \\
\wt{S}_3 &= \int_0^s d\tau  \int dx dy dz \, V_N (x-y) \check{A}^{(6)} (x-y,x-z) \la \xi_\tau, \cT_3 (x,y) e^{-(1-\tau) A^* \Theta + \hc} a_z^* \xi_s \ra. \end{split} \] 
Replacing the bound (\ref{eq:Kastar}) by the estimate $\sup_y \| \check{A}^{(6)} (y, \cdot) \|_2 \leq C N^{3\kappa/4+1/2}$ and by (\ref{eq:NaNa}), we obtain $|\wt{S}_3| \leq C N^{4\kappa-1-\beta/4+\epsilon}$. Proceeding as in the derivation of (\ref{eq:S1-4}), we write 
\[ \begin{split} 
 \wt{S}_1 = \; &s \int dx dy \, V_N (y) \, |\check{A}^{(6)} (y, x)|^2 \\ &+  s 
\int dx dy dz dz' \, V_N  (y) \check{A}^{(6)} (y, x-z) \check{A}^{(6)} (y , x-z')  \la \xi,a_z^*  a_{z'}  \xi \ra \\
 &+ \int_0^s d\tau_1 \int_0^{1-\tau_1} d\tau_2  \int dx dy dz dz' \, V_N (y) \check{A}^{(6)} (y,x-z) \check{A}^{(6)} (y , x-z')  \\ &\hspace{9cm} \times \la \xi_{1-\tau_2}, [ A^* , a_{z'} ] a_z^* \xi_{1-\tau_2} \ra \\ 
 &+ \int_0^s d\tau_1 \int_0^{1-\tau_1} d\tau_2 \int_0^{\tau_2} d\tau_3 \int dx dy dz dz' \, V_N (y) \check{A}^{(6)} (y,x-z) \check{A}^{(6)} (y , x-z')  \\ &\hspace{6cm} \times \la \xi_{1-\tau_2}, [ A^* , a_{z'} ]  e^{-\tau A^* \Theta + \hc} [ A, a_z^* ] \xi_{1-\tau_2+\tau_3} \ra \\ =: \; &\wt{S}_{1,1} + \wt{S}_{1,2} + \wt{S}_{1,3} + \wt{S}_{1,4}.
 \end{split} \] 
Analogously as in (\ref{eq:S12}), (\ref{eq:S14}), replacing the bounds \[ N^{1/2} \| \tilde{\cN}^{1/2} \xi \|  \leq C N^{3\kappa/2} , \quad N^{1/2}  \| \cN^{1/2}_{\check{\sigma} \star \check{\sigma}} \xi_{1-t-\tau} \| \leq C N^{3\kappa/2} \] with the estimates (\ref{eq:NaNa}), we obtain $|\wt{S}_{1,2}|, |\wt{S}_{1,4}| \leq CN^{4\kappa-1}$. Similarly to (\ref{eq:S130}), we can decompose $\wt{S}_{1,3} = \wt{S}_{1,3,1} + \wt{S}_{1,3,2}$ with
\[ \begin{split} \wt{S}_{1,3,1} &= \int_0^s d\tau_1 \int_0^{1-\tau_1} d\tau_2 \int_0^{1-\tau_2} d\tau_3 \\  &\hspace{.5cm} \times \int dx dy dz dz' \, V_N (y) \check{A}^{(6)} (y,x-z) \check{A}^{(6)} (y, x-z')  \la \xi_{\tau_3} , [A^*, a_{z'} ] [ A, a_z^* ]  \xi_{\tau_3} \ra \\
\wt{S}_{1,3,2} &=  \int_0^s d\tau_1 \int_0^{1-\tau_1} d\tau_2 \int_0^{1-\tau_2} d\tau_3 \\  &\hspace{.5cm} \times  \int dx dy dz dz' \, V_N (y) \check{A}^{(6)} (y,x-z) \check{A}^{(6)} (y, x-z')   \la \xi_{\tau_3} , [A, [A^*, a_{z'} ]]  a_z^*  \xi_{\tau_3} \ra. \end{split} \]
Replacing again $N^{1/2} \| \ddot{\cN}^{1/2}_{\check{\sigma} \star \check{\sigma}} \xi_\tau \| \leq C N^{3\kappa/2}$ with (\ref{eq:NaNa}), we can show $|\wt{S}_{1,3,1}| \leq C N^{4\kappa -1}$. As for the term $\wt{S}_{1,3,2}$, we can write it as $\wt{S}_{1,3,2} = \wt{T}_0 + \wt{T}_2 + \wt{T}_4$, where $\wt{T}_0, \wt{T}_2, \wt{T}_4$ are defined like the contributions $T_0, T_2, T_4$ in (\ref{eq:T04}), but with $N_0^{1/2} \check{\sigma} (x-z)$ replaced by $\check{A}^{(6)} (y, x-z)$ (and with adjusted time integrals). The bound $|\wt{T}_4| \leq C N^{4\kappa-1-\beta/8 + \epsilon}$ can be shown exactly as in (\ref{eq:T4last}) (replacing, in the last step, the estimate $N^{1/2} \| \tilde{\cN}^{1/2} \xi_\tau \| \leq C N^{3\kappa/2}$ by (\ref{eq:NaNa})). Similarly to (\ref{eq:T0F}), (\ref{eq:T2F}), we write 
\[ \begin{split} 
\wt{T}_0 = \; &\frac{s}{2} (1-s^2/3) \sum_{r,q,k,p \in \Lambda^*} \hat{V}_N (r) A_{q,k}^{(6)} A^{(6)}_{q+r, k} A^{(3)}_{p,k} A^{(6)}_{p,k} \\
\wt{T}_2 = \; &\int_0^1 d\tau \,\big[s (1-\tau) - \frac{(s-\tau)^2}{2} \chi (\tau \leq s) \big]  \\ &\hspace{1cm} \times   \sum_{r,q,k,p \in \Lambda^*} \hat{V}_N (r) A_{q,k}^{(6)} A^{(6)}_{q+r, k} A^{(3)}_{p,k} A^{(6)}_{p,k} \big\la \xi_\tau,  \big(a_p^* a_p + a_k^* a_k + a_{p+k}^* a_{p+k} \big) \xi_\tau \big\rangle  \end{split} \]
in Fourier space. With the bounds in Lemma \ref{lm:A3A6}, Lemma \ref{lm:boundN}, Lemma \ref{lemma:N_g}, we find that $|\wt{T}_0| \leq C N^{4\kappa-1}$ and $|\wt{T}_2| \leq C N^{4\kappa-1 - \beta/4 + \epsilon}$, for any $\epsilon > 0$. Choosing $\epsilon >0$ small enough, we conclude that $|\wt{S}_{1,3,2}| \leq C N^{4\kappa-1}$ and thus that 
\[  \la \xi_s, T_3 \xi_s \ra \leq s \int dx dy \, V_N (y) |\check{A}^{(6)} (y,x)|^2 + C N^{4\kappa-1} \]
for all $s \in [0;1]$. Combining this bound with (\ref{eq:VN00}), (\ref{eq:comVN}), (\ref{eq:T5-fin}), we conclude that 
\begin{equation}\label{eq:VN-fin} \begin{split} \la &e^{t A^* \Theta-\hc} \Omega, \cV_N e^{t A^* \Theta-\hc} \Omega \ra \\ &\leq \frac{t^2}{2}  \int dx dy \, V_N (y) |\check{A}^{(6)} (y,x)|^2 + C N^{-\beta + \epsilon} \int_0^t ds \, \la \xi_s , \cW_N \xi_s \ra +  C N^{4\kappa-1}. \end{split} \end{equation}  
Similarly, replacing $V_N$ with $|V_N|$ (and observing that all estimates only depend on $\| V_N \|_1$, $\| \hat{V}_N \|_2$), we obtain  
\begin{equation}\label{eq:WN-fin} \begin{split} \la &e^{t A^* \Theta-\hc} \Omega, \cW_N e^{t A^* \Theta-\hc} \Omega \ra \\ &\leq \frac{t^2}{2}   \int dx dy \, |V_N (y)| |\check{A}^{(6)} (y,x)|^2 + C N^{-\beta + \epsilon} \int_0^t ds \, \la \xi_s , \cW_N \xi_s \ra +  C N^{4\kappa-1} .\end{split} \end{equation}  
With $\| \widehat{|V|} \|_\infty \leq C \| V \|_1 \leq C$, we find   
\[ \Big| \int dx dy |V_N (y)| |\check{A}^{(6)} (y,x)|^2 \Big| = \Big| N^{-1+\kappa} \sum_{r,p,q \in \Lambda^*} \widehat{|V|} (r/N^{1-\kappa}) \, A^{(6)}_{p,q} A^{(6)}_{p+r,q} \Big| \leq C N^{5\kappa/2} .\]
Choosing $\epsilon > 0$ small enough, (\ref{eq:WN-fin}) implies that  
\[ \la \xi_t , \cW_N \xi_t \ra \leq C N^{5\kappa/2} + C \int_0^t ds \, \la \xi_s , \cW_N \xi_s \ra  \] for all $t \in [0;1]$. By Gronwall's Lemma, we conclude that $\la \xi_t , \cW_N \xi_t \ra \leq C N^{5\kappa/2}$ 
for all $t \in [0;1]$. Inserting this bound in (\ref{eq:VN-fin}) and observing that $5\kappa/2 - \beta = 4\kappa-1 - \beta/2$, we obtain (\ref{eq_Vn_f}).
\end{proof}

Finally, we evaluate the expectation of the kinetic energy operator in our trial state $\xi = e^{A^* \Theta - \hc} \Omega$. 
\begin{lemma} \label{lm:Kxi}
We have 
\[ \begin{split}  \la e^{A^* \Theta - \hc} &\Omega, \cK e^{A^* \Theta - \hc} \Omega \ra \\ &= - N^{1/2} \int dx dy \, N^{2-2\kappa} (Vf) (N^{1-\kappa} x) \check{\sigma} (y) \check{A}^{(6)} (x,y) + O (N^{4\kappa-1}). \end{split} \]
\end{lemma}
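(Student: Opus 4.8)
The plan is to proceed exactly as in the proofs of Lemma \ref{lemma:C_N} and Lemma \ref{lm:VN}, using Duhamel's formula to expand $\langle e^{tA^*\Theta-\hc}\Omega, \cK e^{tA^*\Theta-\hc}\Omega\rangle$ and isolating the single surviving main term. First I would write $\cK$ in position space and compute the commutator $[\cK, A^*]$. Since $A^*$ is cubic with kernel $\check{A}(x-y,x-z)$, the commutator with the kinetic energy is again a cubic operator, whose momentum-space kernel is $A_\cK(p,q) = -N^{-1/2}N^\kappa\widehat{Vf}(p/N^{1-\kappa})\tilde\chi(p)\sigma(q)$ (as already recorded in the text before Lemma \ref{lm:Klow}). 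Thus $\langle e^{tA^*\Theta-\hc}\Omega,\cK e^{tA^*\Theta-\hc}\Omega\rangle = \int_0^t ds\,\langle\xi_s, (\cC_{\cK} + \cC_{\cK}^*)\xi_s\rangle + \cE_\Theta$, with $\cC_\cK$ the cubic operator with kernel $A_\cK$ and $\cE_\Theta$ collecting all terms proportional to $1-\Theta$, which are negligible by Lemma \ref{lemma:propertiesL} as in \eqref{eq:insTheta}. The key point is that $A_\cK(p,q)$ factors as $\big(-N^{-1/2+\kappa}\widehat{Vf}(p/N^{1-\kappa})\tilde\chi(p)\big)\sigma(q)$, i.e. in position space $\cC_\cK$ has exactly the structure of the operator $\cC_N$ in Lemma \ref{lemma:C_N}, with $N_0^{1/2}N^{2-2\kappa}V(N^{1-\kappa}(x-y))$ replaced by $-N^{1/2}N^{2-2\kappa}(Vf)(N^{1-\kappa}(x-y))$ (the factor $\tilde\chi$ being the only nuisance).

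To handle the cutoff $\tilde\chi(p)$, I would split $\widehat{Vf}(p/N^{1-\kappa})\tilde\chi(p) = \widehat{Vf}(p/N^{1-\kappa}) - \widehat{Vf}(p/N^{1-\kappa})(1-\tilde\chi(p))$. The first piece gives precisely the cubic operator with the full kernel $-N^{1/2}N^{2-2\kappa}(Vf)(N^{1-\kappa}(x-y))\check\sigma(x-z)$, which can be treated verbatim as $\cC_N$ in Lemma \ref{lemma:C_N}: running the same nested Duhamel expansion, all terms except the analogue of $S_{1,1}$ are $O(N^{4\kappa-1})$, and $S_{1,1}$ produces the main term $-N^{1/2}\int dx\,dy\, N^{2-2\kappa}(Vf)(N^{1-\kappa}x)\check\sigma(y)\check A^{(6)}(x,y)$ claimed in the statement. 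The second piece is exactly the cubic operator $A_\cK^{\mathrm{low}}$ from \eqref{eq:kineticlowmomenta}, for which Lemma \ref{lm:Klow} provides the bounds $\|\check A_\cK^{\mathrm{low}}\|_2 \le CN^{(5\kappa-1)/2}$ and, crucially, the localized bound $\sum_{u,v\in\Lambda_B}\|\check A^{\mathrm{low}}_{\cK,u,v}\|_2 \le CN^{(5\kappa-1)/2+\epsilon}$; feeding these into Corollary \ref{cor:prelim-2} gives $|\langle e^{tA^*\Theta-\hc}\Omega,(\cC_{A_\cK^{\mathrm{low}}}+\hc)e^{tA^*\Theta-\hc}\Omega\rangle| \le CN^{(3\kappa-1)/2}\big(N^{(5\kappa-1)/2} + N^{-\beta/4+\epsilon}N^{(5\kappa-1)/2}\big) = CN^{4\kappa-1}$ since $(3\kappa-1)/2 + (5\kappa-1)/2 = 4\kappa-1$. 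So the low-momentum correction is absorbed into the error.

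The main obstacle is the bookkeeping in re-running the long Duhamel expansion of Lemma \ref{lemma:C_N} with $Vf$ in place of $V$: one must check that every one of the error terms $S_3$, $S_{1,2}$, $S_{1,3}$ (with its sub-terms $T_0, T_2, T_4$ and $S_{1,3,1}$, $S_{1,3,2}$), and $S_{1,4}$ remains $O(N^{4\kappa-1})$. This requires only that $N^{2-2\kappa}(Vf)(N^{1-\kappa}\cdot)$ obeys the same norm bounds as $V_N$, namely $\|(Vf)_N\|_1 \le CN^{-1+\kappa}$ and $\|\widehat{(Vf)_N}\|_\infty \le CN^{-1+\kappa}$, which hold because $f \in L^\infty$ (as recalled after \eqref{eq:0en}) and $Vf \in L^1$; with these substitutions every estimate in Lemma \ref{lemma:C_N} goes through unchanged, since all the bounds there depended on $V_N$ only through $\|V_N\|_1$ and $\|\hat V_N\|_\infty$. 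I would state this explicitly and refer back to Lemma \ref{lemma:C_N} for the details rather than reproducing them, noting only the one genuinely new input — the treatment of $A_\cK^{\mathrm{low}}$ via Lemma \ref{lm:Klow} and Corollary \ref{cor:prelim-2} — which is why Lemma \ref{lm:Klow} was proved in advance.
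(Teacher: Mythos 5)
Your overall architecture matches the paper's: decompose $[\cK,A^*]$ into the main cubic operator with kernel $-N^{1/2}(Vf)_N(x-y)\check\sigma(x-z)$ plus the low-momentum correction $A^{\text{low}}_{\cK}$, dispose of the latter via Lemma \ref{lm:Klow} together with Lemma \ref{lm:prelim1}/Corollary \ref{cor:prelim-2}, and re-run the argument of Lemma \ref{lemma:C_N} for the former, noting that those estimates only use $\|V_N\|_1$ and $\|\hat V_N\|$-type norms, which $Vf$ satisfies equally well. All of that is correct and is exactly what the paper does.

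There is, however, one genuine gap: you dismiss the cutoff errors with ``$\cE_\Theta$ collecting all terms proportional to $1-\Theta$, negligible as in \eqref{eq:insTheta}.'' Duhamel gives $[\cK, A^*\Theta] = [\cK,A^*]\Theta + A^*[\cK,\Theta]$, and the second piece is \emph{not} of the form handled by \eqref{eq:insTheta} or by the $\cE_\Theta$-arguments in Lemmas \ref{lm:prelim1}, \ref{lemma:C_N}, \ref{lm:VN}. In those places the relevant operator either commutes with $\Theta$ (the paper explicitly uses $[\cV_N,\Theta]=0$, both being position-space multiplication operators) or the commutator with $\Theta$ can be rewritten as a difference of terms carrying a factor $(\Theta-1)$ sandwiched against operators that are relatively bounded by powers of $\cN$, so that \eqref{eq:norm1-Theta} and \eqref{eq:AThetaA} apply. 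The kinetic energy is unbounded relative to $\cN$, so writing $A^*[\cK,\Theta]=A^*\cK(\Theta-1)-A^*(\Theta-1)\cK$ leaves you with $\|\cK(1-\Theta)\xi_s\|$-type quantities that you have no a priori control over. The paper instead computes $[\cK,\Theta]$ explicitly on each $m$-particle sector as a first-order differential operator (see \eqref{eq:commFm}), bounds $|\nabla_{x_j}F_m|,|\Delta_{x_j}F_m|\leq C\ell_B^{-2}$ using that $\Theta'$ is supported in $[1,2]$, observes that $[\cK,\Theta]=\mathbbm{1}_{[1;\infty)}(\cL_n/N^\alpha)[\cK,\Theta]$ so that \eqref{eq:norm1-Theta} supplies a factor $N^{-\nu}$, and is then left with a bound proportional to $\|(\cK+1)^{1/2}\xi_s\|$. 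Closing this requires an a priori estimate $\la\xi_s,\cK\xi_s\ra\leq CN^{5\kappa/2+1}$, obtained by a separate Gronwall argument starting from the rough bound \eqref{eq:rough}. This self-consistent kinetic-energy bootstrap is the one piece of the proof that is genuinely new relative to Lemma \ref{lemma:C_N}, and it is missing from your proposal.
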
 

\begin{proof} 
We use the notation $\xi_t = e^{t A^* \Theta - \hc} \Omega$. Since $\la \Omega, \cK \Omega \ra = 0$, we have 
\begin{equation}\label{eq:Kexp} \begin{split} 
\la \xi_t, \cK \xi_t \ra = \; &2 \text{Re } \int_0^t ds \, \la \xi_s , [ \cK, A^* \Theta ] \xi_t \ra \\ = \; &2\text{Re } \int_0^t ds \, \la \xi_s, [\cK, A^*] \Theta \xi_s \ra + 2\text{Re } \int_0^t ds \, \la \xi_s , A^* [ \cK, \Theta ] \xi_s \ra. \end{split} \end{equation} 
Expressing the cubic operator $A^*$ in momentum space, as 
\[ A^* = \sum_{p,q \in \Lambda^*} A_{p,q} a_{p+q}^* a_{-p}^* a_{-q}^* \, , \]
and recalling the definition (\ref{eq:cubicfunctionmomentumspace}) of the coefficients $A_{p,q}$, we find
\begin{equation}\label{eq:commKA} \begin{split} [ \cK, A^*] &= 2N^{-1/2} \sum_{p,q \in \Lambda^*} p^2 \eta_p \sigma_q a_{p+q}^* a_{-p}^* a_{-q}^* \\ &= - N^{\kappa-1/2} \sum_{p,q \in \Lambda^*} \widehat{Vf} (p/N^{1-\kappa}) \tilde{\chi} (p) \sigma_q \, a_{p+q}^* a_{-p}^* a_{-q}^* .\end{split} \end{equation}  
With Lemma \ref{lemma:propertiesL}, it follows that 
\begin{equation} \label{eq:commKA2} 2 \text{Re } \int_0^t ds \, \la \xi_s, [ \cK, A^*] \Theta \xi_s \ra = 2 \text{Re } \int_0^t ds \, \la \xi_s, [\cK, A^* ] \xi_s \ra + O (N^{4\kappa-1}) \end{equation} 
if we choose the parameter $\alpha >0$ in the definition of the cutoff $\Theta$ large enough. Moreover, from (\ref{eq:commKA}), we obtain $\| (\cN+1)^{-3/2} [ \cK, A^* ] \| \leq C N^{1 + \kappa/4}$.  From Lemma \ref{lemma:propertiesL}, this implies the rough but useful a-priori bound
\begin{equation}\label{eq:rough} \Big| \int_0^t ds \la \xi_s, [\cK, A^* ] \Theta \xi_s \ra \Big| \leq C N^{5\kappa /2 + 1} .
\end{equation} 
Let us now consider the second term on the r.h.s. of (\ref{eq:Kexp}). We observe that the cutoff $\Theta$ preserves the number of particles and, on the $m$-particle sector of the Fock space $\cF = \oplus_{m=0}^\infty L^2_s (\Lambda ^m)$, it acts as multiplication operator with the function 
\[ F_m (x_1, \dots , x_m) = \Theta \Big( \int_{\Lambda } dw \, \big( \sum_{j=1}^m \chi_{\ell_B} (x_j - w) + 1 \big)^n / N^\alpha \Big) .\]
Since also the kinetic energy operator $\cK$ leaves the number of particles invariant, on the $m$-particle sector, the action of the commutator $[\cK, \Theta]$ is given by 
\begin{equation} \label{eq:commFm}  \begin{split} \sum_{j=1}^m \big[ &-\Delta_{x_j} , F_m (x_1, \dots , x_m) \big] \\ &= - 2 \sum_{j=1}^m (\nabla_{x_j} F_m) (x_1, \dots , x_m) \cdot \nabla_{x_j} - \sum_{j=1}^m (\Delta_{x_j} F_m) (x_1, \dots , x_m) .\end{split} \end{equation} 
We compute 
\[ \begin{split} \nabla_{x_j} F_m (x_1, \dots , x_m) = \; &n N^{-\alpha}  \Theta'  \Big( \int_{\Lambda } dw \, \big( \sum_{j=1}^m \chi_{\ell_B} (x_j - w) + 1 \big)^n / N^\alpha \Big) \\ &\times \int_{\Lambda } dw \, \big( \sum_{j=1}^m \chi_{\ell_B} (x_j - w) + 1 \big)^{n-1} \nabla \chi_{\ell_B} (x_j - w) .\end{split} \] 
The fact that $\Theta'$ is supported in the interval $[1;2]$ implies, on the one hand, that 
\[ [\cK, \Theta ] = \mathbbm{1}_{[1;\infty)} (\cL_n / N^\alpha ) [ \cK, \Theta ] \]
and, on the other hand, the bound 
\[ \begin{split} 
|\nabla_{x_j} F_m (x_1, \dots , x_m)|^2 \leq \; &n^2 N^{-2\alpha} \Big| \Theta'  \Big( \int_{\Lambda } dw \, \big( \sum_{j=1}^m \chi_{\ell_B} (x_j - w) + 1 \big)^n / N^\alpha \Big) \Big|^2 \\ & \times \Big| \int_{\Lambda } dw \big( \sum_{j=1}^m \chi_{\ell_B} (x_j - w) + 1 \big)^{n-1} |\nabla \chi_{\ell_B} (x_j - w)| \Big|^2 \\ \leq \; &C \| \nabla \chi_{\ell_B} \|_\infty^2 \leq C  \ell_B^{-2}  \end{split} \]
for a constant $C > 0$ (depending on the parameter $n \in \bN$). Similarly, we find 
\[ |\Delta_{x_j} F_m (x_1, \dots , x_m)| \leq C \ell_B^{-2}. \]
From (\ref{eq:commFm}), we conclude that 
\[ \| \frac{1}{(\cN+1)} [\cK , \Theta ] \xi_s \| \leq C \ell_B^{-2} \| (\cK+1)^{1/2} \xi_s \| \] 
for every $s \in [0;1]$ and therefore that 
\[ \Big|  \int_0^t ds \, \la \xi_s , A^* [ \cK, \Theta ] \xi_s \ra \Big| \leq C \ell_B^{-2} \int_0^t ds \, \| (\cN+1) \mathbbm{1}_{[1;\infty)} (\cL_n / N^\alpha) A \xi_s \| \| (\cK + 1)^{1/2} \xi_s \|. \]
 With Lemma \ref{lemma:propertiesL} we obtain that, for any $\nu > 0$,  
 \begin{equation} \label{eq:commTheta} \Big|  \int_0^t ds \, \la \xi_s , A^* [ \cK, \Theta ] \xi_s \ra \Big| \leq C N^{-\nu} \int_0^t ds \| (\cK+1)^{1/2} \xi_s \| \end{equation} 
 for every $t \in [0;1]$, if $\alpha > 0$ is large enough. Inserting the last bound together with  (\ref{eq:rough}) in (\ref{eq:Kexp}), we arrive at 
 \[ \la \xi_t, \cK \xi_t \ra \leq C N^{5\kappa/2+1} + C N^{-\nu} \int_0^t ds \la \xi_s , \cK \xi_s \ra \]
 for every $t \in [0;1]$. By Gronwall's Lemma, we obtain $\la \xi_t, \cK \xi_t \ra \leq C N^{5\kappa/2+1}$ for all $t \in [0;1]$. Plugging this back into (\ref{eq:commTheta}), we conclude from (\ref{eq:Kexp}) and (\ref{eq:commKA2}) that
 \begin{equation}\label{eq:latK}  \la \xi_t, \cK \xi_t \ra = 2 \text{Re } \int_0^t ds \la \xi_s, [\cK, A^* ] \xi_s \ra + O (N^{4\kappa-1}) \end{equation} 
 if the parameter $\alpha > 0$ is large enough. Translating (\ref{eq:commKA}) to position space, we find 
 \begin{equation}\label{eq:commKA22}  \begin{split} [\cK, A^* ] = \; &- \sqrt{N} \int dx dy dz \, N^{2-2\kappa} (Vf) (N^{1-\kappa} (x-y)) \check{\sigma} (x-z) a_x^* a_y^* a_z^* \\ &+ \int dx dy dz  \, \check{A}_{\cK}^\text{low} (x-y,x-z) \, a_x^* a_y^* a_z^* \end{split} \end{equation} 
with $A_{\cK}^\text{low}$ as defined in (\ref{eq:kineticlowmomenta}). To bound the second term, with the kernel $A_{\cK}^\text{low}$, we apply Lemma \ref{lm:prelim1}; using the bounds $\| A_{\cK}^\text{low} \|_2 \leq C N^{(5\kappa-1)/2}$ and 
\[ \sum_{u,v \in \Lambda_B} \| \check{A}^\text{low}_{\cK,u,v} \|_2 \leq C N^{(5\kappa-1)/2 + \epsilon} \]
from Lemma \ref{lm:Klow}, we conclude that this contribution is negligible, of the order $N^{4\kappa-1}$. As for the first term on the r.h.s. of (\ref{eq:commKA22}), we compare it with  (\ref{eq:CN-pos}) and we observe that it has exactly the same form as the cubic operator $\cC_N$, just with the prefactor $N_0^{1/2}$ replacing by $N^{1/2}$ and with the potential $V$ replaced by $Vf$. Hence, proceeding exactly as in Lemma \ref{lemma:C_N}, we conclude that 
\[ 2 \text{Re } \la \xi_s, [\cK, A^* ] \xi_s \ra = -2s N^{1/2} \int dx dy \, N^{2-2\kappa} (Vf) (N^{1-\kappa} x) \check{\sigma} (y) \check{A}^{(6)} (x,y) + O (N^{4\kappa-1}) \]
 and therefore, from (\ref{eq:latK}), that
 \[ \begin{split} \la e^{A^* \Theta-\hc} \Omega , &\cK e^{A^* \Theta-\hc} \Omega \ra \\ &= -N^{1/2} \int dx dy \, N^{2-2\kappa} (Vf) (N^{1-\kappa} x) \check{\sigma} (y) \check{A}^{(6)} (x,y) + O (N^{4\kappa-1}) \, . \end{split} \]
\end{proof}

\subsection{Proof of Theorem \ref{theorem:Fockspaceresult}} 
\label{subsec:proof}

We combine the results of Lemma \ref{lemma:C_N}, Lemma \ref{lm:VN} and Lemma \ref{lm:Kxi}.
\begin{lemma} \label{lm:cubic-exp} 
Let $\cC_N$ be the cubic operator defined in (\ref{eq:CtCtV}), $\cV_N$ the quartic potential energy operator and $\cK$ the quadratic kinetic energy operator. Then, we have 
\[ \begin{split} \la \xi, (\cK + &\cC_N + \cC_N^* + \cV_N) \xi \ra \\ = \; &N^{\kappa-1} \sum_{p,q \in \Lambda^*} \hat{V} (p/N^{1-\kappa}) (\eta_{\infty,p} + \eta_{\infty,p+q}) \sigma_q^2  
\\ &- N^{4\kappa-1}\sum_{|p|, |q|\geq N^{\kappa/2}} (\widehat{Vf}(p/N^{1-\kappa})+\widehat{Vf}((p+q)/N^{1-\kappa})) \\ &\hspace{5cm} \times 
\frac{ \widehat{Vf}(p/N^{1-\kappa})\, \widehat{Vf}(q/N^{1-\kappa})\, \widehat{Vf}((p+q)/N^{1-\kappa})}{4p^2q^2(p^2+q^2+(p+q)^2)}\\
        &+N^{4\kappa-1} \sum_{|p|, |q| \geq N^{\kappa/2}} (\widehat{Vf}(p/N^{1-\kappa}) + \widehat{Vf}((p+q)/N^{1-\kappa})) 
\\ &\hspace{5cm} \times  \frac{(p\cdot q) \, \widehat{Vf}(p/N^{1-\kappa}) \, \widehat{Vf}(q/N^{1-\kappa})^2}{4 q^4 p^2(p^2+q^2+(p+q)^2)}
 +O(N^{4\kappa-1}) .
        \end{split}\]
\end{lemma}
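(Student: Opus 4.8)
The plan is to add the three expansions from Lemma~\ref{lemma:C_N}, Lemma~\ref{lm:VN} and Lemma~\ref{lm:Kxi}, pass to momentum space, and simplify with the scattering equation. Since $x\mapsto N^{2-2\kappa}V(N^{1-\kappa}x)$ has Fourier coefficients $N^{\kappa-1}\hat V(\cdot/N^{1-\kappa})$ (and similarly for $Vf$), using the Fourier representations of $\check\sigma$ and $\check A^{(6)}$ together with the symmetries $A^{(6)}_{-p,-q}=A^{(6)}_{p,q}$ and \eqref{eq:symmApq}, the sum of the three identities reads
\begin{align*}
\la\xi,(\cK+\cC_N+\cC_N^*+\cV_N)\xi\ra
={}& N^{\kappa-1/2}\sum_{p,q\in\Lambda^*}\Big(\tfrac{2N_0^{1/2}}{N^{1/2}}\hat V(p/N^{1-\kappa})-\widehat{Vf}(p/N^{1-\kappa})\Big)\sigma_q A^{(6)}_{p,q}\\
&{}+\tfrac12 N^{\kappa-1}\sum_{p,q,r\in\Lambda^*}\hat V(r/N^{1-\kappa})\,A^{(6)}_{p,q}A^{(6)}_{p+r,q}+O(N^{4\kappa-1}).
\end{align*}
Using $N_0=N-\norm{\sigma}_2^2$ and the bounds of Lemma~\ref{lemma:propertiesquadratickernels}, Lemma~\ref{lemma:propertiesA} and Lemma~\ref{lm:A3A6}, replacing $N_0^{1/2}/N^{1/2}$ by $1$ costs only $O(N^{4\kappa-1})$, so the linear term effectively carries the coefficient $2\hat V-\widehat{Vf}$.

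Next I would expand $A^{(6)}_{p,q}$ into its six summands via \eqref{eq:Ajpq}, insert the explicit formula \eqref{eq:cubicfunctionmomentumspace} for each $A_{\bullet,\bullet}$, and in every resulting sum replace $\sigma$ by $\eta_\infty$ --- controlling the error through \eqref{eq:eta-sigma} and Lemma~\ref{lemma:scattering} --- and replace the smooth cutoffs $\tilde\chi,\chi$ by the sharp constraints $|p|,|q|\ge N^{\kappa/2}$, the discarded low-momentum regions being estimated as in \eqref{eq:Nbeta4} and contributing $O(N^{4\kappa-1})$. The difference $\sigma_q^2-\eta_{\infty,q}^2$, together with a Taylor expansion of $\eta_{\infty,\cdot}$ evaluated at a shifted momentum (which gives a factor $p\cdot q$), is precisely what produces the last sum in the statement with numerator $(p\cdot q)\,\widehat{Vf}(p/N^{1-\kappa})\widehat{Vf}(q/N^{1-\kappa})^2$ and constant $1/4$.

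The crucial algebraic step is to eliminate the bare potential $\hat V$. Each time a factor $\hat V(p/N^{1-\kappa})$ or a convolution $\tfrac1N\sum_r\hat V((p-r)/N^{1-\kappa})\eta_{\infty,r}$ appears --- the latter arising exactly when two $A$-pieces inside $A^{(6)}_{p,q}A^{(6)}_{p+r,q}$ share a summation momentum --- I would use the scattering identities \eqref{eq:scatteringdiscrete}, \eqref{eq:scatteringetacutoff} and the identity $p^2\eta_{\infty,p}=-\tfrac{N^\kappa}{2}\widehat{Vf}(p/N^{1-\kappa})$ to rewrite it in terms of $\widehat{Vf}$ and $p^2\eta_{\infty,p}$, up to errors of order $N^{4\kappa-1}$. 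This renormalization turns the $2\hat V-\widehat{Vf}$ linear term and the bulk of the $\cV_N$ quadratic term into the sum over $|p|,|q|\ge N^{\kappa/2}$ of $\widehat{Vf}$-factors with denominator $4p^2q^2(p^2+q^2+(p+q)^2)$ and the minus sign of the statement, and leaves behind exactly the unrenormalized remainder $N^{\kappa-1}\sum_{p,q}\hat V(p/N^{1-\kappa})(\eta_{\infty,p}+\eta_{\infty,p+q})\sigma_q^2$, i.e. the first sum. Collecting the three surviving pieces gives the claim.

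The main obstacle will be the bookkeeping in these last two steps: $A^{(6)}$ appears linearly once and quadratically once and is a six-term object, so a large number of contributions must be produced, sorted by size (distinguishing $|p|\sim N^{\kappa/2}$, where $\eta_{\infty,p}=O(1)$, from $|p|\gg N^{\kappa/2}$), and then matched against the three sums in the statement. In particular the scattering equation must be used exactly once on each occurrence of $\hat V$ to avoid double counting, and the $(p\cdot q)$-term has to be extracted with the right constant $1/4$ rather than merely bounded --- that is the most delicate point.
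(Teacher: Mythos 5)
Your overall strategy coincides with the paper's: sum the three expansions from Lemmas \ref{lemma:C_N}, \ref{lm:VN}, \ref{lm:Kxi}, pass to momentum space (arriving at exactly the two displays you write, with coefficient $2\hat V-\widehat{Vf}=2\widehat{Vw}+\widehat{Vf}$ on the linear term), expand $A^{(6)}$ into its six summands, and use the discrete scattering equation \eqref{eq:scatteringdiscrete}/\eqref{eq:scatteringetacutoff} on the convolutions $\tfrac1N\sum_r\hat V((p-r)/N^{1-\kappa})\eta_{r}$ that arise when two $A$-factors share a summation momentum. That much is right.

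However, the mechanism you propose for the third sum (the $p\cdot q$ term) is wrong, and it would not produce that term. The difference $\sigma_q^2-\eta_{\infty,q}^2$ is an \emph{error} term: after a symmetrization $q\to -q$ it contributes only $O(N^{4\kappa-1})$ and is discarded. Likewise, a Taylor expansion of $\eta_{\infty,\cdot}$ at a shifted momentum is not a controlled approximation here, since in the relevant region $|p|$ and $|q|$ are comparable (both range over $[N^{\kappa/2},N^{1-\kappa}]$). The $(p\cdot q)$ factor actually arises purely algebraically, from decomposing
\[
A_{p,q}+A_{q,p}=N^{-1/2}\eta_p\sigma_q-N^{-1/2}\frac{2\,p\cdot q}{p^2+q^2+(p+q)^2}\,\eta_p\sigma_q+\big(\text{terms carrying }\sigma-\eta\big),
\]
i.e.\ from the identity $p^2+q^2+(p+q)^2-2p\cdot q=2(p^2+q^2)$, applied both to the linear term and inside the quartic term. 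The third sum of the statement is then obtained from a cancellation your sketch does not capture: the $(p\cdot q)$-piece of the linear/cubic contribution carries $-\hat V$, the $(p\cdot q)$-piece of the quartic contribution (after renormalizing its inner convolution) carries $+\widehat{Vw}$, and only their sum produces the coefficient $-\widehat{Vf}=-(\hat V-\widehat{Vw})$ appearing in the statement. Without tracking this cancellation you would end up with $\hat V$ or $\widehat{Vw}$ in place of $\widehat{Vf}$ in the third sum.

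Two further points your bookkeeping would have to get right and currently does not address: (i) in the renormalized cubic contribution the six summands of $A^{(6)}$ do \emph{not} all acquire the same coefficient --- after rearranging the quartic term, the pair $A_{p,-p-q}+A_{-p-q,p}$ appears with weight $2$ in the convolution piece, so that four summands end up multiplied by $\hat V$ (giving the first sum, which indeed keeps the bare potential) and two by $\widehat{Vf}$ (giving the second sum); and (ii) the quartic term also produces a contribution with \emph{two} factors of $\frac{2p\cdot q}{p^2+q^2+(p+q)^2}$ and a double convolution, which is not obviously $O(N^{4\kappa-1})$ and requires a separate symmetrization argument in $r\to-r$ to be discarded.
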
 

\begin{proof} 
From Lemma \ref{lemma:C_N} and Lemma \ref{lm:Kxi}, we find, writing all contributions in momentum space 
\begin{equation}\label{eq:KC} \begin{split} \langle \xi, (\cK + \cC_N + \cC_N^*) \xi \ra &= \; N^{\kappa-1/2} \sum_{p,q \in \Lambda^*} \big( 2 \widehat{Vw}  (p/N^{1-\kappa}) + \widehat{Vf} (p/N^{1-\kappa}) \big) \sigma_q \\ &\times (A_{p,q}+A_{q,p} + A_{-p-q,q}+A_{q,-p-q} + A_{p,-p-q}+A_{-p-q,p})\\
&+O(N^{4\kappa-1}).
\end{split} \end{equation} 
Note that we have replaced $N_0$ by $N$ in the contribution from $\cC_N$ using that $\abs{N_0^{1/2}-N^{1/2}}\leq N^{3\kappa/2-1}$ and that the whole term is of order $N^{5\kappa/2}$, see Lemma \ref{lemma:propertiesquadratickernels} and \ref{lemma:propertiesA}.
As for the expectation of $\cV_N$, we find, from Lemma \ref{lm:VN}, 
\[ \la \xi, \cV_N \xi \ra = \frac{N^{\kappa-1}}{2} \sum_{p,q,r \in \Lambda^*} \hat{V} (r/N^{1-\kappa}) A^{(6)}_{p,q} A^{(6)}_{p-r,q} + O(N^{4\kappa-1}) .\]
Since $A^{(6)}_{p,q} = A^{(6)}_{-p-q,q}$, we have  
\[ \begin{split} \frac{N^{\kappa-1}}{2} &\sum_{p,q,r \in \Lambda^*} \hat{V} (r/N^{1-\kappa}) A^{(6)}_{p,q} A^{(6)}_{p-r,q} \\
= \; &N^{\kappa-1} \sum_{p,q,r \in \Lambda^*} \hat{V} (r/N^{1-\kappa})(A_{p-r,q} + A_{q,p-r} + A_{-p-q+r, p-r})  A^{(6)}_{p,q}  \\ 
= \; &N^{\kappa-1} \sum_{p,q,r \in \Lambda^*} \hat{V} (r/N^{1-\kappa})(A_{p-r,q} + A_{q,p-r})  \\ &\hspace{2cm} \times (A_{p,q}+A_{q,p} + A_{-p-q,q}+A_{q,-p-q} + A_{p,-p-q}+A_{-p-q,p}) \\ &+N^{\kappa-1} \sum_{p,q,r \in \Lambda^*}  \hat{V} (r/N^{1-\kappa}) A_{p,q} \\ &\hspace{.2cm} \times (A_{q+r,-p-q} + A_{-p-q, q+r} + A_{p-r, -p-q} + A_{-p-q, p-r} +  A_{p-r, q+r}  + A_{q+r, p-r}) .
\end{split} \]
The last term corresponds to the contribution proportional to $A_{-p-q+r,p-r}$ in the first line, after switching to new variables $-p-q +r \to p, p-r \to q$. Rewriting 
\[\begin{split}  \sum_{p,q,r} \hat{V} (r/N^{1-\kappa}) &A_{p.q} (A_{q+r,-p-q} + A_{-p-q, q+r} + A_{p-r,-p-q} + A_{-p-q,p-r}) \\ &= \sum_{p,q,r} \hat{V} (r/N^{1-\kappa}) (A_{p,q} + A_{q,p}) (A_{p-r,-p-q} + A_{-p-q,p-r}) \\ &= \sum_{p,q,r} \hat{V} (r/N^{1-\kappa}) (A_{p-r,q} + A_{q,p-r})  (A_{p,-p-q} + A_{-p-q,p}) \end{split}   \]
we conclude that 
\[ \begin{split}  \la \xi, \cV_N \xi \ra = \; &N^{\kappa-1} \sum_{p,q,r \in \Lambda^*} \hat{V} (r/N^{1-\kappa})(A_{p-r,q} + A_{q,p-r})  \\ &\hspace{2cm} \times (A_{p,q}+A_{q,p} + A_{-p-q,q}+A_{q,-p-q} + 2A_{p,-p-q}+2A_{-p-q,p}) \\ &+N^{\kappa-1} \sum_{p,q,r \in \Lambda^*}  \hat{V} (r/N^{1-\kappa}) A_{p,q} (A_{p-r, q+r}  + A_{q+r, p-r}) 
+ O(N^{4\kappa-1}).
\end{split} \]
Next, we observe that 
\begin{equation} \label{eq:A+A} \begin{split} 
A_{p-r,q} &+ A_{q,p-r} = N^{-1/2}  \eta_{p-r} \sigma_q - N^{-1/2} \frac{2(p-r) \cdot q}{(p-r)^2 + q^2 + (p-r+q)^2} \sigma_q \eta_{p-r} \\ + \; &N^{-1/2} \frac{2q^2 \eta_q (\sigma_{p-r} - \eta_{p-r})}{(p-r)^2 + q^2 + (p-r+q)^2} + N^{-1/2}  \frac{2 q^2 \eta_{p-r} (\eta_q -\sigma_q)}{(p-r)^2 + q^2 + (p-r+q)^2}. \\
&  \end{split} \end{equation}
Hence, we can decompose 
\[ \la \xi , \cV_N \xi \ra = \sum_{j=1}^5 C_{\cV,j} + O(N^{4\kappa-1}) \]
where
\begin{equation}\label{eq:CV1-5} \begin{split} 
C_{\cV,1} &= N^{\kappa-3/2} \sum \hat{V} (r/N^{1-\kappa}) \eta_{p-r} \sigma_q  \\ &\hspace{3cm} \times (A_{p,q}+A_{q,p} + A_{-p-q,q}+A_{q,-p-q} + 2A_{p,-p-q}+2A_{-p-q,p})  \\
C_{\cV,2} &= -N^{\kappa-3/2} \sum \hat{V} (r/N^{1-\kappa}) \frac{2 (p-r) \cdot q}{(p-r)^2 + q^2 + (p-r+q)^2} \eta_{p-r} \sigma_q \\ &\hspace{3cm} \times (A_{p,q}+A_{q,p} + A_{-p-q,q}+A_{q,-p-q} + 2A_{p,-p-q}+2A_{-p-q,p})  \\
C_{\cV,3} &= N^{\kappa-3/2} \sum \frac{2q^2 (\eta_q - \sigma_q) \eta_{p-r}}{(p-r)^2 + q^2 + (p-r+q)^2} \\ &\hspace{3cm} \times (A_{p,q}+A_{q,p} + A_{-p-q,q}+A_{q,-p-q} + 2A_{p,-p-q}+2A_{-p-q,p})  \\
C_{\cV,4} &= N^{\kappa-3/2} \sum \frac{2q^2 \eta_q (\sigma_{p-r} - \eta_{p-r})}{(p-r)^2 + q^2 + (p-r+q)^2} \\ &\hspace{3cm} \times (A_{p,q}+A_{q,p} + A_{-p-q,q}+A_{q,-p-q} + 2A_{p,-p-q}+2A_{-p-q,p}) \\
C_{\cV,5} &= N^{\kappa-1} \sum \hat{V} (r/N^{1-\kappa}) A_{p,q} (A_{p-r,q+r} + A_{q+r,p-r}) .
\end{split} \end{equation} 
The last three terms are negligible. In fact, with the bounds in Lemma \ref{lemma:propertiesA}, we find
\begin{equation*}
\begin{split}
\abs{C_{\cV,5}} 
&\leq C N^{5\kappa-2} \sum_{\substack{p,q,r\in\Lambda^*\colon\\ p,q, p+r, q-r\neq 0}} \abs{\hat{V}(r/N^{1-\kappa})}\abs{p}^{-2}\abs{p+r}^{-2}\abs{q}^{-2}\abs{q-r}^{-2}\\
&\leq C N^{5\kappa-2} \sum_{r\in\Lambda^*_+} \abs{\hat{V}(r/N^{1-\kappa})}\abs{r}^{-2}
\leq C N^{4\kappa-1}.
\end{split}
\end{equation*}
Moreover, 
\begin{equation*}
\begin{split}
\abs{C_{\cV,4}} 
&\leq C N^{\kappa-2} \norm{\eta-\sigma}_1 (\norm{\eta}_2^2+\norm{\sigma}_2^2)(\norm{\eta}_1+\norm{\sigma}_1) \leq C N^{4\kappa-1} 
\end{split}
\end{equation*}
and 
\begin{equation*}
\begin{split}
\abs{C_{\cV, 3}}
&\leq C N^{\kappa-2} \norm{\eta/\abs{\cdot}^2}  \norm{\abs{\cdot}^2(\eta-\sigma)}_2(\norm{\eta}_2+\norm{\sigma}_2)(\norm{\eta}_1+\norm{\sigma}_1)
\leq C N^{4\kappa-1}.
\end{split}
\end{equation*}
Next, we consider the term $C_{\cV,1}$. From Lemma \ref{lemma:scattering}, we have 
\[ \begin{split} \Big| \sum_{r \in \Lambda^*} N^{\kappa-1} \hat{V} (r/N^{1-\kappa}) \eta_{\infty,p-r} + N^\kappa \widehat{Vw} (p/N^{1-\kappa}) \Big| &\leq CN^{2\kappa-1} \\  
\Big| \sum_{r \in \Lambda^*} N^{\kappa-1} \hat{V} (r/N^{1-\kappa}) (\eta_{p-r} - \eta_{\infty,p-r}) \Big| &\leq CN^{5\kappa/2-1}. \end{split} \]  
This implies that 
\[ \begin{split} C_{\cV,1} = \; &-N^{\kappa-1/2} \sum_{p,q \in \Lambda^*} \widehat{Vw} \, (p/N^{1-\kappa}) \sigma_q \\ &\times (A_{p,q}+A_{q,p} + A_{-p-q,q}+A_{q,-p-q} + 2A_{p,-p-q}+2A_{-p-q,p}) + O(N^{4\kappa-1}). \end{split} \]
With (\ref{eq:KC}), we obtain 
\begin{equation}\label{eq:C1+C2} \begin{split} 
\la \xi, (\cK + \cC_N &+ \cC_N^*) \xi \ra + C_{\cV,1} \\ =\; &N^{\kappa-1/2}\sum_{p,q\in\Lambda^*} \hat{V}(p/N^{1-\kappa}) \sigma_q (A_{p,q}+A_{q,p} + A_{-p-q,q}+A_{q,-p-q})\\
    &+N^{\kappa-1/2}\sum_{p,q\in\Lambda^*} \widehat{Vf}(p/N^{1-\kappa})) \sigma_q (A_{p,-p-q}+A_{-p-q,p})
    +O(N^{4\kappa-1}) \\ =\; & C_1 + C_2 + O(N^{4\kappa-1}).
\end{split} \end{equation}
Expanding $A_{p,q} + A_{q,p}$ and $A_{-p-q,q} + A_{q,-p-q}$ similarly to (\ref{eq:A+A}) and observing that the contribution of all terms containing the difference $\sigma - \eta$ are negligible (similarly as $C_{\cV,3}, C_{\cV_4}$ above), we conclude that 
\[ \begin{split} C_1 = \; & - N^{\kappa-1} \sum_{p,q \in \Lambda^*} \hat{V}(p/N^{1-\kappa}) \sigma_q^2 \Big( \frac{2\eta_p \, p \cdot q }{p^2 + q^2 + (p+q)^2} - \frac{2\eta_{p+q} (p+q) \cdot q}{p^2 + q^2 + (p+q)^2}  \Big) \\ & + 
N^{\kappa-1}\sum_{p,q\in\Lambda^*} \hat{V}(p/N^{1-\kappa}) \sigma^2_q (\eta_p + \eta_{p+q})  + O (N^{4\kappa-1}) .\end{split} \]
Switching $p+q \to -p$ in the second term on the first line and using that $\| \eta - \eta_\infty \|_1 \leq C N^{3\kappa/2}$ in the second line, we conclude that 
\begin{equation}\label{eq:C1-fin} \begin{split} C_1 = \; & - N^{\kappa-1} \sum_{p,q \in \Lambda^*} \big( \hat{V}(p/N^{1-\kappa}) + \hat{V} ((p+q)/N^{1-\kappa}) \big)  \frac{2\eta_p  \sigma_q^2 \, p \cdot q }{p^2 + q^2 + (p+q)^2}  \\ & + 
N^{\kappa-1}\sum_{p,q\in\Lambda^*} \hat{V}(p/N^{1-\kappa}) \sigma^2_q (\eta_{\infty,p} + \eta_{\infty, p+q})  + O (N^{4\kappa-1}) .\end{split} \end{equation} 
As for $C_2$, we expand $A_{p,-p-q} + A_{-p-q,p}$ similarly to (\ref{eq:A+A}) and we observe that contributions proportional to $\eta-\sigma$ are negligible  (similarly as $C_{\cV,3}, C_{\cV_4}$ above) and that the remaining terms can be combined into 
\[ \begin{split} -  &N^{-1/2} \frac{2p \cdot (-p-q)}{p^2 + q^2 + (p+q)^2} \eta_p \sigma_{p+q} + N^{-1/2}  \eta_p \sigma_{p+q} 
\\ &= 2 N^{-1/2} \frac{p^2 + (p+q)^2}{p^2 + q^2 + (p+q)^2} \eta_p \eta_{p+q} + 2 N^{-1/2} \frac{p^2 + (p+q)^2}{p^2 + q^2 + (p+q)^2} \eta_p (\sigma_{p+q} -\eta_{p+q}). \end{split}    \] 
Since also the contribution of the last term, proportional to $\sigma_{p+q} - \eta_{p+q}$, is negligible, we arrive at
\[ \begin{split} C_2 &= N^{\kappa-1} \sum_{p,q \in \Lambda^*} \widehat{Vf} (p/N^{1-\kappa}) \frac{2p^2 + 2(p+q)^2}{p^2 + q^2+ (p+q)^2} \sigma_q \eta_p \eta_{p+q} + O(N^{4\kappa-1}) \\ &= N^{\kappa-1}\sum_{p,q\in\Lambda^*} (\widehat{Vf}(p/N^{1-\kappa})+\widehat{Vf}((p+q)/N^{1-\kappa}))
\frac{2(p+q)^2 \eta_{p+q} \eta_p \sigma_q}{p^2+q^2+(p+q)^2} + O(N^{4\kappa-1}) .
\end{split} \]
Again, we can replace $\sigma_q$ by $\eta_q$, with a negligible error. Also, using the definition (\ref{eq:defeta}) for replacing $2(p+q)^2 \eta_{p+q}$ by $-N^\kappa\widehat{Vf} ((p+q)/N^{1-\kappa})$ and then removing the momentum restriction on $p+q$ produces contributions of order $N^{4\kappa-1}$. Similarly, we can replace the smooth cutoff contained in $\eta$ by a sharp one at $N^{\kappa/2}$ noting that the difference is supported on momenta between $N^{\kappa/2}$ and $2N^{\kappa/2}$.
We conclude that 
\begin{equation}\label{eq:C2-fin} \begin{split} C_2 &= - N^{4\kappa-1}\sum_{\substack{p,q\in\Lambda^*\colon\\|p|,|q| \geq N^{\kappa/2}}} (\widehat{Vf}(p/N^{1-\kappa})+\widehat{Vf}((p+q)/N^{1-\kappa}))
\\ 
&\hspace{2cm} \times \frac{\widehat{Vf} ((p+q)/N^{1-\kappa}) \widehat{Vf} (p/N^{1-\kappa}) \widehat{Vf} (q/N^{1-\kappa})}{4p^2 q^2 \big(p^2+q^2+(p+q)^2\big)} 
+ O(N^{4\kappa-1}) .
\end{split} \end{equation} 
Next, we consider the term $C_{\cV,2}$, from (\ref{eq:CV1-5}). We start by observing that 
\[ \begin{split} \Big| N^{\kappa-3/2} &\sum_{p,q,r} \hat{V} (r/N^{1-\kappa}) \frac{2 (p-r) \cdot q}{(p-r)^2 + q^2 + (p-r+q)^2} \eta_{p-r} \sigma_q (A_{p, -p-q} + A_{-p-q,p}) \Big| \\ &\leq  C N^{\kappa-3/2} \sum_{p,q,r} \frac{|r||q|}{r^2 + q^2 + (r+q)^2} |\eta_{r}| |\sigma_q| (|A_{p,-p-q}| + |A_{-p-q,p}|) \\ &\leq  C N^{4\kappa-2} \sum_{p,q,r} \frac{|r| |\eta_r|}{q^2 (q+r)^2} \leq C N^{4\kappa-1}  \end{split} \]
where we applied (\ref{eq:Appq}), from Lemma \ref{lemma:propertiesA}. Thus, we have 
\[ \begin{split} C_{\cV,2} = \; & -N^{\kappa-3/2} \sum \hat{V} (r/N^{1-\kappa}) \frac{2 (p-r) \cdot q}{(p-r)^2 + q^2 + (p-r+q)^2} \eta_{p-r} \sigma_q \\ &\hspace{3cm} \times (A_{p,q}+A_{q,p} + A_{-p-q,q}+A_{q,-p-q}) + O(N^{4\kappa-1}) \\
= \; & -N^{\kappa-3/2} \sum \hat{V} ((p-r)/N^{1-\kappa}) \frac{2 p \cdot q}{p^2 + q^2 + (p+q)^2} \eta_{p} \sigma_q \\ &\hspace{3cm} \times (A_{r,q}+A_{q,r} + A_{-r-q,q}+A_{q,-r-q}) + O(N^{4\kappa-1}) 
\end{split} \]
after switching $p-r \to p, r \to p$. We expand $A_{r,q} + A_{q,r}$ and $A_{-r-q,q} + A_{q,-r-q}$ similarly as in (\ref{eq:A+A}) and, as usual, we observe that contributions containing the difference $\sigma-\eta$ are negligible. Thus, we arrive at 
\[ \begin{split} C_{\cV,2} 
= \; & -N^{\kappa-2} \sum (\hat{V} ((p-r)/N^{1-\kappa}) + \hat{V} ((p+q-r)/N^{1-\kappa}) ) \frac{2 p \cdot q}{p^2 + q^2 + (p+q)^2} \eta_{p} \eta_{r} \sigma^2_q \\ &+ N^{\kappa-2} \sum \big( \hat{V} ((p-r)/N^{1-\kappa}) + \hat{V} ((p+q-r)/N^{1-\kappa}) \big) \\ & \hspace{2.5cm} \times  \frac{2 p \cdot q}{p^2 + q^2 + (p+q)^2} \frac{2r \cdot q}{r^2 + q^2 + (r+q)^2}  \eta_{p} \eta_r \sigma^2_q + O(N^{4\kappa-1}) .
\end{split} \]
Let us consider the second term, for example the contribution proportional to $\hat{V} ((p-r)/N^{1-\kappa})$ (the contribution proportional to $\hat{V} ((p+q-r)/N^{1-\kappa})$ can be treated analogously). With the change of variable $r \to -r$, we can write  
\[ \begin{split} N^{\kappa-2} &\sum \hat{V} ((p-r)/N^{1-\kappa})    \frac{2 p \cdot q}{p^2 + q^2 + (p+q)^2} \frac{2r \cdot q}{r^2 + q^2 + (r+q)^2}  \eta_{p} \eta_r \sigma^2_q \\ = \; &\frac{N^{\kappa-2}}{2} \sum \big( \hat{V} ((p-r)/N^{1-\kappa}) - \hat{V} ((p+r)/N^{1-\kappa}) \big)  \\ &\hspace{3cm} \times  \frac{2 p \cdot q}{p^2 + q^2 + (p+q)^2} \frac{2r \cdot q}{r^2 + q^2 + (r+q)^2}  \eta_{p} \eta_r \sigma^2_q \\ &-N^{\kappa-2} \sum \hat{V} ((p+r)/N^{1-\kappa})  \frac{2 p \cdot q}{p^2 + q^2 + (p+q)^2}  \\ &\hspace{3cm} \times \frac{(2 r \cdot q)^2}{\big(r^2 + q^2 + (r+q)^2\big) \big( r^2 + q^2 + (r-q)^2 \big)}
 \eta_{p} \eta_r \sigma^2_q .\end{split} \] 
In the first term, we bound $|\hat{V}((p-r)/N^{1-\kappa})-\hat{V}((p+r)/N^{1-\kappa})|\leq C |r| / N^{1-\kappa}$ and, using $\| \eta \|_1 \leq CN$, 
\[ \begin{split} N^{2\kappa-3} \sum_{p,q,r} &\frac{|p| |r|^2 |q|^2 |\eta_p| |\eta_r| |\sigma_q|^2}{(p^2 + q^2 + (p+q)^2) (r^2 + q^2 + (r+q)^2)} \\ &\hspace{.4cm} \leq CN^{2\kappa-2} \sum_{p,q} \frac{|p| |q|^2 |\eta_p| |\sigma_q|^2}{(p^2 + q^2 + (p+q)^2)} \leq CN^{4\kappa-2} \sum_{p,q} \frac{|p| |\eta_p|}{(p+q)^2 q^2} \leq C N^{4\kappa-1}. \end{split} \]
In the second term we proceed similarly, using $|r|^2 |\eta_r| \leq C N^{\kappa}$ and $\sum_r |r|^{-2} |r+q|^2 \leq C |q|^{-1}$. 
We conclude that 
\[ N^{\kappa-2} \sum \hat{V} ((p-r)/N^{1-\kappa})    \frac{2 p \cdot q}{p^2 + q^2 + (p+q)^2} \frac{2r \cdot q}{r^2 + q^2 + (r+q)^2}  \eta_{p} \eta_r \sigma^2_q = O (N^{4\kappa-1}) \]
and thus that 
\[ \begin{split} C_{\cV,2} 
= \; & N^{\kappa-1} \sum (\widehat{Vw} (p/N^{1-\kappa}) + \widehat{Vw} ((p+q)/N^{1-\kappa}) ) \frac{2\eta_{p} \sigma^2_q  \, p \cdot q}{p^2 + q^2 + (p+q)^2} + O(N^{4\kappa-1}) .
\end{split} \]
Combined with (\ref{eq:C1+C2}), (\ref{eq:C1-fin}), (\ref{eq:C2-fin}), we obtain
\begin{equation}\label{eq:lastC} \begin{split} \la \xi, (\cK &+ \cC_N +\cC_N^* + \cV_N) \xi \ra \\ 
= \; &N^{\kappa-1}\sum_{p,q\in\Lambda^*} \hat{V}(p/N^{1-\kappa})  (\eta_{\infty,p} + \eta_{\infty, p+q}) \sigma_q^2 \\
&- N^{4\kappa-1}\sum_{\substack{p,q\in\Lambda^*\colon\\|p|,|q| > N^{\kappa/2}}} (\widehat{Vf}(p/N^{1-\kappa})+\widehat{Vf}((p+q)/N^{1-\kappa}))
\\ &\hspace{2cm} \times \frac{\widehat{Vf} ((p+q)/N^{1-\kappa}) \widehat{Vf} (p/N^{1-\kappa}) \widehat{Vf} (q/N^{1-\kappa})}{4p^2 q^2 \big(p^2+q^2+(p+q)^2\big)} \\
&- N^{\kappa-1} \sum_{p,q \in \Lambda^*} \big( \widehat{Vf}(p/N^{1-\kappa}) + \widehat{Vf} ((p+q)/N^{1-\kappa}) \big)  \frac{2\eta_p  \sigma_q^2 \, p \cdot q }{p^2 + q^2 + (p+q)^2}  + O (N^{4\kappa-1}) .
\end{split} \end{equation}
We can replace $\eta_p$ by $\eta_{\infty,p}\mathbbm{1}(\abs{\cdot}\geq N^{\kappa/2})$, because 
\[ N^{\kappa-1} \sum_{|p| < 2N^{\kappa/2}, q\in \Lambda^*} \frac{|p| |q| |\eta_{\infty,p}| |\sigma_q|^2}{p^2 + q^2 + (p+q)^2}
\leq N^{5\kappa/2 - 1} \sum_{p,q \in \Lambda^*} \frac{|q| |\sigma_q|^2}{p^2 (p+q)^2}
\leq C N^{4\kappa-1}. \] 
Similarly, we can also replace $\sigma_q^2$ by $\eta_{\infty,q}^2\mathbbm{1}(\abs{\cdot}\geq N^{\kappa/2})$. This follows from a symmetrization using a change of variable $q \to -q$, because
\[ 
\begin{split} 
N^{\kappa-1} &\sum_{p,q\in\Lambda^*\colon \abs{p}\geq N^{\kappa/2}} \frac{|p\cdot q|^2 |\eta_{\infty,p}| \abs{\sigma_q^2-\eta_{\infty,q}^2\mathbbm{1}(\abs{\cdot}\geq N^{\kappa/2})}}{(p^2 + q^2 + (p+q)^2)(p^2 + q^2 + (p-q)^2)} \\ 
&\leq N^{2\kappa-1} \sum_{p,q\in\Lambda^*\colon \abs{p}\geq N^{\kappa/2}} \frac{|q|^2 \abs{\sigma_q^2-\eta_{\infty,q}^2\mathbbm{1}(\abs{\cdot}\geq N^{\kappa/2})}}{p^4}\\
&\leq N^{5\kappa/2 -1} \sum_{q\in\Lambda^*} \abs{\sigma_q-\eta_{\infty,q}\mathbbm{1}(\abs{\cdot}\geq N^{\kappa/2})} 
\leq C N^{4\kappa-1} 
\end{split} 
\]
and
\[
\begin{split}
    &N^{2\kappa-2} \sum_{\substack{p,q\in\Lambda^*\colon \abs{p}\geq N^{\kappa/2}}} \frac{|p| |q|^2 |\eta_{\infty,p}| \abs{\sigma_q^2-\eta_{\infty,q}^2\mathbbm{1}(\abs{\cdot}\geq N^{\kappa/2})}}{p^2 + q^2 + (p+q)^2}\\
    &\leq N^{3\kappa-2} \sum_{\substack{p,q\in\Lambda^*\colon \abs{p}\geq N^{\kappa/2}}} \frac{ |\eta_{\infty,p}| \abs{\sigma_q-\eta_{\infty,q}\mathbbm{1}(\abs{\cdot}\geq N^{\kappa/2})}}{\abs{p}}
    \leq C \log(N) N^{4\kappa-1-\beta/2}.
\end{split}
\]
We can therefore rewrite the last term on the r.h.s. of (\ref{eq:lastC}) as   
\[ \begin{split} 
- & N^{\kappa-1} \sum_{p,q \in \Lambda^*} \big( \widehat{Vf}(p/N^{1-\kappa}) + \widehat{Vf} ((p+q)/N^{1-\kappa}) \big)  \frac{2\eta_p  \sigma_q^2 \, p \cdot q }{p^2 + q^2 + (p+q)^2} \\
&= N^{4\kappa-1} \sum_{|p|,|q| \geq N^{\kappa/2}} \big( \widehat{Vf}(p/N^{1-\kappa}) + \widehat{Vf} ((p+q)/N^{1-\kappa}) \big) \\ &\hspace{4cm}  \times \frac{(p \cdot q) \, \widehat{Vf} (p/N^{1-\kappa}) \widehat{Vf} (q/N^{1-\kappa})^2}{4p^2 q^4 (p^2 + q^2 + (p+q)^2)} + O(N^{4\kappa-1}) \end{split} \]
concluding the proof of the lemma. 
\end{proof} 

We are now ready to show Theorem  \ref{theorem:Fockspaceresult}.
\begin{proof}[Proof of Theorem  \ref{theorem:Fockspaceresult}]
Combining Lemma \ref{lemma:quadraticrenormalization} with Lemma \ref{lm:boundN}, Lemma \ref{lm:tildeCN}, Lemma \ref{lemma:N_g}, Lemma \ref{lm:tildeVN} and Lemma \ref{lm:cubic-exp}, we obtain, for any $\kappa \geq 1/2$, 
\begin{equation} \label{eq:S1S2} \begin{split} 
\la \psi_N, &\cH_N \psi_N \ra \\ = \; &4\pi\aa N^{1+\kappa}\\
        &+ \frac{1}{2}\sum_{p\in\Lambda^*_+}
        \Big(
        \sqrt{\abs{p}^4+2p^2 N^\kappa\widehat{Vf}(\frac{p}{N^{1-\kappa}})}
        - p^2 - N^\kappa\widehat{Vf}(\frac{p}{N^{1-\kappa}})
        + \frac{N^{2\kappa}\widehat{Vf}(\frac{p}{N^{1-\kappa}})^2}{2\abs{p}^2}
        \Big) 
\\ &- N^{4\kappa-1}\sum_{|p|, |q|\geq N^{\kappa/2}} (\widehat{Vf}(p/N^{1-\kappa})+\widehat{Vf}((p+q)/N^{1-\kappa})) \\ &\hspace{5cm} \times 
\frac{ \widehat{Vf}(p/N^{1-\kappa})\, \widehat{Vf}(q/N^{1-\kappa})\, \widehat{Vf}((p+q)/N^{1-\kappa})}{4p^2q^2(p^2+q^2+(p+q)^2)}\\
        &+N^{4\kappa-1} \sum_{|p|, |q| \geq N^{\kappa/2}} (\widehat{Vf}(p/N^{1-\kappa}) + \widehat{Vf}((p+q)/N^{1-\kappa})) 
\\ &\hspace{5cm} \times  \frac{(p\cdot q) \, \widehat{Vf}(p/N^{1-\kappa}) \, \widehat{Vf}(q/N^{1-\kappa})^2}{4 q^4 p^2(p^2+q^2+(p+q)^2)}
 +O(N^{4\kappa-1}) \\
 =: \; &4\pi\aa N^{1+\kappa} + S_1 + S_2 + S_3+ O (N^{4\kappa-1}) .
 \end{split} \end{equation} 
Let us consider the term $S_1$ first. Since the summand decays as $N^{3\kappa} |p|^{-4}$, as $|p| \to \infty$, we can restrict the sum to $|p| \leq N^{1-\kappa}$, up to a negligible error. We define 
\[ g_p (t) = 
        \sqrt{p^4+2p^2 N^\kappa\widehat{Vf}(\frac{tp}{N^{1-\kappa}})}
        - p^2 - N^\kappa\widehat{Vf}(\frac{tp}{N^{1-\kappa}})
        + \frac{N^{2\kappa}\widehat{Vf}(\frac{tp}{N^{1-\kappa}})^2}{2\abs{p}^2}. \]
 Then 
 \[ \begin{split} S_1 =& \; \frac{1}{2}\sum_{\substack{p\in\Lambda^*_+\\ \abs{p}\leq N^{1-\kappa}}}
         \left(
        \sqrt{|p|^4+16\pi \mathfrak{a} N^\kappa p^2}
        - p^2 -  8\pi \mathfrak{a}N^\kappa 
        + \frac{N^{2\kappa}(8\pi \mathfrak{a})^2}{2\abs{p}^2}
        \right)\\
        &+ \frac{1}{2}\int_0^1 ds\int_0^s dt \sum_{\substack{p\in\Lambda^*_+\\ \abs{p}\leq N^{1-\kappa}}}
        \frac{d^2}{dt^2} g_p(t)
        +O(N^{4\kappa-1}) \end{split} \]
where we used $\widehat{Vf} (0) = 8\pi \aa$ and $g'_p (0) = 0$, because $\int x V(x) f(x) dx = 0$ by symmetry. Explicit computation of the derivatives of $g_p$ leads to the estimate 
 \begin{equation*}
            \abs{g_p''(t)} 
            \leq 
          \left\{ \begin{array}{ll}  
               C N^{4\kappa-2} &\textrm{ if } \abs{p} \leq K N^{\kappa/2}\\
               C N^{4\kappa-2} \frac{N^{\kappa}}{\abs{p}^2} &\textrm{ if } \abs{p} > K N^{\kappa/2}
            \end{array} \right. 
    \end{equation*}
if $K > 0$ is large enough. Hence, 
\[         \begin{split}
        S_1 
        &= \frac{1}{2}\sum_{\substack{p\in\Lambda^*_+}}
        \left(
        \sqrt{p^4+16\pi \mathfrak{a} N^\kappa p^2}
        - p^2 -  8\pi \mathfrak{a}N^\kappa
        + \frac{N^{2\kappa}(8\pi \mathfrak{a})^2}{2\abs{p}^2}
        \right)
        +O(N^{4\kappa-1}) .\end{split} \]
Defining 
\[          h(p) = 
        \sqrt{p^4+16\pi \mathfrak{a} N^\kappa p^2}
        - p^2 -  8\pi \mathfrak{a}N^\kappa
        + \frac{N^{2\kappa}(8\pi \mathfrak{a})^2}{2\abs{p}^2}
    \]
we can write 
\[ \begin{split} S_1 =\; & \frac{1}{2(2\pi)^3}\int dp^3
        \left(
        \sqrt{p^4+16\pi \mathfrak{a} N^\kappa p^2}
        - p^2 -  8\pi \mathfrak{a}N^\kappa
        + \frac{N^{2\kappa}(8\pi \mathfrak{a})^2}{2\abs{p}^2}
        \right)\\
        &+ \frac{1}{2(2\pi)^3}\sum_{\substack{p\in\Lambda^*_+}}
        \int_{[-\pi,\pi]^3} d\zeta \, (h(p)-h(p+\zeta))
        +O(N^{2\kappa})+O(N^{4\kappa-1})\\
        \end{split}
    \] 
where the error $O (N^{2\kappa})$ arises from the integral around $p=0$. By explicit computation, we find 
\[         \begin{split}
            \Big| \frac{d^2}{ds^2} h(p+s\zeta) \Big| 
               \leq
               C \frac{N^{2\kappa}}{\abs{p+s\zeta}^4} .
        \end{split}
    \] 
Thus, 
 \begin{equation*}
        \begin{split}
        \sum_{\substack{p\in\Lambda^*_+}}
        \int_{[-\pi,\pi]^3} d\zeta \, (h(p)-h(p+\zeta))
        &= - \sum_{\substack{p\in\Lambda^*_+}}
        \int_{[-\pi,\pi]^3} d\zeta \int_0^1 ds \, \frac{d}{ds} h(p+s\zeta)\\
        &= - \sum_{\substack{p\in\Lambda^*_+}}
        \int_{[-\pi,\pi]^3} d\zeta \int_0^1 ds \int_0^s \, \frac{d^2}{dt^2} h(p+t\zeta)   
        =O(N^{2\kappa})
        \end{split}
    \end{equation*}
because $(d/ds) h (p+s\zeta)$, evaluated at $s=0$, is an odd function of $p$ (and therefore its sum over $p$ vanishes). We conclude that, for $\kappa \geq 1/2$, 
  \begin{equation*}
        \begin{split}
            S_1 
            &=
            \frac{1}{2(2\pi)^3}\int dp^3
        \left(
        \sqrt{p^4+16\pi \mathfrak{a} N^\kappa p^2}
        - p^2 -  8\pi \mathfrak{a}N^\kappa
        + \frac{N^{2\kappa}(8\pi \mathfrak{a})^2}{2\abs{p}^2}
        \right)
        +O(N^{4\kappa-1})\\
                    &=
            \frac{(8\pi\mathfrak{a}N^\kappa)^{5/2}}{2(2\pi)^3}\int dp^3
        \left(
        \sqrt{p^4+2  p^2}
        - p^2
        -  1
        + \frac{1}{2\abs{p}^2}
        \right)
        +O(N^{4\kappa-1})\\
        &= \frac{512\sqrt{\pi}}{15}\mathfrak{a}^{5/2}N^{5\kappa/2} 
        +O(N^{4\kappa-1}).
        \end{split}
    \end{equation*}
Next, we consider the term $S_2$, as defined in (\ref{eq:S1S2}). Since 
\[ \begin{split}  \sum_{|p| > N^{1-\kappa} , |q| \leq N^{1-\kappa}} \frac{1}{|p|^4 |q|^2} &\leq C , \\ \sum_{|p|,|q| > N^{1-\kappa}} |\widehat{Vf} (p/N^{1-\kappa})| \frac{1}{p^2 |q|^4} &\leq C N^{-1+\kappa} \| \widehat{Vf} \|_2 \| \chi (|.| > N^{1-\kappa}) / |.|^2 \|_2 \leq C \end{split} \]
we can restrict the sum to $|p|, |q| \leq N^{1-\kappa}$. Estimating $|\widehat{Vf} (p/N^{1-\kappa}) - 8\pi \aa| \leq C p^2 / N^{2-2\kappa}$, we obtain 
\[ S_2 = - 2048 N^{4\kappa-1} \pi^4 \aa^4 \sum_{\substack{p,q\in\Lambda^*\\ N^{\kappa/2}\leq \abs{p},\abs{q}\leq N^{1-\kappa}}} 
\frac{1}{p^2q^2(p^2+q^2+(p+q)^2)}
        +O(N^{4\kappa-1}). \]
As for the term $S_3$, we can proceed as we did for $S_2$ to restrict the sum to $|q| \leq N^{1-\kappa}$. In order to restrict it to $|p| \leq  N^{1-\kappa}$, on the other hand, we need additionally a symmetrization argument (replacing $q$ with $-q$). This argument is very similar to \cite[Eq. (2.39)]{COSS}; we skip the details. After restricting to $|p|,|q| \leq N^{1-\kappa}$, we can use $|\widehat{Vf} (p/N^{1-\kappa}) - 8\pi \aa| \leq C p^2 / N^{2-2\kappa}$ as we did for $S_2$, to conclude that 
\[ S_3 = 2048 N^{4\kappa-1} \pi^4 \aa^4  \sum_{\substack{p,q\in\Lambda^*\\N^{\kappa/2}\leq\abs{p}, \abs{q}\leq N^{1-\kappa}}}
 \frac{p\cdot q}{q^4 p^2(p^2+q^2+(p+q)^2)}
 +O(N^{4\kappa-1}).
\] 
We arrive at
\[ S_2 + S_3 = 2048 \,N^{4\kappa-1} \pi^4 \aa^4 \sum_{\substack{p,q\in\Lambda^*\\N^{\kappa/2}\leq\abs{p}, \abs{q}\leq N^{1-\kappa}}}
 \frac{p\cdot q - q^2}{q^4 p^2(p^2+q^2+(p+q)^2)} +O(N^{4\kappa-1}).
\] 
Proceeding exactly as in \cite[Section 2]{COSS}, we can replace the sum by an integral, with only a negligible error. We find 
\[ \begin{split} S_2 + S_3 = \; & \frac{16 \aa^4}{\pi^2}  \, N^{4\kappa-1} 
            \int_{\substack{N^{\kappa/2}\leq \abs{p},\abs{q}\leq N^{1-\kappa}}} dpdq
\frac{p\cdot q -q^2}{p^2q^4(p^2+q^2+p\cdot q)}
        \\ &+ O(N^{2\kappa-\beta/2}) + O(N^{4\kappa-1}) .
 \end{split} \]
Rescaling the integration variables, $p,q \to p/N^{\kappa/2}, q/N^{\kappa/2}$, we obtain 
\[ \begin{split} S_2 +S_3 =   \frac{16 \aa^4}{\pi^2}  \, N^{4\kappa-1} 
            \int_{\substack{1 \leq \abs{p},\abs{q}\leq N^{\beta/2}}} dpdq
\frac{p\cdot q -q^2}{p^2q^4(p^2+q^2+p\cdot q)}
       + O(N^{4\kappa-1}) .
 \end{split} \]
 for $\kappa \geq 1/2$. Up to an error of order $N^{4\kappa-1}$, we can remove the restrictions on $p$. By explicit computation, we conclude that 
\[\begin{split}  S_2 + S_3 &=  \frac{16 \aa^4}{\pi^2}  \, N^{4\kappa-1} 
            \int_{\substack{1 \leq \abs{q}\leq N^{\beta/2}}} dpdq 
\frac{p\cdot q -q^2}{p^2q^4(p^2+q^2+p\cdot q)} +O(N^{4\kappa-1})  \\ &= -32\pi \left(\frac{4\pi}{3}-\sqrt{3}\right) \mathfrak{a}^4 N^{4\kappa-1} \log(N^{\beta})
 +O(N^{4\kappa-1} ) \, . \end{split}  \]
\end{proof}

\end{document}